\DeclareTextSymbolDefault{\DH}{T1}
\numberwithin{equation}{chapter}
\definecolor{LightCyan}{rgb}{0.88,1,1}
\definecolor{celadon}{rgb}{0.67, 0.88, 0.69}
\definecolor{columbiablue}{rgb}{0.61, 0.87, 1.0}
\definecolor{codegreen}{rgb}{0,0.6,0}
\definecolor{codegray}{rgb}{0.5,0.5,0.5}
\definecolor{codepurple}{rgb}{0.58,0,0.82}
\definecolor{backcolour}{rgb}{0.95,0.95,0.92}
\lstdefinestyle{mystyle}{
    backgroundcolor=\color{backcolour},   
    commentstyle=\color{codegreen},
    keywordstyle=\color{magenta},
    numberstyle=\tiny\color{codegray},
    stringstyle=\color{codepurple},
    basicstyle=\ttfamily\footnotesize,
    breakatwhitespace=false,         
    breaklines=true,                 
    captionpos=b,                    
    keepspaces=true,                 
    numbersep=5pt,                  
    showspaces=false,                
    showstringspaces=false,
    showtabs=false,                  
    tabsize=2
}
\@citea\NAT@hyper@{%
     \NAT@nmfmt{\NAT@nm}%
     \hyper@natlinkbreak{\NAT@aysep\NAT@spacechar}{\@citeb\@extra@b@citeb}%
     \NAT@date}}
\@citea\NAT@nmfmt{\NAT@nm}%
\NAT@spacechar\NAT@hyper@{\NAT@date}}{}{}
\@citea\NAT@hyper@{%
     \NAT@nmfmt{\NAT@nm}%
     \hyper@natlinkbreak{\NAT@spacechar\NAT@@open\if*#1*\else#1\NAT@spacechar\fi}%
       {\@citeb\@extra@b@citeb}%
     \NAT@date}}
\@citea\NAT@nmfmt{\NAT@nm}%
\fi\NAT@hyper@{\NAT@date}}
\renewenvironment{description}
  {\list{}{\labelwidth=15pt \leftmargin=15pt
   }}
  {\endlist}
\newtheorem{theorem}{Theorem}
\newtheorem{remark}{Remark}
\newtheorem{definition}{Definition}
\newtheorem{example}{Example}
\newtheorem{proposition}{Proposition}
\newtheorem{abcrule}{Rule}
\newtheorem{apportionmentrule}{Apportionment Rule}
\newcommand{\addQEDstyle}[2]{\AtBeginEnvironment{#1}{\pushQED{\qed}\renewcommand{\qedsymbol}{#2}}
\AtEndEnvironment{#1}{\popQED}}
\newcommand{\score}[1]{{{\mathrm{score}_{#1}}}}
\newcommand{\loss}[1]{{{\mathrm{loss}_{#1}}}}
\newcommand{\reals}{\mathbb R}
\newcommand{\naturals}{\mathbb N}
\newcommand{\expected}{\mathbb E}
\newcommand{\powerset}{\mathcal{P}}
\newcommand{\calR}{\mathcal{R}}
\newcommand{\pav}{{{\mathrm{PAV}}}}
\newcommand{\np}{{{\mathrm{NP}}}}
\newcommand{\ptime}{{{\mathrm{P}}}}
\newcommand{\conp}{{{\mathrm{coNP}}}}
\newcommand{\cc}{{{{\mathrm{CC}}}}}
\newcommand{\monroe}{{{{\mathrm{Monroe}}}}}
\newcommand{\welf}{\mathrm{welf}}
\newcommand{\phragmen}{Phragm\'{e}n}
\newcommand{\lexphrag}{leximax-\phragmen}
\newcommand{\enephrag}{Enestr\"om--Phragm\'{e}n}
\DeclareMathOperator*{\argmax}{arg\,max}
\DeclareMathOperator*{\argmin}{arg\,min}
\newtheorem*{rep@theorem}{\rep@title}
\newcommand{\newreptheorem}[2]{%
\newenvironment{rep#1}[1]{%
 \def\rep@title{#2 \ref{##1}}%
 \begin{rep@theorem}}%
 {\end{rep@theorem}}}
\newcommand{\av}{{{\mathrm{AV}}}}
\newcommand{\sav}{{{\mathrm{SAV}}}}
\newcommand{\hamming}{\mathrm{ham}}
\renewcommand{\H}{{{\mathrm{h}}}}
\definecolor{OKgreen}{HTML}{035925}
\definecolor{NOred}{HTML}{8F061F}
\newcommand*\cmark{\textcolor{OKgreen}{\ding{51}}}
\newcommand*\ccand{\textcolor{OKgreen}{cand}}
\newcommand*\xmark{\textcolor{NOred}{$\times$}}
\newcommand*\strong{\textcolor{OKgreen}{\bf strong}}
\newcommand*\weak{\textcolor{OKgreen}{\bf weak}}
\newcommand{\notstarted}
\def\N{\mathbb{N}}
\newcommand{\intro}[1]{}
\crefname{abcrule}{Rule}{Rules}
\begin{document}

\title{Multi-Winner Voting with Approval Preferences}

\author{Martin Lackner\\
  TU Wien\\
  Vienna, Austria\\
  {\small \texttt{lackner@dbai.tuwien.ac.at}}
  \and 
Piotr Skowron\\
  University of Warsaw\\
  Warsaw, Poland\\
  {\small \texttt{p.skowron@mimuw.edu.pl}}
}
\date{Version 1.0 (\today)}

\maketitle

\chapter*{Preface}
Multi-winner voting is the process of selecting a fixed-size set of representative candidates based on voters' preferences.
It occurs in applications ranging from politics (parliamentary elections) to the design of modern computer applications (collaborative filtering, dynamic Q\&A platforms, diversifying search results).
All these applications share the problem of identifying a representative subset of alternatives---and the study of multi-winner voting is the principled analysis of this task.

This book provides a thorough and in-depth look at multi-winner voting based on approval preferences. One speaks of approval preferences if voters express their preferences by providing a set of candidates they approve.
Approval preferences thus separate candidates in approved and disapproved ones, a simple, binary classification.
The corresponding multi-winner voting rules are called approval-based committee (ABC) rules.
Due to the simplicity of approval preferences, ABC rules are widely suitable for practical use.

Recent years have seen a rising interest in ABC voting.
While multi-winner voting has been originally a topic studied by economists and political scientists, 
a significant share of recent progress has occurred in the field of \emph{computational social choice}. This
discipline is situated in the intersection of artificial intelligence, computer science, economics, and (to a lesser degree) political science,
combining insights and methods from these distinct fields.
The goal of this book is to present fundamental concepts and results for ABC voting
and to discuss the recent advances in computational social choice.
The main focus is on axiomatic analysis, algorithmic results, and relevant applications.

\bigskip

\begin{flushright}
\textbf{Martin Lackner}\\
Vienna, Austria
\end{flushright}

\begin{flushright}
\textbf{Piotr Skowron}\\
Warsaw, Poland
\end{flushright}

\renewcommand{\bibname}{References}

\chapter*{Acknowledgments}

First and foremost, we would like to thank Piotr Faliszewski for extensive feedback and discussions that significantly improved this book.
We are further very thankful to Markus Brill, Svante Janson, J{\'e}r\^{o}me Lang, Marie-Louise Lackner, Jean-Fran{\c{c}}ois Laslier, Jan Maly, Dominik Peters, and Luis S{\'a}nchez-Fern{\'a}ndez for providing valuable feedback and comments.

\medskip

Martin Lackner was supported by the Austrian Science Fund (FWF): P31890.
Piotr Skowron was supported by Poland's National Science Center grant UMO-2019/35/B/ST6/02215.

\setcounter{tocdepth}{2}
\tableofcontents

\chapter{Approval-Based Committee Voting}

\newcommand{\checkedbox}{{\Large\mbox{\ooalign{\ding{55}\cr\hidewidth$\square$\hidewidth\cr}}}}
\newcommand{\emptybox}{{\Large\mbox{\ooalign{\phantom{\ding{55}}\cr\hidewidth$\square$\hidewidth\cr}}}}
\begin{wrapfigure}[18]{r}{6.2cm}
\ \ \parbox{5.8cm}{
\vspace{-2em}
\begin{center}
{\def\arraystretch{1.25}
\setlength{\extrarowheight}{6pt}
\begin{tabular}{|lcc|}
\hline
Victor D'Hondt &  & \checkedbox \\ 
Gustaf Enestr\"om&  & \emptybox \\ 
Vilfredo Pareto &  & \emptybox \\ 
L.~Edvard\ \phragmen  &  & \checkedbox \\ 
Thorvald N.\ Thiele &  & \emptybox \\ 
\hline
\end{tabular} }
\caption{An approval ballot. Here, the voter decided to approve two of the five candidates.
In this hypothetical election, the five candidates are 19th-century academics who are relevant to this book.}\label{fig:approval-ballot}
\end{center}}
\end{wrapfigure}

\intro{We give an overview of multi-winner voting with approval ballots, its applications, main themes, advantages and disadvantages. We discuss the structure of the book and necessary prerequisites.
Furthermore, we provide a pointer to a Python library supplementing this book.}

\section{Introduction}

What is multi-winner voting?
In a multi-winner election, we are given a set of candidates, a set of voters, the preferences that each voter has over these candidates, and a desired size $k$ of the committee to be elected. The goal is to select a committee of exactly~$k$ candidates based on the voters' preferences.

Using this broad understanding of what multi-winner elections are, we encounter them in many vastly different scenarios ranging from everyday life to technical applications. A prototypical multi-winner election is the democratic selection of a representative body, such as a parliament\footnotemark, a faculty council, or a board of trustees. Moreover, selecting finalists in a competition, based on judgements of experts, is also an instance of multi-winner elections---here the experts act as voters and the contestants as candidates. Other possible applications of multi-winner election rules have been identified in the artificial intelligence, economics, and broader computer science literature:
\begin{enumerate}
\item finding group recommendations~\citep{owaWinner,budgetSocialChoice,bou-lu:c:value-directed}, where the possible recommendations can be thought of as candidates and individual group members as voters,
\item collaborative filtering~\citep{ChakrabortyPGGL19,gawronusing}, where, for example, related movies are recommended based on large data collections,
\item diversifying search results~\citep{proprank}, where users sending a search query can be interpreted as voters and the possible search results correspond to candidates,
\item locating public facilities~\citep{far-hek:b:facility-location,owaWinner}, where the candidates are possible locations in which facilities can be built, 
\item the design of dynamic Q\&A platforms \citep{israel2021dynamic}, where participants propose and upvote questions to be asked in a Q\&A session,
\item selecting validators in consensus protocols (blockchain)~\cite{cevallos2020verifiably,burdges2020overview}, with the users of the protocol corresponding to both voters and candidates, and
\item genetic programming~\cite{FaliszewskiSSS17}, a technique to solve global  optimisation  problems.
\end{enumerate}

\footnotetext{Most countries use legislatures based on political parties for electing parliaments. However, in some countries open-list systems are used (e.g., in Austria, Belgium, Finland, Latvia, Luxembourg, Netherlands, Sweden, and Switzerland); these systems (also) allow voters to vote for individual candidates rather than only for political parties.
Indeed, a few important arguments for allowing to vote for individual candidates have been raised.
For example, when voting for individual candidates, the elected candidates are more committed to the electorate rather than to their political parties. 
At the same time, open-list systems allow the candidates to focus on campaigning for the citizens' votes rather than on gaining influence within their party~\cite{personalRepresentation, audrey2015electoral, ames1995ElectoralStrategies,chang2005ElectoralIncentives}.
For a more general, comparative analysis of different electoral systems, we refer the reader to the relevant political science literature \cite{grofmanChoosingElectoral,grofmanProspectives,Far11,renwick16}.}

The outcome of a multi-winner election should clearly depend on the available preference information.
In a political election, this preference information is typically elicited with (paper) ballots; in a computer application, this information is shaped by the design of the user interface.
In this book, we are looking at the approval-based model of multi-winner elections.
The approval-based model is based on the assumption that the available preference information for each voter is a separation between approved and non-approved candidates, as illustrated in \Cref{fig:approval-ballot}.
That is, each voter submits \emph{approval preferences} via a subset of candidates---this subset consists of the candidates approved by the voter.%
\footnote{The other main variant of multi-winner elections is based on rankings, where each voter orders the candidates from the most to the least preferred one. We only briefly consider ranking-based multi-winner elections in this book (\Cref{sec:multiwinner-rank})---for a more substantial overview we refer the reader to a book chapter by \citet{FSST-trends}.}
The main object of this book are approval-based committee voting rules (ABC voting rules), i.e., functions that select one or more committees given an approval-based multi-winner election.
Importantly, we require that ABC voting rules are deterministic (and not randomised).

To illustrate a multi-winner election with approval preferences, consider the following simple example.
There are 100 voters and 5 candidates $a,b,c,d,e$: 66 voters approve the set $\{a,b,c\}$, 33 voters approve $\{d\}$, and one voter approves $\{e\}$.
Assume we want to select a committee of size three. 
If we count by how many voters each candidate is approved, we see that $a$, $b$, and $c$ are approved most often (66 times). This can be seen as a good reason to choose the committee $\{a, b, c\}$ based on these preferences; this committee contains the ``strongest'' candidates.
Note, however, that this committee essentially ignores the preferences of 34 voters.
Instead, one could choose the committee $\{a, d, e\}$, in which every voter finds one approved candidate.
A more proportional committee would be $\{a, b, d\}$: here, the 66 voters approving $\{a,b,c\}$ (which are roughly two-thirds of the population) have two approved candidates in the committee (two-thirds of the committee).
All of these committees are sensible and it is easy to find arguments for and against them.
For now, let us just observe that we need a principled way in which we can distinguish the properties of committees and ABC voting rules.

In recent years, much progress has been made in the field of ABC elections.
This can be seen when comparing the content of this book with
the comprehensive overviews by \citet{kil-handbook} and \citet{kil-mar:j:minimax-approval}, published in 2010 and 2012, respectively.
Indeed, multi-winner elections have been extensively studied from the perspective of economics (in the field of \emph{social choice theory} \citep{kil-handbook,kil-mar:j:minimax-approval}), political science (in the context of political elections and voting systems \citep{Far11,renwick16}) and artificial intelligence (in the field of \emph{computational social choice} \cite{Handbook-COMSOC}).
The goal of this book is to provide an up-to-date summary of the state of the art. A particular focus is put on axiomatic and algorithmic analysis; this line of work is prevalent in social choice theory and computational social choice.

Broadly speaking, we want to answer two main questions  in this book:
\begin{enumerate}[label=(\arabic*)]
\item What are the main properties of established ABC rules? Based on which principles can one choose a good ABC rule for a given application? (The answer to this question usually depends on the types of properties that the reader considers particularly important for his or  her application.) 
\item What are the practical limitations of using a particular rule, and how can one deal with these limitations? This question encompasses, e.g., algorithmic questions regarding computational complexity, and the possibility of conflicting axiomatic properties.
\end{enumerate}

Before we delve into ABC voting rules, let us first take a step back and discuss advantages and disadvantages of making collective decisions based  on approval preferences.

\section{Advantages and Drawbacks of Approval Ballots}
There are several arguments for using approval ballots in multi-winner elections (i.e., to work with approval preferences). Compared to the ranking-based model, where voters provide complete rankings of candidates (i.e., linear orders), providing approval preferences requires much less cognitive effort from the voters.
Thus this kind of voting is often more practical and preferable due to its clear meaning. \citet{bramsScience} and \citet{aragones2011making} discuss positive effects of using approval ballots on voters' participation, and \citet{bramsScience} further argue that using approval ballots can reduce negative campaigning; \citet{brams1983approval} discuss other possible positive implications of using approval ballots in political elections. In fact, approval ballots are often used for voting in  scientific societies (see, e.g., the work of \citet{brams2010}). 
Further experimental studies explore the possibility of using approval ballots in political elections and their conclusions are largely positive \cite{laslier2016StrategicVoting, alos2012two, LasStr17, las-str:j:approval-experiment}. Approval ballots are widely used in participatory budgeting~\cite{goel2015knapsack}. These are elections where the citizens decide through voting how to spend a municipal budget (we discuss this setting in \Cref{sec:part-budg}).

In general, the approval-based model has the advantage of a simple yet expressive preference model.
This simplicity grants definitions of more complex concepts within this model  (e.g., proportionality, strategyproofness, etc.) a solid intuitive grounding.

\medskip

The simplicity of approval ballots necessarily also has downsides.
An important underlying assumption is that the preferences of voters are separable, i.e., voters are not given the possibility to specify relations between candidates. For example, it is not possible for a voter to indicate that she believes that a certain group of candidates would work particularly well together in the elected committee or that she thinks that two candidates should never be elected together. We discuss several related models that allow voters to specify this kind of information in \Cref{sec:combinatorial_voting}.

Approval ballots imply a dichotomy between candidates: approved and disapproved candidates.
While it is generally clear how to interpret the set of approved candidates, it is less
clear how to interpret the set of disapproved candidates, i.e., its complement.
Generally, it can be assumed that a voter prefers approved candidates to be included in the winning committee, that is, adding an approved candidate to the committee will increase
a voter's satisfaction.
For disapproved candidates, the situation is less clear as the voter may be either neutral
about whether these candidates are included or opposed to their inclusion (or a mixture of these two cases).
As the ABC model does not allow to distinguish between neutral and negative candidates, this
information cannot be taken into account by ABC rules.
We discuss the trichotomous (three-valued) model in \Cref{sec:multiwinner-trich}.
Generally, moving from dichotomous preferences (the ABC model) to trichotomous preferences results in a vastly different model, with its own advantages and disadvantages.

A much more elaborate discussion of the approval-based model can be found in the Handbook of Approval Voting \cite{av-handbook}.

\section{Python Code}
This book is closely connected with the \textsf{abcvoting} Python library \cite{abcvoting}. The ABC rules discussed in this book are available as Python code at \url{https://github.com/martinlackner/abcvoting} and are directly usable, e.g., in numerical experiments. 
To give a flavour how \textsf{abcvoting} looks like, we show here the code to compute winning committees for Proportional Approval Voting (PAV), an important ABC rule.

\begin{lstlisting}[language=Python]
from abcvoting.preferences import Profile
from abcvoting import abcrules

# a preference profile with 5 candidates (0, 1, 2, 3, 4)
profile = Profile(5)

# add six voters, specified by the candidates that they approve;
# the first voter approves candidates 0, 1, and 2,
# the second voter approves candidates 0 and 1, etc.
profile.add_voters([{0,1,2}, {0,1}, {0,1}, {1,2}, {3,4}, {3,4}])

# compute winning committees
committees = abcrules.compute_pav(profile, committeesize=3)
\end{lstlisting}

Many examples from this book are also available in the \textsf{abcvoting} library, including the counterexamples from \Cref{app:proofs}.
If the reader prefers a Python-based hands-on approach, this library can be a useful tool.

\section{Mathematical Notation and Prerequisites}

We use the following basic notation.
We write $\N$ to denote the set of non-negative integers and $\reals$ to denote the set of real numbers.
Given a real number $x$, the \emph{floor function} $\lfloor x \rfloor$ returns the largest integer $\leq x$.
Similarly, the \emph{ceiling function} $\lceil x \rceil$ returns the smallest integer $\geq x$.
For each $t \in \naturals$, we let $[t]$ denote the set $\{1, \ldots, t\}$. For a set $X$, we write $|X|$ to denote its cardinality.
We further write $\powerset(X)$ to denote the \emph{powerset} of $X$, i.e., the set of all subsets of $X$.

A \emph{weak order} is a binary relation on a set~$X$ which is complete and transitive.
A \emph{linear order} is a weak order that is antisymmetric; we refer to linear orders also as \emph{rankings}.
Observe that weak orders may contain ties between elements (in contrast to linear orders).

We use the standard \emph{asymptotic notation} $O(.)$, $o(.)$ and $\Theta(.)$, denoting upper, lower, and tight bounds up to constant factors, respectively.

\medskip

We assume that the reader is familiar with basic concepts regarding algorithms (such as polynomial-time vs exponential-time algorithms, the concepts of fixed-parameter and approximation algorithms) and computational complexity theory (such as $\np$-hardness, $\np$-completeness, reductions). These concepts are, however, only required for \Cref{sec:algorithms}.

\section{Structure of the Book}
This book is structured as follows. 
In \Cref{sec:abc_rules}, we give detailed descriptions and examples for many approval-based  committee rules. Only parts of this chapter are required for understanding the remainder of the book; these parts are marked with a bar on the side of the page.
\Cref{sec:basic} provides an overview of basic axiomatic properties of ABC rules. We discuss which of these properties are satisfied by the rules introduced in the previous chapter.
In \Cref{sec:proportionality}, we focus on a major topic in recent years: proportional representation.
We discuss concepts of proportionality (but also concepts of non-proportionality) and their relation to other axiomatic properties.
\Cref{sec:algorithms} discusses the computational results concerning the complexity of computing winning committees, and algorithmic questions related to proportionality and strategyproofness.
In \Cref{sec:related_formalisms}, we provide an overview of related formalisms and their connection to ABC rules.
Finally, in \Cref{sec:outlook}, we provide an outlook on important research directions and list some specific open questions.
This book contains a technical appendix, \Cref{app:proofs}, with proofs and counterexamples that we were not able to find in the published literature.

\bibliographystyle{abbrvnat}
\bibliography{main}

\chapter{Dramatis Personae: ABC Rules}\label{sec:abc_rules}

\intro{In this chapter, we define the basic ingredients of approval-based committee (ABC) voting: candidates, voters, preferences, and committees. 
Most importantly, we present the main characters of this book: ABC voting rules.
We introduce and define the most important ABC rules and discuss the main classes they belong to.
These include Thiele methods and their sequential variants, Monroe's rule, \phragmen's rules and its derivatives, as well as non-standard ABC rules.
}

In this chapter, we define the basic ingredients of approval-based committee (ABC) voting: candidates, voters, preferences, and committees. 
Most importantly, we present the main characters of this book: ABC voting rules.
We introduce and define the most important ABC rules and discuss the main classes they belong to.
These include Thiele methods and their sequential variants, Monroe's rule, \phragmen's rules and its derivatives, as well as non-standard ABC rules.

\section{The Formal Model}\label{sec:model}


We now define the basic ingredients of approval-based committee (ABC) voting: candidates, voters, preferences, committees, and ABC rules.

\subsection*{Candidates, Voters, and Preferences}
Let $C$ be a finite set of available \emph{candidates} (also called \emph{alternatives}).
We assume that voters' preferences are available in the form of \emph{approval preferences}, i.e., voters distinguish between alternatives they approve and those that they disapprove---due to this dichotomy such preferences are also called \emph{dichotomous preferences}.
Hence a voter's preference over candidates can be represented by a set of approved alternatives.
Let $N\subseteq \N$ denote a finite set of \emph{voters}.

An \emph{approval profile} is the collection of all voters' preferences; formally it is
a function $A: N \to \powerset(C)$.
We say that $A(i) \subseteq C$ is \emph{voter $i$'s approval ballot}.
Throughout the book, we use $n=|N|$ to denote the number of voters and $m=|C|$ to denote the number of alternatives.
Further, we write $N(c)$ to denote the subset of voters that approve candidate $c$, i.e., $N(c)=\{i\in N: c\in A(i)\}$.

\begin{example}\label{ex:running}
An academic society chooses a steering committee. 
Such a committee consists of four persons ($k=4$) and 
there are seven candidates competing for these positions, $C=\{a,b,c,d,e,f,g\}$.
All members of the society are eligible to vote and may provide approval ballots to indicate their
preference. In total, 12 ballots have been submitted ($N=[12]$):
	\begin{align*}
		&A(1) \colon \{ a,b\} & & A(2) \colon \{ a,b\} & & A(3) \colon \{ a,b\} && A(4) \colon \{ a,c\}\\*
		  & A(5) \colon \{ a,c\} && A(6) \colon \{ a,c\} 
		&&  A(7)\colon  \{ a, d\} && A(8)\colon  \{ a, d\} \\*
		&   A(9)\colon \{ b, c, f\}	&& A(10)\colon \{ e\} &  & A(11)\colon \{ f\}   & & A(12)\colon  \{ g\} \text{.}
	\end{align*}
\Cref{fig:runningexample} shows a graphical representation of this approval profile. In this figure, each column correspond to one voter (one approval set) and each candidate appears in only one row---each candidate is approved by the voters that appear below the boxes that represent the candidate. 
Colours are used to distinguish different candidates.
\begin{figure}
\begin{center}
\begin{tikzpicture}
[yscale=1.0,xscale=2.0]
\newcommand{\width}{.6}

\filldraw[fill=green!10!white, draw=black] (-\width/2,0) rectangle (-\width/2 + 8 * \width,0.5)
node[pos=.5] {$a$};
\filldraw[fill=green!30!white, draw=black] (-\width/2,0.5) rectangle (-\width/2 + 3 * \width,1.0)
node[pos=.5] {$b$};
\filldraw[fill=green!30!white, draw=black] (-\width/2+ 8 * \width,0.5) rectangle (-\width/2 + 9 * \width,1.0)
node[pos=.5] {$b$};
\filldraw[fill=red!10!white, draw=black] (-\width/2+ 3 * \width,1) rectangle (-\width/2 + 6 * \width,1.5)
node[pos=.5] {$c$};
\filldraw[fill=red!30!white, draw=black] (-\width/2+ 6 * \width,0.5) rectangle (-\width/2 + 8 * \width,1.)
node[pos=.5] {$d$};
\filldraw[fill=red!10!white, draw=black] (-\width/2+ 8 * \width,1) rectangle (-\width/2 + 9 * \width,1.5)
node[pos=.5] {$c$};
\filldraw[fill=blue!20!white, draw=black] (-\width/2+ 8 * \width,0) rectangle (-\width/2 + 9 * \width,0.5)
node[pos=.5] {$f$};
\filldraw[fill=cyan!10!white, draw=black] (-\width/2+ 9 * \width,0) rectangle (-\width/2 + 10 * \width,0.5)
node[pos=.5] {$e$};
\filldraw[fill=blue!20!white, draw=black] (-\width/2+ 10 * \width,0) rectangle (-\width/2 + 11 * \width,0.5)
node[pos=.5] {$f$};
\filldraw[fill=cyan!30!white, draw=black] (-\width/2+ 11 * \width,0) rectangle (-\width/2 + 12 * \width,0.5)
node[pos=.5] {$g$};

\foreach \x in {1,...,12}
\node at (-\width + \x*\width, -0.35) {$\x$};

\draw [decorate,decoration={brace,amplitude=10pt},yshift=0pt]
(11.5*\width,-0.6) -- (-0.5*\width,-0.6) node [black,midway,below,yshift=-8pt] 
{voters};
\draw [decorate,decoration={brace,amplitude=10pt},yshift=0pt]
(-0.6*\width,-0.1) -- (-0.6*\width,1.5) node [black,midway,below,xshift=-27pt, rotate=90] 
{candidates};
\end{tikzpicture}
\caption{Graphical representation of the approval profile from \Cref{ex:running}. Each candidate is represented by one or several boxes that appear in a single row in the figure, and that are marked with a candidate-specific colour. A voter approves those candidates whose corresponding boxes appear above the voter. For example, voter $1$ approves candidates $a$ and $b$ and voter 4 approves candidates $a$ and $c$.}
\label{fig:runningexample}
\end{center}
\end{figure}

Sometimes, we are only interested in how often a specific approval set occurs in an approval profile and thus ignore the names (identifiers) of the voters who cast the approval ballots.
In such cases, we do not specify the concrete mapping from $N$ to approval sets but use the following notation:
	\begin{align*}
		&3 \times \{ a,b\} & & 3 \times \{ a,c\} &&  2 \times \{ a, d\} &&  1 \times \{ b, c, f\}\\
		 & 1 \times \{ e\} &  & 1 \times \{ f\}   & & 1 \times \{ g\} \text{.}
	\end{align*}

The reader may ponder which steering committee of size $k=4$ should be selected given this approval profile---there 
is certainly more than one sensible choice.
In the following chapter, we will see how different voting rules decide in this situation.
\end{example}

We do not make assumptions about the size of approval ballots, as we assume that it is the voters' decision how many candidates she approves.
In applications, however, there is sometimes an upper limit on how many candidates can be approved (often the desired committee size). Such a requirement has hardly any effect on the results presented in this book.
In a richer model where voters have underlying, non-dichotomous (i.e., non-binary) preferences, such a restriction would become more relevant; this effect has been analysed by \citet{xiao2019optimal} and \citet{god-bat-sko-fal:c:2d-abc}. The main conclusion is that it is typically better to give the voters freedom in choosing how many candidates they wish to approve. 

\subsection*{Committees and ABC Rules}

As we have seen in \Cref{ex:running}, committees are sets of candidates. Typically, we are interested in committees of a specific size, which we denote by $k$.
The input for choosing such a committee is an \emph{election instance} $E=(A, k)$ consisting of a preference profile $A$ and a desired committee size $k$.
Note that given $A$, we can derive $N$ and $C$ from this function: $N$ is the domain of $A$  and---under the mild assumption that all candidates are approved by at least one voter---$C$ is the union of all function values, i.e., $C=\bigcup_{i\in N} A(i)$. Thus we do not mention $N$ and $C$ in this notation.

Let us now define the key concept of this book: \emph{approval-based committee voting rules} (short: \emph{ABC rules}).
An ABC rule is a voting method for choosing committees, i.e., an ABC rule takes an election instance as input and outputs one or more size-$k$ subsets of candidates.
We refer to these size-$k$ subsets as \emph{winning committees}.

If an ABC rule outputs more than one committee, we say that these committees are \emph{tied}. 
An ABC rule is \emph{resolute} if it always outputs exactly one committee.
In practical settings, it is often undesirable to have more than one winning committee.
Consequently, in many concrete voting systems a tiebreaking method is included so that a resolute outcome is guaranteed.
This tiebreaking method is typically a random process. 
As we assume that an ABC rule is a deterministic process, we further assume that all randomisation is done before the election (or at least before the ABC rule is applied).
Under this assumption, a randomised tiebreaking method corresponds to a fixed (linear) tiebreaking order over committees; if more than one committee is winning, this tie is resolved by picking the winning committee that is maximal in the tiebreaking order. 
In this sense, our model incorporates voting systems that rely on randomised tiebreaking.\footnote{For a more careful study of randomised tie-breaking, one would have to 
model the outcome of a randomised ABC rule as a probability distribution over potentially winning committees.
Note that this distribution is not necessarily uniform.}

Some of the ABC rules defined in the following are resolute, i.e., they always return a single winning committee, and some are irresolute. Most rules can be defined either way; we have chosen the more natural definition for each rule.
%



\begin{changebar}
We are now giving an overview of major ABC voting rules.
For readers who are less interested in this overview, we have marked the most relevant ABC rules with a bar on the side of the page (as shown here); one can largely follow the book with knowledge of these rules only.
\end{changebar}

For the following definitions, we assume that we are given an election instance $E=(A,k)$ with
a voter set~$N$ and a candidate set~$C$.

\section{Thiele Methods}\label{subsec:thiele}

In the single-winner setting, i.e., if $k=1$, there are few reasonable voting rules
when presented with approval ballots. The arguably most natural rule is Approval Voting.
Approval Voting selects those alternatives that are approved by the maximum number of voters, all of which are \mbox{(co-)}winners according to this rule.
Most ABC rules introduced in this chapter are equivalent to Approval Voting for the case $k=1$ (we discuss notable exceptions in \Cref{sec:furtherabc}).
There is, however, one ABC rule that extends the reasoning of Approval Voting
to $k>1$ in the most natural manner; this rule is therefore called
Multi-Winner Approval Voting (short: AV).%
\footnote{Let us briefly mention variants of AV that are widely used in political settings: Block Voting, where voters may not approve more than $k$ candidates (or sometimes exactly $k$), Limited Voting, where voters may approve at most $s$ candidates with $s<k$, and Single Non-Transferable Vote (SNTV), which is Limited Voting for $s=1$. 
Note that properties of AV do not necessarily transfer to these input-restricted variants and vice-versa. For example, forcing voters to approve exactly $k$ candidates appears to have severe negative consequences, as demonstrated by \citet{2dpictures} in numerical experiments.
In this book we consider only AV, which allows arbitrary approval ballots as input.}

\begin{abcrule}[Multi-Winner Approval Voting, AV]
\begin{changebar}
This ABC rule selects the $k$ candidates which are approved by most voters. Formally, the AV-score of an alternative $c\in C$ is defined as $\score{\av}(A, c) = |N(c)| = |\{i\in N: c\in A(i)\}|$ and AV selects committees $W$ that maximise $\score{\av}(A, W) = \sum_{c\in W}\score{\av}(A, c)$.  \end{changebar}\label{rule:av}
\end{abcrule}

\begin{example}
Let us consider the instance of \Cref{ex:running}:
	\begin{align*}
		&3 \times \{ a,b\} & & 3 \times \{ a,c\} &&  2 \times \{ a, d\} &&  1 \times \{ b, c, f\}
		 & 1 \times \{ e\} &  & 1 \times \{ f\}   & & 1 \times \{ g\} \text{.}
	\end{align*}
To compute winning committees according to AV, we count how often each alternative is approved:
$a$: $8$ times, $b$: $4$, $c$: $4$, $d$: $2$, $e$: $1$, $f$: $2$ and $g$: $1$.
We want to select the four most-approved alternatives.
These are $a$, $b$, $c$, and there is a tie between $d$ and $f$ (both having the fourth highest number of approvals).
Hence, AV returns
two tied committees: the sets $W_1=\{a,b,c,d\}$ and $W_2=\{a,b,c,f\}$.
It is noteworthy that $W_1$ leaves three voters completely unsatisfied with the chosen
alternatives, whereas $W_2$ results in only two completely unsatisfied voters.
\end{example}

We continue with an ABC rule that can be seen as the exact opposite of AV. Whereas AV disregards whether some voters completely disagree with a committee, the Approval Chamberlin--Courant rule grants as many voters as possible at least one approved alternative in the committee.
This rule was first mentioned by Thiele\footnotemark~\citep{Thie95a}, and then independently introduced in a different context by Chamberlin and Courant~\citep{ccElection}.

\footnotetext{Thorvald Nicolai Thiele (1838--1910) was a Danish astronomer and mathematician. He was professor of astronomy at the University of Copenhagen and director of the Copenhagen University Observatory. He is most known for his work in mathematics, in particular in statistics~\cite{thiele1903theory,thiele1909interpolationsrechnung,lauritzen2002thiele,thiele-history}. The contributions of Thiele to voting theory are discussed in detail by \citet{Janson16arxiv}.}

\begin{abcrule}[Approval Chamberlin--Courant, CC]\begin{changebar}
The CC rule outputs
all committees $W$ that maximise $\score{\cc}(A, W) = |\{i\in N: A(i)\cap W\neq \emptyset\}|$.
\end{changebar}\end{abcrule}

\begin{example}
Considering again the instance of \Cref{ex:running},
there is exactly one committee that grants each voter
(at least) one approved candidate: $W=\{a, e, f, g\}$.
This is the winning committee according to Approval Chamberlin--Courant.
While this committee indeed provides some satisfaction for every voter,
it includes alternatives ($e$ and $g$) that are approved only by single voters.
\end{example}

The two ABC rules we discussed so far---AV and CC---can be seen as extreme points in the spectrum of ABC rules captured by the class of \emph{Thiele methods}.
This class, introduced by Thiele in the late 19th century~\citep{Thie95a},
encompasses all rules that maximise the sum of the voters' individual satisfaction, subject to a chosen definition of how satisfaction is measured.
The unifying assumption is that a voters' satisfaction with a committee~$W$ is solely determined by the number of approved candidates in this committee, i.e., voter~$i$'s satisfaction is determined by a function $w\left(|W \cap A(i)|\right)$.
By choosing different $w$-functions, a very broad spectrum of ABC rules can be covered.

\begin{abcrule}[Thiele methods, $w$-Thiele\footnotemark]
\begin{changebar}
A Thiele method is parameterized by a
non-decreasing function $w:\naturals\to\reals$ with $w(0)=0$. The score of a committee $W$ given a profile $A$ is defined as \[\score{w}(A, W) = \sum_{i \in N} w\left(|W \cap A(i)|\right);\] the $w$-Thiele method returns committees with maximum score.
\end{changebar}
\end{abcrule}

\footnotetext{The class of Thiele methods is sometimes also referred to as \emph{weighted
PAV rules}~\cite{justifiedRepresentation}; we prefer the term \emph{Thiele methods} as only few rules in this class are actually proportional. \citet{kil-mar:j:minimax-approval} refer to this class as \emph{generalised approval procedures}.}

Indeed, AV is the $w$-Thiele method with $w(x)=x$, and CC is the $w$-Thiele method with $w(x)=\min(1, x)$.
This is an immediate consequence of the respective definitions.

The following Thiele method is arguably one of the most important: Proportional Approval Voting, in short PAV. Also this rule was defined in Thiele's original paper~\cite{Thie95a}.
The definition (and properties) of PAV crucially depend on the sequence of harmonic numbers.

\begin{abcrule}[Proportional Approval Voting, PAV]\begin{changebar}
Let $\H(x) = \sum_{j=1}^{x} \nicefrac{1}{j}$ denote the \emph{sequence of harmonic numbers}.
PAV is $\H$-Thiele, i.e., it is the $w$-Thiele rule with $w(x)=\H(x)$.
In other words, PAV assigns to each committee $W$ the PAV-score, $\score{\pav}(A, W) = \sum_{i \in N} \H\left(|W \cap A(i)|\right)$, and returns all committees with maximum score.
\end{changebar}\end{abcrule}

\begin{figure}
\begin{center}
\begin{tikzpicture}[y=.6cm, x=.8cm,font=\sffamily]

\pgfdeclarelayer{bg}
\pgfsetlayers{bg,main}
    \draw (0,0) -- coordinate (x axis mid) (8.5,0);
        \draw (0,0) -- coordinate (y axis mid) (0,8.5);
        \foreach \x in {0,...,8}
             \draw (\x,1pt) -- (\x,-3pt)
            node[anchor=north] {\x};
        \foreach \y in {0,2,...,8}
             \draw (1pt,\y) -- (-3pt,\y) 
                 node[anchor=east] {\y}; 
    \node[below=0.8cm] at (x axis mid) {number of approved candidates in committee ($x$)};
    \node[rotate=90, above=0.8cm] at (y axis mid) {satisfaction $w(x)$};

    \draw plot[mark=*, mark options={fill=white}] 
        file {av.data};
    \draw plot[mark=triangle*, mark options={fill=white} ] 
        file {pav.data};
    \draw plot[mark=square*]
        file {cc.data}; 
    
    \node[right=0.2cm] at (8,1) {CC};
    \node[right=0.2cm] at (8,2.82441700960219) {PAV};
    \node[right=0.2cm] at (8,8) {AV};        
    
\end{tikzpicture}
\caption{Defining $w$-functions for three Thiele methods: Multi-Winner Approval Voting (AV), Proportional Approval Voting (PAV), and Approval Chamberlin--Courant (CC).}
\label{fig:countingfcts}
\end{center}
\end{figure}

By using the sequence of harmonic numbers $\H(\cdot)$, 
we introduce a flattening satisfaction function for voters, akin to the \emph{law of diminishing returns}.
As a consequence, PAV balances the (justified) demands of large groups with the conflicting goal of satisfying small groups.
Indeed, as we will see in \Cref{sec:proportionality}, Proportional Approval Voting achieves this balance in a proportional fashion. 
\Cref{fig:countingfcts} shows a visualisation of the defining $w$-functions of different Thiele methods: 
\begin{align*}
w_{\av}(x) = x & &  w_{\pav}(x) = \sum_{i=1}^{x} \nicefrac{1}{i} & & w_{\cc}(x) = \begin{cases}0 & \text{if }x = 0,\\1 & \text{if }x\geq 1 \text{.}\end{cases}
\end{align*}
Note that also visually the function defining PAV is ``in between'' AV and CC.

\begin{example}
Given the instance of \Cref{ex:running}:
	\begin{align*}
		&3 \times \{ a,b\} & & 3 \times \{ a,c\} &&  2 \times \{ a, d\} &&  1 \times \{ b, c, f\}
		 & 1 \times \{ e\} &  & 1 \times \{ f\}   & & 1 \times \{ g\} \text{,}
	\end{align*}
PAV selects the committee $W=\{a,b,c, 	f\}$.
For one voter (the one that approves $\{b,c,f\}$) this committee contains three approved alternatives,
for six voters this committee contains two approved alternatives,
for three voters $W$ contains one approved alternative,
and two voters are not at all satisfied with $W$.
Thus, we have $\score{\pav}(A, W) = (1+\nicefrac 1 2+\nicefrac 1 3) + 6 \cdot (1+\nicefrac 1 2) + 3 \cdot 1 = \nicefrac{83}{6}$ and this value is optimal.
Coincidentally, $W$ is one of the two committees produced by AV, namely the one with fewer dissatisfied voters.
It appears that PAV strives for a compromise between AV and CC---this is an intuition that we will discuss 
in more detail later (\Cref{sec:degressive_and_regressive_proportionality}).
\end{example}

Other Thiele methods that have been studied in the literature are the class of $p$-geometric rules~\citep{owaWinner}, threshold procedures~\cite{fishburn2004approval,kil-mar:j:minimax-approval}, and Sainte-Lagu\"e Approval Voting (SLAV)~\cite{lac-sko2019}.


Thiele methods pick committees that maximise a certain welfare of the voters and thereby belong to a broader class of welfarist rules.

\begin{definition}\label{def:welfarist_rules}
A \emph{welfare vector} induced by a committee $W$ specifies, for each voter, her satisfaction from $W$ (measured as the number of candidates she approves in $W$):
\begin{align*}
 \welf(W) = (|A(1) \cap W|, |A(2) \cap W|, \ldots, |A(n) \cap W|) \text{.}
\end{align*}
An ABC rule $\calR$ is \emph{welfarist} if there is a function $f\colon \naturals^N \to \reals$, mapping welfare vectors to scores, such that for each instance $(A, k)$ we have \[\calR(A, k) = \argmax_{W\subseteq C\text{ with } |W|=k} f(\welf(W)).\]
\end{definition}

In this definition, $f(\welf(W))$ can be viewed as the welfare that voters gain from $W$.
For Thiele methods, $f(\welf(W)) = \score{w}(A, W)$, i.e., welfare is the sum of the voters' $w$-scores.
The class of welfarist rules also allows for an aggregation other than summation.
For example, one can define $f(\welf(W))$ as the satisfaction of the least-satisfied voter---akin to egalitarian aggregation~\cite{moulinAxioms}.
Another example of a welfarist rule is a dictatorial rule which compares welfare vectors lexicographically given a fixed order of voters: the first voter in this order is a dictator and only if the dictator is indifferent between two outcomes, the second-in-place may decide, and so on.

These other forms of aggregation have been studied in the context of multi-winner elections with ranking-based preferences (for the egalitarian aggregation see the work of \citet{AFGST-egalitarian} and \citet{sko-fal-sli:j:multiwinner}; for OWA-based aggregation see the work of~\citet{elk-ism:c:owa-egalitarian-utilitarian} and \citet{FaliszewskiSST17}). For approval ballots, we are aware of only two works that consider such aggregations. Computational properties of CC and Monroe rules based on the egalitarian aggregation are considered by \citet{fullyProportionalRepr}. \citet{amanatidis2015multiple} consider OWA-based aggregation but for other types of welfare of individual voters.
Specifically, the satisfaction of voters with a committee is measured via the Hamming distance, which is in contrast to the definition of $\welf(W)$.
The most important rule based on the Hamming distance is Minimax Approval Voting, which we discuss in \Cref{sec:furtherabc}.

\section{Sequential Variants of Thiele Methods}\label{sec:seq-rules}

Thiele methods are defined via optimisation statements: given an objective function, Thiele methods return all committees that maximise this function.
Instead of computing the true optimum (which is computationally hard, as we will see in \Cref{sec:algorithms}), one can define sequential procedures that construct an approximate solution.
We define here two classes of sequential procedures: sequential and reverse sequential Thiele methods.
Both classes have been introduced in Thiele's original paper~\cite{Thie95a} (see Janson's survey for further historical remarks~\cite{Janson16arxiv}).
Furthermore, both classes can be seen as greedy approximation algorithms for Thiele methods; we return to this analogy in \Cref{sec:compute-approx}.

Let us begin with sequential Thiele methods: starting with an empty committee, they add committee members one by one, in each step the one that increases the objective function the most.

\begin{abcrule}[Sequential $w$-Thiele, seq-$w$-Thiele]\begin{changebar}
For each $w$-Thiele method, we define its sequential variant, seq-$w$-Thiele, as follows. We start with an empty committee $W_0 = \emptyset$.
In each round $r\in\{1,\dots,k\}$, we compute $W_r=W_{r-1} \cup \{c\}$, where $c$ is a candidate that maximises $\score{w}(A, W_{r-1} \cup \{c\})$, i.e., the candidate that improves the committee's score the most. If more than one candidate yields a maximum score, we break ties according to some given tie-breaking order. The seq-$w$-Thiele rule returns $W_k$.
\end{changebar}\end{abcrule}

Two sequential Thiele methods will be of particular interest here: sequential $w_\pav$-Thiele and sequential $w_\cc$-Thiele. We refer to these two rules as \emph{seq-PAV} and \emph{seq-CC}.
In contrast, the sequential variant of AV (seq-$w_\av$-Thiele) is not relevant to us as it is equivalent to AV. This is because the AV-score ($\score{\av}$) of candidates is not influenced by the other candidates in the committee.


\begin{example}\label{ex:seqpav}
Since the instance of \Cref{ex:running} yields the same result for PAV and seq-PAV (and also for CC and seq-CC), we take a look at a different profile:
	\begin{align*}
		&3 \times \{ a,b \} & & 6 \times \{ a, d\} && 4 \times \{ b \} & & 5 \times \{c\} & & 5 \times \{c, d\} \text{.}
	\end{align*}
For $k=2$, PAV selects the committee $\{a,c\}$ with a PAV-score of $19$.
(Each voter except those that approve only candidate $b$ has exactly one approved candidate in the committee.)
Let us contrast this result with seq-PAV. 
All sequential Thiele methods with $w(1)>0$, including seq-PAV, select the candidate with the largest number of approvals in the first round---the winner according to (single-winner) Approval Voting.
Thus, $d$ is selected in the first round as it gives an AV-score of $11$.
In the second round, we choose between $a$ (increasing the score by $6$) and $b$
(increasing the score by $7$) and $c$ (increasing the score by $7.5$). Hence, seq-PAV returns the committee $\{c, d\}$ with a PAV-score of $18.5$.
\end{example}

Similarly to sequential Thiele methods, \emph{reverse sequential Thiele methods} build committees sequentially, but here one starts with the set of all candidates and sequentially removes the candidate that contributes the least to the committee's score.\footnote{This idea of removing candidates with the lowest score can also be found in ranking-based voting rules such as STV or Baldwin \cite{Handbook-comsocintro}.}

\begin{abcrule}[Reverse Sequential $w$-Thiele, rev-seq-$w$-Thiele] For each $w$-Thiele method, we define its reverse sequential variant, rev-seq-$w$-Thiele, as follows. We start with $W_m = C$, the set of all candidates. 
Each round, the candidate with the least marginal contribution to the score is removed.
To be precise, in each round $r$ from $m-1$ down to $k$, we compute $W_{r}=W_{r+1} \setminus \{c\}$, where $c$ is a candidate that maximises $\score{w}(A, W_{r+1} \setminus \{c\})$, i.e., the candidate whose removal decreases the committee's score the least. 
If more than one candidate does that, we break ties according to some given tie-breaking order. The rev-seq-$w$-Thiele rule returns $W_k$.
\end{abcrule}

In the remainder of the book, we will only encounter \emph{reverse sequential PAV (rev-seq-PAV)} from the class of Reverse Sequential $w$-Thiele methods.

\addtocounter{example}{-1}
\begin{example}[continued]
For rev-seq-PAV, 
we start with the full set of candidates $W_4=\{a,b,c,d\}$ and remove the candidate with the least marginal contribution:
removing $a$ decreases the score by $4.5$, removing $b$ decreases the score by $5.5$, $c$ by $7.5$, and $d$ by $5.5$. Thus, $a$ is removed and $W_3=\{b,c,d\}$. Now, we again compute the marginal contributions: for $b$ it is $7$, for $c$ it is $7.5$, and for $d$ it is $8.5$. We obtain $W_2=\{c,d\}$, which is the winning committee.
We see that for this instance seq-PAV and rev-seq-PAV yield the same winning committee. This does not hold in general.

An election instance where PAV, seq-PAV, and rev-seq-PAV all yield different winning committees can be found in Janson's survey~\cite[Example 13.3]{Janson16arxiv}. The example is due to Thiele~\cite{Thie95a} and is significantly larger than the examples presented here.
\end{example}

As we have mentioned in \Cref{subsec:thiele}, most ABC rules coincide with Approval Voting for $k=1$. Reverse Sequential PAV is an exception.
This is, however, not a consequence of the underlying assumptions how ballots are interpreted, but a consequence of how the rule is computed (i.e., in a reverse fashion).

\begin{example}
To see that rev-seq-PAV is a non-standard method, consider the profile:
	\begin{align*}
		&1 \times \{ a, b\} & & 1 \times \{ a, b,c\} & & 1 \times \{ a, b,d\}& & 2 \times \{ a, c,d\} & & 1 \times \{ b\} & & 1 \times \{ c\} & & 1 \times \{ d\} \text{.}
	\end{align*}
In the first round, the marginal contribution of $a$ is $\nicefrac 1 2 + 4\cdot \nicefrac 1 3$; the marginal contribution from the other candidates is at least $2$. Thus, candidate $a$ is removed in the first round, even though it has the highest approval score.
\end{example}

Finally, let us mention a paper by \citet{FLPT-effective} which considers and compares several heuristic algorithms for approximating multi-winner rules (e.g., via simulated annealing).
This line of work has not yet been extended specifically to Thiele methods, though the ideas in their work can be applied to the ABC setting.

\section{Monroe's Rule}

Monroe's rule~\citep{monroeElection} is an ABC rule\footnotemark{} related to the Chamberlin--Courant rule.
It also aims at maximising the number of voters who are represented by at least one candidate in the elected committee.
The main difference is that each committee member can represent at most $\nicefrac 1 k$-th of the voters.

\footnotetext{
Although Monroe defined his rule in the original paper primarily for linear preference orders~\citep{monroeElection},
he considered the modified version based on approval ballots the ``most promising option'' for actual (political) use. If the distinction between these two rules is necessary, the approval-based version is often denoted as $\alpha$-Monroe; we do not need this distinction as we focus solely on approval ballots.}

\begin{abcrule}[Monroe]
Given a committee $W$,
a Monroe assignment for $W$ is a function $\phi \colon N \to W$ such that each candidate $c \in W$ is assigned roughly the same number of voters,
i.e., for all $c\in W$ it holds that $\lfloor \nicefrac{n}{k} \rfloor \leq |\phi^{-1}(c)| \leq \lceil \nicefrac{n}{k} \rceil$.
The candidate $\phi(i)$ can be viewed as the representative of voter $i$.
Let $\Phi(W)$ be the set of all possible Monroe assignments for $W$.
The Monroe-score of a committee $W$ is defined as the number of voters that have a representative assigned that they approve (given an optimal Monroe assignment), i.e.,
$\score{\monroe}(A, W) = \max_{\phi \in \Phi(W)} |\{i\in N: \phi(i) \in A(i)\}|$.
Monroe returns all committees with a maximum Monroe score.
\end{abcrule}

\begin{example}\label{ex:monroe}
Consider again the profile of \Cref{ex:running}:
	\begin{align*}
		&A(1) \colon \{ a,b\} & & A(2) \colon \{ a,b\} & & A(3) \colon \{ a,b\} && A(4) \colon \{ a,c\}\\
		  & A(5) \colon \{ a,c\} && A(6) \colon \{ a,c\} 
		&&  A(7)\colon  \{ a, d\} && A(8)\colon  \{ a, d\} \\
		&   A(9)\colon \{ b, c, f\}	&& A(10)\colon \{ e\} &  & A(11)\colon \{ f\}   & & A(12)\colon  \{ g\} \text{.}
	\end{align*}
We first note that the desired committee size $k=4$ divides the number of voters ($n=12$) and hence  Monroe assigns exactly $3$ voters to each committee member.
One optimal Monroe assignment (among many) is shown in \Cref{fig:monroe} and given by $\phi^{-1}(a)=\{3, 7, 8\}$, $\phi^{-1}(b)=\{1, 2, 9\}$, $\phi^{-1}(c)=\{4, 5, 6\}$, $\phi^{-1}(e)=\{10, 11, 12\}$. The Monroe score of this assignment is $\score{\monroe}(A, W) = 10$, since only voters $11$ and $12$ are assigned to a representative (candidate $e$) that they do not approve. In total there are six winning committees; committee $\{a,b,c,e\}$ is one of them.
\end{example}

\begin{figure}
\begin{center}
\begin{tikzpicture}
[yscale=1.0,xscale=2.0]
\newcommand{\width}{.65}

\filldraw[fill=green!10!white, draw=black] (-\width/2,0) rectangle (-\width/2 + 8 * \width,0.5)
node[pos=.5] {$a$};
\filldraw[fill=green!30!white, draw=black] (-\width/2,0.5) rectangle (-\width/2 + 3 * \width,1.0)
node[pos=.5] {$b$};
\filldraw[fill=green!30!white, draw=black] (-\width/2+ 8 * \width,0.5) rectangle (-\width/2 + 9 * \width,1.0)
node[pos=.5] {$b$};
\filldraw[fill=red!10!white, draw=black] (-\width/2+ 3 * \width,1) rectangle (-\width/2 + 6 * \width,1.5)
node[pos=.5] {$c$};
\filldraw[fill=white, draw=black] (-\width/2+ 6 * \width,0.5) rectangle (-\width/2 + 8 * \width,1.)
node[pos=.5] {$d$};
\filldraw[fill=red!10!white, draw=black] (-\width/2+ 8 * \width,1) rectangle (-\width/2 + 9 * \width,1.5)
node[pos=.5] {$c$};
\filldraw[fill=white, draw=black] (-\width/2+ 8 * \width,0) rectangle (-\width/2 + 9 * \width,0.5)
node[pos=.5] {$f$};
\filldraw[fill=cyan!10!white, draw=black] (-\width/2+ 9 * \width,0) rectangle (-\width/2 + 10 * \width,0.5)
node[pos=.5] {$e$};
\filldraw[fill=white, draw=black] (-\width/2+ 10 * \width,0) rectangle (-\width/2 + 11 * \width,0.5)
node[pos=.5] {$f$};
\filldraw[fill=white, draw=black] (-\width/2+ 11 * \width,0) rectangle (-\width/2 + 12 * \width,0.5)
node[pos=.5] {$g$};

\foreach \x in {1,...,12}
\node at (-\width + \x*\width, -0.35) {$\x$};

\newcommand{\mylist}{1/b, 2/b, 3/a, 4/c, 5/c, 6/c, 7/a, 8/a, 9/b, 10/e, 11/e, 12/e}
\foreach \x/\name in \mylist
\node at (-\width + \x*\width, 2.) {\bf \name};
\end{tikzpicture}
\caption{An optimal Monroe assignment for \Cref{ex:monroe}: the top row shows the assigned representative for each voter. For example, the assigned representative of voter $1$ is $b$; voter 12 is dissatisfied with her assigned representative $e$.}
\label{fig:monroe}
\end{center}
\end{figure}

Monroe's rule has also a natural sequential version called \emph{Greedy Monroe}, which was introduced by~\citet{sko-fal-sli:j:multiwinner}.\footnote{Greedy Monroe is called \emph{Algorithm A} in the original paper~\citep{sko-fal-sli:j:multiwinner} and is defined therein only for instances where $k$ divides $n$. The first general definition was given in~\cite{elk-fal-sko-sli:c:multiwinner-rules}.}
We present Greedy Monroe here in a slightly simpler, more practical fashion, where dissatisfied voters are not assigned to groups.


\begin{abcrule}[Greedy Monroe] This ABC rule proceeds in $k$ rounds: In each round $r\in\{1,\dots, k\}$ Greedy Monroe assigns a candidate to a group of voters $G_r$ of size at most $n_r$ (defined below); this candidate is added to the committee.
The maximum size of a group, $n_r$, is defined as follows: 
for $d = n \mod k$, 
we set $n_1=\dots=n_d=\lceil \nicefrac{n}{k} \rceil$ and $n_{d+1}=\dots=n_k=\lfloor \nicefrac{n}{k} \rfloor$.
In round $r+1$, let $N_{r+1}$ denote the voters that have not yet an assigned committee member, i.e., $N_{r+1}=N\setminus (G_1\cup\dots\cup G_{r})$.
Candidate $c_{r+1}$ is chosen as the candidate~$c$ that maximises $|\{i\in N_{r+1}: c\in A(i)\}|$ among those not contained in the committee yet (using a tiebreaking order on candidates if necessary).
Now, if there are at most $n_{r+1}$ not yet assigned voters that approve $c_{r+1}$, then $G_{r+1}=\{i\in N_{r+1}: c_{r+1}\in A(i)\}$; if there are more than $n_{r+1}$ such voters, a tiebreaking order on voters is used to assign exactly $n_{r+1}$ from these voters to $G_{r+1}$.
Greedy Monroe outputs the committee $\{c_1,\dots,c_k\}$.
\end{abcrule}

\begin{example}\label{ex:greedymonroe}
In our running example (\Cref{ex:running}) given by
	\begin{align*}
		&A(1) \colon \{ a,b\} & & A(2) \colon \{ a,b\} & & A(3) \colon \{ a,b\} && A(4) \colon \{ a,c\}\\
		  & A(5) \colon \{ a,c\} && A(6) \colon \{ a,c\} 
		&&  A(7)\colon  \{ a, d\} && A(8)\colon  \{ a, d\} \\
		&   A(9)\colon \{ b, c, f\}	&& A(10)\colon \{ e\} &  & A(11)\colon \{ f\}   & & A(12)\colon  \{ g\} \text{,}
	\end{align*}
Greedy Monroe first picks candidate $a$ as it is approved by most voters. We assume that ties among voters are broken in increasing order, so $G_1=\{1, 2, 3\}$.
Now $c$ is chosen since it is the only candidate with four supporters among the remaining voters ($N_2=\{4,\dots,12\}$).
The corresponding group of voters is $G_2=\{4,5,6\}$ (again choosing voters with smaller indices first).
Now there are two candidates left that are approved by two voters in the remaining set ($N_3=\{7,\dots,12\}$): candidates~$d$ and~$f$.
We choose $d$ by alphabetic tiebreaking and so we set $G_3=\{7,8\}$.
Finally, there is one candidate that has two supporting voters in $N_4=\{9,\dots,12\}$: $f$ is approved by voters~$9$ and~$11$; thus $G_4=\{9, 11\}$.
A Monroe assignment corresponding to this committee $\{a,c,d,f\}$ is, e.g., given by $\phi^{-1}(a)=\{1,2,3\}$, $\phi^{-1}(c)=\{4,5,6\}$, $\phi^{-1}(d)=\{7,8,10\}$, and $\phi^{-1}(f)=\{9, 11, 12\}$. In this instance, Greedy Monroe was able to find a committee with an optimal Monroe score, but this does not hold in general.
\end{example}

\section{\phragmen's Rules}

\phragmen\footnotemark {} introduced a number of voting rules, most of which are based on a form of cost-sharing (or load balancing).
The core idea is that placing a candidate in the winning committee incurs a cost, or load, that has to be shouldered by the voters who approve this candidate.
The goal is to choose a committee that allows for as equal as possible a distribution of its cost.
In this way, the preferences of as many voters as possible are taken into account.

\footnotetext{Lars Edvard \phragmen\ (1863--1937)~\cite{carleman1938phragmen,phrag-history,Janson16arxiv,stubhaug2010gosta}
was a Swedish mathematician and an actuary. He was a professor of mathematics at Stockholm University and long-time editor of Acta Mathematica.
His best known mathematical work is the \phragmen-Lindel\"of principle in complex analysis~\cite{phragmen1908extension},
but he also published several works on election methods \cite{Phrag93,Phra94a, Phra95a, Phra96a, Phra99a} and was involved in Swedish electoral reforms; see Janson's survey \cite{Janson16arxiv} for a comprehensive summary of his work on election methods.}

\phragmen{}'s main proposal is called \emph{\phragmen's Sequential Rule} (seq-\phragmen).
Even though \phragmen's Sequential Rule can be considered one of the most appealing ABC rules,
it remained unknown to many social choice researchers until recently.
Few publications before 2017 mention \phragmen's methods; notable exceptions are a survey by \citet{Jans12a} (in Swedish) and a paper by \citet{MoOl15a} (in Catalan).
Since 2017 several papers have proven \phragmen's method to be a particularly strong ABC rule, in particular being a proportional ABC rule that is both polynomial-time computable and committee monotone.

We present two (equivalent) formulations of seq-\phragmen. The first is conceptually simpler, while the second gives a clearer picture how the rule is computed in practice.

\begin{abcrule}
[\phragmen's Sequential Rule, seq-\phragmen]\label{rule:seqphrag}\begin{changebar}
This ABC rule is based on the assumption that placing a candidate in the winning committee incurs a cost (or a load) of~$1$, which is distributed among the set of voters that approve this candidate.

\paragraph{Continuous formulation:}
We assume that each voter has a budget which constitutes his or her voting power. Voters start with a budget of $0$ and this budget continuously increases as time advances. At time $t$, the budget of each voter is $t$. As soon as a group of voters that jointly approve a candidate has a total budget of $1$, the joint candidate is added to the winning committee. Then the budget of all involved voters is reset to~$0$; only voters that do not approve the selected candidate keep their current budget. This process continues until the committee is filled. If at some point two candidates could be added to the committee at the same time, a tie-breaking order is used to decide which candidate is selected.
\end{changebar}

\paragraph{Discrete formulation:}
seq-\phragmen{} works in rounds; each round one candidate is added to the committee.
Let $y_r(v)$ denote the load assigned to (or cost contributed by) voter $v$ after round $r\leq k$.
We naturally start with $y_0(v)=0$ for all $v\in N$.
Let $\{c_1,\dots,c_{r-1}\}$ be the candidates added to the committee in rounds $1$ to $r-1$.
To determine the next candidate $c_{r}$ to add, we compute for each candidate $c\in C\setminus\{c_1,\dots,c_{r-1}\}$ the maximum load that would arise from adding $c_{r}$:
\[\ell_{r}(c)=\frac{1 + \sum_{i\in N(c)} y_{r-1}(i)}{|N(c)|};\]
the load of voters in $N(c)$ would increase
to this amount if $c$ were added to the committee.
Note that the load is distributed so that all voters approving $c$ end up with the same total load; this is so to minimise the maximum load.
Now, to keep the maximum load as small as possible, seq-\phragmen{} chooses
the candidate $c$ with a minimum $\ell_{r}(c)$, i.e.,
\[c_{r} = \argmin_{c\in C\setminus\{c_1,\dots,c_{r-1}\}} \ell_{r}(c).\]
If two or more candidates yield the same maximum load, a tie-breaking method is required (typically some fixed order on $C$).
After choosing $c_{r}$, the voter loads are adapted accordingly:
\[y_{r}(i) = \begin{cases} \ell_{r}(c_{r}) & \text{if }i\in N(c_{r}),\\
y_{r-1}(i) & \text{if }i\notin N(c_{r}).\end{cases}\]
The rule returns the winning committee $\{c_1,\dots,c_k\}$.
\end{abcrule}

To see that these two formulations are equivalent, note that for a winning committee $W=\{c_1,\dots,c_k\}$ (selected in this order) the maximum loads in each round $\ell_r(c_r)$ directly corresponds to the time points at which sufficient budget was available to pay for $c_r$. From this point of view, the discrete formulation is only the explicit calculation of time points at which sufficient budget is available to place a new candidate in the committee.

\begin{example}\label{ex:seqphragmen}
Let us again consider our running example (\Cref{ex:running}):
	\begin{align*}
		&3 \times \{ a,b\} & & 3 \times \{ a,c\} &&  2 \times \{ a, d\} &&  1 \times \{ b, c, f\}
		 & 1 \times \{ e\} &  & 1 \times \{ f\}   & & 1 \times \{ g\} \text{.}
	\end{align*}
We use the continuous formulation to describe the method, but it is easy to repeat the calculations using the discrete formulation.
\Cref{fig:seqphragmen} shows a visualisation of the procedure, which we will now explain step by step.
The first time sufficient budget is available to add a candidate to the committee is at time $t_1=\nicefrac 1 8$. At this point, voters $\{1,\dots, 8\}$ can jointly pay for candidate~$a$.
Now the budgets of voters $1$ to $8$ are reset to $0$; the remaining voters have a budget of $\nicefrac 1 8$ each.

A second candidate can be added to the committee at time $t_2=\nicefrac{11}{32}$. Voters 1, 2, 3, 9 approve candidate~$b$; their respective budgets are $(\nicefrac{7}{32},\nicefrac{7}{32},\nicefrac{7}{32},\nicefrac{11}{32})$ (note that voters 1, 2, and 3 have budgets that are by $\nicefrac 1 8$ lower than that of voter $9$).
At this time, also voters 4, 5, 6, 9 (who all approve candidate $c$) have a joint budget of~$1$. We use alphabetic tiebreaking and select~$b$.

Candidate~$c$ is then added as a third candidate at time $t_3=\nicefrac{55}{128}$. At this point, voters~4, 5, and 6 have budgets of $\nicefrac{39}{128}$, and voter 9 has a budget of $\nicefrac{11}{128}$; that's in total~$1$. Note that these numbers follow from the fact that voters 4--6 already paid $\nicefrac 1 8$ each for selecting candidate~$a$ and voter 9 paid $\nicefrac{11}{32}$ for selecting candidate~$b$.

Finally, at time $t_4=\nicefrac 5 8$ the last candidate, $d$, is added to the committee. At this point, the two voters approving $d$ (voters~7 and~8) have budgets of $\nicefrac 5 8 - \nicefrac 1 8 = \nicefrac 1 2$, in total $1$. Thus, seq-\phragmen{} returns the committee $\{a,b,c,d\}$.
When repeating this calculation using the discrete formulation, one obtains the final loads
$y_4 = (t_2, t_2, t_2, t_3, t_3, t_3, t_4, t_4, t_3, 0,0,0)$.
\end{example}

\begin{figure}
\begin{center}
\begin{tikzpicture}
[yscale=1.0,xscale=1.9]
\newcommand{\width}{.62}

\filldraw[fill=green!10!white, draw=black] (-\width/2,0) rectangle (-\width/2 + 8 * \width,0.5)
node[pos=.5] {$a$};
\filldraw[fill=green!30!white, draw=black] (-\width/2,0.5) rectangle (-\width/2 + 3 * \width,1.0)
node[pos=.5] {$b$};
\filldraw[fill=green!30!white, draw=black] (-\width/2+ 8 * \width,0.5) rectangle (-\width/2 + 9 * \width,1.0)
node[pos=.5] {$b$};
\filldraw[fill=red!10!white, draw=black] (-\width/2+ 3 * \width,1) rectangle (-\width/2 + 6 * \width,1.5)
node[pos=.5] {$c$};
\filldraw[fill=red!30!white, draw=black] (-\width/2+ 6 * \width,0.5) rectangle (-\width/2 + 8 * \width,1.)
node[pos=.5] {$d$};
\filldraw[fill=red!10!white, draw=black] (-\width/2+ 8 * \width,1) rectangle (-\width/2 + 9 * \width,1.5)
node[pos=.5] {$c$};
\filldraw[fill=white, draw=black] (-\width/2+ 8 * \width,0) rectangle (-\width/2 + 9 * \width,0.5)
node[pos=.5] {$f$};
\filldraw[fill=white, draw=black] (-\width/2+ 9 * \width,0) rectangle (-\width/2 + 10 * \width,0.5)
node[pos=.5] {$e$};
\filldraw[fill=white, draw=black] (-\width/2+ 10 * \width,0) rectangle (-\width/2 + 11 * \width,0.5)
node[pos=.5] {$f$};
\filldraw[fill=white, draw=black] (-\width/2+ 11 * \width,0) rectangle (-\width/2 + 12 * \width,0.5)
node[pos=.5] {$g$};

\foreach \x in {1,...,12}
\node at (-\width + \x*\width, -0.35) {$\x$};

\newcommand{\yoffset}{2.5}
\newcommand{\timelength}{2.5}
\newcommand{\timeone}{\timelength*1/8/5*8};
\newcommand{\timetwo}{\timelength*11/32/5*8};
\newcommand{\timethree}{\timelength*55/128/5*8};
\newcommand{\timefour}{\timelength*5/8/5*8};

\draw (-\width/2,\yoffset) rectangle (-\width/2 + 8 * \width,\yoffset+\timeone);
\filldraw[fill=green!10!white, draw=black] (-\width/2,\yoffset) rectangle (-\width/2 + 8 * \width,\yoffset+\timeone)
node[pos=.5] {$a$};
\filldraw[fill=red!10!white, draw=black] (-\width/2+ 3 * \width,\yoffset+\timeone) rectangle (-\width/2 + 6 * \width,\yoffset+\timethree)
node[pos=.5] {$c$};
\filldraw[fill=red!10!white, draw=black] (-\width/2+ 8 * \width,\yoffset+\timetwo) rectangle (-\width/2 + 9 * \width,\yoffset+\timethree)
node[pos=.5] {$c$};

\filldraw[fill=green!30!white, draw=black] (-\width/2,\yoffset+\timeone) rectangle (-\width/2 + 3 * \width,\yoffset+\timetwo)
node[pos=.5] {$b$};
\filldraw[fill=green!30!white, draw=black] (-\width/2+ 8 * \width,\yoffset) rectangle (-\width/2 + 9 * \width,\yoffset+\timetwo)
node[pos=.5] {$b$};

\filldraw[fill=red!30!white, draw=black] (-\width/2+ 6 * \width,\yoffset+\timeone) rectangle (-\width/2 + 8 * \width,\yoffset+\timefour)
node[pos=.5] {$d$};

\draw[thick,->] (-0.8*\width, \yoffset) -- (-0.8*\width, \yoffset+\timelength+.3);
\draw[thick, solid] (-0.8*\width, \yoffset+0) -- (11.6 * \width, \yoffset+0);

\draw[thick,dotted] (-0.95*\width, \yoffset+\timeone) -- (-0.5 * \width, \yoffset+\timeone);
\draw[thick,dotted] (8.5*\width, \yoffset+\timeone) -- (11.6 * \width, \yoffset+\timeone);
\draw[thick,dotted] (-0.95*\width, \yoffset+\timetwo) -- (-0.5 * \width, \yoffset+\timetwo);
\draw[thick,dotted] (8.5*\width, \yoffset+\timetwo) -- (11.6 * \width, \yoffset+\timetwo);
\draw[thick,dotted] (-0.95*\width, \yoffset+\timethree) -- (2.5 * \width, \yoffset+\timethree);
\draw[thick,dotted] (8.5*\width, \yoffset+\timethree) -- (11.6 * \width, \yoffset+\timethree);
\draw[thick,dotted] (-0.95*\width, \yoffset+\timefour) -- (5.5 * \width, \yoffset+\timefour);
\draw[thick,dotted] (7.5*\width, \yoffset+\timefour) -- (11.6 * \width, \yoffset+\timefour);
                                 
\node[inner sep=0,anchor=east] at (-1*\width, \yoffset+\timeone) {$\nicefrac 1 8$};
\node[inner sep=0,anchor=east] at (-1*\width, \yoffset+\timetwo-0.1) {$\nicefrac{11}{32}$};               
\node[inner sep=0,anchor=east] at (-1*\width, \yoffset+\timethree+0.1) {$\nicefrac{55}{128}$};               
\node[inner sep=0,anchor=east] at (-1*\width, \yoffset+\timefour) {$\nicefrac 5 8$};               
\end{tikzpicture}
\caption{A visualisation of seq-\phragmen{} (upper part) applied to the election instance of \Cref{ex:running} (lower part). In the upper part all regions of the same colour (corresponding to the same candidate) have an area of $1$, which is the budget spent on this candidate.}
\label{fig:seqphragmen}
\end{center}
\end{figure}

\phragmen{} also discussed optimisation-based analogues of seq-\phragmen{}.
These rules are based on choosing a committee that optimises an objective function (in a similar way as Thiele methods optimise an objective function).
We will discuss the most notable optimisation-based method: \lexphrag{}\footnote{\phragmen{} discusses optimisation variants of his rule in \cite{Phra96a} and proposes to minimise the maximum load (see \cite{Janson16arxiv}); this rule has been referred to as opt-\phragmen\ or max-\phragmen. \citet{aaai/BrillFJL17-phragmen} show that it is more sensible to use a lexicographic comparison of loads instead of only considering the maximum load. We thus only discuss \lexphrag{} (referred to as opt-\phragmen{} in \cite{aaai/BrillFJL17-phragmen}). Further optimisation variants exist, such as minimising the variance of loads \cite{Phra96a,aaai/BrillFJL17-phragmen,Janson16arxiv}.} \cite{Phra96a,aaai/BrillFJL17-phragmen,Janson16arxiv}.

\begin{abcrule}[\phragmen's Leximax Rule, \lexphrag{}]
Each candidate in the committee incurs a load (or cost) of $1$ which has to be distributed among voters approving this candidate.
Given a committee $W=\{c_1,\dots,c_k\}$, a \emph{valid load distribution for $W$} is a function
$\ell_W:W\times N\to[0,1]$ which satisfies \begin{inparaenum}[(1)]
\item if $\ell_W(c, i) > 0$ then voter $i$ approves~$c$, and
\item $\sum_{i \in N} \ell_W(c, i) = 1$ 
\end{inparaenum}
for all $c\in W$.
Let $\bar\ell_W=\left(\sum_{c\in W} \ell_W(c,i) \right)_{i\in N}$ denote the vector of total loads assigned to the voters.

To compare two (valid) load distributions, we use a lexicographic order.
Given a valid load distribution $\ell_W$ for $W$, let
$\mathit{sort}(\bar\ell_W)$ denote the tuple $\bar\ell_W$ sorted from largest to smallest.
Let $\ell_W$ and $\ell_{W'}$ denote two valid load distributions for committees $W$ and $W'$, respectively.
We say that $\ell_W$ is lexicographically smaller than $\ell_{W'}$ if there exists an index $j\leq |N|$ such that the first $j$ entries of $\mathit{sort}(\bar\ell_W)$ and $\mathit{sort}(\bar\ell_{W'})$ are equal and the $(j+1)$-st entry of $\mathit{sort}(\bar\ell_W)$ is strictly smaller than the $(j+1)$-st entry of $\mathit{sort}(\bar\ell_{W'})$.


Let $\ell_W^{\min}$ denote a lexicographically smallest valid load distribution for committee~$W$.
Then, \lexphrag{} returns all committees~$W$ for which $\ell_W^{\min}$ is lexicographically 
minimal in the set $\{\ell_{W'}^{\min} : W'\subseteq C\text{ and }|W'|=k \}$.
Note that if \lexphrag{} returns two committees $W_1$ and $W_2$, then $\mathit{sort}(\bar\ell_{W_1}^{\min} )=\mathit{sort}(\bar\ell_{W_2}^{\min})$.
\end{abcrule}

\begin{example}
In our running example, \lexphrag{} behaves differently than seq-\phragmen. When looking for a committee that has the lexicographically smallest load distribution, we find committee $W=\{a,b,c,f\}$ with $\bar\ell_W^{\min}=(\nicefrac 3 8, \nicefrac 3 8, \nicefrac 3 8, \nicefrac 3 8, \nicefrac 3 8, \nicefrac 3 8,  \nicefrac 3 8, \nicefrac 3 8, \nicefrac 1 2, 0, \nicefrac 1 2,  0)$. This load distribution is depicted in \Cref{fig:leximax-phragmen}. Committee~$W$ is the only winning committee; for example, committee $W'=\{a,b,c,d\}$ (the winning committee of seq-\phragmen) has $\bar\ell_{W'}^{\min}=(\nicefrac 3 7, \nicefrac 3 7, \nicefrac 3 7, \nicefrac 3 7, \nicefrac 3 7, \nicefrac 3 7, \nicefrac 1 2, \nicefrac 1 2, \nicefrac 3 7, 0, 0, 0)$, which is lexicographically larger.
\end{example}

\begin{figure}
\begin{center}
\begin{tikzpicture}
[yscale=1.0,xscale=1.9]
\newcommand{\width}{.62}

\filldraw[fill=green!10!white, draw=black] (-\width/2,0) rectangle (-\width/2 + 8 * \width,0.5)
node[pos=.5] {$a$};
\filldraw[fill=green!30!white, draw=black] (-\width/2,0.5) rectangle (-\width/2 + 3 * \width,1.0)
node[pos=.5] {$b$};
\filldraw[fill=green!30!white, draw=black] (-\width/2+ 8 * \width,0.5) rectangle (-\width/2 + 9 * \width,1.0)
node[pos=.5] {$b$};
\filldraw[fill=red!10!white, draw=black] (-\width/2+ 3 * \width,1) rectangle (-\width/2 + 6 * \width,1.5)
node[pos=.5] {$c$};
\filldraw[fill=white, draw=black] (-\width/2+ 6 * \width,0.5) rectangle (-\width/2 + 8 * \width,1.)
node[pos=.5] {$d$};
\filldraw[fill=red!10!white, draw=black] (-\width/2+ 8 * \width,1) rectangle (-\width/2 + 9 * \width,1.5)
node[pos=.5] {$c$};
\filldraw[fill=blue!20!white, draw=black] (-\width/2+ 8 * \width,0) rectangle (-\width/2 + 9 * \width,0.5)
node[pos=.5] {$f$};
\filldraw[fill=white, draw=black] (-\width/2+ 9 * \width,0) rectangle (-\width/2 + 10 * \width,0.5)
node[pos=.5] {$e$};
\filldraw[fill=blue!20!white, draw=black] (-\width/2+ 10 * \width,0) rectangle (-\width/2 + 11 * \width,0.5)
node[pos=.5] {$f$};
\filldraw[fill=white, draw=black] (-\width/2+ 11 * \width,0) rectangle (-\width/2 + 12 * \width,0.5)
node[pos=.5] {$g$};

\foreach \x in {1,...,12}
\node at (-\width + \x*\width, -0.35) {$\x$};

\newcommand{\yoffset}{2.5}
\newcommand{\timelength}{2.5}
\newcommand{\timezero}{\timelength*1/24/1*2};
\newcommand{\timeone}{\timelength*3/8/1*2};
\newcommand{\timetwo}{\timelength};

\draw (-\width/2,\yoffset) rectangle (-\width/2 + 8 * \width,\yoffset+\timeone);
\filldraw[fill=green!10!white, draw=black] (-\width/2,\yoffset) rectangle (-\width/2 + 8 * \width,\yoffset+\timeone);
\path[] (-\width/2 + 6 * \width,\yoffset) rectangle (-\width/2 + 8 * \width,\yoffset+\timeone)
node[pos=.5] {$a$};
\filldraw[fill=red!10!white, draw=black] (-\width/2+ 3 * \width,\yoffset+\timezero) rectangle (-\width/2 + 6 * \width,\yoffset+\timeone)
node[pos=.5] {$c$};
\filldraw[fill=blue!20!white, draw=black] (-\width/2+ 8 * \width,\yoffset) rectangle (-\width/2 + 9 * \width,\yoffset+\timetwo)
node[pos=.5] {$f$};
\filldraw[fill=blue!20!white, draw=black] (-\width/2+ 10 * \width,\yoffset) rectangle (-\width/2 + 11 * \width,\yoffset+\timetwo)
node[pos=.5] {$f$};
\filldraw[fill=green!30!white, draw=black] (-\width/2,\yoffset+\timezero) rectangle (-\width/2 + 3 * \width,\yoffset+\timeone)
node[pos=.5] {$b$};

\draw[thick,->] (-0.8*\width, \yoffset) -- (-0.8*\width, \yoffset+\timelength+.3);
\draw[thick, solid] (-0.8*\width, \yoffset+0) -- (11.6 * \width, \yoffset+0);

\draw[thick,dotted] (-0.95*\width, \yoffset+\timeone) -- (-0.5 * \width, \yoffset+\timeone);
\draw[thick,dotted] (8.5*\width, \yoffset+\timeone) -- (9.5 * \width, \yoffset+\timeone);
\draw[thick,dotted] (10.5*\width, \yoffset+\timeone) -- (11.6 * \width, \yoffset+\timeone);
\draw[thick,dotted] (-0.95*\width, \yoffset+\timetwo) -- (7.5 * \width, \yoffset+\timetwo);
\draw[thick,dotted] (8.5*\width, \yoffset+\timetwo) -- (9.5 * \width, \yoffset+\timetwo);
\draw[thick,dotted] (10.5*\width, \yoffset+\timetwo) -- (11.6 * \width, \yoffset+\timetwo);
                                 
\node[inner sep=0,anchor=east] at (-1*\width, \yoffset+\timeone) {$\nicefrac 3 8$};
\node[inner sep=0,anchor=east] at (-1*\width, \yoffset+\timetwo) {$\nicefrac 1 2$};               
\end{tikzpicture}
\caption{A visualisation of \lexphrag{} (upper part) applied to the election instance of \Cref{ex:running} (lower part). In the upper part all regions of the same colour (corresponding to the same candidate) have an area of $1$, which is the budget spent on this candidate.}
\label{fig:leximax-phragmen}
\end{center}
\end{figure}

\section{\phragmen-like Rules}

We now discuss a very recent addition to the zoo of ABC rules: the Method of Equal Shares~\cite{pet-sko:laminar,pet-pie-sko:c:participatory-budgeting-cardinal} (this method had been originally named Rule~X). This rule can be viewed as a variant of seq-\phragmen{}, where the voters are given some budget upfront, rather than receiving it continuously.
This rule is polynomial-time computable and even surpasses the proportionality guarantees of seq-\phragmen{}.

\begin{abcrule}[Method of Equal Shares]
The rule proceeds in two phases. The first phase consists of at most $k$ rounds; in each round one candidate is added to the committee. In the second phase the committee is completed in one of several possible ways.

For the first phase, we assume each voter is initially given 
a budget of $\nicefrac{k}{n}$.
Let $x_r(i)$ denote the budget of voter $i$ after round $r$; thus $x_0(i) = \nicefrac{k}{n}$. As with seq-\phragmen{}, putting a candidate in the committee incurs a cost of $1$. In round~$r+1$, we consider the set of candidates that have not yet been placed in the committee and whose supporters can afford to pay for them, i.e., all candidates $c$ for which $\sum_{i \in N(c)}x_r(i) \geq 1$. Let this set be $C_r\subseteq C$. If $C_r$ is empty, then we conclude the first phase and move to phase two. Otherwise, for each candidate $c\in C_r$ we ask what is the minimal budget $\rho(c)$ such that each voter approving $c$ pays at most $\rho(c)$ and all voters who approve $c$ pay $1$ in total, i.e., what is the minimal value $\rho(c)$ that satisfies:
\begin{align*} 
\sum_{i \in N(c)} \min(\rho(c), x_{r}(i)) = 1 \text{.}
\end{align*}
(Such a $\rho(c)$ always exists, since otherwise $c$ would not be contained in $C_r$.)
We select the candidate~$c$ that minimises $\rho(c)$ (using some fixed tiebreaking if necessary), and reduce the budget of voters who approve $c$ accordingly---for each $i \in N$ we set 
\begin{align*} 
x_{r+1}(i) = 
\begin{cases}
x_i(r) - \rho(c) & \text{if } c\in A(i) \text{ and } x_i(r) \geq \rho(c),\\
0 & \text{if } c\in A(i) \text{ and } x_i(r) < \rho(c),\\
x_i(r) & \text{if }c\notin A(i) \text{,}
\end{cases}
\end{align*}
i.e., voters who approve~$c$ either pay $\rho(c)$ or their remaining budget.

The second phase is only relevant if fewer than $k$~candidates have been put in the committee~$W$ so far.
If $|W| < k$, we have to add $k - |W|$ additional candidates to $W$. 
Many properties of the Method of Equal Shares do not depend on the specific way in which these $k - |W|$ candidates are selected.\footnotemark{}
A concrete and recommendable way to fill the committee is to use seq-\phragmen{} but with initial budgets defined in the following fashion: 
When using the continuous formulation, we set the starting budget of each voter to their budget after the first phase of the Method of Equal Shares; this starting budget increases as usual as time advances. Alternatively, we can use the discrete formulation of seq-\phragmen{}: if the first phase ends with round $r'$, the starting loads are $y_0(i)= - x_{r'}(i)$.
Then seq-\phragmen{} proceeds as usual until the desired committee size is reached.
\end{abcrule}

The name of the rule corresponds to the two elements of its definition. First, each voter is initially given an equal share of the budget that she can spend for ``buying'' candidates. Second, when a candidate is selected, its cost is split as equally as possible among the voters who approve the candidate (each voter covers an equal share of the cost of the candidate).

\footnotetext{An exception is the priceability axiom, see \Cref{sec:laminar}; this axiom is dependent on how to extend the committee to its full size. The proposed completion via seq-\phragmen{} fulfils priceability.}

\begin{example}\label{ex:rule_x_1}
Consider once again our running example. Each voter is initially given a budget of $\nicefrac{1}{3}$. In the first round candidate $a$ is selected and each of the first 8 voters pays $\nicefrac{1}{8}$ for this. In the second round, $C_2=\emptyset$ since no candidate has sufficiently endowed supporters. For example, the budget of voters who approve $b$ is in total \[3\cdot(\nicefrac{1}{3} - \nicefrac{1}{8}) + \nicefrac{1}{3} < 1\] and thus insufficient to pay for $b$. This ends the first phase of the rule.

In the second phase, the voters start receiving additional budget.
Voters $1$ to $8$ start with a budget of $\nicefrac 1 3 - \nicefrac 1 8$; voters $9$ to $12$ start with a budget of $\nicefrac 1 3$.
At time $t_2 = \nicefrac{1}{96}$, voters $1$ to $8$ have a budget of $\nicefrac{1}{3} - \nicefrac{1}{8} + t_2$ each and voters $9$ to $12$ have a budget of $\nicefrac{1}{8} + t_2$ each. Hence the voters who approve $b$ ($1$, $2$, $3$, $9$) have enough money to pay for~$b$: \begin{align*}
3\cdot(\nicefrac{1}{3} - \nicefrac{1}{8} + t_2) + (\nicefrac{1}{3} + t_2) = 1 \text{.}
\end{align*}
The same is true for the voters who approve $c$. Let us assume that we resolve the tie in favour of $b$: $b$ is selected and the voters $1, 2, 3$ and $9$ are left without budget. Next, at time $t_3 = \nicefrac{37}{384}$ candidate $c$ is selected (voters $4$--$6$ contribute $\nicefrac{1}{3}-\nicefrac{1}{8} + t_3$ and voter~$9$ contributes $t_3-t_2$, with the required total of $1$). Finally, at time $t_4 = \nicefrac{7}{24}$ we select $d$ ($2\cdot(\nicefrac{1}{3}-\nicefrac{1}{8}+t_4)=1$). Committee~$W=\{a,b,c,d\}$ is the only winning committee. In this example, the Method of Equal Shares returns the same committee as seq-\phragmen{}.
\end{example}

Since in \Cref{ex:rule_x_1} only one candidate is selected in the first phase of the Method of Equal Shares, we provide one additional example which better illustrates the first phase of this rule and also shows that seq-\phragmen\ and the Method of Equal Shares may produce different committees.

\begin{example}\label{ex:rule_x_2}
Consider the following approval profile given by
	\begin{align*}
		&A(1) = A(2) = A(3) = \{ c,d \} & & A(4) = A(5) = \{ a,b\} \\& A(6)= A(7) = \{ a,c \} & & A(8) =  \{b,d\}\text{.}
	\end{align*}
The goal is to select a committee of size $k = 3$.
Thus, voters start with a budget of $\nicefrac{3}{8}$.

In this example, candidate $c$ is selected in the first round with each approving voter ($1,2,3,6,7$) paying $\nicefrac 1 5$. Next, candidate~$a$ is selected. Voters~4 and~5 contribute $\nicefrac{13}{40}$, voters~6 and~7 contribute their remaining budget ($\nicefrac{7}{40}$).
None of the remaining candidates achieves a total budget of $1$ and thus the second phase starts.
The starting budgets for seq-\phragmen\ are $(\nicefrac{7}{40}, \nicefrac{7}{40}, \nicefrac{7}{40}, \nicefrac{1}{20}, \nicefrac{1}{20}, 0, 0, \nicefrac{3}{8})$.
At time $t=\nicefrac{1}{40}$ candidate~$d$ is selected: voters 1~to~3 can contribute $\nicefrac{7}{40}+t = \nicefrac 1 5$ each and voter~8 can contribute the remaining $\nicefrac{3}{8} + t = \nicefrac 2 5$. Hence, the Method of Equal Shares selects the committee $\{a,c,d\}$. The voters' payments in the two phases are illustrated in \Cref{fig:rule_x_example}.

In contrast, seq-\phragmen{} picks $\{b, c, d\}$.
These candidates are selected in order $c,b,d$ at time $t_1=\nicefrac{1}{5}$, $t_2=\nicefrac{1}{3}$, and $t_3=\nicefrac{29}{60}$, respectively.
\end{example}

\begin{figure}[t!]
\begin{center}
\begin{tikzpicture}
[yscale=0.9,xscale=2.3]
\newcommand{\width}{.48}

\filldraw[fill=green!30!white, draw=black] (-\width/2 + 3 * \width,0) rectangle (-\width/2 + 7 * \width,0.5) node[pos=.5] {$a$};

\filldraw[fill=white, draw=black] (-\width/2 + 3 * \width,0.5) rectangle (-\width/2 + 5 * \width,1.0) node[pos=.5] {$b$};
\filldraw[fill=white, draw=black] (-\width/2 + 7 * \width,0.5) rectangle (-\width/2 + 8 * \width,1.0) node[pos=.5] {$b$};

\filldraw[fill=red!10!white, draw=black] (-\width/2,1) rectangle (-\width/2 +3 * \width,1.5) node[pos=.5] {$c$};
\filldraw[fill=red!10!white, draw=black] (-\width/2 + 5 * \width,1) rectangle (-\width/2 +7 * \width,1.5) node[pos=.5] {$c$};

\filldraw[fill=blue!20!white, draw=black] (-\width/2,1.5) rectangle (-\width/2 + 3 * \width,2.) node[pos=.5] {$d$};
\filldraw[fill=blue!20!white, draw=black] (-\width/2 + 7 * \width,1.5) rectangle (-\width/2 + 8 * \width,2.) node[pos=.5] {$d$};

\foreach \x in {1,2,...,8}
\node at (\x*\width -\width, -0.35) {$\x$};
\newcommand{\yoffset}{3.5}
\newcommand{\timelength}{1.5}
\newcommand{\timezero}{\timelength};
\newcommand{\timeone}{\timelength * 0.533};
\newcommand{\timetwo}{\timelength * 0.866};
\newcommand{\timethree}{\timelength * 1.08};

\filldraw[fill=green!30!white] (-\width/2+ 3 * \width,\yoffset) rectangle (-\width/2 + 5 * \width,\yoffset+\timetwo);
\filldraw[fill=green!30!white] (-\width/2+ 5 * \width,\yoffset+\timeone) rectangle (-\width/2 + 7 * \width,\yoffset+\timelength);
\path[] (-\width/2 +  3 * \width,\yoffset) rectangle (-\width/2 + 5 * \width,\yoffset+\timetwo) node[pos=.5] {$a$};
\path[] (-\width/2 +  5 * \width,\yoffset + \timeone) rectangle (-\width/2 + 7 * \width,\yoffset+\timelength) node[pos=.5] {$a$};
\draw (-\width/2,\yoffset) rectangle (-\width/2 + 3 * \width,\yoffset+\timeone);
\filldraw[fill=red!10!white, draw=black] (-\width/2,\yoffset) rectangle (-\width/2 + 3 * \width,\yoffset+\timeone) node[pos=.5] {$c$};
\draw (-\width/2 + 5 * \width,\yoffset) rectangle (-\width/2 + 7 * \width,\yoffset+\timeone);
\filldraw[fill=red!10!white, draw=black] (-\width/2 + 5 * \width,\yoffset) rectangle (-\width/2 + 7 * \width,\yoffset+\timeone) node[pos=.5] {$c$};

\draw[thick,->] (-0.8*\width, \yoffset) -- (-0.8*\width, \yoffset+\timelength+.3);

\draw[thick, solid] (-0.8*\width, \yoffset+0) -- (8.6 * \width, \yoffset+0);

\draw[thick,dotted] (-0.95*\width, \yoffset+\timeone) -- (-0.5 * \width, \yoffset+\timeone);
\draw[thick,dotted] (-0.5*\width + 7 * \width, \yoffset+\timeone) -- (8 * \width, \yoffset+\timeone);

\draw[thick,dotted] (-2.05*\width, \yoffset+\timetwo) -- (-0.5 * \width + 3 * \width, \yoffset+\timetwo);
\draw[thick,dotted] (-0.5*\width + 7 * \width, \yoffset+\timetwo) -- (8 * \width, \yoffset+\timetwo);

\draw[thick,dotted] (-3.35*\width, \yoffset+\timelength) -- (-0.5*\width + 5 * \width, \yoffset+\timelength);
\draw[thick,dotted] (-0.5*\width + 7 * \width, \yoffset+\timelength) -- (8 * \width, \yoffset+\timelength);

\draw[thick,<->] (-1.1*\width, \yoffset + 0.02) -- (-1.1*\width, \yoffset+\timeone-0.02); 
\draw[thick,<->] (-2.1*\width, \yoffset + 0.02) -- (-2.1*\width, \yoffset+\timetwo-0.02);     
\draw[thick,<->] (-3.4*\width, \yoffset + 0.02) -- (-3.4*\width, \yoffset+\timelength-0.02);                                
                                                            
\node[inner sep=0,anchor=east] at (-1.3*\width, \yoffset+\timeone/2) {$\nicefrac{1}{5}$};
\node[inner sep=0,anchor=east] at (-2.3*\width, \yoffset+\timetwo/2) {$\nicefrac{13}{40}$}; 
\node[inner sep=0,anchor=east] at (-3.6*\width, \yoffset+\timelength/2) {$\nicefrac{3}{8}$}; 

\node[inner sep=0,anchor=east] at (4*\width, \yoffset - 0.7) {first phase of the Method of Equal Shares};

\newcommand{\yoffsettwo}{6.5}

\filldraw[fill=green!30!white] (-\width/2+ 3 * \width,\yoffsettwo) rectangle (-\width/2 + 5 * \width,\yoffsettwo+\timetwo);
\filldraw[fill=green!30!white] (-\width/2+ 5 * \width,\yoffsettwo+\timeone) rectangle (-\width/2 + 7 * \width,\yoffsettwo+\timelength);
\path[] (-\width/2 +  3 * \width,\yoffsettwo) rectangle (-\width/2 + 5 * \width,\yoffsettwo+\timetwo) node[pos=.5] {$a$};
\path[] (-\width/2 +  5 * \width,\yoffsettwo + \timeone) rectangle (-\width/2 + 7 * \width,\yoffsettwo+\timelength) node[pos=.5] {$a$};
\draw (-\width/2,\yoffsettwo) rectangle (-\width/2 + 3 * \width,\yoffsettwo+\timeone);
\filldraw[fill=red!10!white, draw=black] (-\width/2,\yoffsettwo) rectangle (-\width/2 + 3 * \width,\yoffsettwo+\timeone) node[pos=.5] {$c$};
\draw (-\width/2 + 5 * \width,\yoffsettwo) rectangle (-\width/2 + 7 * \width,\yoffsettwo+\timeone);
\filldraw[fill=red!10!white, draw=black] (-\width/2 + 5 * \width,\yoffsettwo) rectangle (-\width/2 + 7 * \width,\yoffsettwo+\timeone) node[pos=.5] {$c$};

\filldraw[fill=blue!20!white, draw=black] (-\width/2,\yoffsettwo+\timeone) rectangle (-\width/2 + 3 * \width,\yoffsettwo+\timethree) node[pos=.5] {$d$};
\filldraw[fill=blue!20!white, draw=black] (-\width/2 + 7 * \width,\yoffsettwo) rectangle (-\width/2 + 8 * \width,\yoffsettwo+\timethree) node[pos=.5] {$d$};

\draw[thick,->] (-0.8*\width, \yoffsettwo) -- (-0.8*\width, \yoffsettwo+\timelength+.3);

\draw[thick, solid] (-0.8*\width, \yoffsettwo+0) -- (8.6 * \width, \yoffsettwo+0);

\draw[thick,dotted] (-0.95*\width, \yoffsettwo+\timeone) -- (-0.5 * \width, \yoffsettwo+\timeone);
\draw[thick,dotted] (-0.5*\width + 8 * \width, \yoffsettwo+\timeone) -- (8 * \width, \yoffsettwo+\timeone);

\draw[thick,dotted] (-2.05*\width, \yoffsettwo+\timetwo) -- (-0.5 * \width, \yoffsettwo+\timetwo);
\draw[thick,dotted] (-0.5*\width + 8 * \width, \yoffsettwo+\timetwo) -- (8 * \width, \yoffsettwo+\timetwo);

\draw[thick,dotted] (-3.35*\width, \yoffsettwo+\timelength) -- (-0.5 * \width, \yoffsettwo+\timelength);
\draw[thick,dotted] (-0.5*\width + 3 *\width, \yoffsettwo+\timelength) -- (-0.5 * \width + 5 *\width, \yoffsettwo+\timelength);
\draw[thick,dotted] (-0.5*\width + 8 * \width, \yoffsettwo+\timelength) -- (8 * \width, \yoffsettwo+\timelength);

\draw[thick,dotted] (-4.35*\width, \yoffsettwo+\timethree) -- (-0.5 * \width, \yoffsettwo+\timethree);
\draw[thick,dotted] (-0.5*\width + 3 *\width, \yoffsettwo+\timethree) -- (-0.5 * \width + 7 *\width, \yoffsettwo+\timethree);
\draw[thick,dotted] (-0.5*\width + 8 * \width, \yoffsettwo+\timethree) -- (8 * \width, \yoffsettwo+\timethree);

\draw[thick,<->] (-1.1*\width, \yoffsettwo + 0.02) -- (-1.1*\width, \yoffsettwo+\timeone-0.02); 
\draw[thick,<->] (-2.1*\width, \yoffsettwo + 0.02) -- (-2.1*\width, \yoffsettwo+\timetwo-0.02);     
\draw[thick,<->] (-3.4*\width, \yoffsettwo + 0.02) -- (-3.4*\width, \yoffsettwo+\timelength-0.02);
\draw[thick,<->] (-4.4*\width, \yoffsettwo + 0.02) -- (-4.4*\width, \yoffsettwo+\timethree-0.02);                                 
                                                            
\node[inner sep=0,anchor=east] at (-1.3*\width, \yoffsettwo+\timeone/2) {$\nicefrac{1}{5}$};
\node[inner sep=0,anchor=east] at (-2.3*\width, \yoffsettwo+\timetwo/2) {$\nicefrac{13}{40}$}; 
\node[inner sep=0,anchor=east] at (-3.6*\width, \yoffsettwo+\timelength/2) {$\nicefrac{3}{8}$}; 
\node[inner sep=0,anchor=east] at (-4.6*\width, \yoffsettwo+\timethree/2) {$\nicefrac{3}{8} + t$}; 

\node[inner sep=0,anchor=east] at (7*\width, \yoffsettwo - 0.7) {second phase: completion by \phragmen};

\end{tikzpicture}
\caption{A visualisation of the Method of Equal Shares applied to the election instance of \Cref{ex:rule_x_2} (lower part). In the two upper figures, all regions of the same colour (corresponding to the same candidate) have an area of $1$, which is the budget spent on this candidate.}
\label{fig:rule_x_example}
\end{center}
\end{figure}

Let us discuss three further rules that are related to \phragmen's rules.
The first is the Expanding Approvals Rule~\cite{aziz2020expanding}.
This rule is defined for weak-order preferences and has favourable axiomatic properties in this setting. It is less convincing for approval preferences\footnotemark{} and thus we do not consider it further.
The second rule is the maximin support method~\cite{sanchez2021maximin}, which is similar to seq-\phragmen. It is an iterative rule based on a form of load balancing, but in contrast to seq-\phragmen{} all loads can be redistributed each round. A first analysis showed that the maximin support method and seq-\phragmen{} share many axiomatic properties~\cite{sanchez2021maximin}, and a recent manuscript by \citet{cevallos2020verifiably} shows that the maximin support method provides a constant factor approximation of leximax-\phragmen---in contrast to seq-\phragmen.
In the light of the latter paper, one may view the maximin support method as a polynomial-time approximation of leximax-\phragmen\ (in the same sense as seq-PAV approximates PAV), whereas seq-\phragmen{} can rather be viewed as a largely independent rule.
We focus in this book on seq-\phragmen{} as it is better studied and conceptually simpler.
Still, the maximin support method is an interesting ABC rule that should be analysed in more depth.

\footnotetext{For approval preferences, the Expanding Approvals Rule (EAR) can be rather indecisive. For example, in profiles where no candidate reaches a specified quota and every voter approves only one candidate, EAR selects an arbitrary committee and thus ignores the voters' preferences.
For a practical application, EAR would have to be augmented with an additional mechanism that handles such cases.}

Finally, \phragmen{} also introduced a method now referred to as either \phragmen's first method, Enestr\"om's method, or method of \enephrag{}\footnotemark~\cite{enestrom:1896,Janson16arxiv,camps2019method}.
This rule can be viewed as an analogue of 
Single Transferable Vote (STV) with approval ballots.
\footnotetext{It is not completely clear whether \phragmen{} or Gustaf Enestr\"om (1852--1923) should be credited with this method. However, it appears to be justifiable to simply credit both of them; see the historical summary provided by Janson \cite[Footnote 38]{Janson16arxiv}.}%
\begin{abcrule}[\enephrag{}]
This method is based on a quota~$q$, which is typically chosen to be either the \textit{Hare quota} $q=\frac{n}{k}$ or the \textit{Droop quota} $q  = \frac{n}{k+1}$.
Candidates are selected in a sequential fashion.
All voters start with a weight of~$1$.
In each round, we compute for each unselected candidate the total weight of approving voters, i.e., the score of an unselected candidate~$c$ is the sum of weights of all voters approving~$c$.
The candidate with the maximum score is added to the committee (using a tie-breaking if necessary); let this candidate be $c'$ and its score~$s$.
Now, the weights are adapted: 
If $s>q$,
then the weights of all voters in $N(c')$ are multiplied by $\frac{s-q}{s}$.
Thus, the total weight of voters in $N(c')$ is reduced by $q$.
If $s\leq q$, the weights of voters in $N(c')$ are set to~$0$.
This step is repeated until $k$~candidates are selected.
\end{abcrule}
As \enephrag{} is not as well studied as \phragmen's rules, we do not discuss it further,
but we note that further analysis could prove this rule to be of independent interest.\footnote{The most substantial analysis of \enephrag{} is due to \citet{camps2019method}. Most notably, it is not committee monotone (in contrast to seq-\phragmen{}, cf.~\Cref{sec:comm-mon}), but it satisfies proportional justified representation (as seq-\phragmen{} does, cf.~\Cref{def:pjr}).}

\section{Non-Standard ABC Rules}\label{sec:furtherabc}

As mentioned at the beginning of this chapter, most ABC rules coincide with (single-winner) Approval Voting for $k=1$. 
If we understand an approval ballot as indicating those alternatives that a voter likes, then for $k=1$ it is indeed very natural to select the most-approved alternative.
Thus, we refer to rules that differ from Approval Voting for $k=1$ as \emph{non-standard} ABC rules.
In addition to rev-seq-PAV, which we already showed to be non-standard, we present two further non-standard rules. The first one, Minimax Approval Voting (MAV) introduced by \citet{minimaxProcedure}, interprets approval ballots as the voter's exact description of the desired outcome.
If a voter approves a set $X$, then she indicates that \emph{all} these alternatives should be chosen; any sub- or superset is less desirable.
In addition, MAV is an egalitarian rule in the sense that it only pays attention to the least-satisfied voter.

To measure the distance between an approval set and a committee, we rely on the Hamming distance:
\begin{definition}
Given two sets $X, Y$, we define the Hamming distance between $X$ and $Y$ as the size of their symmetric difference:
$d_{\hamming}(X, Y) = |X \setminus Y| + |Y \setminus X|$.
\end{definition}

\begin{abcrule}[Minimax Approval Voting, MAV]  MAV selects committees~$W$ that minimise the largest Hamming distance among all voters, i.e., MAV minimises $\max_{i \in N}d_{\hamming}(A(i), W)$.
\end{abcrule}

\begin{example}\label{ex:mav}
To see that MAV does not correspond to Approval Voting for $k=1$, consider the following approval profile:
	\begin{align*}
		&99 \times \{ a\} & & 1 \times \{ b,c\}\text{.}
	\end{align*}
The Hamming distance $d_{\hamming}$ between the committee $W_1=\{a\}$ and the approval set $\{b,c\}$ is $3$.
In contrast, for the committee $W_2=\{b\}$ (or $\{c\}$) we have $d_{\hamming}(\{b,c\}, W_2)=1$ and $d_{\hamming}(\{a\}, W_2)=2$. Thus, MAV selects either $b$ or $c$, even though these alternatives are approved by only a single voter.
\end{example}

\begin{remark}
It is interesting to note that if we replace the $\max$ operator in the definition of MAV by a sum, we obtain the Multi-Winner Approval Voting rule (\Cref{rule:av}).
\end{remark}

\begin{remark}\label{rem:lexmav}
MAV, as defined, has a major shortcoming.
Consider the following slight modification of \Cref{ex:mav}:
    \begin{align*}
		&99 \times \{ a\} & & 1 \times \{a, b,c\}\text{.}
	\end{align*}
For all size-$1$ committees, the Hamming distance to $\{a, b,c\}$ is $2$.
Hence, all three committees are equally preferable according to MAV---even though candidate $a$ is approved by every voter (and $b$ and $c$ by only one voter).
We see that MAV might disregard a unanimous choice.
This problem can be remedied by also considering the second-least satisfied voter in case of ties, and the third-least in case there is still a tie, and so on until a difference between the committees is found.
More formally, for each committee $W$, we compute $d_{\hamming}(A(1), W), d_{\hamming}(A(2), W), \dots$ and sort this tuple of length $|N|$ in decreasing order; we denote this tuple of distances $D_W$.
Instead of considering only the first entry in these tuples, we could lexicographically sort them.
That is, a committee $W_1$ is preferred to a $W_2$ if there exists an index $i\leq n$ such that $D_{W_1}(i)<D_{W_2}(i)$ and $D_{W_1}(j)=D_{W_2}(j)$ for all $1\leq j< i$.
In our example, we have $D_{\{a\}}=(2, 0, 0, \dots)$ and $D_{\{b\}}=D_{\{c\}}=(2, 2, 2, \dots)$; with this modification $\{a\}$ is the only winning committee.
To the best of our knowledge this modification of MAV has not been studied in the context of voting. However, it is equivalent to the $\textsc{Gmax}$ belief merging operator for the Hamming distance~\cite{konieczny2011logic}.
\end{remark}

The second non-standard rule is Satisfaction Approval Voting\footnote{Satisfaction Approval Voting was introduced under this name by Brams and Kilgour~\cite{BrKi14a}, but the method has been discussed already in the 19th century (see Janson's survey~\cite{Janson16arxiv}, Section~E.1.5.). It is also known as Equal and Even Cumulative Voting.} (SAV).
SAV is a variation of AV where each voter has one point and distributes it evenly among all approved candidates.
As a consequence, voters who approve more candidates contribute a lesser score to the individual approved candidates.

\begin{abcrule}[Satisfaction Approval Voting, SAV] The SAV-score of a committee $W$ is defined as \[\score{\sav}(A,W)=\sum_{i \in N}\frac{|W \cap A(i)|}{|A(i)|}.\]
SAV returns all committees with a maximum SAV-score.
\end{abcrule}

Note that SAV is not a Thiele method since the total number of candidates that a voter approves influences the SAV-score.

\begin{example}
To see that SAV does not correspond to Approval Voting for $k=1$, consider
	\begin{align*}
		&1 \times \{ a\} & & 3 \times \{ b,c,d,e\}\text{.}
	\end{align*}
The SAV-score of $a$ is $1$ and for $b$, $c$, $d$, and $e$ it is $\nicefrac 3 4$. Thus, SAV selects $\{a\}$ even though it is approved by only one voter.
\end{example}

\bibliographystyle{abbrvnat}
\bibliography{main}

\chapter{Basic Properties of ABC Rules}\label{sec:basic}

\intro{In this chapter, we consider basic axiomatic properties of ABC rules.
These properties describe the behaviour of such rules and offer insights into the nature of specific ABC rules.
Important axiomatic properties include Pareto efficiency, committee monotonicity, candidate monotonicity, consistency as well as axioms pertaining to strategic voting.}

In the previous chapter we have seen a wide array of ABC rules. Considering how much they differ in their definitions, it can be expected that they differ also in the properties they exhibit.
In this chapter we consider basic properties of ABC rules.
These properties describe the behaviour of such rules and offer insights into the nature of specific ABC rules.
\Cref{tab:axioms_summary} offers an overview of most properties discussed in this chapter.
This table also includes a rough dichotomy of the rules concerning their computational complexity. 
Rules that are in~P can be computed efficiently, whereas rules that are $\np$-hard are computationally more demanding;
we discuss this dichotomy and further complexity results in \Cref{sec:computational-complexity}.

{\newcommand{\suppmoncell}{\multicolumn{3}{c}{\makecell{support monot.\\with / without\\add.~voters}}}
\renewcommand{\arraystretch}{1.4}%
\begin{table}[!t]
        \footnotesize
	\centering
	\makebox[\textwidth][c]{
	\begin{tabular}{lcccccccc}
		\toprule
                          & \makecell{Pareto\\efficiency} & \makecell{committee\\monoton.} & \suppmoncell & \makecell{consist.} & \makecell{inclusion-\\strategypr.} & \makecell{comput.\\complexity}\\
		\midrule
		AV                & \strong                       & \cmark                             & \cmark\ &/& \cmark                                     & \cmark      & \cmark                             & P  \\
		CC                & \weak                         & \xmark                             & \cmark\ &/& \ccand                                     & \cmark      & ?                                  & $\np$-hard  \\
		PAV               & \strong                       & \xmark                             & \cmark\ &/& \ccand                                     & \cmark      & \xmark                             & $\np$-hard \\
		seq-PAV           & \xmark                        & \cmark                             & \ccand\ &/& \ccand\                                    & \xmark      & \xmark                             & P  \\
		seq-CC            & \xmark                        & \cmark                             & \ccand\ &/& \ccand\                                    & \xmark      & \xmark                             & P  \\
		rev-seq-PAV       & \xmark                        & \cmark                             & \cmark\ &/& \ccand                                     & \xmark      & \xmark                             & P  \\
		Monroe            & \xmark                        & \xmark                             & \xmark\ &/& \ccand                                     & \xmark      & \xmark                             & $\np$-hard  \\
		Greedy Monroe     & \xmark                        & \xmark                             & \xmark\ &/& \ccand                                     & \xmark      & \xmark                             & P  \\
		seq-\phragmen     & \xmark                        & \cmark                             & \ccand\ &/& \ccand                                     & \xmark      & \xmark                             & P  \\
		\lexphrag{}     & \xmark                        & \xmark                             & \ccand\ &/& \ccand                                     & \xmark      & ?                                  & $\np$-hard  \\
		Method of Eq. Shares            & \xmark                        & \xmark                             & \xmark\ &/& \ccand                                     & \xmark      & \xmark                             & P  \\
		MAV               & \weak                         & \xmark                             & \cmark &/& \ccand\                                                & \xmark      & \xmark                             & $\np$-hard \\
		SAV               & \strong                       & \cmark                             & \cmark\ &/& \cmark                                     & \cmark      & \xmark                             & P  \\
		\bottomrule
	\end{tabular}
	}
	\caption{Basic properties of ABC rules.}\label{tab:axioms_summary}
\end{table}}

\section{Anonymity, Neutrality, and Resoluteness}\label{sec:anon-neutr-resol}

Anonymity and neutrality are two of the most basic properties in the social choice literature~\cite{mayAxiomatic1952,arrow1963,moulinAxioms}.
Anonymity states that the identity of voters should not influence the outcome: it should be irrelevant whether voter $i$ approves $A(i)$ and voter $j$ approves $A(j)$ or vice versa.
Formally, an ABC rule $\calR$ satisfies \emph{anonymity} if for all election instances $(A,k)$ with voter set $N$ and bijections $\pi:N\to N$ it holds that $\calR(A, k)=\calR(A\circ\pi, k)$.
All but one rule introduced in \Cref{sec:abc_rules} satisfy anonymity; the exception is Greedy Monroe which uses a fixed tiebreaking order on voters.\footnote{If we defined Greedy Monroe so that it returns all committees that can result from some tiebreaking, then the rule would be anonymous.} A typical example of a voting rule that fails anonymity is any dictatorial rule (a rule considering only the preferences of a single distinguished voter, e.g., of voter~$1$).

Neutrality is the counterpart to anonymity but applies to candidates: it states that all candidates should be treated equally. Formally, an ABC rule $\calR$ satisfies \emph{neutrality} if for all election instances $(A,k)$ with candidate set $C$ and bijections $\pi:C\to C$ it holds that $\calR(A, k)=\calR(\pi^*\circ A, k)$, where $\pi^*$ is the natural extension of $\pi$ to a bijection from $\powerset(C)$ to $\powerset(C)$ defined by $\pi^*(X)=\{\pi(c) : c\in X\}$ for each $X \subseteq C$.
The rules that fail neutrality are usually those that require some form of tiebreaking.

The third and equally fundamental property we discuss here is resoluteness.
Recall that an ABC rule is resolute if it always returns exactly one winning committee.
An ABC rule can either be resolute or neutral, but not both. To see this, consider an approval profile where all voters approve candidates $\{a,b\}$ and $k=1$: either a rule returns two winning committees or decides in favour of one of the two candidates.
Clearly, any rule can be made resolute by imposing a tiebreaking between winning committees. Conversely, if a resolute rule is defined by a tiebreaking order over candidates (this includes all rules in \Cref{sec:abc_rules} that fail neutrality), it can be made neutral by returning all committees that win according to \emph{some} tiebreaking order. In this way, one can trade neutrality against resoluteness. 

Finally, we mention that an in-depth treatment of the interplay between anonymity, neutrality, and resoluteness---albeit in the setting of single-winner elections---can be found in the work of \citet{ozkes2021anonymous} and \citet{campbell2015finer}. 

\section{Pareto Efficiency and Condorcet Committees}\label{sec:pareto}

Pareto efficiency\footnotemark{} is a very general concept to compare two outcomes given the preferences of individuals: outcome $Y$ dominates outcome $X$ if \begin{inparaenum}[(1)]
\item every individual weakly prefers outcome $Y$ to $X$ (i.e., everyone likes $Y$ at least as much as $X$), and 
\item there is at least one individual that strictly prefers $Y$ to $X$.
\end{inparaenum}
Pareto efficiency, broadly speaking, means that dominated outcomes are avoided.
This concept can be directly translated to our setting by
defining when a voter prefers committee $W_1$ to $W_2$. 
This requires a so-called \emph{set extension}, i.e., a way how to extend preferences over individual items to sets of items; we refer the reader to the survey of \citet{barbera2004ranking} for a comprehensive overview.
Here, we use the Pareto efficiency definition by \citet{lac-sko2019} and assume that $W_1$ is preferred to $W_2$ if $W_1$ contains more approved candidates.

\begin{definition}
A committee $W_1$ dominates a committee $W_2$ if \begin{enumerate}
\item every voter has at least as many approved candidates in $W_1$ as in $W_2$ (for $i\in N$ it holds that $|A(i)\cap W_1|\geq |A(i)\cap W_2|$), and 
\item there is one voter with strictly more approved candidates (there exists $j\in N$ with $|A(j)\cap W_1|> |A(j)\cap W_2|$).
\end{enumerate}
A committee that is not dominated by any other committee (of the same size) is called \emph{Pareto optimal}.

\footnotetext{Named after Vilfredo Pareto (1848--1923), an Italian economist \cite{eisermann2001pareto}.} 

An ABC rule $\calR$ satisfies \emph{strong Pareto efficiency} if
$\calR$ never outputs dominated committees.
An ABC rule $\calR$ satisfies \emph{weak Pareto efficiency} if for all election instances $(A,k)$ it holds that if
$W_2\in\calR(A,k)$ and $W_1$ dominates $W_2$, then $W_1\in\calR(A,k)$.
\end{definition}

\Cref{tab:axioms_summary} summaries which rules satisfy Pareto efficiency.\footnote{For details, in particular counterexamples, we refer the reader to~\cite{lac-sko2019}. Although this paper does not discuss the Method of Equal Shares, the counterexample for seq-\phragmen{} \cite[Example 2]{lac-sko2019} also works for this method.}
It may be surprising that rather few ABC rules satisfy this kind of Pareto efficiency.
Indeed, among the rules introduced in \Cref{sec:abc_rules} only Thiele rules, SAV, and MAV satisfy weak Pareto efficiency~\cite{lac-sko2019}, and among those, e.g., AV, PAV, and SAV satisfy strong Pareto efficiency (but not CC and MAV, for details see \Cref{prop:av-pav-sav-pareto}).
(Although, we recall that these results rely of course on our chosen set extension.)

To see an example how a rule may fail Pareto efficiency, it is instructive to consider Monroe's rule:
\begin{example}[{\citep[Example~3]{lac-sko2019}}]
Consider the approval profile
	\begin{align*}
		&2 \times \{ a \} & & 1 \times \{ a,c\} &&  1 \times \{ a,d\} && 10 \times \{ b,c \}  && 10 \times \{ b,d \}\text{.}
	\end{align*}
	For $k=2$, Monroe selects $\{c, d\}$ as the (only) winning committee with a Monroe-score of $22$.
	Committee $\{c, d\}$ is however dominated by $\{a,b\}$: every voter approves a candidate in~$\{a,b\}$ but only $22$ voters approve one in~$\{c, d\}$.
	Thus, every voter is either equally satisfied or better off with committee $\{a, b\}$.
	This example shows that Pareto efficiency clashes with Monroe's goal to assign representatives to groups of similar size.
\end{example}

One may wonder whether it is sensible to improve an ABC rule $\calR$ that is not Pareto efficient in the following way:
given an election instance $E$, if $W\in \calR(E)$ is dominated by another committee, then instead output all Pareto optimal committees that dominate~$W$.
There are two main objections against this idea:
First, this modification may destroy other axiomatic properties (e.g., Pareto efficiency and perfect representation, which is discussed in \Cref{sec:laminar}, are incompatible).
Second, finding Pareto improvements is a computationally hard task:

\begin{theorem}[{\citet[Theorem 2]{aziz2020computing}}]\label{thm:pareto-np}
Given an election instance $(A,k)$ and committee $W$, it is $\conp$-complete to determine whether $W$ is Pareto optimal.
\end{theorem}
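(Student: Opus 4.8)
The plan is to establish $\conp$-completeness by showing membership in $\conp$ and $\conp$-hardness separately. Membership is the easy direction: a certificate that $W$ is \emph{not} Pareto optimal is simply another size-$k$ committee $W'$ together with (implicitly) the verification that $W'$ dominates $W$. Checking domination requires only computing $|A(i)\cap W|$ and $|A(i)\cap W'|$ for each voter $i\in N$ and confirming the weak inequality everywhere plus a strict inequality somewhere, which is clearly polynomial in the size of $(A,k)$. Hence the complement of our problem is in $\np$, so the problem itself is in $\conp$.

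\textbf{The hard direction} is $\conp$-hardness, and this is where the real work lies. I would reduce from the complement of some standard $\np$-complete problem whose ``no'' instances correspond to ``$W$ is Pareto optimal.'' A natural candidate is to reduce from (the complement of) a covering-type problem such as \textsc{Set Cover} or \textsc{Dominating Set}, or possibly from a satisfiability variant: the goal is to engineer an instance where a dominating committee $W'$ exists \emph{if and only if} the underlying combinatorial structure admits a solution. The key design challenge is the size constraint $|W'|=|W|=k$ — any Pareto improvement must swap candidates in and out while never decreasing any voter's representation. One would build gadget voters that ``lock in'' most of $W$ (so those candidates cannot be removed without hurting someone), leaving a small controllable part of the committee where a swap is possible precisely when the combinatorial instance is a ``yes.'' The delicate point is arranging that the swap strictly helps at least one voter while hurting none; typically this means the candidates swapped out are approved only by voters who are simultaneously ``compensated'' by the candidates swapped in, which forces the compensation structure to encode, e.g., a valid cover.

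\textbf{The main obstacle} I anticipate is controlling the domination condition globally: it is easy to make a local swap that helps one voter, but ensuring \emph{no voter anywhere} loses representation is a strong constraint, and it interacts with the fixed committee size. I expect the reduction to require careful padding — adding ``dummy'' candidates approved by everyone (or by carefully chosen blocks) and dummy voters to calibrate which removals are harmless — so that the only admissible Pareto improvements are exactly those reading off a solution to the source problem. An alternative, possibly cleaner route is to reduce from a problem already known to be hard in this literature (the excerpt cites \citet{aziz2020computing}, so their proof likely reduces from something like \textsc{Exact Cover by 3-Sets} or a hitting-set variant); I would check whether a minor-committee-improvement or price-of-fairness style hardness result can be invoked almost directly. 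In either case, after fixing the reduction one verifies: (i) it is polynomial-time computable, (ii) ``yes'' instance of the source $\Rightarrow$ a dominating $W'$ exists (hence $W$ not Pareto optimal), and (iii) conversely any dominating $W'$ yields a source solution. Combining (i)–(iii) with $\conp$ membership gives $\conp$-completeness.
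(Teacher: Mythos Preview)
The survey paper does not actually prove this theorem; it is quoted as an external result from \citet{aziz2020computing}, so there is no ``paper's own proof'' to compare against. That said, your outline matches the standard approach: $\conp$-membership via a dominating committee as certificate is correct and complete as you wrote it, and the hardness direction is indeed obtained in the cited paper by a reduction from a covering-type problem along the lines you sketch.

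Where your write-up falls short of a proof is that the hardness direction remains entirely programmatic: you correctly identify the design constraints (fixed size $k$, global non-decrease, at least one strict increase) and the need for locking/padding gadgets, but you never actually specify a source problem, construct the instance $(A,k,W)$, or verify the two implications. The phrase ``I would check whether a \ldots hardness result can be invoked almost directly'' is not a proof step. To turn this into a proof you must commit to a concrete $\np$-complete problem (e.g., \textsc{Exact Cover by 3-Sets} works cleanly here), exhibit the candidate set, the approval profile, the committee $W$, and the value of $k$ as explicit functions of the source instance, and then argue both directions of the correctness equivalence. Until that construction is written down, the ``main obstacle'' you name---ensuring no voter loses while at least one gains---has not been overcome, only described.
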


As a consequence of \Cref{thm:pareto-np}, we cannot expect to obtain polynomial-time computable, Pareto efficient ABC rules by modifying existing rules as described above.
Note, however, that polynomial-time computable, Pareto efficient ABC rules exist, e.g., AV and SAV. Thus, \emph{finding} a Pareto optimal committee is possible in polynomial-time.


A related property to Pareto efficiency has been proposed by~\citet{Darmann13_condorcet_committees}: a committee $W$ is a \emph{Condorcet committee} if for every other committee $W'$, for a majority of voters $V\subseteq N$ ($|V|>\nicefrac{|N|}{2}$) it holds that $|A(i) \cap W| > |A(i) \cap W'|$ for $i\in V$. Similarly to \Cref{thm:pareto-np}, deciding whether a given committee $W$ is a Condorcet committee is $\conp$-complete.
However, in contrast to Pareto optimality, it is also $\conp$-complete to decide whether a Condorcet committee exists~\cite{Darmann13_condorcet_committees}. To the best of our knowledge, it has not been analysed which ABC rules output a Condorcet committee if it exists.

\section{Committee Monotonicity}\label{sec:comm-mon}

Committee monotonicity (also referred to as house monotonicity or committee enlargement monotonicity) is a property that is highly desirable in some settings: if the committee size $k$ is increased to $k+1$, then a winning committee of size $k$ should be a subset of a winning committee of size $k+1$.
Since this property is particularly useful for resolute rules, we define it exclusively for resolute rules.
Appropriate definitions for irresolute rules can be found, e.g., in papers of \citet{elk-fal-sko-sli:c:multiwinner-rules} and of
\citet{kil-mar:j:minimax-approval} (called upward- and downward-accretive in the latter work).

\begin{definition}\label{def:comm-mon}
A resolute ABC rule $\calR$ is \emph{committee monotone} if for all election instances $(A,k)$ it holds that $W \subseteq W'$, where $W$ is the single winning committee in $\calR(A,k)$, and $W'$ is the single winning committee in $\calR(A,k+1)$.
\end{definition}

To see why committee monotonicity can be an essential requirement in some applications, consider the following situation. A group can jointly acquire $k$ items and uses an ABC rule to fairly select those. Once these $k$ items are purchased, it turns out that one additional item can be afforded. If the used ABC rule is committee monotone, it is clear which item to acquire next. However, if the rule is not committee monotone, then the selection for $k+1$ items might contain several items that were not contained in the selection of $k$ items, a useless recommendation.

Another example is a hiring process where it is not determined up-front how many candidates are to be hired. Here it is useful that a committee monotone rule actually produces a \emph{ranking} of candidates: which one should be hired if only one position is available, which one if a second position is to be filled, etc. This connection between committee monotone ABC rules and rankings has been explored in-depth by \citet{proprank}.

However, committee monotonicity also reduces the flexibility of voting rules and thus comes 
at a price. For example, we will see in \Cref{sec:proportionality} that committee monotone rules 
are typically less proportional (although a formal proof for this statement is missing).
Thus, if the setting does not dictate committee monotonicity, it may be advantageous to 
set this axiom aside.
A more elaborate discussion of this topic can be found in the paper of \citet{elk-fal-sko-sli:c:multiwinner-rules}.

\Cref{tab:axioms_summary} shows which of the considered rules are committee monotone, assuming that these rules are made resolute by fixing a tiebreaking order among candidates.
AV, seq-PAV, seq-CC, rev-seq-PAV, seq-\phragmen{}, and SAV are committee monotone; this follows immediately from their corresponding definitions.
Counterexamples for the remaining rules can be found in \Cref{app:proofs}, \Cref{prop:comm-mon}.

\section{Candidate and Support Monotonicity}\label{sec:cand_supp_monotonicity}

Candidate monotonicity deals with a seemingly obvious requirement: if the support of a candidate increases (i.e., more voters approve this candidate), then this cannot harm the candidate's inclusion in a winning committee.
However, this property is not satisfied by some ABC rules, in particular, if we demand such a monotonicity to also hold for groups of candidates.
In addition, there is a difference depending on whether an existing voter changes her ballot, or if a new voter enters the election.

Candidate monotonicity axioms for ABC rules have been considered in a number of papers \cite{aziz2020expanding,Janson16arxiv,lac-sko:t:multiwinner-strategyproofness}, but the paper by \citet{sanchez2019monotonicity} should be highlighted for the most in-depth analysis.\footnote{Monotonicity is also studied in great detail by \citet{elk-fal-sko-sli:c:multiwinner-rules} and \citet{fal-sko-sli-tal-tal:j:hierarchy-committee}; these works, however, largely focus on multi-winner voting with voters' preferences given as rankings (cf.~\Cref{sec:multiwinner-rank}).}

Further, we write $A_{+X}$ to denote the profile $A$ with one additional voter approving $X$, i.e., $A_{+X}=(A(1),\dots,A(n),X)$, and $A_{i+X}$ to denote the profile $A$ where voter $i$ additionally approves the candidates from $X$.

\begin{definition}[\citet{sanchez2019monotonicity}]\label{def:monotonicity}
An ABC rule $\calR$ satisfies \emph{support monotonicity without additional voters} if for every election instance $(A,k)$, $i\in N$, and candidate set $X\subseteq C$ it holds that
\begin{enumerate}
\item if $X\subseteq W$ for all $W\in\calR(A,k)$, then $X\subseteq W'$ for all $W'\in\calR(A_{i+X},k)$, and
\item if $X\subseteq W$ for some $W\in\calR(A,k)$, then $X\subseteq W'$ for some $W'\in\calR(A_{i+X},k)$.
\end{enumerate}
An ABC rule $\calR$ satisfies \emph{support monotonicity with additional voters} if for any election instance $(A,k)$ and candidate set $X\subseteq C$
the properties above hold for~$A_{+X}$ instead of~$A_{i+X}$.
\end{definition}

If an ABC rule satisfies these axioms only for singleton sets ($X=\{c\}$),
we speak of \emph{candidate monotonicity with/without additional voters}.\footnote{\citet{sanchez2019monotonicity} further introduce \emph{weak support monotonicity with/without population increase}. These notions are slightly stronger than their candidate monotonicity counterparts (i.e., they imply candidate monotonicity with/without additional voters).}

The analysis of ABC rules with respect to these axioms is mostly due to 
\citet{Janson16arxiv}, \citet{sanchez2019monotonicity}, and \citet{MoOl15a}.
We summarise the results in \Cref{tab:axioms_summary}.
There, the symbol~\cmark{} means that support monotonicity is satisfied, ``\ccand{}'' means that candidate monotonicity is satisfied but not support monotonicity, and
\xmark{} means that the rule fails even candidate monotonicity.
Detailed counterexamples related to support monotonicity can be found in \Cref{prop:candmon} in the appendix.

If one is interested in ABC rules that are---in a sense---fair to candidates,
then candidate monotonicity (both with and without additional voters) is generally a desirable property. Hence, the fact that Monroe, Greedy Monroe, and the Method of Equal Shares fail the axiom can be seen as 
a serious argument against these rules.
Monroe and the Method of Equal Shares, however, have other distinguished advantages (discussed in \Cref{sec:proportionality}) that may override this downside.
In settings where a fair treatment of candidates is not necessary (e.g., because candidates represent 
inanimate objects to be chosen), candidate monotonicity should not be a concern.

\section{Consistency}\label{sec:consistency}

Consistency is an axiom describing whether a rule behaves \emph{consistently} with respect to disjoint groups: if the outcome of an election is the same for two disjoint groups, then a voting rule should arrive at this outcome also if these two groups are joined into a single electorate.
This axiom is a straightforward adaption of consistency as defined for single-winner rules by \citet{smi:j:scoring-rules} and \citet{young74} and was first discussed in the context of ABC rules by \citet{jet-consistentabc}.
In the following, for two profiles $A$ and $A'$ we write $A+A'$ to denote the joint profile where $A$ and $A'$ are concatenated.

\begin{definition}
  An ABC rule $\calR$ satisfies \emph{consistency} if for every $k\geq 1$ and two
  profiles $A: N \to \powerset(C)$ and $A': N' \to \powerset(C)$ with $N \cap N' = \emptyset$, if $\calR(A,k) \cap \calR(A',k) \neq \emptyset$ then $\calR(A + A',k) = \calR(A,k) \cap \calR(A',k)$.
\end{definition}

Monroe's rule, for example, does not satisfy consistency:

\begin{example}
Let profile $A$ be
	\begin{align*}
&A(1)\colon \{ a, y\}   &&  A(2)\colon \{ a, y\}     &&   A(3)\colon \{ b, y\}     && A(4)\colon \{ b, y\}
	\end{align*}
and profile $A'$ be
		\begin{align*}
&A(5)\colon \{ y\}   &&  A(6)\colon \{ a\}     &&   A(7) = A(8) = A(9) = A(10)\colon \{ a,x\} 
\\ & A(11)\colon \{ y\}  &&  A(12)\colon \{ b,y\}     &&   A(13)= A(14) = A(15) = A(16)\colon \{ b, x\}\text{.}
	\end{align*}
	For $k=2$, Monroe returns for profile $A$ the winning committees $\{a,b\}$, $\{a,y\}$, and $\{b,y\}$, all of which having a Monroe-score of $4$.
	For profile $A'$, Monroe returns the winning committee $\{a,b\}$, with a Monroe-score of $10$; the corresponding Monroe assignment groups voters $5$--$10$ and $11$--$16$.
	Now, let us consider the profile $A+A'$. Consistency would demand that $\{a,b\}$ is the unique winning committee, as it is the only committee winning in both $A$ and $A'$.
	Committee $\{a,b\}$ has a Monroe-score of $14$ in $A+A'$.
	This score, however, is not optimal: $\{x,y\}$ has a Monroe-score of $15$; the corresponding Monroe assignment groups voters $\{1,\dots,6,11,12\}$ and $\{7,\dots,10,13,\dots,16\}$.
	Thus, $\{a,b\}$ is not winning and consistency is violated.
\end{example}

Broadly speaking, the only rules satisfying consistency are so-called \emph{ABC scoring rules}~\cite{jet-consistentabc}.
These are defined similarly to Thiele methods but are more general, as the satisfaction of a voter may depend on the number of candidates approved by this voter:

\begin{definition}\label{def:abc-scoring-rules} A \emph{scoring function} is a function $f \colon \naturals\times\naturals \to \reals$ satisfying $f(x,y)\geq f(x',y)$ for $x\geq x'$.
Given such a scoring function, we
define the score of $W$ in $A$ as 
\begin{align*}
\score{f}(A, W) = \sum_{i \in N} f(|A(i) \cap W|, |A(i)|)\text{.}\label{eq:score}
\end{align*}
The \emph{ABC scoring rule} defined by a scoring function $f$ returns all committees with maximum score.
\end{definition}
By definition, each Thiele method is an ABC scoring rule, whereas SAV is an example of an ABC scoring rule that is not a Thiele method.
Further, it follows immediately from the definition of welfarist rules (\Cref{def:welfarist_rules}) that an ABC scoring rule is welfarist if and only if it is a Thiele method.
%
%

\citet{jet-consistentabc} axiomatically characterised the class of ABC scoring rules. This characterisation is in a slightly different model than the one we use in this book: the characterisation applies to ABC ranking rules instead of ABC rules (as defined in \Cref{sec:model}). ABC ranking rules output a weak order over committees (a ranking with ties over committees) instead of just distinguishing between winning and losing committees (as we assume here). However, note that every ABC ranking rule defines an ABC rule (top-ranked committees are winning).

The following characterisation uses two axioms we have not mentioned so far: weak efficiency and continuity. Both are rather weak axioms. Intuitively, weak efficiency requires that approved candidates are preferable to non-approved candidates, and continuity states that a sufficiently large majority can force a committee to win.

\newcommand{\thmcharacterizationWelfareFunctions}{An ABC ranking rule is an ABC scoring rule if and only if it satisfies anonymity, neutrality, consistency, weak efficiency, and continuity.}
\begin{theorem}[\citet{jet-consistentabc}]\label{thm:characterizationWelfareFunctions}
\thmcharacterizationWelfareFunctions
\end{theorem}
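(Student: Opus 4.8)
The plan is to prove the biconditional, dispatching the easy direction quickly and concentrating on the converse. For the forward direction, given the ABC scoring rule defined by $f$, I would verify each axiom directly: anonymity and neutrality because $\score{f}(A,W)$ depends only on the multiset $\{(|A(v)\cap W|,|A(v)|)\}_{v}$ and is untouched by renaming candidates; consistency because $\score{f}(A+A',W)=\score{f}(A,W)+\score{f}(A',W)$, so the ranking on $A+A'$ refines the agreements between the rankings on $A$ and on $A'$; weak efficiency from monotonicity of $f$ in its first argument (trading a disapproved committee member for an approved one never lowers a voter's term); and continuity because adding enough copies of one voter eventually makes her contribution dominate every score gap.

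For the converse I would follow the classical template of scoring-rule characterizations (Smith, Young), adapted to a weak order over committees. First, fix the candidate set $C$: by anonymity a profile is just its count vector $n\in\naturals^{\powerset(C)}$, and $\calR$ induces a weak order $\succeq_n$ on the $\binom{|C|}{k}$ size-$k$ committees. Next, for a fixed pair $W,W'$, the ranking-rule form of consistency (reinforcement) makes $\{n:W\succeq_n W'\}$ and $\{n:W\succ_n W'\}$ additively closed, and continuity supplies the Archimedean/cancellation property; the standard argument then yields, after passing to $\mathbb{Q}^{\powerset(C)}$ and taking closures, a vector $s_{W,W'}\in\reals^{\powerset(C)}$ with $W\succeq_n W'\iff\langle s_{W,W'},n\rangle\ge 0$. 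Then I would use transitivity of $\succeq_n$ over triples of committees to fix the scalings coherently, producing a single map $W\mapsto s_W\in\reals^{\powerset(C)}$ so that $\succeq_n$ ranks committees by $\langle s_W,n\rangle=\sum_{X\subseteq C} n(X)\,s_W(X)$.

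Finally, neutrality enters: the symmetric group on $C$ acts transitively on pairs $(W,X)$ with prescribed values of $|X|$ and $|W\cap X|$ (since $|W|=k$ is fixed), so $s_W(X)$ can only depend on $(|W\cap X|,|X|)$; writing $s_W(X)=f(|W\cap X|,|X|)$ turns the ranking into comparison of $\score{f}(A,W)$. Weak efficiency then pins down monotonicity of $f$: comparing committees that differ by swapping one approved member for one disapproved member, on profiles concentrated on the decisive ballot, reads off as $f(x,y)\ge f(x-1,y)$, and iterating gives $f(x,y)\ge f(x',y)$ for $x\ge x'$. A clean-up step relating instances over different candidate sets removes the a priori dependence of $f$ on $|C|$, and altogether this shows $\calR$ is exactly the ABC scoring rule defined by $f$.

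I expect the main obstacle to be the step that upgrades the purely combinatorial axioms (reinforcement together with continuity) to an honest inner-product representation. The delicate points are (i) passing from the integer cone to the rational and real cones, (ii) showing the indifference sets are full boundary hyperplanes rather than thin lower-dimensional faces — precisely where continuity is indispensable, to exclude lexicographic tie-breaking — and (iii) gluing the pairwise functionals into one globally consistent score map via transitivity across all committees simultaneously. An alternative would be to cite an abstract additive-representation theorem in the spirit of Myerson or Pivato and verify its hypotheses, which repackages rather than dispels this difficulty.
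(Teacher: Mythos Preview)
The paper does not prove this theorem; it is stated as a citation to \cite{ec/LacknerSkowron-consistentmwrules} with no argument given in the survey itself, so there is no in-paper proof to compare against.

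That said, your outline is the right one and matches the approach taken in the cited source, which in turn adapts the Smith--Young machinery for single-winner scoring rules to the committee setting. The forward direction is routine, and your plan for the converse---reduce by anonymity to vote-count vectors, use consistency plus continuity to obtain a separating linear functional for each pair of committees, glue these via transitivity into a single score map $W\mapsto s_W$, and then invoke neutrality to collapse $s_W(X)$ to a function of $(|W\cap X|,|X|)$---is exactly the standard template. Your identification of the delicate points (passing from integer to real cones, ruling out lexicographic orderings via continuity, and coherently scaling the pairwise functionals) is accurate; these are precisely the places where the argument requires care, and the cited paper handles them by appealing to Young's representation theorem rather than redoing the hyperplane argument from scratch. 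One small addendum: in the cited proof, weak efficiency is used not only for monotonicity of $f$ in its first coordinate but also to anchor the sign of the representation (to rule out that the rule maximizes $-\score{f}$), which your sketch implicitly assumes but does not state.
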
%

As both weak efficiency and continuity are generally satisfied by sensible voting rules, one can conclude that ABC scoring rules essentially capture the class of consistent ABC ranking rules.\footnote{In the setting of single-winner rules a similar result holds: a social welfare function is a scoring rule if and only if it satisfies anonymity, neutrality, consistency, and continuity, as shown by \citet{smi:j:scoring-rules} and \citet{young74}.
Moreover, a similar characterisation holds for committee scoring rules, as shown by \citet{skowron2019axiomatic}. Committee scoring rules can be viewed as analogues of ABC scoring rules in the multi-winner model with preferences given as rankings (see \Cref{sec:multiwinner-rank}); the proof of \Cref{thm:characterizationWelfareFunctions} builds upon this result.}
In \Cref{sec:apportionment}, we will discuss how this result can be used to obtain further axiomatic characterisations, e.g., of PAV.

\section{Strategic Voting}\label{sec:strategic_voting}

Strategic voting is a phenomenon central to social choice theory. Sometimes, it is preferable for voters to misrepresent their preferences to change the outcome of an election; this is often referred to as ``manipulation''. The famous impossibility theorem by \citet{gib:j:polsci:manipulation} and \citet{sat:j:polsci:manipulation}, showing that all ``reasonable'' single-winner voting rules are susceptible to manipulation, is considered one of the main results in the field. The Gibbard--Satterthwaite theorem applies to elections where voters provide linear rankings over alternatives. As our approval-based setting uses a much more restricted form of preferences, strategyproofness is not completely out of the picture.

We are going to consider two forms of strategyproofness here: Cardinality-strategyproofness and inclusion-strategyproofness (taken from \citet{pet:prop-sp}, see the work of~\citet{gardenfors1979definitions} and \citet{taylor2005social} for more general discussions of strategyproofness in social choice). 
Cardinality-strategyproofness assumes that voters are concerned only about the number of approved candidates in the committee (and do not distinguish them), whereas inclusion-strategyproofness assumes that voters may have more complex preferences, so a successful manipulation must produce a committee including all approved candidates that were already included in the original committee.

To simplify the discussion, we assume resoluteness, i.e., we assume a (deterministic) tiebreaking order to resolve ties between committees.
To clarify what it means that a voter misrepresents their true preferences, we use the concept of $i$-variants: Given profiles $A$ and $A'$, both with the same set of voters $N$, we say that $A'$ is an $i$-variant of $A$ if $A(j)=A'(j)$ for all $j\in N \setminus \{i\}$ with $j\neq i$.
Let us first define both notions for resolute ABC rules.

\begin{definition}\label{def:cardstrategy}
A resolute ABC rule $\calR$ satisfies \emph{cardinality-strategyproofness} if for all profiles $A$ and $A'$ where $A'$ is an $i$-variant of $A$ and for all $k\geq 1$ it holds that $|\calR(A,k)\cap A(i)|\geq |\calR(A',k)\cap A(i)|$.
\end{definition}

\begin{definition}\label{def:incl-strategy}
A resolute ABC rule $\calR$ satisfies \emph{inclusion-strategyproofness} if for all profiles $A$ and $A'$ where $A'$ is an $i$-variant of $A$ and for all $k\geq 1$ it holds that $\calR(A,k)\cap A(i)$ 
is not a strict subset of $\calR(A',k)\cap A(i)$.
\end{definition}

Cardinality-strategyproofness is a stronger notion than inclusion-strategyproofness in the sense that 
all cardinality-strategyproof ABC rules are also inclusion-strategyproof.
This follows from the fact that $|\calR(A,k)\cap A(i)|\geq |\calR(A',k)\cap A(i)|$ (as required in \Cref{def:cardstrategy}) implies that $\calR(A,k)\cap A(i)$ cannot be
a strict subset of $\calR(A',k)\cap A(i)$ (as required in \Cref{def:incl-strategy}).

Among the rules considered in this book, only AV satisfies any of the mentioned strategyproofness axioms. Specifically, AV satisfies both inclusion-strategyproofness and cardinality-strategyproofness if AV is made resolute by any tiebreaking order on candidates (for details see \Cref{prop:strategyproofness}).
None of the other ABC rules considered in this paper satisfy these axioms, see \Cref{tab:axioms_summary} for an overview and \Cref{prop:strategyproofness} for details.
However, even AV is not strategyproof in a stronger sense when voters have underlying, non-dichotomous preferences (as discussed, e.g., by \citet{niemi1984problem}).

Both cardinality- and inclusion-strategyproofness can be generalised to irresolute ABC rules  via set extensions, i.e., by defining how voters compare sets of committees. For example, \citet{lac-sko:t:multiwinner-strategyproofness} propose a rather strong extension based on stochastic dominance. The resulting axiom, called  SD-strategyproofness, implies cardinality-strategyproofness. AV satisfies SD-strategyproofness  and
can even be characterised in the class of ABC scoring rules (\Cref{def:abc-scoring-rules}) as the only rule satisfying SD-strategyproofness \cite{lac-sko:t:multiwinner-strategyproofness}.
We note, however, that under more holistic models, e.g., models where voters have underlying non-dichotomous (non-binary) preferences, even AV is no longer strategyproof (see, e.g., \cite{laslier2016StrategicVoting,DeSinopoli2006,Dellis200647,mei-pro-ros-zoh:multiwinner_strategic}).
Another natural extension is the Kelly (or cautious) extension: 
a voter prefers $\calR(A',k)$ to $\calR(A,k)$ if every committee in $\calR(A',k)$ is preferable to every committee in $\calR(A,k)$. A more substantial discussion of strategyproofness of irresolute ABC rules can be found in the paper of \citet{KVVBE20:strategyproofness}.

We further discuss strategyproofness in \Cref{sec:proportionality_and_strategyproofness} in the context of proportionality. We will see that even weak forms of proportionality are incompatible with strategyproofness.

Finally, we note that \citet{sch-har-mat-ven:heuristic-strategic,ScheuermanHMV21} have conducted a behavioural experiment in which they analysed how the voters vote under non-dichotomous preferences, when they are uncertain about other voters' preferences, and when AV is used to select the winning candidates. These results suggest that the voters may use different (sometimes suboptimal) heuristics when making decisions which candidates they should approve. 
This shows that strategic voting in a practical setting can differ substantially from the axiomatic analysis we have presented here.

%

%
%
 
\bibliographystyle{abbrvnat}
\bibliography{main}
\chapter{Proportionality}\label{sec:proportionality}

\intro{The goal of this chapter is to discuss the many faces of proportional representation.
Proportionality, at its core, is a notion of fairness that grants smaller and larger groups of voters a fair consideration of their preferences.
The concrete definitions of what proportionality exactly means, however, differ.
In this chapter, we review the main approaches to proportionality and identify ABC rules which can be considered proportional.}

A key difference among ABC rules is how they treat minorities of voters, i.e., small groups with preferences different from larger groups. Let us illustrate this issue with the following simple example.

\begin{example}\label{ex:proportionality}
Consider the approval-based preference profile with $60$ voters approving $A = \{a_1, \ldots, a_{10}\}$, $20$ voters approving $B = \{b_1, \ldots, b_{6}\}$, 10 voters approving $C = \{c_1, c_2\}$, 8 voters approving $D = \{d_1, d_2, d_3, d_4\}$, and
2 voters approving $E = \{e_1, e_2, e_3\}$; assume our goal is to pick a committee of ten candidates. 
Given this instance AV returns committee $A$, and in some cases this is a reasonable choice (e.g., when the goal of the election is to select finalists of a contest). Yet, when the goal is to select a representative body that should reflect voters' preferences in a proportional fashion, this committee violates very basic principles of fairness. Indeed, the voters who approve committee $A$ constitute $60\%$ of the population, yet effectively they decide about the whole committee; at the same time the group of $20\%$ who approve $B$ is ignored. A committee that consists of six candidates from $A$, two candidates from $B$, one candidate from $C$, and one candidate from $D$ is, for example, a much more proportional choice.
\end{example}

In \Cref{ex:proportionality}, picking an outcome that is intuitively proportional is easy due to a very specific structure of voters' approval sets---each two approval sets are either the same or disjoint. Finding a proportional committee in the general case, when any two approval sets can arbitrarily overlap, is by far less straightforward, and to some extent ambiguous. Several approaches that allow one to formally reason about proportionality have been proposed in the literature. 

The goal of this chapter is to discuss the many faces of proportional representation.
Proportionality, at its core, is a notion of fairness that grants smaller and larger groups of voters a fair consideration of their preferences.\footnote{The concept of proportionality also finds application beyond voting, such as proportional clustering in machine learning \cite{DBLP:conf/icml/ChenFLM19,DBLP:conf/icalp/Micha020}.}
The concrete definitions of what proportionality exactly means, however, differ.
In this chapter, we review the main approaches to proportionality and identify ABC rules which can be considered proportional. \Cref{tab:proportionality_summary} and \Cref{fig:relation_between_proportionality_concepts} provide an overview of this analysis; the corresponding concepts are explained in this chapter.

But before we delve into this topic, let us answer the question why proportionality has such 
a prominent place in this book.
The main reason is that this reflects the attention this topic has received.
Since 2015, when \citet{justifiedRepresentation} first introduced (extended) justified representation (\Cref{sec:cohesive_groups}), there has been rapid progress in the understanding of proportionality in ABC elections. This progress has been along two trajectories: (i) defining stronger and stronger proportionality properties and (ii) finding (computationally tractable) ABC rules satisfying these properties.
In many situations, a proportional committee corresponds to a fair selection of candidates.
Thus, this line of research can be viewed as the search for a maximally fair ABC voting rule.
The following sections (\Cref{sec:apportionment,sec:cohesive_groups,sec:laminar,sec:the_core}) provide an overview of this exciting endeavour.

However, non-proportional rules are certainly also relevant and even necessary in many applications.
For example, when shortlisting candidates for a prize, we may want to select the ``best'' candidates without considerations of a proportional selection.
Or if we want to form a group that deliberates a topic, we would like to include as many diverse opinions as possible and thus we do not give a higher weight to popular opinions.
In general, much less work has been done on analysing and understanding non-proportional rules and this topic deserves much more attention.
In \Cref{sec:degressive_and_regressive_proportionality}, we summarise the existing literature and discuss concepts of ``non-proportionality''. 

The two final sections of this chapter are dedicated to the interplay of proportionality and strategyproofness (\Cref{sec:prop_external_attributes}) and considerations of proportionality when candidates have external attributes (\Cref{sec:proportionality_and_strategyproofness}).

\Crefname{example}{Ex.}{Ex.}
{\renewcommand{\arraystretch}{1.4}%
\begin{table}[!t]
        \footnotesize
	\centering
	\makebox[\textwidth][c]{
	\begin{tabular}{llllllll}
		\toprule
		& \makecell[l]{proportionality \\ degree} & EJR  & PJR & JR & \makecell[l]{laminar \\ prop.} & \makecell[l]{price-\\ability} & apportionment \\
		\midrule
		AV & 0~\cite{skowron:prop-degree} & & & & & & none \\
		\rowcolor{columbiablue}
		PAV & $\ell - 1$~\cite{AEHLSS18} & \checkmark~\cite{justifiedRepresentation} &\checkmark~\cite{justifiedRepresentation} &  \checkmark~\cite{justifiedRepresentation} & & & D'Hondt~\cite{bri-las-sko:c:apportionment} \\
		seq-PAV & \makecell[l]{$\approx 0.7\ell - 1$ \\ (for $k \leq 200$)~\cite{skowron:prop-degree}} & & & & & & D'Hondt~\cite{bri-las-sko:c:apportionment} \\
		rev-seq-PAV & ? & & & & & & D'Hondt~\cite{bri-las-sko:c:apportionment} \\
		CC & $\leq 1$~(\Cref{ex:pjr_and_ejr})  & & & \checkmark~\cite{justifiedRepresentation} & & & none \\
		seq-CC & $\leq 1$~(\Cref{ex:pjr_and_ejr}) & & &  \checkmark~\cite{justifiedRepresentation} & & & none \\
		\rowcolor{columbiablue}
		seq-\phragmen & $\nicefrac{(\ell-1)}{2}$~\cite{skowron:prop-degree} & & \checkmark~\cite{aaai/BrillFJL17-phragmen} & \checkmark~\cite{aaai/BrillFJL17-phragmen} & \checkmark~\cite{pet-sko:laminar} &  \checkmark~\cite{pet-sko:laminar} & D'Hondt~\cite{bri-las-sko:c:apportionment} \\
		\rowcolor{columbiablue}
		M. Equal Shares & $\nicefrac{(\ell \pm 1)}{2}$~(\ref{prop:prop_degree_of_sav_and_mav}) & \checkmark~\cite{pet-sko:laminar} & \checkmark~\cite{pet-sko:laminar} & \checkmark~\cite{pet-sko:laminar}  &  \checkmark~\cite{pet-sko:laminar} &  \checkmark~\cite{pet-sko:laminar} & D'Hondt~\cite{pet-sko:laminar} \\
		\lexphrag{} & 1~\cite{skowron:prop-degree} & & \checkmark~\cite{aaai/BrillFJL17-phragmen} & \checkmark~\cite{aaai/BrillFJL17-phragmen} &  &  \checkmark~\cite{pet-sko:laminar} & D'Hondt~\cite{bri-las-sko:c:apportionment} \\
		Monroe & $\leq 1$~(\Cref{ex:pjr_and_ejr}) & & $\dag$~\cite{pjr17} & \checkmark~\cite{justifiedRepresentation} & & & LRM $\dag$~\cite{bri-las-sko:c:apportionment} \\
		Greedy Monroe & $\leq 1$~(\Cref{ex:pjr_and_ejr}) & & $\dag$~\cite{pjr17} & \checkmark~(\ref{prop:greedy-monroe-jr}) & & & LRM $\dag$ (\ref{prop:greedy_monroe_apportionment})\\
		MAV     & 0~(\ref{prop:prop_degree_of_sav_and_mav}) & & & & & & none \\
		SAV     & 0~(\ref{prop:prop_degree_of_sav_and_mav}) & & & & & & none \\
		\bottomrule
	\end{tabular}
	}
	\caption{Proportionality of ABC rule. There are three rules which perform particularly well in terms of proportionality: PAV, \phragmen's sequential rule, and the Method of Equal Shares. The mark $\dag$ means that the result holds only when the number of voters $n$ is divisible by the committee size $k$. References of the form (A.x) refer to propositions in \Cref{app:proofs}.}
	\label{tab:proportionality_summary}
\end{table}}

\begin{figure}[t!]
\begin{center}
\begin{tikzpicture}[scale=0.8]

        \node at (0.0, 2.0) {stable priceability};
        \node at (0.0, 1.4) {(\Cref{sec:laminar})};
        
        \node at (0.0, 0.0) {the core};
        \node at (0.0, -0.6) {(\Cref{def:core})};
        
        \node at (0.0, -2.0) {FJR};
        \node at (0.0, -2.6) {(\Cref{def:fjr})};
        
        \node at (0.0, -4.0) {core subject to priceability};
        \node at (0.0, -4.6) {with equal payments}; 
        \node at (0.0, -5.2) {(\Cref{sec:core_restrictions})}; 
        
        \node at (6.0, -4.0) {priceability};
        \node at (6.0, -4.6) {(\Cref{def:priceability})}; 
        
        \node at (0.0, -6.6) {EJR};
        \node at (0.0, -7.2) {(\Cref{def:ejr})};
        
        \node at (0.0, -8.6) {PJR};
        \node at (0.0, -9.2) {(\Cref{def:pjr})};
        
        \node at (0.0, -10.6) {JR};
        \node at (0.0, -11.2) {(\Cref{def:jr})};
        
        \node at (6.0, -10.6) {lower quota};
        \node at (6.0, -11.2) {(\Cref{sec:apportionment})};
        
        \draw[thick,->] (0.0, 1.0) -- (0.0, 0.5);
        \draw[thick,->] (1.0, 1.0) -- (6.0, -3.5);
        \draw[thick,->] (6.0, -5.0) -- (1.0, -8.1);
        \draw[thick,->] (0.0, -1.0) -- (0.0, -1.5);
        \draw[thick,->] (0.0, -3.0) -- (0.0, -3.5);
        \draw[thick,->] (0.0, -5.6) -- (0.0, -6.1);
        \draw[thick,->] (0.0, -7.6) -- (0.0, -8.1);
        \draw[thick,->] (0.0, -9.6) -- (0.0, -10.1);
        \draw[thick,->] (1.9, -9.2) -- (4.3, -10.4);
        \draw[dashed] (-0.7, -6.6) -- node[above] {{\small incompatible}} (-5.3, -6.6);
        
        \node at (-8.0, -6.6) {perfect representation};
        \node at (-7.0, -7.2) {(\Cref{def:perfect_representation})};

\end{tikzpicture}
\end{center}
\caption{The relation between different proportionality axioms. An arrow from property $A$ to $B$ means that $A$ implies $B$.}\label{fig:relation_between_proportionality_concepts}
\end{figure}

\Crefname{example}{Example}{Examples}

\section{Apportionment}\label{sec:apportionment}

One approach to reasoning about proportionality of voting rules is to first identify a class of well-structured preference profiles where the concept of proportionality can be intuitively captured, and then to examine the behaviour of voting rules on such well-structured profiles. 
We focus here on so-called \emph{party-list profiles}, which are election instances of the form as we have seen in \Cref{ex:proportionality}.

\begin{definition}[Party-list profiles]\label{def:party_profiles}
We say that an approval profile $A = (A(1), \ldots, A(n))$ is a \emph{party-list profile} if for each two voters $i, j \in N$ we have that either $A(i) = A(j)$ or that $A(i) \cap A(j) = \emptyset$. 
We say that an election instance $(A, k)$ is a \emph{party-list instance} if 
\begin{inparaenum}[(i)]
\item $A$ is a party-list profile, and 
\item for each voter $i \in N$ we have that $|A(i)| \geq k$.
\end{inparaenum}
\end{definition}

Party-list profiles closely resemble political elections with political parties, hence the name of the domain. In such elections, voters are typically asked to vote for exactly one party. To see the connection to party-list profiles, note the following: If $A$ is a party-list profile, then the sets of voters and candidates can be divided into $p$ disjoint groups each, $N = N_1 \cup \ldots \cup N_p$ and $C \supseteq C_1 \cup \ldots \cup C_p$, so that all voters from group $N_i$, $i \in [p]$, approve exactly the candidates from $C_i$ (and no others). The candidates from $C_i$ can be thought of as members of some (virtual) party, and the voters from $N_i$ are those who cast their vote on party $C_i$.

In such elections, where the voters do not vote for individual candidates but rather each voter casts a single vote for one political party, the problem of distributing seats to political parties is called the \emph{apportionment problem}. The concept of proportionality in the apportionment setting has been extensively studied in the literature and is well understood---for a detailed overview we refer the reader to the comprehensive books by \citet{BaYo82a} and by \citet{Puke17}. 

We see from \Cref{def:party_profiles} that the apportionment problem can be viewed as a strict subdomain of approval-based multi-winner elections, and consequently ABC rules can be viewed as functions that extend apportionment methods to the more general setting of approval profiles. This connection was already known and referred to by \citet{Thie95a} and \citet{Phra95a}. In a more systematic fashion, \citet{bri-las-sko:c:apportionment} showed such relations between various ABC rules and methods of apportionment.
To properly explain this relation, let us first define three prominent apportionment methods,  used in parliamentary elections all over the world. 

In the following, we assume that there are $p$ political parties, consisting of the candidate sets $C_1, \ldots, C_p$. By $n_i$ we denote the number of votes cast on party $C_i$. Further, in line with our usual notation, $k$ denotes the number of committee seats that we want to distribute among the parties.

\begin{apportionmentrule}[D'Hondt method\footnote{Victor D'Hondt (1841--1901) was a Belgian professor of law and active proponent of proportional representation \citep{d1885expose,d1878question}. The D'Hondt method is also known as Jefferson method. Thomas Jefferson (1743--1826) was president of the United States, and proposed this method to allocate seats in the House of Representatives to states. D'Hondt's proposal was specifically meant for proportional representation in parliaments.
D'Hondt developed this method independently of Jefferson, even though Jefferson's proposal was earlier and largely similar.  The name ``Jefferson method'' is typically used in the U.S., while ``D'Hondt method'' is prevalent in Europe.\label{footnote:dhondt}
}]
The D'Hondt method proceeds in $k$ rounds, in each round allocating one seat to some party. Consider the $r$-th round, and let $s_i(r)$ be the number of seats that are currently assigned to party $C_i$; thus, $\sum_{i \in [p]}s_i(r) = r-1$. The D'Hondt method assigns the $r$-th seat to the party $C_i$ with the highest ratio $\frac{n_i}{s_i(r) + 1}$ (using a tiebreaking order between parties if necessary).
\end{apportionmentrule}

\begin{apportionmentrule}[Sainte-Lagu\"e\footnote{As it is the case with the D'Hondt/Jefferson method, this rule has been developed independently in Europe and in the U.S. and goes by different names: Sainte-Lagu\"e is used in Europe (in particular in the context of proportional representation in parliaments) and Webster is the name used in the U.S.\ literature. Sainte-Lagu\"e (1882--1950) was a French mathematician and proposed this method in 1910 \cite{sainte1910representation}. Daniel Webster (1782--1852) was a U.S.\ statesman and proposed this method in 1832 \cite{BaYo82a}.} method]
The Sainte-Lagu\"e method is defined analogously to the D'Hondt method, but in the $r$-th round it allocates the $r$-th seat to the party $C_i$ which maximises the ratio $\frac{n_i}{2s_i(r) + 1}$.  
\end{apportionmentrule}

Both the D'Hondt and the Sainte-Lagu\"e method belong to the class of divisor methods.
Divisor methods differ in the formula for the ratio used to distribute seats to parties. 
The aforementioned books by \citet{BaYo82a} and by \citet{Puke17} discuss this important class of apportionment methods in much more detail.

\begin{apportionmentrule}[Largest remainder method, LRM\footnote{The largest remainder method is also known as the Hamilton method, as it was proposed in the U.S.\ by Alexander Hamilton (1755--1804). His proposal was abandoned in favour of Jefferson's method \cite{BaYo82a}.}]
The largest remainder method first assigns to each party $\left\lfloor k \cdot \frac{n_i}{n} \right\rfloor$ seats---this way at least $k-p+1$ seats are assigned. Second, it assigns the remaining $r < p$ seats to the $r$ parties with the largest remainders $ k \cdot \frac{n_i}{n} - \left\lfloor k \cdot \frac{n_i}{n} \right\rfloor$, assigning each party at most one seat.
\end{apportionmentrule}

\begin{example}
Consider a party-list representation of the profile from \Cref{ex:proportionality}. We have five parties, $A$, $B$, $C$, $D$, and $E$, each getting, respectively, 60, 20, 10, 8, and 2 votes; the committee size is $k = 10$. The computation of the D'Hondt method can be followed in the left table below:

\begin{center}
\renewcommand{\arraystretch}{1.2}%
\minipage{0.45\textwidth}
\begin{center}
	\begin{tabular}{l|ccccc}
		\toprule
		  & $A$ & $B$ & $C$ & $D$ & $E$ \\
		\midrule
		$n_i$   & $\boldsymbol{60}$ & $\boldsymbol{20}$ & $\boldsymbol{10}$  & 8 & 2 \\
		$\nicefrac{n_i}{2}$   & $\boldsymbol{30}$ & $\boldsymbol{10}$ & 5  & 4 & 1 \\
		$\nicefrac{n_i}{3}$   & $\boldsymbol{20}$ & $6\,\nicefrac{2}{3}$ & $3\,\nicefrac{1}{3}$  & $2\,\nicefrac{2}{3}$ & $\nicefrac{2}{3}$ \\
		$\nicefrac{n_i}{4}$   & $\boldsymbol{15}$ & 5 & $2\,\nicefrac{1}{2}$  & $2$ & $\nicefrac{1}{2}$ \\
		$\nicefrac{n_i}{5}$   & $\boldsymbol{12}$ & 4 & 2  & $1\,\nicefrac{3}{5}$ & $\nicefrac{2}{5}$ \\
		$\nicefrac{n_i}{6}$   & $\boldsymbol{10}$ & $3\,\nicefrac{1}{3}$ & $1\,\nicefrac{2}{3}$  & $1\,\nicefrac{1}{3}$ & $\nicefrac{1}{3}$ \\
		$\nicefrac{n_i}{7}$   & $\boldsymbol{8}\,\pmb{\nicefrac{4}{7}}$ & $2\,\nicefrac{6}{7}$ & $1\,\nicefrac{3}{7}$  & $1\,\nicefrac{1}{7}$ & $\nicefrac{2}{7}$ \\
		$\nicefrac{n_i}{8}$   & $7\nicefrac{1}{2}$ & $2\,\nicefrac{1}{2}$ & $1\,\nicefrac{1}{4}$  & $1$ & $\nicefrac{1}{4}$ \\
		\bottomrule
	\end{tabular}
\end{center}
\endminipage\hfill
\minipage{0.45\textwidth}
\begin{center}
	\begin{tabular}{l|ccccc}
		\toprule
		  & $A$ & $B$ & $C$ & $D$ & $E$ \\
		\midrule
		$n_i$   & $\boldsymbol{60}$ & $\boldsymbol{20}$ & $\boldsymbol{10}$  & $\boldsymbol{8}$ & 2 \\
		$\nicefrac{n_i}{3}$   & $\boldsymbol{20}$ & $\boldsymbol{6}\,\pmb{\nicefrac{2}{3}}$ & $3\,\nicefrac{1}{3}$  & $2\,\nicefrac{2}{3}$ & $\nicefrac{2}{3}$ \\
		$\nicefrac{n_i}{5}$   & $\boldsymbol{12}$ & 4 & 2  & $1\,\nicefrac{3}{5}$ & $\nicefrac{2}{5}$ \\
		$\nicefrac{n_i}{7}$   & $\boldsymbol{8}\,\pmb{\nicefrac{4}{7}}$ & $2\,\nicefrac{6}{7}$ & $1\,\nicefrac{3}{7}$  & $1\,\nicefrac{1}{7}$ & $\nicefrac{2}{7}$ \\
		$\nicefrac{n_i}{9}$   & $\boldsymbol{6}\,\pmb{\nicefrac{2}{3}}$ & $2\,\nicefrac{2}{9}$ & $1\,\nicefrac{1}{9}$  & $1\,\nicefrac{8}{9}$ & $\nicefrac{2}{9}$ \\
		$\nicefrac{n_i}{11}$  & $\boldsymbol{5}\,\pmb{\nicefrac{5}{11}}$ & $1\,\nicefrac{9}{11}$ & $\nicefrac{10}{11}$  & $\nicefrac{8}{11}$ & $\nicefrac{2}{11}$ \\
		$\nicefrac{n_i}{13}$  & $4\,\nicefrac{8}{13}$ & $1\,\nicefrac{7}{13}$ & $\nicefrac{10}{13}$  & $\nicefrac{8}{13}$ & $\nicefrac{2}{13}$ \\
		$\nicefrac{n_i}{15}$  & $4$ & $1\,\nicefrac{1}{3}$ & $\nicefrac{2}{3}$  & $\nicefrac{8}{15}$ & $\nicefrac{2}{15}$ \\
		\bottomrule
	\end{tabular}
\end{center}
\endminipage
\end{center} 
In the subsequent rounds the D'Hondt method allocates seats to parties $A$, $A$, $A$ (by tie-breaking), $B$, $A$, $A$, $A$ (by tie-breaking), $B$ (by tie-breaking), $C$, and $A$. For example, in the fourth round, when $A$ is already allocated $3$ seats and $B$ is allocated none, the rule will give the next seat to $B$ rather than to $A$, because $\frac{20}{0 + 1} > \frac{60}{3 + 1}$. Summarising, seven seats will be allocated to party $A$, two seats to party $B$, and one seat to party $C$; the remaining parties will get no seats. In the diction of ABC rules, winning committees are exactly those that consist of seven candidates from $A$, two candidates from $B$ and one candidate from $C$.

The computation of the Sainte-Lagu\"e method is illustrated in the above right table. It will allocate six seats to $A$, two seats to $B$, one seat to $C$, and one seat to $D$. 

The largest remainder method first assigns to parties $A$, $B$, $C$, $D$, and $E$---respectively---6, 2, 1, 0, and 0 seats. Then, the remainders are considered:
\begin{center}
\renewcommand{\arraystretch}{1.2}%
\minipage{0.45\textwidth}
\begin{center}
	\begin{tabular}{l|ccccc}
		\toprule
		  & $A$ & $B$ & $C$ & $D$ & $E$ \\
		\midrule
		$n_i$   & ${60}$ & ${20}$ & ${10}$  & $8$ & $2$ \\
		$\left\lfloor k \cdot \frac{n_i}{n} \right\rfloor$  & $6$ & $2$ & $1$  & $0$ & $0$ \\
		remainder   & ${0}$ & $0$ & $0$  & $0.8$ & $0.2$ \\ \midrule
		seats & $6$ & $2$ & $1$  & $1$ & $0$ \\
		\bottomrule
	\end{tabular}
\end{center}
\endminipage
\end{center} 
There is one unassigned seat which will be given to the party with the largest remainder, namely to $D$. Thus, LRM will allocate six seats to $A$, two seats to $B$, one seat to $C$, and one seat to $D$.
\end{example}

The D'Hondt method, the Sainte-Lagu\"e method, and LRM exhibit particularly appealing properties. For example, the D'Hondt method satisfies \emph{lower quota}, which means that a party $i$ which receives $n_i$ out of $n$ votes must be allocated at least $\lfloor k\cdot \nicefrac{n_i}{n} \rfloor$ committee seats. 
The largest remainder method satisfies not only lower quota but also \emph{upper quota}: a party $i$ with $n_i$ out of $n$ votes must not receive more than $\lceil k\cdot \nicefrac{n_i}{n} \rceil$ seats.
However, the largest remainder method fails an important axiom called population monotonicity, which states that an increase in support must not harm a party.
In contrast, population monotonicity is satisfied by D'Hondt and Sainte-Lagu\"e. 
For further details, we refer the interested reader to the aforementioned books on apportionment methods~\cite{BaYo82a, Puke17}.

We are now ready to formulate the main results of \citet{bri-las-sko:c:apportionment}:

\begin{theorem}[\citet{bri-las-sko:c:apportionment}]\label{thm:apportionment}
PAV, sequential PAV, seq-\phragmen{}, and \lexphrag{} extend the D'Hondt method of apportionment. Phragm\'{e}n's variance-minimising rule\footnote{This rule is similar to \lexphrag{} but minimises the variance of loads instead of the maximum load, see \cite{aaai/BrillFJL17-phragmen,Janson16arxiv} for a precise definition.} extends the Sainte-Lagu\"e method of apportionment. If~$n$ is divisible by~$k$, then Monroe's rule extends the largest remainders method.
\end{theorem}

\Cref{thm:apportionment} lists ABC rules
that behave proportionally on party-list profiles and thus these rules can be considered good contenders for being proportional in the general ABC model.
In addition, we show in the appendix that also Greedy Monroe extends the largest remainder method when $n$ is divisible by $k$
 (\Cref{prop:greedy_monroe_apportionment}),
but both Monroe's rule and Greedy Monroe do not if $n$ is not divisible by~$k$ (\Cref{prop:greedy_monroe_apportionment_k_does_not_divide_n}). 

\citet{jet-consistentabc} strengthened the results of \citet{bri-las-sko:c:apportionment}, providing a strong argument in favour of PAV:

\begin{theorem}[\citet{jet-consistentabc}]\label{thm:pav_characterisation}
PAV is the unique extension of the D'Hondt method that satisfies neutrality, anonymity, consistency, and continuity.\label{thm:pav-characterisation}
\end{theorem}

\citet{jet-consistentabc} further show that this result can be generalised to 
arbitrary divisor-based apportionment methods. For example, the Sainte-Lagu\"e method yields the $w$-Thiele method with $w(x)=\sum_{j=1}^{x} \frac{1}{2j-1}$.

\section{Cohesive Groups}\label{sec:cohesive_groups}

In party-list profiles (\Cref{def:party_profiles}), voters can be arranged in groups with identical preferences. Then, proportionality requires that a large-enough group of voters with identical preferences deserves a certain number of representatives in the elected committee (proportional to the size of the group). 
This approach can be generalised to groups with non-identical but similar preferences.
We now discuss axioms that relax the requirements for groups of voters to be entitled to representatives. These axioms are based on the concept of $\ell$-cohesiveness:

\begin{definition}\label{def:cohesiveness}
For $\ell\geq 1$, a group $V \subseteq N$ is \emph{$\ell$-cohesive} if:
\begin{enumerate}[label=(\roman*)]
\item $|V| \geq \ell\cdot \frac{n}{k}$, and
\item $\left|\bigcap_{i \in V} A(i) \right| \geq  \ell$.
\end{enumerate}
\end{definition}

An $\ell$-cohesive group consists of an $\nicefrac{\ell}{k}$-th fraction of voters, thus, intuitively, such a group should be able to control at least $\nicefrac{\ell}{k} \cdot k = \ell$ committee seats. Further, an $\ell$-cohesive group agrees on $\ell$ candidates, so one can ensure each member of the group gets $\ell$ representatives by selecting only $\ell$ candidates. It is, hence, tempting to require that for each $\ell$-cohesive group $V$, each voter from $V$
should be given at least $\ell$ representatives in the elected committee. Unfortunately, this would be too strong---there exists no rule that would satisfy this property.

\begin{example}[\citet{AEHLSS18}]\label{ex:cohesive}
Consider a profile $A$ with four candidates ($a, b, c, d$) and 12 voters, with the following approval sets:
\begin{align*}
&A(1)\colon \{ a, d\}   &&  A(4)\colon \{ a, b\}     &&   A(7)\colon \{ b, c\}     && A(10)\colon \{ c, d\} \\
&A(2)\colon \{ a \}     &&  A(5)\colon \{ b\}        &&   A(8)\colon \{ c\}        && A(11)\colon \{ d\}     \\
&A(3)\colon \{ a \}     &&  A(6)\colon \{ b\}        &&   A(9)\colon \{ c\}        && A(12)\colon \{ d\}\text{.} 
\end{align*}
Let $k=3$. The group $\{1, 2, 3, 4\}$ is 1-cohesive, as it has a commonly approved candidate ($a$) and is of size $\frac{12}{3}=4$. If we want to give each voter in this group a representative, candidate $a$ has to be in the winning committee (voters $2$ and $3$ only approve $a$). Now observe that also the groups $\{4,5,6,7\}$, $\{7,8,9,10\}$, and $\{10,11,12,1\}$ are 1-cohesive. Thus, also candidates $b$, $c$, and $d$ have to be in every winning committee. This is impossible as we are interested in committees of size $3$.
We see that it is impossible to satisfy \emph{every} voter in 1-cohesive groups.
\end{example} 

We see from this example that the requirement that each voter from an $\ell$-cohesive group should have at least $\ell$ representatives in the elected committee is simply too strong.\footnote{In a very recent work, \citet{individual-representation} explore this intuitive (but unachievable) requirement---called individual representation---in much more depth. In particular, they show that all ABC rules presented in this book sometimes fail individual representation even for elections where such a committee exists.
In addition, they study conditions under which individual representation can be satisfied.} However, it can be weakened a bit without losing much of its intuitive appeal. We start our discussion with \emph{extended justified representation (EJR)}~\citep{justifiedRepresentation} and \emph{proportionality degree}~\citep{pjr17,AEHLSS18,proprank,skowron:prop-degree}.\footnote{The concept of proportionality degree was initially referred to as \emph{average satisfaction of $\ell$-cohesive groups}~\cite{pjr17,AEHLSS18}. \citet{proprank} called an almost equivalent property $\kappa$-group representation.}
The former concept is formulated as an axiom, the latter as a proportionality guarantee specified by a function.

\begin{definition}[Extended justified representation, EJR]\label{def:ejr} An ABC rule $\calR$ satisfies \emph{extended justified representation (EJR)} if for each election instance $E = (A, k)$, each winning committee $W \in \calR(E)$, and each $\ell$-cohesive group of voters $V$ there exists a voter $i\in V$ with at least $\ell$~representatives in $W$, i.e., $|A(i) \cap W| \geq \ell$.
\end{definition}

\begin{example}
Let us revisit \Cref{ex:cohesive}. The committee $\{a,b,c\}$ satisfies the condition of EJR: every $1$-cohesive group contains at least one voter with one representative in $\{a,b,c\}$. For example, for the 1-cohesive group $\{10,11,12,1\}$, the voters $10$ and $1$ have a representative in the committee. Note that in this example actually all size-$3$ committees satisfy the EJR condition; also there are no $\ell$-cohesive groups for $\ell\geq2$.
\end{example}

\begin{definition}[Proportionality degree]
Fix a function $f\colon \naturals \to \reals$. An ABC rule $\calR$ has a \emph{proportionality degree} of $f$ if for each election instance $E = (A, k)$,  each winning committee $W \in \calR(E)$, and each $\ell$-cohesive group of voters $V$, the average number of representatives that voters from $V$ get in $W$ is at least $f(\ell)$, i.e.,
\begin{align*}
\frac{1}{|V|} \cdot \sum_{i \in V} \left| A(i) \cap W \right| \geq f(\ell) \text{.}
\end{align*}
\end{definition}


At first, it might appear that even for large cohesive groups, EJR gives a guarantee only to a single voter within this group. However, the EJR property applies to any group of agents: Let $V$ be an $\ell$-cohesive group. If we remove a voter with $\ell$ representatives (who, by EJR, is guaranteed to exist), the resulting group will be at least $(\ell-1)$-cohesive. Consequently, in such a group there must exist a voter with at least $\ell-1$ representatives, etc. 
As a consequence of this argument, EJR implies a proportionality degree of at least $f_{\calR}(\ell) = \frac{\ell-1}{2}$~\cite{pjr17}.
The other direction does not hold: even an ABC rule with a proportionality degree of $f_{\calR}(\ell) = \ell-1$ may fail EJR (cf. \Cref{prop:ejr+proprank}).

\Cref{ex:cohesive} also shows that there exists no rule with a proportionality degree of $f(\ell) = \ell$:

\begin{example}
Consider again the profile of \Cref{ex:cohesive}.
Assume, there exists a rule~$\calR$ with a proportionality degree of $f_{\calR}(\ell) = \ell$ and let $k = 3$. The group $\{1, 2, 3, 4\}$ is 1-cohesive, so in order to ensure that these voters get on average one representative, candidate $a$ must be selected. By applying the same reasoning to $\{4, 5, 6, 7\}$ we infer that $b$ must be selected. Analogously, we conclude that $c$ and $d$ must be selected. However, there are only three seats in the committee, a contradiction.
\end{example} 

\citet{AEHLSS18} generalise the above example and prove that there exists no rule with a proportionality degree of $f(\ell) = \ell - 1 + \epsilon$ for $\epsilon>0$. 
PAV matches this bound, and thus has an optimal proportionality degree. Below we include the proof of this result, since a similar idea is often used in the analysis of proportionality properties of Thiele methods.

\begin{theorem}[\citet{justifiedRepresentation,AEHLSS18}]\label{thm:prop-of-pav}
PAV has a proportionality degree of $\ell -1$. It also satisfies EJR.
\end{theorem}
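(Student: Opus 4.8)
The plan is to prove the two claims separately, though the proportionality-degree bound does the heavy lifting. I would first establish that PAV has proportionality degree $\ell-1$, i.e., that for every $\ell$-cohesive group $V$ and every winning committee $W$ of PAV, the average representation $\frac{1}{|V|}\sum_{i\in V}|A(i)\cap W|$ is at least $\ell-1$. The EJR claim then follows almost for free: if some $\ell$-cohesive group $V$ had \emph{every} voter with at most $\ell-1$ representatives in $W$, the average would be at most $\ell-1$; to get a strict contradiction one argues slightly more carefully, or one proves EJR directly by the same exchange argument (this is the original argument of \citet{justifiedRepresentation}). I would present the direct EJR argument as a warm-up and then refine it to get the quantitative degree bound.

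The core technique is a local-optimality / exchange argument. Suppose $W$ is PAV-winning and $V$ is $\ell$-cohesive, so $|V|\ge \ell\cdot\frac{n}{k}$ and the voters in $V$ share at least $\ell$ common candidates; let $C^*\subseteq\bigcap_{i\in V}A(i)$ with $|C^*|=\ell$. For each candidate $c\in W$, the \emph{marginal loss} of removing $c$ from $W$ is $\sum_{i\in N(c)}\bigl(\H(|A(i)\cap W|)-\H(|A(i)\cap W|-1)\bigr)=\sum_{i\in N(c)}\frac{1}{|A(i)\cap W|}$. For each candidate $c'\in C^*\setminus W$, the \emph{marginal gain} of adding $c'$ is $\sum_{i\in N(c')}\frac{1}{|A(i)\cap W|+1}\ge \sum_{i\in V}\frac{1}{|A(i)\cap W|+1}$, since every voter in $V$ approves $c'$. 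Because $W$ is optimal, no swap (remove some $c\in W$, add some $c'\in C^*\setminus W$) can increase the score, so $\min_{c\in W}(\text{marginal loss of }c)\ge \max_{c'\in C^*\setminus W}(\text{marginal gain of }c')$. Averaging the marginal losses over all $k$ committee members and summing gives $\sum_{c\in W}\sum_{i\in N(c)}\frac{1}{|A(i)\cap W|}=\sum_{i\in N}1[|A(i)\cap W|\ge 1]\le n$; hence the minimum marginal loss is at most $n/k$. Combining: $\sum_{i\in V}\frac{1}{|A(i)\cap W|+1}\le n/k\le |V|/\ell$. The final step is a convexity (Jensen / power-mean) argument: the function $x\mapsto \frac{1}{x+1}$ is convex, so $\frac{1}{|V|}\sum_{i\in V}\frac{1}{|A(i)\cap W|+1}\ge \frac{1}{\bar{r}+1}$ where $\bar r=\frac{1}{|V|}\sum_{i\in V}|A(i)\cap W|$; thus $\frac{1}{\bar r+1}\le \frac{1}{\ell}$, giving $\bar r\ge \ell-1$, which is exactly the claimed proportionality degree.

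For EJR specifically, I would run essentially the same swap argument but, instead of averaging, pick the single candidate $c'\in C^*$ minimizing something, or argue by contradiction: if all voters in $V$ had at most $\ell-1$ representatives, then each term $\frac{1}{|A(i)\cap W|+1}\ge \frac{1}{\ell}$, so $\sum_{i\in V}\frac{1}{|A(i)\cap W|+1}\ge |V|/\ell\ge n/k$; one then needs the minimum marginal loss to be \emph{strictly} less than this (or handles the boundary case $C^*\subseteq W$, where the claim is immediate since then every voter in $V$ already has $\ell$ representatives), yielding a profitable swap and contradicting optimality of $W$. The matching upper bound — that no rule beats proportionality degree $\ell-1$ — is cited from \citet{AEHLSS18} and from \Cref{ex:cohesive}, so nothing new is needed there.

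The main obstacle I anticipate is bookkeeping around the case $C^*\setminus W=\emptyset$ and around strictness in the EJR argument: when $C^*\subseteq W$ the swap argument is vacuous but the conclusion holds trivially, and in the averaging step one must be careful that "marginal loss of the cheapest committee member $\le n/k$" is compared correctly against a gain that is only a lower bound (summing over $V$ rather than over all of $N(c')$). The convexity step is routine but must be stated for the correct direction (convex function, average of arguments inside). A secondary subtlety is that PAV may be irresolute; the argument above applies verbatim to \emph{every} score-maximizing committee $W$, so irresoluteness causes no difficulty.
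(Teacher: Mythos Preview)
Your approach is correct and is essentially the standard argument from the cited references (the paper itself only states the theorem and cites \cite{justifiedRepresentation,AEHLSS18} without giving a proof). The proportionality-degree part---swap argument, averaging marginal losses to get $\min_{c\in W}(\text{loss of }c)\le n/k$, lower-bounding the gain of any $c'\in C^*\setminus W$ by $\sum_{i\in V}\frac{1}{|A(i)\cap W|+1}$, and closing with Jensen/AM--HM---is exactly the proof in \cite{AEHLSS18}, and your handling of the case $C^*\subseteq W$ is the right one.

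For EJR the strictness issue you flag is genuine, not merely bookkeeping: from $\min_{c\in W}(\text{loss})\le n/k$ and $\text{gain}\ge n/k$ alone you cannot exhibit a \emph{strictly} improving swap in the boundary case where both equal $n/k$. The fix in \cite{justifiedRepresentation} is to work with the augmented set $W^+=W\cup\{c'\}$ of size $k+1$. Summing marginal contributions over $W^+$ still gives at most $n$, so some $c\in W^+$ has contribution at most $\frac{n}{k+1}$. This $c$ cannot be $c'$, because under the assumed EJR violation the contribution of $c'$ in $W^+$ is $\sum_{i\in N(c')}\frac{1}{|A(i)\cap W^+|}\ge\sum_{i\in V}\frac{1}{|A(i)\cap W|+1}\ge\frac{|V|}{\ell}\ge\frac{n}{k}>\frac{n}{k+1}$. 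Hence $c\in W$, and the size-$k$ committee $W^+\setminus\{c\}$ has
\[
\score{\pav}(A,W^+\setminus\{c\})\;\ge\;\score{\pav}(A,W^+)-\tfrac{n}{k+1}\;\ge\;\score{\pav}(A,W)+\tfrac{n}{k}-\tfrac{n}{k+1}\;>\;\score{\pav}(A,W),
\]
contradicting optimality of $W$. With this one-line strengthening your plan is complete.
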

\begin{proof}
Consider an election $E = (A, k)$ and let $W$ be a winning committee according to PAV. Let $N$ and $C$ denote the sets of voters and candidates in $E$, respectively. We will show that for each $\ell$-cohesive group of voters $V$ it holds that $\frac{1}{|V|} \cdot \sum_{i \in V} \left| A(i) \cap W \right| > \ell-1$. This proves that PAV has the proportionality degree of $\ell -1$. We can further conclude that there exists a voter $i \in V$ with $\left|A(i) \cap W \right| > \ell-1$, and hence PAV also satisfies EJR. 

Towards a contradiction assume there exists an $\ell$-cohesive group of voters $V$ with $\frac{1}{|V|} \cdot \sum_{i \in V} \left| A(i) \cap W \right| \leq \ell-1$. We will show that there exists a pair of candidates, $c \in W$ and $c' \notin W$, such that $\score{\pav}(A, (W \cup \{c'\}) \setminus \{c\}) > \score{\pav}(A, W)$. This would indicate that we can replace one member of $W$ with another not-selected candidate so that the new winning committee has a higher PAV-score than $W$. This would contradict the fact that $W$ is a winning committee.

For convenience, for a set of candidates $X$ and a candidate $y$ we will use the notation:
\begin{align*}
\Delta(X, y) = \score{\pav}(X \cup \{y\}) - \score{\pav}(X) \text{,}
\end{align*}
i.e., $\Delta(X, y)$ is the marginal contribution of $y$ given $X$.

Since $\frac{1}{|V|} \cdot \sum_{i \in V} \left| A(i) \cap W \right| \leq \ell-1$ and $V$ is $\ell$-cohesive, there exists a not-selected candidate $c' \in C$ that is approved by all the voters from $V$. If we add this candidate to the committee $W$, the PAV-score will increase by:
\begin{align*}
\Delta(W, c') = \sum_{i \in N(c')} \frac{1}{|A(i) \cap W| + 1} \geq \sum_{i \in V} \frac{1}{|A(i) \cap W| + 1}\text{.}
\end{align*}
From the inequality between the arithmetic and harmonic means we further get that:
\begin{align*}
\Delta(W, c') \geq \frac{|V|^2}{\sum_{i \in V}(|A(i) \cap W| + 1)} \geq \frac{|V|^2}{|V|(\ell - 1) + |V|} = \frac{|V|}{\ell} \geq \frac{n}{k}\text{.}
\end{align*}
The last inequality follows from $\ell$-cohesiveness.

Now, consider a committee $W' = W \cup\{c'\}$, and observe that
\begin{align*}
\sum_{c \in W'} \Delta(W'\setminus \{c\}, c) &= \sum_{c \in W'} \sum_{i \in N(c)} \frac{1}{|A(i) \cap W'|} = \sum_{i \in N} \sum_{c \in A(i) \cap W'} \frac{1}{|A(i) \cap W'|} \\
                                             &= \sum_{i \in N \colon A(i) \cap W' \neq \emptyset}  |A(i) \cap W'| \cdot \frac{1}{|A(i) \cap W'|} \leq n \text{.}
\end{align*}
As a result, there exists $c \in W'$ such that $\Delta(W'\setminus \{c\}, c) \leq \frac{n}{k+1}$. Consequently:
\begin{align*}
\score{\pav}(A, (W \cup \{c'\}) \setminus \{c\}) &=  \score{\pav}(A, W) + \Delta(W, c') - \Delta(W'\setminus{c}, c) \\
                                                                        &\geq \score{\pav}(A, W) + \frac{n}{k} - \frac{n}{k+1} > \score{\pav}(A, W) \text{.}
\end{align*}
This yields a contradiction and completes the proof.
\end{proof}

In contrast to PAV, the two sequential variants of PAV, seq-PAV and rev-seq-PAV, do not satisfy EJR.
However, the proportionality guarantees of \Cref{thm:prop-of-pav} also hold for a local-search variant of PAV~\citep{AEHLSS18}, which---in contrast to PAV itself---runs in polynomial time.
Thus, EJR and a proportionality degree of $\ell-1$ are achievable in polynomial time.
\citet{AEHLSS18} also construct a second polynomial-time computable (but rather involved) rule that satisfies EJR.  More recently, \citet{pet-sko:laminar} prove that the Method of Equal Shares, which is also computable in polynomial time, satisfies EJR.
Among the rules introduced in \Cref{sec:abc_rules}, PAV and the Method of Equal Shares are the only ones that satisfy EJR.
An overview of the proportionality degree of rules can be found in \Cref{tab:proportionality_summary}.

Let us now consider two properties that are weaker than EJR.

\begin{definition}[Proportional justified representation, PJR~\citep{pjr17}]\label{def:pjr}
An ABC rule $\calR$ satisfies \emph{proportional justified representation (PJR)} if for each election $E = (A, k)$, each winning committee $W \in \calR(E)$, and each $\ell$-cohesive group of voters $V$ it holds that $\left| W\cap \left(\bigcup_{i\in V} A(i) \right)\right| \geq \ell$.
\end{definition}

\begin{definition}[Justified representation, JR~\citep{justifiedRepresentation}] \label{def:jr}An ABC rule $\calR$ satisfies \emph{justified representation (JR)} if for each election $E = (A, k)$, each $W \in \calR(E)$, and each $1$-cohesive group of voters $V$ there exists a voter $i\in V$ who is represented by at least one member of~$W$, i.e., $\left| W\cap A(i) \right| \geq 1$.
\end{definition}

PJR and JR are much weaker properties than EJR; in particular EJR implies PJR, which in turn implies JR. \Cref{ex:pjr_and_ejr}, below, illustrates that the stronger of the two axioms, PJR, can be satisfied even by rules that could be considered very bad from the perspective of proportionality degree (and, thus, also from the perspective of approximating EJR). On the other hand, there exist rules with good proportionality degree that do not satisfy even JR---this happens, e.g., when a rule does not provide sufficient guarantees for 1-cohesive groups (although it might satisfy EJR for $\ell\geq 2$). 
Generally, justified representation cannot be viewed as a proportionality axiom as it grants even large group only a single representative in the selected committee.
In contrast, PJR can be viewed as a moderate proportionality requirement, significantly weaker than EJR but stronger than, e.g., lower quota on party-list profile. 
We refer to \Cref{tab:proportionality_summary} for an overview which rules satisfy JR and PJR.

\begin{example}\label{ex:pjr_and_ejr}
Fix $k$ and consider the following instance:
\[
\begin{tikzpicture}
[yscale=1.0,xscale=2.0]
\filldraw[fill=green!10!white, draw=black] (0,0.5) rectangle (4.0,1.0);
\node at (2.0, 0.75) {$c_{k+1}$};
\filldraw[fill=green!10!white, draw=black] (0,1.0) rectangle (4.0,1.5);
\node at (2.0, 1.25) {$c_{k+2}$};
\filldraw[fill=green!10!white, draw=black] (0,1.5) rectangle (4.0,2.0);
\node at (2.0, 1.75) {$\cdots$};
\filldraw[fill=green!10!white, draw=black] (0,2.0) rectangle (4.0,2.5);
\node at (2.0, 2.25) {$c_{2k}$};

\filldraw[fill=blue!10!white, draw=black] (0,0) rectangle (0.8,0.5);
\node at (0.4, 0.25) {$c_1$};
\node at (0.4, -0.35) {$V_1$};
\filldraw[fill=blue!10!white, draw=black] (0.8, 0) rectangle (1.6,0.5);
\node at (1.2, 0.25) {$c_2$};
\node at (1.2, -0.35) {$V_2$};
\filldraw[fill=blue!10!white, draw=black] (1.6, 0) rectangle (2.4,0.5);
\node at (2.0, 0.25) {$c_3$};
\node at (2.0, -0.35) {$V_3$};
\filldraw[fill=blue!10!white, draw=black] (2.4, 0) rectangle (3.2,0.5);
\node at (2.8, 0.25) {$\cdots$};
\node at (2.8, -0.35) {$\cdots$};
\filldraw[fill=blue!10!white, draw=black] (3.2, 0) rectangle (4.0,0.5);
\node at (3.6, 0.25) {$c_{k}$};
\node at (3.6, -0.35) {$V_{k}$};
\end{tikzpicture}
\]
There are $2k$ candidates. The voters can be divided into $k$ equal-size groups so that the voters from the $i$-th group, in the diagram denoted as $V_i$, approve $c_i$ and $\{c_{k+1}, \ldots, c_{2k}\}$.
Committee $\{c_1, \ldots, c_k\}$ (marked blue) satisfies PJR, but clearly, $\{c_{k+1}, \ldots, c_{2k}\}$ (marked green) is a much better choice from the perspective of proportionality degree. Also, $\{c_{k+1}, \ldots, c_{2k}\}$ satisfies the EJR condition while $\{c_1, \ldots, c_k\}$ does not.
This example shows that PJR implies no better proportionality degree than $f(\ell)=1$.
\end{example} 

Given that there are rather few rules satisfying EJR, \citet{bre-fal-kac-nie2019:experimental_ejr} performed computer simulations for several distributions of voters' preferences and verified how hard it is \emph{on average} to find a committee that satisfies the condition imposed by EJR. They concluded that $\ell$-cohesive groups for $\ell \geq 2$ are quite rare, and that a random committee among those that satisfy the much weaker condition of JR is quite likely to satisfy EJR as well. Their second conclusion was that JR, PJR, and EJR, while highly desired, do not guarantee on their own a sensible selection of committees, and one needs to put forward additional criteria. Specifically, they showed that there are often many committees satisfying these conditions, and these committees may vary significantly. \citet{bre-fal-kac-nie2019:experimental_ejr} derived their conclusions from the analysis of specific distributions of voters' preferences; it would be desirable to analyse this phenomenon more broadly, e.g., for other types of distributions.



Recently, \citet{pet-pie-sko:c:participatory-budgeting-cardinal} introduced an even stronger axiom, called \emph{fully justified representation (FJR)}, where the precondition of $\ell$-cohesiveness is relaxed. In EJR we say that a group of voters $V$ is $\ell$-cohesive if $|V| \geq \ell \cdot \nicefrac{n}{k}$ and if there exists a set $T$ of $\ell$ candidates such that each voter from $V$ approves all $\ell$ candidates from $T$. In the definition of FJR, on the other hand, we only require that there must exist an integer $\beta$ such that each voter from $V$ approves at least $\beta$ candidates from $T$. FJR enforces that at least one member of $V$ must have at least $\beta$ representatives in the elected committee. Note that EJR corresponds to FJR with a fixed value of $\beta = \ell$.

\begin{definition}[Fully justified representation~\cite{pet-pie-sko:c:participatory-budgeting-cardinal}]\label{def:fjr}
Given an integer value $\beta$ and a subset of candidates $T \subseteq C$, we say that a group of voters $V$ is weakly $(\beta, T)$-cohesive if $|V| \geq |T| \cdot \nicefrac{n}{k}$ and if for each voter $i \in V$ it holds that $|A(i) \cap T| \geq \beta$. An ABC rule $\calR$ satisfies \emph{fully justified representation (FJR)} if for each election $E = (A, k)$, each winning committee $W \in \calR(E)$, each integer $\beta$ and $T \subseteq C$, and each weakly $(\beta, T)$-cohesive group of voters $V$, there exists a voter $i \in V$ such that $|W \cap A(i)| \geq \beta$.
\end{definition}
For the time being, the only known rule that satisfies FJR is rather artificial and specifically tailored to the definition of the axiom~\cite{pet-pie-sko:c:participatory-budgeting-cardinal}. It is an open question whether there exists a natural ABC rule which satisfies FJR together with other desirable properties. 

All proportionality concepts discussed in this section ensure that cohesive groups are guaranteed a certain representation in the elected committee. \citet{cevallos2020verifiably} argue that in some contexts---for example when using ABC rules for selecting validators in the blockchain protocol---it is equally important to ensure that groups are not over-represented. To the best of our knowledge formal axioms capturing this intuitive requirement are still missing.\footnote{We note that the upper quota axiom in the apportionment setting can be viewed as such an axiom.}

To sum up, when considering proportionality axioms based on cohesive groups, PAV stands out as the most proportional rule. The Method of Equal Shares comes at a close second (its proportionality degree is lower) but it is computable in polynomial time.
If we desire a committee monotone rule, then seq-\phragmen{} is a very good choice: it has a proportionality degree of $f_{\mathrm{Phrag}}(\ell) = \frac{\ell-1}{2}$~\cite{skowron:prop-degree}, i.e., the proportionality degree that is implied by EJR, and satisfies PJR \cite{aaai/BrillFJL17-phragmen}. Also seq-PAV is a good choice: for reasonable sizes of committees seq-PAV has a better proportionality degree than seq-\phragmen{}; on the other hand, it satisfies neither PJR nor JR.

\section{Laminar Proportionality and Priceability}\label{sec:laminar}

The properties that we discussed in \Cref{sec:cohesive_groups} (extended justified representation and the proportionality degree) and the axiomatic characterisation given in \Cref{thm:pav_characterisation} all indicate that PAV provides particularly strong proportionality guarantees. Specifically, one could interpret these results as suggesting that PAV is a better rule---in terms of proportionality---than \phragmen's sequential rule and the Method of Equal Shares. However, drawing such a conclusion based on the so-far presented results would be too early.
In the following we explain that proportionality can be understood in at least two different ways and that the axioms we discussed so far capture and formalise only one specific form of proportionality. We explain that \phragmen's sequential rule and the Method of Equal Shares provide very strong proportionality guarantees, but with respect to an interpretation of  proportionality that is not captured by properties based on cohesive groups, and which is---to some extent---incomparable with the type of proportionality guaranteed by PAV.

Let us start by illustrating the difference in how PAV and \phragmen's sequential rule (and the Method of Equal Shares) operate with the following example. 

\begin{example}[\cite{pet-sko:laminar}]\label{ex:pav_phragmen_difference}
There are 15 candidates and 6 voters---the voters' approval sets are depicted in the diagram below. The committee shaded in blue in the left-hand side picture is the one that is selected by the \phragmen's sequential rule and by the Method of Equal Shares. The committee shaded in the right-hand side picture is chosen by PAV.

\[
\begin{tikzpicture}
[yscale=0.43,xscale=0.78,voter/.style={anchor=south, yshift=-7pt}, select/.style={fill=blue!10}, c/.style={anchor=south, yshift=1.5pt, inner sep=0}]
	\draw[select] (0,0) rectangle (3,1);
	\draw[select] (0,1) rectangle (3,2);
	\draw[select] (0,2) rectangle (3,3);
	\draw[select] (0,3) rectangle (1,4);
	\draw[select] (1,3) rectangle (2,4);
	\draw[select] (2,3) rectangle (3,4);
	\node at (1.5,0.42) {$c_1$};
	\node at (1.5,1.42) {$c_2$};
	\node at (1.5,2.42) {$c_3$};
	\node at (0.5,3.42) {$c_4$};
	\node at (1.5,3.42) {$c_5$};
	\node at (2.5,3.42) {$c_6$};
	\foreach \x in {3,4,5}
		{
		\foreach \y in {0,1}
			{
			\draw[select] (\x,\y) rectangle (\x+1,\y+1);
			}
		\foreach \y in {2}
			{
			\draw (\x,\y) rectangle (\x+1,\y+1);
			}
		}
	\node at (3.5,0.42) {$c_{7}$};
	\node at (3.5,1.42) {$c_{8}$};
	\node at (3.5,2.42) {$c_{9}$};
	\node at (4.5,0.42) {$c_{10}$};
	\node at (4.5,1.42) {$c_{11}$};
	\node at (4.5,2.42) {$c_{12}$};
	\node at (5.5,0.42) {$c_{13}$};
	\node at (5.5,1.42) {$c_{14}$};
	\node at (5.5,2.42) {$c_{15}$};
	\foreach \i in {1,...,6}
		\node[voter] at (\i-0.5,-1) {$\i$};
		
	\node at (3, -2.5) {(a) seq-\phragmen and Equal Shares};
\end{tikzpicture}
\qquad\quad
\begin{tikzpicture}
[yscale=0.43,xscale=0.78,voter/.style={anchor=south, yshift=-7pt}, select/.style={fill=blue!10}, c/.style={anchor=south, yshift=1.5pt, inner sep=0}]
	\draw[select] (0,0) rectangle (3,1);
	\draw[select] (0,1) rectangle (3,2);
	\draw[select] (0,2) rectangle (3,3);
	\draw (0,3) rectangle (1,4);
	\draw (1,3) rectangle (2,4);
	\draw (2,3) rectangle (3,4);
	\node at (1.5,0.42) {$c_1$};
	\node at (1.5,1.42) {$c_2$};
	\node at (1.5,2.42) {$c_3$};
	\node at (0.5,3.42) {$c_4$};
	\node at (1.5,3.42) {$c_5$};
	\node at (2.5,3.42) {$c_6$};
	\foreach \x in {3,4,5}
		{
		\foreach \y in {0,1,2}
			{
			\draw[select] (\x,\y) rectangle (\x+1,\y+1);
			}
		}
	\node at (3.5,0.42) {$c_{7}$};
	\node at (3.5,1.42) {$c_{8}$};
	\node at (3.5,2.42) {$c_{9}$};
	\node at (4.5,0.42) {$c_{10}$};
	\node at (4.5,1.42) {$c_{11}$};
	\node at (4.5,2.42) {$c_{12}$};
	\node at (5.5,0.42) {$c_{13}$};
	\node at (5.5,1.42) {$c_{14}$};
	\node at (5.5,2.42) {$c_{15}$};
	\foreach \i in {1,...,6}
		\node[voter] at (\i-0.5,-1) {$\i$};
		
	\node at (3, -2.5) {(b) PAV};
\end{tikzpicture}
\]

The approval sets of voters $1, 2$, and $3$ are disjoint from those of voters $4, 5$, and~$6$. It seems intuitive that the first three voters, who together form half of the society, should be able to decide about half of the elected candidates. \phragmen's sequential rule and the Method of Equal Shares select committees where the first three voters approve in total half of the members, thus the behaviour of these rules is consistent with the aforementioned understanding of proportionality. PAV follows a different principle: In the committee depicted in~(a), each of the first three voters approves 4 candidates; each of the remaining three voters approves only 2 committee members. PAV notices that this is the case, and tries to reduce the societal inequality of voters' satisfaction by removing one representative of voter $1$ and adding one to $4$; similarly, PAV considers that it is more fair to remove the representatives of $2$ and $3$, and add the candidates liked by $5$ and $6$. On the one hand, PAV prefers to pick a committee that minimises the societal inequality in the voters' satisfactions (measured as the number of approved committee members). On the other hand, it punishes voters $1, 2$, and $3$ for being agreeable and ``easy to satisfy'' with fewer committee members---PAV allows them to decide only about one quarter of the committee.
\end{example}

\Cref{ex:pav_phragmen_difference} illustrates that PAV and \phragmen's sequential rule (and the Method of Equal Shares) follow two different types of proportionality. PAV implements a \emph{welfarist} type of proportionality which is primarily concerned with the welfare (satisfaction) of the voters. This type of proportionality is captured, e.g., by the properties discussed in \Cref{sec:cohesive_groups}. PAV also satisfies the Pigou--Dalton principle of transfers, which says that given an election $(A, k)$ and two committees, $W$ and $W'$, which in total get the same numbers of approvals ($\score{\av}(A, W) = \score{\av}(A, W')$), the one which minimises the societal inequality should be preferred \cite{pet-sko:laminar}. \phragmen's sequential rule and the Method of Equal Shares, on the other hand, implement proportionality with respect to power, which---informally speaking---says that a group consisting of an $\alpha$ fraction of voters should be given a voting power that enables to decide about an $\alpha$ fraction of the committee. In other words, the type of proportionality of \phragmen-like rules is not mainly concerned with the welfare of groups but with the justification of welfare, achieved by endowing each voter with the same amount of virtual budget that represents the voting power.
  

\citet{pet-sko:laminar} discuss two properties---laminar proportionality and priceability---which aim at formally capturing the high-level idea of proportionality with respect to power.\footnote{Laminar proportionality and priceability are similar in spirit but are logically independent (neither implies the other).}
The first of the two properties---\emph{laminar proportionality}---is very similar in spirit to proportionality on party-list profiles. The corresponding axiom identifies a class of well-structured election instances---called \emph{laminar elections}---and specifies how a laminar proportional rule should behave on these profiles. Laminar profiles are more general than party-list profiles and are defined by a recursive structure, similar to the election from \Cref{ex:pav_phragmen_difference}.

The second property, which we will discuss in more detail, is \emph{priceability}. Intuitively, we say that a committee $W$ is priceable if we can endow each voter with the same fixed budget and if for each voter there exists a payment function such that:
\begin{inparaenum}[(1)]
\item voters do not spend more than their allotted budget,
\item voters pay only for the candidates they approve, 
\item each elected candidate gets a total payment of 1; candidates that are not elected receive no payments, and 
\item there is no group of voters who approve a non-elected candidate, and who in total have more than one unit of unspent budget. 
\end{inparaenum}
Priceability is a notion of proportionality as it distributes power to groups of sufficient size; a large enough group receives enough collective budget to afford one or more candidates in the committee.

Formally, we obtain the following definition:

\begin{definition}[Priceability]\label{def:priceability}
Given an election instance $(A, k)$, a committee $W$ is priceable if there exists a per-voter budget $p \in \reals_{+}$ and $p_i \colon C \to [0, 1]$ for each voter $i \in N$ such that:
\begin{enumerate}[label=(\arabic*)]
\item $\sum_{c \in C}p_i(c) \leq p$ for each $i \in N$,
\item $p_i(c) = 0$ for each $i \in N$ and $c \notin A(i)$,
\item $\sum_{i \in N}p_i(c) = \begin{cases} 1 & \text{if $c \in W$,}\\
 0 & \text{otherwise.}\end{cases}$
\item $\sum_{i \in N(c)}\left(p - \sum_{c' \in W}p_i(c')\right) \leq 1$ for each $c \notin W$.
\end{enumerate}
An ABC rule is priceable if it returns only priceable committees.\footnote{We remark that this definition builds on the assumption that at least $k$ candidates are approved at least once. If there are fewer than $k$ such candidates, one may define each committee containing these candidates as priceable.}
\end{definition}

\begin{example}\label{ex:priceability}
Consider the election instance from \Cref{ex:pav_phragmen_difference}. The committees returned by \phragmen's sequential rule and by the Method of Equal Shares are priceable. For example, consider $W_1 = \{c_1, \ldots, c_{6}, c_{7}, c_{8}, c_{10}, c_{11}, c_{13}, c_{14}\}$ (the committee shaded blue in the left figure in \Cref{ex:pav_phragmen_difference}). This committee is priceable as witnessed by the following price system: the voters' budget is $p = 2$, and the payment functions are as follows (we only specify the non-zero payments):
$p_1(c_i) = p_2(c_i) = p_3(c_i) = \nicefrac 1 3 $ for $i\in \{1,2,3\}$ and
$p_1(c_4) = p_2(c_5) = p_3(c_6) = p_4(c_7) = p_4(c_8) = p_5(c_{10}) = p_5(c_{11}) = p_6(c_{13}) = p_6(c_{14}) = 1$. Each voter fully spends their budget of $2$.

On the other hand, the committee $W_2 = \{c_1, c_2, c_3, c_7, \ldots, c_{15}\}$ returned by PAV (the one shaded blue in the right figure in \Cref{ex:pav_phragmen_difference}) is not priceable. Indeed, if the voters' budget $p$ were $\leq 2$, then the voters $4, 5, 6$ could not afford to pay for 9 candidates $c_7, \ldots, c_{15}$. If $p > 2$, then some of the voters $1, 2, 3$, say voter $1$, would have a remaining budget of more than $1$. Hence, this voter would have more budget than needed to buy a candidate outside of $W_2$ (e.g., $c_4$), which contradicts condition~(4) in \Cref{def:priceability}.
\end{example}

\citet{pet-sko:laminar} generalised  \Cref{ex:priceability} and showed that no welfarist rule (see \Cref{def:welfarist_rules}) is priceable. This shows that priceability is inherently not a welfarist concept. The same is true for laminar proportionality.

\begin{theorem}[\citet{pet-sko:laminar}]\label{thm:laminar_proportionality_and_welfarism}
\phragmen's sequential rule and the Method of Equal Shares are laminar proportional and priceable. No welfarist rule is laminar proportional nor priceable. No rule satisfying the Pigou--Dalton principle of transfers is laminar proportional nor priceable.
\end{theorem}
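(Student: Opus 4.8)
The plan is to treat the positive claim and the two impossibilities separately; the heart of the argument is a characterization of which committees are priceable on a specific family of instances.

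For the positive claim, both priceability statements are essentially read off the definitions. For Rule~X one takes the voters' common budget to be $p=\nicefrac{k}{n}+t^{*}$, where $t^{*}$ is the running time of the seq-\phragmen{} completion (phase two), and lets $p_{i}(c)$ record exactly the amount voter~$i$ spends on $c$ during the run (the $\rho(c)$-payments in phase one, the \phragmen{}-loads in phase two); conditions~(1)--(4) of \Cref{def:priceability} then hold by construction. Condition~(5) is where the work is: for seq-\phragmen{} (and symmetrically for the phase-two completion of Rule~X, taking $p$ to be the time at which the last candidate enters) one uses that at every instant the rule adds the first candidate whose supporters can pay for it, so an unselected candidate~$c$ always satisfies $\sum_{i\in N(c)}(\text{leftover budget of }i)\le 1$, and resetting the payers of a newly added candidate only decreases this sum; at termination this is exactly~(5). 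For the phase-one part of Rule~X one argues analogously, using that the completion via seq-\phragmen{} is precisely what prevents an unselected candidate from being left with over-endowed supporters (cf.\ the footnote to Rule~X). Laminar proportionality is then verified by induction on the recursive structure of laminar elections: on a laminar profile both rules split cleanly along the laminar decomposition -- an isolated sub-electorate entitled to $k'$ seats spends its budget on exactly $k'$ of its candidates -- which is what the clauses of laminar proportionality demand; the bookkeeping is in \cite{pet-sko:laminar}.

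For the impossibilities, the engine is a family of instances generalizing \Cref{ex:pav_phragmen_difference} and \Cref{ex:priceability}: ``agreeable'' voters $u_{1},\dots,u_{t}$ who all approve a common block of $t$ candidates plus one private candidate each, together with ``demanding'' voters $w_{1},\dots,w_{t}$ whose approval sets are pairwise disjoint blocks, larger than their fair share, with $k$ tuned so that the ``\phragmen-type'' committee -- the $t$ common candidates, the $t$ private candidates, and a proportional slice of each demanding block -- has the same total approval score as the ``welfarist-type'' committee that drops the private candidates in order to over-serve the demanding voters (in \Cref{ex:pav_phragmen_difference} these are the welfare profiles $(4,4,4,2,2,2)$ and $(3,3,3,3,3,3)$). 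The key lemma -- and the main obstacle -- is to show that every priceable, and likewise every laminar proportional, committee of such an instance has the \phragmen-type welfare vector. This reduces to a counting argument through condition~(5): a fully served demanding voter forces the common budget $p$ upward, which in turn forces some agreeable voter to retain more leftover budget than condition~(5) permits while that voter still approves a private, unelected candidate.

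Given the lemma, the Pigou--Dalton claim is immediate: the \phragmen-type welfare vector is strictly Pigou--Dalton dominated by the welfare vector of an available committee of equal total approval score -- the one obtained by dropping the private candidates to equalize satisfaction -- so a rule obeying the Pigou--Dalton principle of transfers never outputs a \phragmen-type committee and hence outputs a non-priceable (and non-laminar-proportional) one, a contradiction. For the welfarism claim one must additionally handle an arbitrary score function $f$; the clean route is to build the instance so that the \phragmen-type welfare vector is \emph{also} the welfare vector of some committee that is not priceable (several private/demanding ``slots'' realize the same satisfaction profile, but only some realizations admit a valid price system). Then if $f$ ranks that vector top, the welfarist rule returns \emph{all} committees realizing it, including a non-priceable one; and if $f$ ranks some other vector top, the rule returns a committee whose welfare vector is not \phragmen-type, hence not priceable -- either way a contradiction, and the same instance defeats laminar proportionality. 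Should a single self-contained instance be hard to engineer, one falls back on the usual two-instance argument: two laminar profiles over a common electorate whose forced laminar-proportional committees carry swapped welfare vectors, which forces $f$ to tie those vectors and hence forces the welfarist rule to return a non-laminar-proportional committee. All the difficulty is in the priceability characterization; the rest is bookkeeping.
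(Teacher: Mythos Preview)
The survey does not itself prove this theorem --- it is imported wholesale from \cite{pet-sko:laminar} --- so there is no in-paper argument to compare against line by line. That said, your outline tracks the structure of the cited source well.

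The positive direction is right. For seq-\phragmen{} (and for the phase-two completion of Rule~X), taking $p$ to be the final stopping time and letting $p_i(c)$ be the amount voter~$i$ actually hands over when $c$ enters gives a valid price system; your verification of condition~(5) --- ``no unselected candidate is ever over-funded at any moment, and resetting the payers of the newly added candidate can only lower the relevant sum'' --- is exactly how it goes. Laminar proportionality is indeed handled by structural induction on the laminar decomposition.

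For the impossibilities you have correctly isolated the load-bearing step: on the parameterized family generalizing \Cref{ex:pav_phragmen_difference}, every priceable (and every laminar-proportional) committee must carry the \phragmen-type welfare vector, and this is forced by the budget arithmetic through condition~(5) along the lines you sketch. The Pigou--Dalton consequence then drops out as you say.

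One caveat on the welfarism step. Your ``clean route'' --- a single instance in which the \phragmen-type welfare vector is simultaneously realized by a priceable and by a non-priceable committee --- does not work on the canonical instance: in \Cref{ex:pav_phragmen_difference}, \emph{every} committee with welfare vector $(4,4,4,2,2,2)$ is priceable (which two of each demanding voter's three candidates you keep is immaterial by symmetry), so there is no non-priceable witness sharing that vector. Engineering such an instance is not obviously easy, and \cite{pet-sko:laminar} does not attempt it; the two-instance ``swapped welfare vectors'' argument that you list as a fallback is in fact the actual route there. So your contingency is the proof, not the backup --- worth rebalancing the emphasis.
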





While priceability is not a welfarist concept, it implies proportional justified representation. Further, all priceable rules must be equivalent to the D'Hondt method of apportionment on party-list profiles (cf. \Cref{thm:apportionment}). A price system provides an explicit and easily verifiable evidence explaining that the voters can use their power (represented through virtual money) to ensure that the candidates from the committee are selected. 
This intuitively explains that priceability captures the idea of proportionality with respect to power---proportionality follows from the fact that each voter is initially endowed with the same amount of virtual money.

Priceability itself puts rather mild constraints on the payment functions $\{p_i\}_{i \in N}$. Recently, \citet{pet-pie-sha-sko:c:stable-priceability} introduced a stronger version of the axiom: we say that a price system $(p, \{p_i\}_{i \in N})$ is \emph{stable} if it satisfies conditions (1)--(3) from \Cref{def:priceability} and the following strengthening of condition (4):
\begin{description}
\item[\emph{(4*) Condition for Stability:}] There exists no non-empty group of voters $V \subseteq N$, no subset $W' \subseteq C \setminus W$, and no collections $\{p'_i\}_{i\in V}$ ($p'_i\colon W' \to [0, 1]$) and $\{R_i\}_{i\in V}$ (with $R_i \subseteq W$ for all $i \in V$) such that all the following conditions hold:
\begin{enumerate}
    \item For each $c \in W'$: $\sum_{i \in V} p'_i(c) > 1$.
    \item For each $i \in V$: $p_i(W \setminus R_i) + p'_i(W') \leq p$.
    \item Each voter $i \in V$ approves more candidates in $W \setminus R_i \cup W'$ than in $W$, or $i$ approves as many candidates in $W \setminus R_i \cup W'$ as in $W$ but $\sum_{c \in W \setminus R_i} p_i(c) + \sum_{c \in W'} p'_i(c) < \sum_{c \in W} p_i(c)$.
\end{enumerate}
\end{description}
In words, it should not be possible for the voters from $V$ to propose a set of candidates $W'$ such that if each voter $i \in V$ transferred her money from $R_i \subseteq W$ to the candidates from $W'$, then these candidates would garner more than enough money to be elected, and each voter from $i \in V$ would be happier with $W \setminus R_i \cup W'$ than with $W$.

Stable priceability is a strong condition: stable-priceable committees do not always exist, and if so, they belong to the core (see \Cref{sec:the_core}). On the other hand, one can check in a polynomial time whether a committee is stable-priceable, and such committees often exist in practice. \citet{pet-pie-sha-sko:c:stable-priceability} also introduced the concept of \emph{balanced stable-priceability}, which additionally requires that each two voters must pay the same amount of virtual money for the same candidate. They proved that balanced stable-priceable committees can be characterised as outputs of slightly modified version of the Method of Equal Shares.

We mention one more property---\emph{perfect representation}~\citep{pjr17}---which is loosely related to priceability. It also requires an explanation how voters can distribute their support/power in a way that justifies electing a committee; however, the axiom applies only in very specific situations.

\begin{definition}[\citet{pjr17}]\label{def:perfect_representation}
We say that a committee~$W$ satisfies \emph{perfect representation} if the set of voters can be divided into $k$ equally-sized disjoint groups $N = N_1 \cup \ldots \cup N_k$ ($|N_i| = \nicefrac{n}{k}$ for each $i \in k$) and if we can assign a distinct candidate from $W$ to each of these groups in a way that for each $i \in k$ the voters from $N_i$ all approve their assigned candidate. An ABC rule $\calR$ satisfies perfect representation if $\calR$ returns only committees satisfying perfect representation whenever such committees exist.
\end{definition}

Perfect representation is incompatible with EJR~\citep{pjr17} and with weak (and strong) Pareto efficiency (\Cref{prop:pr-pareto}), and it is not implied by (nor implies) priceability. Among the rules considered in this paper, only Monroe \cite{pjr17} and \lexphrag{} \cite{aaai/BrillFJL17-phragmen} satisfy perfect representation, as does the variance-based rule by \phragmen{} mentioned in \Cref{thm:apportionment} \cite{aaai/BrillFJL17-phragmen}.

To sum up, if we are mainly interested in the welfarist interpretation of proportionality, as captured by axioms that specify how cohesive groups of voters should be treated, then PAV is the best among the considered rules. Yet, sequential PAV, seq-\phragmen{}, and the Method of Equal Shares perform also reasonably well with respect to these criteria, and they are computable in polynomial time. Sequential PAV does not satisfy JR, and so it might discriminate small cohesive groups of voters. On the other hand, for reasonably small committees sequential PAV has better proportionality degree than seq-\phragmen{}, and the Method of Equal Shares.  The axioms that well describe the welfarist type of proportionality are EJR and proportionality degree, and to a lesser extent PJR and JR.
If we are interested in proportionality with respect to power, then we shall also consider the axioms of priceability and laminar proportionality.
In this case the Method of Equal Shares and \phragmen's sequential rule are the two superior rules. It is not entirely clear which one of the two rules is better. On the one hand, the Method of Equal Shares satisfies the appealing axiom of EJR; on the other hand,  \phragmen's sequential rule is committee monotone (see~\Cref{sec:comm-mon}). In \Cref{tab:proportionality_summary}, we highlighted the three rules that---with the current state of knowledge---we consider the best ABC rules in terms of proportionality.

\section{The Core}\label{sec:the_core}

An important concept of group fairness that has been extensively studied in the context of ABC rules is the \emph{core}. This notion of proportionality is adopted from cooperative game theory\footnote{Specifically, the definition used in the literature on multi-winner voting is based on the definition of the core for cooperative games with non-transferable payoffs~\cite{RePEc:mtp:titles:0262650401,chalkiadakis2011computational}.}, and was first introduced in the context of multi-winner voting by \citet{justifiedRepresentation}.

\begin{definition}\label{def:core}
Given an instance $(A, k)$, we say that a committee $W$ is in the \emph{core} if for each non-empty $V \subseteq N$ and each $T \subseteq C$ with 
\begin{align}
\frac{|T|}{k} \leq \frac{|V|}{n},\label{eq:core}
\end{align} there exists a voter $i \in V$ such that $|A(i) \cap T| \leq |A(i) \cap W|$, i.e., voter $i$ is at least as satisfied with $W$ as with $T$. We say that an ABC rule $\calR$ satisfies the core property if for each instance $(A,k)$ each winning committee $W \in \calR(A, k)$ is in the core. 
\end{definition}

Informally speaking, the core property requires that a group $V$ constituting an $\alpha$ fraction of voters should be able to control an $\alpha$ fraction of the committee. If such a group can propose a set $T$ of $\lfloor \alpha k \rfloor$ candidates such that each voter in $V$ is more satisfied with the proposed set $T$ than with the winning committee $W$, then the group $V$ would have an incentive to deviate, hence would witness that committee $W$ is not stable (and, in some sense, also not fair). If a winning committee is in the core, then no such deviation is possible.

The core property implies extended justified representation (\Cref{def:ejr}): Assume an ABC rule $\calR$ satisfies the core property and consider an instance $(A, k)$, a winning committee $W$, and an $\ell$-cohesive group of voters $V$. Let $T$ be the set of $\ell$ candidates that are approved by all the voters in $V$ (such candidates exist because $V$ is $\ell$-cohesive). Since $W$ is in the core, there must exist a voter $i \in V$ such that $|A(i) \cap W| \geq |A(i) \cap T| = \ell$, hence the condition of EJR must be satisfied. While the notion of core strictly generalises EJR and thus implies strong satisfaction guarantees for cohesive groups, it can also be viewed as a concept formalising the idea of proportionality with respect to power (cf.\ \Cref{sec:laminar}).

It is an important open question whether there exists an ABC rule that satisfies the core property, or---equivalently---whether the core is always non-empty. For the time being only partial answers to this intriguing question are known: 
\begin{enumerate}
\item None of the rules mentioned in \Cref{sec:abc_rules} satisfies the property. Since a rule satisfying the core must satisfy EJR, only PAV and the Method of Equal Shares come into consideration.
However, counterexamples for both are known \cite{justifiedRepresentation,pet-sko:laminar}. For PAV, the instance from \Cref{ex:pav_phragmen_difference} shows a violation of the core.
A simple example for the Method of Equal Shares can be found in \cite[Example 4]{abs-2108-01987}.

\item No welfarist rule (\Cref{def:welfarist_rules}) can satisfy the core property \cite{pet-sko:laminar}.

\item If one restricts the attention to a special subclass of approval profiles, so-called \emph{approval-based party-list profiles} as introduced by \citet{BGPSW19}, the situation changes. 
Approval-based party-list profiles are approval profiles where each candidate appears with at least $k$ copies, i.e., for every candidate~$c$ it holds that $\left|\{c' \in C \colon N(c) = N(c')\}\right| \geq k$. 
Approval-based party-list profiles are thus \emph{more general} than party-list profiles (cf.~\Cref{def:party_profiles})---intuitively each voter can approve one or more parties.
%
%
\citet{BGPSW19} prove that PAV satisfies the core property on approval-based party-list profiles. As mentioned before, PAV does not satisfy the core property in the general case.

\item It is known that the core can be empty in settings that are related to the ABC model but are more expressive. This is the case, e.g., in committee elections with ranking-based preferences \citep{cheng2019group,abs-2108-01987} and in participatory budgeting with additive utilities \citep[Appendix~C]{fain2018fair}; these two settings are discussed \Cref{sec:multiwinner-rank} and in \Cref{sec:part-budg}, respectively.

\end{enumerate}

As it remains unclear whether an ABC rule satisfying the core property is an achievable goal, several works in the most recent literature analysed relaxed notions of the core. We review these notions in the following.

\subsubsection{Relaxation by Randomisation}

The first type of relaxation that we consider is a probabilistic variant of the notion, i.e., the question becomes: ``can core-like properties be guaranteed in expectation (ex-ante)?''
\citet{cheng2019group} prove that there always exists a lottery over committees that satisfies the core property in expectation. 
Let $\expected_{X \sim \Delta}(X)$ denote the expected value of a random variable $X$ distributed according to a lottery (probability distribution)~$\Delta$.

\begin{theorem}[\citet{cheng2019group}]\label{thm:core_in_expectation}
For each election instance $(A, k)$ there exists a lottery over committees~$\Delta$ such that for each group of candidates $T \subseteq C$ it holds that
\begin{align}
\frac{|T|}{k} > \frac{\expected_{W \sim \Delta}\left(N(T, W)\right)}{n}\text{,}\label{eq:exp-core}
\end{align}
where $N(T, W)$ is the set of voters who prefer $T$ over $W$:
\begin{align*}
N(T, W) = \left\{ i \in N \colon |A(i) \cap T| > |A(i) \cap W| \right\} \text{.}
\end{align*}
\end{theorem}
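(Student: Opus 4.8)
The plan is to prove existence of the desired lottery via an LP-duality / minimax argument, of the kind pioneered by Cheng et al. I would set up a two-player zero-sum game. The committee-player picks a committee $W$ with $|W|=k$; the adversary picks a "deviating coalition'' described by a pair $(V,T)$ with $V\subseteq N$, $T\subseteq C$, and $|T|\le k$. A natural payoff for the adversary, given a pure committee $W$, is something like $\frac{|T|}{k}-\frac{|N(T,W)|}{n}$ when this quantity needs to be forced positive — actually I would rather have the adversary try to \emph{violate} the inequality, so the committee-player wants to keep $\frac{|N(T,W)|}{n}<\frac{|T|}{k}$ for every $(V,T)$. Since only $N(T,W)$ and $|T|$ appear, $V$ is irrelevant and the adversary just picks $T$; so the cleaner game is: committee-player picks $W$, adversary picks $T$ with $|T|\le k$, and I want to show there is a mixed strategy $\Delta$ over committees such that for \emph{every} $T$, $\expected_{W\sim\Delta}|N(T,W)| < \frac{|T|}{k}\,n$. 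The strict inequality is slightly delicate for a minimax theorem (which naturally gives $\le$), so I expect to prove a $\le$ statement with a strict gap and then argue strictness, or build slack into the payoff (e.g. show the game value is $\le \frac{|T|}{k}n - \varepsilon$ for a uniform $\varepsilon>0$, perhaps by a compactness/finiteness argument since there are finitely many committees and finitely many sets $T$).

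The core step — and the main obstacle — is bounding the \emph{value of the game from the committee-player's side against a fixed adversary mixed strategy}, i.e. showing: for any distribution $\mu$ over subsets $T$ (with $|T|\le k$), there exists a single committee $W$ with $\expected_{T\sim\mu}|N(T,W)| < \frac{n}{k}\expected_{T\sim\mu}|T|$. Here I would use a probabilistic / averaging construction of $W$: draw $W$ itself at random in a way tied to $\mu$. A promising route is to sample $k$ candidates by repeatedly drawing $T\sim\mu$ and picking a uniformly random element of $T$ (a "sampling from the adversary's own proposals'' trick), or to use a fractional relaxation: think of $\mu$ as inducing a fractional committee $\bar x \in [0,1]^C$ with $\sum_c \bar x_c = $ something comparable to $\expected|T|$, scale it up to total mass $k$, and then round. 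For each fixed target set $T$, a voter $i$ lies in $N(T,W)$ only if $|A(i)\cap W| < |A(i)\cap T|$; I would bound the probability of this event for the randomized $W$ using concentration / linearity of expectation — roughly, the expected overlap $\expected|A(i)\cap W|$ should be made at least as large as a suitable fraction of $|A(i)\cap T|$ on average over $T\sim\mu$, so that few voters can be strictly worse off. Carrying this expectation computation through, and in particular handling the $\lfloor\cdot\rfloor$ issues hidden in "$|T|\le k$'' vs. fractional mass $k$, is where the real work is.

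Once the one-sided bound is in hand, I would invoke the minimax theorem (von Neumann / Sion; finiteness of the committee side makes this routine, and the adversary side can be taken finite too since there are finitely many subsets of $C$) to conclude there is a mixed committee strategy $\Delta$ simultaneously good against every $T$, i.e. $\expected_{W\sim\Delta}|N(T,W)| \le \frac{n}{k}|T|$ for all $T$ with $|T|\le k$, and then upgrade $\le$ to the strict $>$ in~\eqref{eq:exp-core} using the uniform slack established above (or by noting that $|N(T,W)|$ and $|T|$ are integers and handling boundary cases $|T|=0$ and $|T|>k$ separately — for $|T|>k$ the claim is trivial, and for $T=\emptyset$ it is vacuous). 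I expect the write-up to spend most of its length on the averaging/rounding construction and its expectation analysis; the game-theoretic wrapper is short. The subtle points to flag are (i) making the slack uniform so strictness survives the limit, and (ii) ensuring the randomized committee always has exactly $k$ members (rounding a fractional point of mass $k$ to an integral committee of size exactly $k$, e.g. via a dependent-rounding / pipage argument that preserves the per-voter expected overlaps).
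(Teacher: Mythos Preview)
The paper does not contain a proof of this theorem. It is a survey, and \Cref{thm:core_in_expectation} is stated as a cited result from \citet{cheng2019group}; the surrounding text only comments on what the inequality means and on the computational aspects, without reproducing any argument. So there is nothing in the paper to compare your attempt against.

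That said, your high-level plan---set up a two-player zero-sum game between a committee player and an adversary choosing $T$, prove a one-sided bound (for every adversary mixed strategy there is a good deterministic committee response), and then invoke the minimax theorem to obtain a universal lottery $\Delta$---is indeed the shape of the argument in the cited source. Your observation that $V$ is irrelevant and that the cases $T=\emptyset$ and $|T|>k$ are degenerate is correct and needed. The part you flag as ``the real work'' (constructing a good response committee against a given adversary distribution) is exactly where the substance lies; be aware that the ``sample a candidate from a random $T\sim\mu$'' heuristic you sketch does not by itself give the required bound, and the actual proof goes through a fractional relaxation (existence of a fractional core point via a fixed-point/convexity argument) rather than a direct probabilistic sampling trick. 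If you pursue the sampling route, you should check carefully whether it yields the needed inequality $\expected_{T\sim\mu}|N(T,W)| < \tfrac{n}{k}\,\expected_{T\sim\mu}|T|$; this is not obvious and is the step most at risk of failing.
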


Note that \Cref{eq:exp-core} is indeed a negated, probabilistic version of \Cref{eq:core}, showing that in expectation there are too few voters to propose a different committee.
While it is not known whether such a lottery $\Delta$ can be found in a polynomial time, \citet{cheng2019group} prove that if we restrict our attention only to sets $T$ of size bounded by a constant, then for each $\epsilon > 0$ there is a polynomial-time algorithm that computes $\Delta$ such that $(1 + \epsilon) \cdot \frac{|T|}{k} > \frac{\expected_{W \sim \Delta}\left(N(T, W)\right)}{n}$.

\subsubsection{Relaxation by Deterministic Approximation}

Another approach is to ask whether the core property can be well approximated. A few notions of approximation have been proposed; \Cref{def:core_approx1} below unifies the definitions considered in the literature.

\begin{definition}\label{def:core_approx1}
We say that an ABC rule $\calR$ provides a $\gamma$-multiplicative-$\eta$-additive-satisfaction $\beta$-group-size approximation to the core if for each instance $(A,k)$, each winning committee $W \in \calR(A,k)$, each non-empty subset of voters $V \subseteq N$, and each subset of candidates $T \subseteq C$ with \[\beta \cdot \frac{|T|}{k} \leq \frac{|V|}{n}\] there exists a voter $i \in V$ such that $|A(i) \cap T| \leq \gamma\cdot |A(i) \cap W| + \eta$.
\end{definition}

There are two components in \Cref{def:core_approx1}: The satisfaction-approximation component says that a voter~$i$ has an incentive to deviate towards $T$ only if her gain in satisfaction is sufficiently large, that is, if $i$'s satisfaction in $T$ is greater at least by a multiplicative factor of $\gamma$ \emph{and} an additive factor of $\eta$ than her satisfaction in~$W$. The group-size-approximation component prohibits deviations towards sets $T$ which are (by a multiplicative factor of $\beta$) smaller than $k \cdot \frac{|V|}{n}$, as imposed by the core.
If $\gamma = 1$, then we omit the term ``$\gamma$-multiplicative'' from the name of the property. Similarly, if $\eta = 0$ we omit the term ``$\eta$-additive'', and if $\beta = 1$, then we omit the term ``$\beta$-group-size''. The satisfaction-approximation and the group-size approximation are incomparable.

When considering the problem of approximating the core, we distinguish two classes of algorithms. The first class contains dedicated approximation algorithms, which are mostly based on dependent rounding of fractional committees. The second class consists of established rules, such as PAV or the Method of Equal Shares, which can be shown to approximate the core (to some degree).  

\citet{jiang2019approx} present an algorithm that provides $32$-group-size approximation to the core. Their approach is based on dependent rounding of lotteries that are in expectation in the core (the existence of such lotteries is guaranteed by \Cref{thm:core_in_expectation}). Notably, the approach of \citet{jiang2019approx} extends much beyond the approval-based preferences; for cardinal utilities they round a  lottery that in expectation $2$-approximates the core and obtain a discrete committee with the $32$-group-size approximation guarantee. 

\citet{fain2018fair} present a family of algorithms based on dependent rounding of fractional committees (returned by a linear program that closely resembles the formulation of PAV as an integer linear program). For each $\lambda \in (1,2]$ they provide an algorithm that guarantees a $\lambda$-multiplicative-$\eta$-additive-satisfaction \mbox{$\frac{1}{2 - \lambda}$-group}-size approximation to the core, where $\eta = O\left(\frac{1}{\lambda^4} \log\left(\frac{k}{\lambda}\right)\right)$. Their algorithm naturally extends to a more general model related to participatory budgeting. 

The result of \citet{fain2018fair} has recently been improved. \citet{mun-she-wan-wan:approximate-core} presented a polynomial time algorithm that guarantees $67.37$-multiplicative-$1$-additive-satisfaction approximation to the core. They also presented an algorithm that offers a $9.27$-multiplicative-$1$-additive-satisfaction approximation to the core, yet running in exponential time. These algorithms, which are based on dependent rounding, can be also applied to more general types of voters' preferences.

For commonly known rules the following results are known: \citet{cheng2019group} prove that PAV does not guarantee $\beta$-group-size approximation to the core even for $\beta = \Theta(\sqrt k)$. On the other hand, \citet{pet-sko:laminar} prove that PAV gives $2$-multiplicative-satisfaction approximation to the core. Further, for each $\epsilon > 0$ no rule that satisfies the Pigou--Dalton principle can provide a $(2-\epsilon)$-multiplicative-satisfaction approximation to the core. Thus, PAV can be viewed as giving the strongest multiplicative-satisfaction approximation to the core subject to satisfying the Pigou--Dalton principle of transfers. Finally, they show that the Method of Equal Shares provides $O(\log(k))$-multiplicative-$1$-additive-satisfaction approximation to the core.




\subsubsection{Relaxation by Constraining the Space of Deviations}\label{sec:core_restrictions}

Yet another approach to relaxing the core property is to prohibit only certain types of deviations.
As we have already explained at the beginning of this section, EJR can be viewed as a restricted variant of the core property: It prohibits the deviations of groups of voters towards outcomes $T$ on which the deviating voters unanimously agree. Intuitively, if a group $V$ agrees on all candidates from $T$, then it is easier for such a group to synchronise and to deviate, thus EJR can be viewed as the minimal restricted variant of the core. Motivated by the same arguments, \citet{pet-sko:laminar} considered other restricted variants of the core property.

A \emph{committee property} is a set of triples $(A', k', W')$, where $(A', k')$ is an election instance and $W'$ is a size-$k'$ committee. We write $A|_V$ for profile $A$ restricted to voters in $V\subseteq N$.

\begin{definition}[\citet{pet-sko:laminar}]
Let $\mathcal{P}$ be a committee property. Given an instance $(A, k)$, we say that a pair $(V, T)$, with $V\neq \emptyset$, $V \subseteq N$, $T \subseteq C$, is an \emph{allowed deviation} from a committee $W$ if
 \begin{inparaenum}[(1)]
\item $\frac{|T|}{k} \leq \frac{|V|}{n}$,
\item $|A(i) \cap T| > |A(i) \cap W|$ for each $i \in V$, and
\item $T$ has property $\mathcal{P}$, i.e., $(A|_V, |T|, T) \in \mathcal{P}$.
\end{inparaenum}
An ABC rule $\calR$ satisfies the \emph{core subject to $\mathcal{P}$} if for each instance $(A, k)$ and each winning committee $W \in \calR(A, k)$ there exists no allowed deviation.
\end{definition}

For example, let $\mathcal{P}_{\mathrm{coh}}$ be a committee property such that $(A', k', W') \in \mathcal{P}_{\mathrm{coh}}$ if and only if $W' \subseteq A'(i)$ for all voters $i$ in the domain of $A'$; we call $\mathcal{P}_{\mathrm{coh}}$ cohesiveness (cf.\ \Cref{def:cohesiveness}). Then, EJR can be equivalently defined as the core subject to cohesiveness. 

The Method of Equal Shares satisfies core subject to priceability with equal payments, which is a variant of  priceability that additionally requires that voters must pay the same amount of virtual budget for the same candidate (cf. \Cref{def:priceability}); priceability with equal payments is thus stronger than priceability, yet weaker than cohesiveness~\cite{pet-sko:laminar}. It is an open question whether the core subject to weaker (yet still natural) types of constraints is always non-empty.

\section{Degressive and Regressive Proportionality}\label{sec:degressive_and_regressive_proportionality}

The notions of proportionality that we discussed in \Cref{sec:apportionment,sec:cohesive_groups,sec:laminar,sec:the_core} aimed at capturing the following intuitive idea: An $\alpha$ fraction of voters should be able to decide about an $\alpha$ fraction of the committee---in this approach the relation between the size of the group and its eligibility is linear. In this section we discuss two alternative concepts: degressive and regressive proportionality. These two concepts should be viewed more as high-level ideas than formal properties. We first explain them intuitively, providing an illustrative example, and next we will discuss a few formal approaches to reasoning about degressive and regressive proportionality. 

According to \emph{degressive} proportionality, smaller groups of voters should be favoured, i.e., be eligible to more representatives in the elected committee than in the case of linear proportionality.\footnote{Degressive proportional apportionment is often used for distributing parliamentary seats among geographical regions, e.g., in the division of the European Parliament seats among EU countries (see the book of \citet{rose2013RepresentingEuropeans} for a discussion of arguments and negotiations that resulted in a degressive apportionment rule being used for assembling the European Parliament).} An extreme form of degressive proportionality is  \emph{diversity}~\cite{FSST-trends}---there, if possible, each voter should be represented by at least one candidate in the elected committee. At the other end is the idea of \emph{regressive} proportionality, where the emphasis is put on well-representing large groups. An extreme form of regressive proportionality is \emph{individual excellence}~\cite{FSST-trends}, where it is assumed that only the candidates with the highest total support from the voters should be elected. In fact, these two notions---diversity and individual excellence---are extreme to the extent that they can no longer be considered notions of proportionality.
\Cref{ex:degressive_proportionality}, below, illustrates the ideas of degressive and regressive proportionality, and the two extreme variants of them---diversity and individual excellence.  

\begin{example}\label{ex:degressive_proportionality}
Consider the approval-based preference profile from \Cref{ex:proportionality}:
\begin{align*}
&\text{60 voters}\colon && \{a_1, \ldots, a_{10}\}  &&  \text{20 voters}\colon  && \{b_1, \ldots, b_{6}\}  &&  \text{10 voters}\colon  && \{c_1, c_{2}\} \\
&\text{8 voters}\colon   && \{d_1, \ldots, a_{4}\}    &&  \text{2 voters}\colon    && \{e_1, e_2, e_3\}  \text{.}  &&&&
\end{align*}
A linearly-proportional committee $W_1$ could consist of six candidates from $A$, two candidates from $B$, one candidate from $C$, and one candidate from $D$ (this is the committee selected by the Sainte-Lagu\"e apportionment method).
Another linearly-proportional committee could consist of seven candidates candidates from $A$, two from $B$, one from $C$, but none from $D$ (this is the committee selected by the D'Hondt apportionment method).

In contrast, a degressive-proportional committee $W_2$ could, for example, consist of four candidates from $A$, three candidates from $B$, two candidates from $C$, and one candidate from $D$. Another example of a degressive-proportional committee would be $W_3$ with three candidates from each of the sets $A$, $B$, and $C$, and one from $D$. Committees $W_2$ and $W_3$, however, are not diverse, since two voters who support $E = \{e_1, e_2, e_3\}$ are not represented at all.
A diverse committee could consist of, e.g., four candidates from $A$, three candidates from $B$, one candidates from $C$, one candidate from $D$, and one candidate from $E$. 
A regressive-proportional committee would include more candidates from the set $A = \{a_1, \ldots, a_{10}\}$ at the cost of groups supported by less voters. For example, a committee that consists of eight candidates from $A$ and two candidates from $B$ would be regressive-proportional.
\Cref{tab:disprop} shows the example relations between a size of a group and its number of representatives for different forms of proportionality:
\end{example}

\begin{table}
{\renewcommand{\arraystretch}{1.2}%
\begin{center}
	\begin{tabular}{lccccc}
		\toprule
		\# votes & 60 & 20  & 10 & 8 & 2 \\
		\midrule
		example of linear proportionality (Sainte-Lagu\"e)        & 6 & 2   & 1  & 1 & 0 \\
		a different example of linear proportionality (D'Hondt)                     & 7 & 2   & 1  & 0 & 0 \\
		an example of degressive proportionality     & 4 & 3  &  2 & 1 & 0 \\
 		another example of degressive proportionality   & 3 & 3  &  3 & 1 & 0 \\
		an example of diversity                             & 4 & 3  & 1 & 1 & 1 \\
		another example of diversity                             & 2 & 2  & 2 & 2 & 2 \\
		an example of regressive proportionality     & 8 & 2  &  0 & 0 & 0 \\
		individual excellence                            & 10 & 0  & 0 & 0 & 0 \\
		\bottomrule
	\end{tabular}
\end{center}}
\caption{Flavors of (dis)proportionality}\label{tab:disprop}
\end{table}

The arguments in favour of degressive proportionality usually come from the analysis of probabilistic models describing how the decisions made by the elected committee map to the satisfaction of individual voters participating in the process of electing the committee (for party-list preferences, an excellent exposition is given by~\citet{RePEc:ucp:jpolec:doi:10.1086/670380}; see also \citep{laslier2012WhyNotProportional,MT12}). An interesting concrete example of degressive proportionality is square-root proportionality devised by \citet{Penr46a} (see also \citep{SlZy06a}),
where the idea is that the groups of voters should be represented proportionally to the square-roots of their sizes.\footnote{This method has been proposed
for the United Nations Parliamentary Assembly~\citep{Bumm10a} and for allocating voting weights in the Council of the European Union~\citep{slomczynski2017degressive}.} Further, degressive proportionally in general, and diversity in particular, are particularly appealing ideas in the context of deliberative democracy---there, the goal is to select a committee that should discuss and deliberate on various issues rather than make majoritarian decisions. It is argued that for deliberative democracy it is particularly important to represent as many various opinions in the committees as possible~\cite{ccElection,monroeElection}, which can be achieved by maximising the number of voters who are represented in the elected committee. 

On the other hand, the idea of regressive proportionality is particularly appealing when the goal is to select a committee of candidates based on their individual merits, for example when the goal of an election is to select finalists in a contest or to choose a set of grants that should be funded (then, the voters act as judges/experts).   

In the remaining part of this section we discuss two approaches to formalising the ideas of degressive and regressive proportionality: axiomatic approaches and a quantitative approach.

\subsubsection{Axiomatic Approaches to Diversity and Individual Excellence}

The axiomatic approach generally applies only to the extreme forms of the degressive and regressive proportionality, i.e., to diversity and individual excellence, respectively. This approach is similar to the one we discussed in \Cref{sec:apportionment}: by formalising the concepts of diversity and individual excellence on party-list profiles (\Cref{def:party_profiles}), we obtain axiomatic characterisations for the more general domain of ABC rules.

Intuitively, disjoint diversity requires that in party-list profiles as many voters as possible have at least one representative in the elected committee. Disjoint equality says that each approval carries the same strength, and so all candidates that are approved once have the same right for being elected.   

\begin{definition}[Disjoint diversity]\label{def:disjoint_diversity}
An ABC rule $\calR$ satisfies \emph{disjoint diversity} if for each party-list instance $(A, k)$ with voter sets $(N_1, \ldots, N_p)$ and $|N_1| \geq |N_2| \geq \ldots \geq |N_p|$,  there exists a winning committee $W \in \calR(A, k)$ that contains one candidate for each of the $k$ largest parties, i.e., for each $r \leq \min(p, k)$ and each $i \in N_r$ we have that $A(i) \cap W \neq \emptyset$.
\end{definition} 

\begin{definition}[Disjoint equality]\label{def:disjoint_equality}
An ABC rule $\calR$ satisfies \emph{disjoint equality} if for each election instance $(A, k)$ where each candidate is approved at most once and the number of approved candidates is at least $k$ (i.e., $|\bigcup_{i \in N}A(i)| \geq k$), a committee $W$ is winning if and only if it contains only approved candidates, $W \subseteq \bigcup_{i \in N}A(i)$.
\end{definition} 
Intuitively, disjoint equality is aimed at capturing the idea of individual excellence---the candidates that are approved exactly once are virtually indistinguishable from the perspective of the support coming from the voters; thus all such candidates should have equal rights to be selected. 

The following theorems show that, similarly to the case of D'Hondt proportionality (\Cref{thm:pav-characterisation}), the concepts of disjoint diversity and disjoint equality uniquely extend to the full domain of approval-based preferences if one assumes the natural axioms of anonymity, neutrality, and consistency (and a few more technical axioms).

\begin{theorem}[\citet{jet-consistentabc}]\label{thm:cc_and_av_characterisation}
The Approval Chamberlin--Courant rule is the only non-trivial ABC ranking rule that satisfies anonymity, neutrality, consistency, weak efficiency, continuity, and disjoint diversity. Multi-Winner Approval Voting is the only ABC ranking rule that satisfies anonymity, neutrality, consistency, weak efficiency, continuity, and disjoint equality.
\end{theorem}

\citet{jet-consistentabc} provided a similar analysis for intermediate notions of degressive and regressive proportionality. They conclude that $w$-Thiele methods based on $w$-scoring functions that have a larger slope than the $w$-function of PAV are more oriented towards regressive proportionality, whereas $w$-functions that have a smaller slope are closer in spirit to the idea of degressive proportionality. This relation is symbolically visualised in \Cref{fig:countingfcts_and_proportionality}.

\citet{jaw-sko:phragmen-degressive-regressive} constructed a class of rules that generalise \phragmen{}'s  rule. Intuitively, a degressive variant of  seq-\phragmen{} is obtained by assuming that the voters who already have more representatives earn money at a slower rate than those that have fewer. Regressive proportionality is implemented by assuming that the candidates who are approved by more voters cost less than those that garnered fewer approvals.

\begin{figure}
\begin{center}
\begin{tikzpicture}[y=.6cm, x=.8cm,font=\sffamily]

\pgfdeclarelayer{bg}
\pgfsetlayers{bg,main}
    \draw (0,0) -- coordinate (x axis mid) (8.5,0);
        \draw (0,0) -- coordinate (y axis mid) (0,8.5);
        \foreach \x in {0,...,8}
             \draw (\x,1pt) -- (\x,-3pt)
            node[anchor=north] {\x};
        \foreach \y in {0,2,...,8}
             \draw (1pt,\y) -- (-3pt,\y) 
                 node[anchor=east] {\y}; 
    \node[below=0.8cm] at (x axis mid) {number of approved candidates in committee ($x$)};
    \node[rotate=90, above=0.8cm] at (y axis mid) {satisfaction $w(x)$};

    \draw plot[mark=*, mark options={fill=white}] 
        file {av.data};
    \draw plot[mark=triangle*, mark options={fill=white} ] 
        file {pav.data};
    \draw plot[mark=square*]
        file {cc.data}; 
    
    \node[right=0.2cm] at (8,0.8) {CC (diversity)};
    \node[right=0.2cm] at (8,1.9) {\footnotesize degressive proportionality};
    \node[right=0.2cm] at (8,5.5) {\footnotesize regressive proportionality};
    \node[right=0.2cm] at (8,3) {PAV (linear proportionality)};
    \node[right=0.2cm] at (8,8) {AV (individual excellence)};   

    \draw[->,>=stealth',thick] (8, 2.6) arc[radius=5.4, start angle=10, end angle=-5];     
    \draw[->,>=stealth',thick] (8, 2.9) arc[radius=14.0, start angle=-10, end angle=10];  
    
\end{tikzpicture}
\caption{A diagram illustrating the relation between defining $w$-functions of Thiele methods and the type of proportionality these Thiele rules implement.}
\label{fig:countingfcts_and_proportionality}
\end{center}
\end{figure}

\citet{FaliszewskiSST17} discuss three specific classes of rules that span the spectrum between individual excellence and diversity. They analyse these rules in the ranking-based model, that is when voters rank the candidates instead of approving some of them (see \Cref{sec:multiwinner-rank}). These classes of rules can be analogously defined for approval ballots.
\citet{bri-fal-som-tal:balanced-cc,fal-tal:balancing-cc} extend Monroe's rule so that it can implement the idea of regressive proportionality; this is also done in the ranking-based framework. It would be interesting to see whether their techniques can be successfully applied to the ABC model.

Finally, \citet{subiza2017representative} propose an axiom called $\alpha$-unanimity (parameterized with $\alpha\in[0,1]$), which can be seen as a strong diversity axiom. The authors propose a voting rule (Lexiunanimous Approval Voting) that satisfies this axiom; this rule is a refined version of CC. Thiele methods (including CC itself) do not satisfy this axiom for any $\alpha$.

\subsubsection{Quantifying Degressive and Regressive Proportionality}\label{sec:degressive_and_regressive_proportionality_quantifying}

The second approach to formally reason about degressive and regressive proportionality is quantitative in nature. \citet{lac-sko2019} define two measures---the utilitarian guarantee and the representation guarantee---that can be used to quantify how well a given rule performs in terms of individual excellence and diversity.

Recall that $\score{\av}(A, W)$ denotes the total number of approvals a given committee receives in profile $A$ and $\score{\cc}(A, W)$ denotes the number of voters who approve at least one member of $W$.

\begin{definition}[Utilitarian and Representation Guarantee~\citep{lac-sko2019}]
The utilitarian guarantee of an ABC rule $\calR$ is a function $\kappa_{\av} \colon \naturals \to [0, 1]$ that takes as input an integer~$k$, representing the committee size, and is defined as:
\begin{align*}
\kappa_{\av}(k) = \inf_{A} \frac{\min_{W \in \calR(A, k)}(\score{\av}(A, W))}{\max_{W \colon |W| = k} (\score{\av}(A, W))} \text{.}
\end{align*}
The representation guarantee of an ABC rule $\calR$ is a function $\kappa_{\cc} \colon \naturals \to [0, 1]$ defined as:
\begin{align*}
\kappa_{\cc}(k) = \inf_{A} \frac{\min_{W \in \calR(A, k)}(\score{\cc}(A, W))}{\max_{W \colon |W| = k} (\score{\cc}(A, W))} \text{.}
\end{align*}
\end{definition}

Note that the utilitarian and the representation guarantee of an ABC rule $\calR$ measure how well rule $\calR$ approximates Multi-Winner Approval Voting and the Approval Chamberlin--Courant rule, respectively. These two rules embody the principles of diversity and individual excellence (cf.\ \Cref{thm:cc_and_av_characterisation}). 

\citet{lac-sko2019} show that the utilitarian guarantee of PAV, sequential PAV, and seq-\phragmen{} is $\Theta(\nicefrac{1}{\sqrt{k}})$; their representation guarantee is $\nicefrac{1}{2} + \Theta(\nicefrac{1}{k})$. CC and seq-CC achieve a better representation guarantee (of 1 and $1 - \nicefrac{1}{e}$, respectively), but their utilitarian guarantee is only $\Theta(\nicefrac{1}{k})$.
In that sense, these three proportional rules (PAV, sequential PAV, and seq-\phragmen{}) can be viewed as a desirable compromise between the two guarantees. On the other, the authors also show that proportional rules are never an \emph{optimal} compromise.
Finally, $p$-geometric rules---the Thiele rules defined by $w_{p\text{-}\mathrm{geom}}(x) = \sum_{i = 1}^{x} \left(\nicefrac{1}{p}\right)^i$---for different values of the parameter $p$ span the whole spectrum from AV to CC. By adjusting the parameter $p$, one can obtain any desired compromise between the utilitarian and representation goals.

\citet{corr/abs-2112-05994} extend this work by considering the ``price'' of justified representation axioms:
what are the optimal utilitarian and representation guarantees when requiring justified representation (\Cref{def:jr}) or extended justified representation (\Cref{def:ejr})?
Their results show that already justified representation implies a utilitarian guarantee of no better than $\nicefrac{2}{\sqrt{k}}$; the same holds for EJR.
The consequences for the representation guarantee are less pronounced: JR does not restrict the representation guarantee (e.g., CC satisfies JR and has a representation guarantee of 1) and
EJR is compatible with a representation guarantee of $\frac{3}{4}$.

\subsubsection{An Experimental View on Degressive and Regressive Proportionality}

\citet{god-bat-sko-fal:c:2d-abc} visualised the structure of the committees produced by various ABC rules on histograms. They performed computer simulations in which the candidates and the voters were represented as points in the two-dimensional Euclidean space. Intuitively, a point corresponding to a voter or a candidate might represent their position in the spectrum of possible opinions regarding various issues. In each simulation the candidates and the voters were drawn from a given distribution, and a preference profile was constructed from the positions of the voters and the candidates. The main idea was that the voters are more likely to approve candidates whose corresponding points are closer to them, since their opinions resemble views of such candidates. Given a preference profile, a specific ABC rule was used to find a winning committee, and the points corresponding to the selected candidates were marked with red dots on the histogram of the respective rule. The experiment was repeated multiple times, and each time the dots were put on the same histogram. Thus, the density of red dots in a given area represent the probabilities that the candidates from this area are chosen for the winning committee. This idea was first proposed by \citet{2dpictures} in the context of ranking-based elections.

Such histograms give valuable insights into the nature of voting rules. We depict several of them in \Cref{fig:histograms}. In the left column of the figure, we depict distributions of the points representing the voters and the candidates: red areas correspond  to the candidates, green areas to the voters, and olive areas correspond to both. The subsequent columns depict distributions of the elected candidates for six ABC rules. These histograms already illustrate the very different natures of the considered rules.  For example, the distributions obtained for PAV and the sequential Phragm\'en's rule closely resemble distributions of the voters (which is exactly what one would expect from proportional rules), CC puts more emphasis on representing as diverse a spectrum of voters as possible, and AV selects candidates that are in the centres of the distributions---the choice that corresponds to individual excellence. The Method of Equal Shares induces histograms that are in some sense between PAV and AV. Finally, the behaviour of Minimax Approval Voting (MAV) is inconsistent with our intuitive interpretation of proportionality in the Euclidean model. 

The conclusions from this experimental exercise are to a large extent consistent with the conclusions coming from the axiomatic analysis. For a more detailed discussion we refer to the original work~\cite{god-bat-sko-fal:c:2d-abc}.  

\begin{figure}[t]
\centering
\begin{minipage}[t]{2.0cm}
\centering
\includegraphics[width=2.0cm]{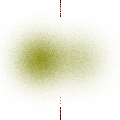}
\end{minipage}
\begin{minipage}[t]{2.0cm}
\centering
\includegraphics[width=2.0cm]{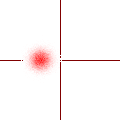}
\end{minipage}
\begin{minipage}[t]{2.0cm}
\centering
\includegraphics[width=2.0cm]{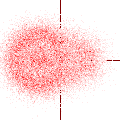}
\end{minipage}
\begin{minipage}[t]{2.0cm}
\centering
\includegraphics[width=2.0cm]{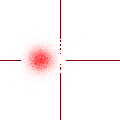}
\end{minipage}
\begin{minipage}[t]{2.0cm}
\centering
\includegraphics[width=2.0cm]{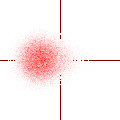}
\end{minipage}
\begin{minipage}[t]{2.0cm}
\centering
\includegraphics[width=2.0cm]{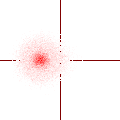}
\end{minipage}
\begin{minipage}[t]{2.0cm}
\centering
\includegraphics[width=2.1cm]{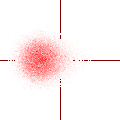}
\end{minipage}\\ \vspace{0.3cm}
\begin{minipage}[t]{2.0cm}
\centering
\includegraphics[width=2.0cm]{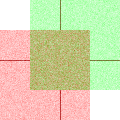}
\footnotesize distribution
\end{minipage}
\begin{minipage}[t]{2.0cm}
\centering
\includegraphics[width=2.0cm]{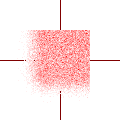}
\footnotesize AV
\end{minipage}
\begin{minipage}[t]{2.0cm}
\centering
\includegraphics[width=1.8cm]{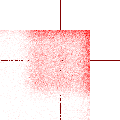}
\footnotesize CC
\end{minipage}
\begin{minipage}[t]{2.0cm}
\centering
\includegraphics[width=1.8cm]{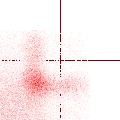}
\footnotesize MAV
\end{minipage}
\begin{minipage}[t]{2.0cm}
\centering
\includegraphics[width=2.0cm]{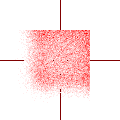}
\footnotesize PAV
\end{minipage}
\begin{minipage}[t]{2.0cm}
\centering
\includegraphics[width=2.0cm]{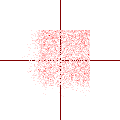}
\footnotesize Eq. Shares
\end{minipage}
\begin{minipage}[t]{2.0cm}
\centering
\includegraphics[width=2.0cm]{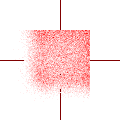}
\footnotesize Phragm\'en
\end{minipage}
\caption{Visualising the outcomes of some selected ABC rules (from~\cite{god-bat-sko-fal:c:2d-abc}).}\label{fig:histograms}
\end{figure}

\section{Proportionality and Strategic Voting}\label{sec:proportionality_and_strategyproofness}

The ABC rules that we have considered in the context of proportionality are all prone to manipulations (cf.\ \Cref{sec:strategic_voting}). In this section we explain that this is not a coincidence---achieving proportionality and strategyproofness at the same time is inherently impossible. This impossibility was first proven by \citet{pet:prop-sp,dominikthesis} for resolute rules (rules that always return a single winning committee), even for very weak formulations of the desired axioms.
(Earlier work by \citet{azi-gas-gud-mac-mat-wal:c:multiwinner-approval} and \citet{Janson16arxiv} already showed that certain proportional rules---such as PAV, seq-PAV, and seq-\phragmen---are not strategyproof.)

\begin{theorem}[\citet{pet:prop-sp,dominikthesis}]
Suppose $k \geq 3$, the number $n$ of voters is divisible by $k$, and $m \geq k + 1$. Then there exists no resolute ABC rule $\calR$ which satisfies the following three axioms:
\begin{enumerate}
\item weak proportionality: for each party-list election $(A, k)$ where some singleton ballot~$\{c\}$ appears at least $\nicefrac{n}{k}$ times ($|\{i \colon A(i) = \{c\}\}| \geq \nicefrac{n}{k}$), candidate $c$ must belong to the winning committee, i.e., $c \in \calR(A, k)$,
\item weak efficiency: a candidate who is approved by no voter may not be part of the winning committee, unless fewer than $k$ candidates receive at least one approval,
\item inclusion-strategyproofness\footnote{This axiom can be further weakened to allow voters only to manipulate by reporting subsets of their true approval sets.} (as defined in \Cref{sec:strategic_voting}).
\end{enumerate}
\end{theorem}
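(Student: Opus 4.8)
The plan is to first establish the impossibility for one small fixed choice of parameters --- say $k=3$, $n=k=3$, and $m=k+1=4$ --- and then lift this base case to all admissible $(n,m,k)$ by a padding argument. Since the three axioms become jointly restrictive only once $m>k$, the instance with four candidates $\{a,b,c,d\}$, three voters, and committees of size three (so exactly one candidate is excluded) is the natural base case; there the threshold $n/k$ equals $1$, so weak proportionality reads simply: any candidate receiving even a single singleton vote must be elected.

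For the base case I would argue as follows. On party-list profiles --- e.g.\ the one where the three voters cast the singletons $\{a\}$, $\{b\}$, $\{c\}$ --- weak proportionality already forces the committee to be $\{a,b,c\}$, and weak efficiency pins down the excluded candidate in profiles where proportionality alone does not determine it. Starting from such fully determined profiles, one walks through a sequence of profiles, each obtained from the previous by letting a single voter change her ballot, and at every step applies inclusion-strategyproofness \emph{in both directions} of the deviation: if $\calR(A,k)\cap A(i)$ contains a set of approved candidates and $A'$ is an $i$-variant, then $\calR(A',k)$ cannot contain strictly more of $A(i)$, and the reverse deviation gives a second constraint. Combining these constraints with the fixed points provided by weak proportionality and weak efficiency lets one propagate ``candidate $x$ is in / out of the committee'' information along the chain, and the target is a profile at which the accumulated constraints are mutually inconsistent (e.g.\ some candidate is forced both in and out). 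Because $\calR$ is an \emph{arbitrary} rule satisfying the axioms --- its value is not determined everywhere along the chain --- this argument must branch on the undetermined outputs; keeping this case distinction finite and under control is the technical crux.

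To lift the base case, given any admissible $(n,m,k)$ I would embed the four-candidate, three-voter instances into $n$-voter, $m$-candidate instances: keep the candidates $\{a,b,c,d\}$ and replace each of the three ``active'' voters by a block of $n/k$ identical voters (this preserves the party-list structure and the forced outcomes), and add $k-3$ further candidates $d_1,\dots,d_{k-3}$ together with $k-3$ disjoint blocks of $n/k$ voters, each block casting the singleton ballot $\{d_j\}$. Weak proportionality forces every $d_j$ into every winning committee, so exactly three of the $m$ candidates remain to be selected, and the restriction of $\calR$ to these padded profiles induces a resolute rule on the base parameters. One checks that this induced rule inherits weak proportionality and weak efficiency (immediate) and inclusion-strategyproofness (a manipulation in the small instance is realised by all members of the corresponding block manipulating identically in the large one, while the padding voters never change their ballots and their forced candidates never leave the committee). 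Applying the base case to the induced rule yields the contradiction, hence the theorem.

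I expect the base-case analysis to be the main obstacle. Inclusion-strategyproofness is a comparatively weak requirement --- only proper-superset deviations count as successful manipulations --- so the propagation chains are long, and one must invoke weak efficiency repeatedly to rule out ``escape'' outcomes; moreover the branching over $\calR$'s undetermined values makes a fully hand-written argument bulky, which is why the base case is most cleanly verified by an exhaustive (e.g.\ SAT-based) search over all rules on the small domain, after which only the embedding argument above is needed for the general statement.
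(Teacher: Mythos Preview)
The survey itself does not prove this theorem --- it is quoted from Peters' work --- so there is no in-paper proof to compare against. Your architecture (SAT-verified base case at $k=n=3$, $m=4$, then padding) does match the strategy of the original source.

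The lifting, however, is not the routine check you present it as. First, weak proportionality applies only to \emph{party-list} profiles, and the embedding $\hat A$ of a small profile $A$ is party-list only when $A$ itself is; for the non-party-list profiles along your base-case chain, nothing in the stated axioms forces the dummy candidates $d_1,\dots,d_{k-3}$ into $\calR(\hat A)$, so your ``induced rule'' need not output size-$3$ subsets of $\{a,b,c,d\}$ and is not obviously a well-defined ABC rule on the small domain. Second, a single small-voter deviation corresponds to all $n/k$ clones in a block deviating simultaneously in the large instance, which inclusion-strategyproofness of $\calR$ does not directly forbid; chaining single-voter SP along the $n/k$ intermediate profiles $Q_j$ fails because the relation ``$\not\subsetneq$'' is not transitive (e.g.\ $\{a\}\not\subsetneq\{b,c\}\not\subsetneq\{a,b\}$ yet $\{a\}\subsetneq\{a,b\}$). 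The second issue in fact dissolves once the first is settled --- if exactly one of $\{a,b,c,d\}$ is excluded at \emph{every} profile along the chain, then each $S_j=\calR(Q_j)\cap A(i)$ is either $A(i)$ or $A(i)\setminus\{x_j\}$, and the first step reaching the full set $A(i)$ is itself a proper-subset improvement, contradicting single-voter SP. So the crux is pinning down the dummies at all embedded (and intermediate) profiles, which requires an additional strategyproofness argument anchored at party-list profiles and is genuine work, not something one merely ``checks''.
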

\citet{KVVBE20:strategyproofness} prove a similar result for irresolute rules (i.e., when rules are allowed to output multiple tied winning committees), using cardinality-strategyproofness and Pareto efficiency.
Further, \citet{duddy2014electing} proves a related impossibility result for irresolute rules using slightly different axioms; this result also requires a form of Pareto efficiency.

\citet{lac-sko:t:multiwinner-strategyproofness} showed that AV is the only ABC scoring rule (\Cref{sec:consistency}) that satisfies SD-strategyproofness; this result can also be seen as an impossibility result concerning proportionality and strategyproofness within the class of  ABC scoring rules.
Further, they quantified the trade-off between strategyproofness and proportionality. For various ABC rules they empirically measured their level of strategyproofness by assessing the fraction of profiles, for which there exists a voter who has an incentive to misreport her approval set. They concluded that rules which
are more similar to AV (i.e., rules that follow the principle
of regressive proportionality) are less manipulable than proportional rules. The rules that follow the principle of degressive proportionality are the most manipulable. A similar conclusion was obtained by \citet{bar-lan-mak:multiwinner_manipulation}, but there the authors analysed a different class of rules---namely those based on the Hamming distance, and spanning the spectrum from AV to Minimax Approval Voting.

Since in the general case, there exist no proportional strategyproof ABC rule, \citet{botan2021manipulability} restricted the analysis to three specific types of manipulations: 
\begin{inparaenum}[(1)]
\item subset manipulations, where a voter can manipulate only by submitting a subset of her true approval set,
\item superset manipulations, where each voter can only send a superset of her true preferences, and
\item disjoint manipulations, where a manipulation can be performed only by submitting a subset  of candidates disjoint from the true approval set of the voter.
\end{inparaenum}
They  showed that for party-list preference profiles (see \Cref{def:party_profiles}) all Thiele methods are cardinality-strategyproof\footnote{Formally, \citet{botan2021manipulability} defines strategyproofness for irresolute rules and states their results for the general class of G\"ardenfors preference extensions~\cite{gardenfors1979definitions}. These extensions define preference relations over sets of winning committees and thus can be applied to irresolute rules.} against subset, superset, and disjoint manipulations.

\section{Proportionality with Respect to External Attributes}\label{sec:prop_external_attributes}

In \Cref{sec:apportionment,sec:cohesive_groups,sec:laminar,sec:the_core,sec:degressive_and_regressive_proportionality,sec:proportionality_and_strategyproofness}, we have considered formal concepts that capture, in various ways, what it means that the structure of the elected committee proportionally reflects the (approval-based) preferences of the voters. In other words, we have considered proportionality with respect to the preferences given by the voters. In this section, we briefly consider a framework that approaches the concept of proportionality quite differently: we analyse proportionality with respect to external attributes of the candidates.\footnote{A noteworthy real-world example is the Lebanese Parliament, where an equal representation of Christians and Muslims (64 seats each) is mandated \cite{diss2017distribution}.}

Let us start by recalling the apportionment setting that we discussed in \Cref{sec:apportionment}. In the apportionment model we are given a set of candidates, each candidate belonging to a single political party; for each political party we are given a desired fraction of seats the party should ideally get in the elected committee (typically, this is the fraction of votes cast on the party). The goal is to pick the committee that matches the desired fractions as closely as possible. Thus, one can say that in the apportionment setting there is one external attribute, which is the party affiliation, each candidate has a certain value of this attribute, and the goal is to pick the committee where the different values of the attribute are represented proportionally to the given desired fractions. 

Now, assume that there are two attributes---each candidate has a political affiliation and a geographic region that she represents. For each value of each attribute we are given a desired fraction of seats that the candidates with this attribute value should get. This setting is called bi-apportionment, and it is discussed in detail in a book chapter by \citet{Puke14a:biApportionment} (several articles study the bi-apportionment setting from a computational perspective~\cite{LariRS14,PukelsheimRSSS12,SerafiniS12}). 
The model of bi-apportionment has been further extended to an arbitrary number of attributes by \citet{LS16:multi-attribute}.\footnote{The multi-attribute model finds its application, e.g., in the process of sortition. In sortition one needs to select a committee of ordinary people who will discuss certain controversial matters, and come up with recommendations helping the governments make decisions. In this process it is important to select a committee consisting of people who are representative for the whole society. Currently, randomised algorithms are mostly used for such selections~\cite{ben-gol-pro:stratification}. The multi-attribute model provides alternative methods that take advantage of information regarding attributes of the potential committee members.}
There, the authors analysed axiomatically and algorithmically two rules that extend the D'Hondt method and the largest remainder method to the multi-attribute apportionment.
Further work on this topic was done by \citet{cembrano2022multidimensional}.

The desired fractions in the (multi-attribute) apportionment model can be based on the voters preferences, or they might be given exogenously, e.g., by imposing certain quotas, specifying how many candidates with given attribute values should be included in the winning committee. Taking one specific interpretation, namely assuming the voters are asked to approve attribute values, \citet{kagita2021committee} proposed several other rules for selecting committees. They formulated axioms, requiring that the selected committee should consist of candidates whose attribute values proportionally represent voters' preferences. Unfortunately, none of the rules they propose satisfies any of these axioms. In general our axiomatic understanding of the multi-attribute apportionment model is still not well-advanced.

In the final part of this section we will consider a model which takes into account both the voters' preferences over candidates, and external constraints based on attributes of the candidates. Instead of defining this model formally, we provide an illustrative example.

\begin{example}\label{ex:multi-attribute-pr}
Assume we want to select a representative committee. Such a committee should be gender-balanced, containing 50\% of male~(M) and 50\% of female~(F) committee members. Additionally, the committee should represent people from different educational backgrounds: at least 25\% and at most 50\% of its members should have a bachelor's degree~(B), between 40\% and 60\% should have an upper-secondary education~(U), and between 10\% and 25\%---a primary or lower-secondary education~(P). Finally, the selected committee should contain at least 25\% young people~(Y) and at least 50\% senior people~(S).
The pool of candidates from which we can select members of such a committee is given in the table below. Additionally, seven voters express their preferences via the following approval ballots.

\begin{center}
\minipage{0.45\textwidth}
\begin{center}
\begin{tabular}{c|ccc}
Name & Gender & Education & Age \\ \hline
$c_1$ & $\mathrm{F}$ & $\mathrm{B}$ & $\mathrm{Y}$ \\
$c_2$ & $\mathrm{M}$ & $\mathrm{U}$ & $\mathrm{Y}$ \\
$c_3$ & $\mathrm{M}$ & $\mathrm{U}$ & $\mathrm{S}$ \\
$c_4$ & $\mathrm{F}$ & $\mathrm{P}$ & $\mathrm{S}$ \\
$c_5$ & $\mathrm{M}$ & $\mathrm{U}$ & $\mathrm{Y}$ \\
$c_6$ & $\mathrm{M}$ & $\mathrm{U}$ & $\mathrm{Y}$ \\
$c_7$ & $\mathrm{M}$ & $\mathrm{U}$ & $\mathrm{Y}$ \\
$c_8$ & $\mathrm{F}$ & $\mathrm{B}$ & $\mathrm{S}$ 
\end{tabular}

\end{center}

\endminipage\hfill
\minipage{0.45\textwidth}
\begin{align*}
&A(1) = \{c_1, c_2, c_3\} \\
&A(2) = \{c_3, c_5\} \\
&A(3) = \{c_7, c_8\} \\
&A(4) = \{c_3, c_4, c_5, c_7\} \\
&A(5) = \{c_1, c_8\} \\
&A(6) = \{c_6\} \\
&A(7) = \{c_1, c_2, c_6\}
\end{align*}
\endminipage
\end{center} 

Assume we want to select $k = 4$ committee members. 
The winning committee according to AV would be $W_1 = \{c_1, c_3, c_7, c_8\}$ (for simplicity, we assume the ties are broken lexicographically $c_8 \succ c_7 \succ \ldots \succ c_1$), and according to PAV, the winning committee would be $W_2 = \{c_1, c_3, c_6, c_8\}$. However, each of these two committees violates the attribute-level constraints. The committee maximising the AV-score and the PAV-score subject to these constraints would be, respectively, $W_3 = \{c_1, c_3, c_4, c_7\}$ and $W_4 = \{c_3, c_4, c_6, c_8\}$. 
\end{example}

As can be seen in \Cref{ex:multi-attribute-pr}, score-based ABC rules (in particular Thiele methods) are suitable for this approach: the winning committee is the one with the highest score that satisfies all external constraints. Following this approach, \citet{conf/aaai/BredereckFILS18} and \citet{conf/ijcai/CelisHV18}  considered the model of multi-winner elections with external constraints, but where the qualities of the committees are assessed via a general set function $f$. The function $f$ may in particular depend on the voters' ballots, for example we can set $f(W) = \score{\av}(A, W)$. \citet{azi:committees-soft-constraints} studied a  similar model, but assuming there is a global ranking over $C$ that represents the objective qualities of the candidates. There, the goal is to select the lexicographically best committee subject to the multi-attribute constraints, which are treated more softly than in case of \citet{conf/aaai/BredereckFILS18} and \citet{conf/ijcai/CelisHV18}. Let us also mention that \citet{bei-liu-keu-wan:diversity-constraints} studied a related model, but there the goal is to select the committee of maximal cardinality that satisfies the attribute-level constraints. We will consider algorithmic aspects of these and related approaches in \Cref{sec:computational-prop}.

Note that this approach is not compatible with rules that do not naturally provide a ranking of committees by scores (e.g., seq-\phragmen{} or the Method of Equal Shares). It is an interesting question how to adapt these rules to the model with external constraints.

\bibliographystyle{abbrvnat}
\bibliography{main}
\chapter{Algorithms and Computational Complexity}\label{sec:algorithms}

\intro{In this chapter, we discuss computational problems related to ABC rules and algorithms that solve these problems. We consider algorithmic techniques such as integer linear programming,
fixed-parameter algorithms, approximation algorithms, and algorithms for structured domains.
We further consider algorithmic aspects of proportionality and strategic voting.}

In this chapter, we discuss computational problems related to ABC rules and algorithms that solve these problems. We start by discussing the computational complexity of ABC rules. As many ABC rules are computationally difficult, a thorough algorithmic analysis is paramount to a practical application of these rules.
We consider algorithmic techniques such as integer linear programming,
fixed-parameter algorithms, approximation algorithms, and algorithms for structured domains.
Moreover, we discuss computational questions related to proportionality and to strategic voting.


\section{Computational Complexity}\label{sec:computational-complexity}

How computationally expensive is it to find a winning committee according to a given ABC rule? 
Clearly, this question is of major importance for the practical use of an ABC rule.
Here, we distinguish only two types of complexity: 
ABC rules that are computationally easy, i.e., computable in polynomial time, and ABC rules that are computationally expensive, i.e., those that are $\np$-hard. Note that this is only a coarse dichotomy; we discuss its implications further below.

Let us first consider the class of Thiele methods. Out of the three most prominent Thiele methods, two are $\np$-hard (CC and PAV) and one is computable in polynomial time (AV).
A polynomial-time algorithm for AV is straightforward: for each alternative~$c$ we compute its approval score $\score{\av}(A, c) = |\{i\in N: c\in A(i)\}|$ and select the $k$ alternatives with the largest scores.
To be able to claim $\np$-hardness of an ABC rule~$\calR$, we have to fix a decision problem; we choose the following for rules based on scores:
given an approval profile, is there a committee with $\calR$-score at least $s$?
The $\np$-hardness of CC has been shown by \citet{complexityProportionalRepr}; the $\np$-hardness of PAV by \citet{owaWinner} and \citet{azi-gas-gud-mac-mat-wal:c:multiwinner-approval} (for different decision problems).
A more general result shows that a large class of Thiele methods is $\np$-hard:

\begin{theorem}[{\citet[Theorem~5]{owaWinner}}]
Let $w:\naturals\to\reals$ be a non-decreasing function for which $w(i)-w(i-1)> w(i+1)-w(i)$ for some $i\in \naturals$.
Given an approval profile profile~$A$, a committee size~$k$, and a bound $s$, it is $\np$-hard to decide whether there exists a committee of size $k$ with a $w$-score of at least $s$, i.e., $\score{w}(A, W) \geq s$. 
\end{theorem}

Note that this theorem does not apply to AV, which is indeed polynomial-time computable. Interestingly, a similar result also holds for 2D-Euclidean preferences. We say that an approval profile is 2D-Euclidean if the voters and the candidates can be represented in the two-dimensional Euclidean space so that for each voter $i$ the following holds:  if $i$ approves a candidate $c$, then she also approves all candidates that are closer to $i$ than $c$.
The following theorem applies, e.g., to PAV and CC.

\begin{theorem}[\citet{god-bat-sko-fal:c:2d-abc}]
Let $w:\naturals\to\reals$ be a non-linear and concave function. Given a 2D-Euclidean approval profile profile~$A$, a committee size~$k$, and a bound $s$,  it is $\np$-hard to decide if there is a $k$-size committee with a $w$-score of at least $s$.
\end{theorem}

Winning committees of sequential and reverse sequential Thiele methods can be computed in polynomial time; this follows immediately from their definitions. The same holds for Greedy Monroe, seq-\phragmen{}, the Method of Equal Shares, and SAV. In contrast, appropriate decision problems for Monroe's rule \cite{complexityProportionalRepr}, leximax-\phragmen{} \cite{aaai/BrillFJL17-phragmen}, and MAV \cite{leg-mar-meh:approx_multiwinner} are $\np$-complete. The $\np$-hardness for MAV also holds for 2D-Euclidean preferences \cite{god-bat-sko-fal:c:2d-abc}.
These complexity results are summarised in \Cref{tab:axioms_summary}.

To conclude, the complexity classification discussed here should not be misunderstood in implying that $\np$-hard ABC rules are impractical and should be avoided. There is a wide range of algorithmic techniques available to solve $\np$-hard problems, and many disciplines in computer science encounter (and routinely solve) computationally hard problems.
Instead the message here is the following: When using a polynomial-time computable rule, even very large instances can be expected to be solved quickly. For $\np$-hard rules, a more thorough analysis is necessary to determine how large instances can be solved (cf.~\Cref{sec:howtocompute}).

\section{How to Compute Winning Committees?}\label{sec:howtocompute}

The arguably most central algorithmic question is: how to compute winning committees for an ABC voting rule? Clearly, the answer significantly differs from rule to rule.
Rules that can be computed in polynomial time generally do not require sophisticated algorithms.
In particular, algorithms for AV, SAV, as well as for sequential and reverse sequential Thiele methods follow immediately from their corresponding definitions. 
Algorithms for \phragmen's sequential rule and the Method of Equal Shares are slightly more involved but also do not require more than a careful adaption of the corresponding mathematical definitions. (Note that for seq-\phragmen{} it is more convenient to implement its discrete formulation.)
For rules that are $\np$-hard to compute, we discuss four algorithmic methods in the following: integer linear programs, fixed-parameter algorithms, approximation algorithms, and algorithms for structured domains.

\subsubsection{Integer Linear Programs (ILPs)}

The most common approach to compute $\np$-hard ABC rules is to employ integer linear program (ILP) solvers, such as Gurobi or CPLEX.
These are fast, general purpose solvers used for hard optimisation problems.
To use such a solver, one has to encode an ABC rule as a integer linear program, i.e., a system of linear inequalities constraining a linear expression that is maximised or minimised.
We will see two examples of ILPs in the following.
Several ILPs (including these two) are available in the \textsf{abcvoting} Python library~\cite{abcvoting}.

The ILP displayed in \Cref{fig:ilp-thiele} shows how PAV can be expressed in such a form. This particular ILP formulation for PAV is taken from \citet{jair/spoc}.
Two types of variables are used here: $x_{i,\ell}$ intuitively encodes that voter~$i$ approves at least $\ell$ candidates in the committee, and $y_c$ encodes that candidate~$c$ is contained in the winning committee.
\begin{figure}
\centering
\noindent\fbox{%
    \parbox{0.8 \textwidth}{%
\begin{align}
\text{\textbf{maximise} \ } &\sum_{i=1}^n \sum_{\ell=1}^k \frac 1 \ell \cdot x_{i,\ell} &\label{ilp:eq:max}\\
\text{\textbf{subject to}: \ }& x_{i,\ell} \in \{0, 1\} & \text{for }i\in[n]\text{, }\ell\in [k]\\
& y_{c} \in\{0,1\} & \text{for }c\in C\\
& \sum_{c\in C} y_{c} = k  & \label{ilp:eq:k}\\
& \sum_{\ell=1}^k x_{i,\ell} = \sum_{c\in A(i)} y_{c}& \text{for }i\in[n]\label{ilp:eq:ell}
\end{align}}\qquad}
\caption{An ILP for computing PAV}\label{fig:ilp-thiele}
\end{figure}
Given an election instance $(A,k)$, this ILP maximises the PAV-score expressed in~\eqref{ilp:eq:max}. Further it ensures that exactly $k$ candidates are selected with Equation~\eqref{ilp:eq:k} and that $x_{i,\ell}$ indeed encodes that voter~$i$ approves at least $\ell$ candidates in the committee with Equation~\eqref{ilp:eq:ell}.
Note that it can occur that $x_{i,\ell}=0$ and $x_{i,\ell+1}=1$, but this is never an optimal solution since $\frac 1 \ell > \frac{1}{\ell+1}$.
It is easy to see that this ILP can be adapted for computing other Thiele methods by adjusting the optimisation goal in~\eqref{ilp:eq:max}.
Another ILP formulation is due to \citet{owaWinner}. This ILP is applicable to a larger class of multi-winner rules (OWA rules).

As a second example of an ILP encoding, we present one for MAV in \Cref{fig:ilp-mav}.
\begin{figure}
\centering
\noindent\fbox{%
    \parbox{0.8 \textwidth}{%
\begin{align}
\text{\textbf{minimise} \ } & D &\notag\\
\text{\textbf{subject to}: \ }& d_{i,c} \in \{0, 1\} & \text{for }i\in[n]\text{, }c\in C\notag\\
& y_{c} \in\{0,1\} & \text{for }c\in C\notag\\
& \sum_{c\in C} y_{c} = k  & \notag\\
& d_{i,c} = 1-y_c & \text{for }c\in A(i)\label{ilp:eq:d1}\\
& d_{i,c} = y_c & \text{for }c\in C \setminus A(i)\label{ilp:eq:d2}\\
& \sum_{c\in C} d_{i,c} \leq D  & \label{ilp:eq:D}
\end{align}}\qquad}
\caption{An ILP for computing MAV}\label{fig:ilp-mav}
\end{figure}
Here, $y_{c}$ encodes whether candidate~$c$ is contained in the winning committee, $d_{i,c}$ encodes whether voter~$i$ disagrees with the decision of whether $c$ is in the committee or not, and $D$ is the maximum Hamming distance between a voter and the chosen committee.
Constraints~\eqref{ilp:eq:d1} and~\eqref{ilp:eq:d2} fix the value of $d_{i,c}$, i.e., 
\[d_{i,c}=\begin{cases}0 & \text{if }(c\in A(i)\text{ and }y_c=1) \text{ or } (c\notin A(i)\text{ and }y_c=0),\\1 & \text{otherwise.}\end{cases}\]
Then, $\sum_{c\in C} d_{i,c}$ is the Hamming distance between the committee defined by $y_c$ and $A(i)$. Due to Constraint~\eqref{ilp:eq:D}, these sums are $\leq D$ for all voters. Hence, by minimising $D$, we minimise the maximum distance.

Lastly, for Monroe's rule, \citet{potthoff1998proportional} discuss ILP formulations, and for \lexphrag{} an ILP is due to \citet{aaai/BrillFJL17-phragmen}.

\subsubsection{Fixed-Parameter Algorithms}
Fixed-parameter algorithms have received some attention for ABC rules.
The main idea is to identify a parameter of the problem (ideally one that is small in practice) and search for algorithms that require only polynomial time when this parameter is constant.
A fixed-parameter tractable (FPT) algorithm for a parameter $p$ is one with a runtime of $O(f(p)\cdot \mathrm{poly}(m,n))$, where $f$ is an arbitrary, typically exponential function.
Let us mention three natural parameters in the context of multi-winner elections:
the number of candidates ($m$), the committee size $k$, and the number of voters $n$.

Let us first discuss the parameter $m$, i.e., the number of candidates.
As there are ${m \choose k}\leq m^m$ committees, it is possible to consider each possible committee in an FPT algorithm. This bound gives trivial (and uninteresting) FPT results for most NP-hard rules.
For example, for $w$-Thiele methods one can compute $\score{w}(A, W)$ for each committee $W$ and pick those with maximum score.
An interesting exception is Monroe, where it is not immediately obvious how to compute the Monroe score of a given committee in polynomial time. This is achievable via a reduction to the min-cost max-flow problem, described by \citet{complexityProportionalRepr}.

For the parameter committee size $k$, most results are negative:
First, \citet{fullyProportionalRepr} show for Monroe and CC that it is $\mathrm{W[2]}$-hard to verify whether a committee exists with at least a certain Monroe-/CC-score. These hardness results continue to hold even if the number of unrepresented voters is used as an additional parameter \cite{fullyProportionalRepr}.
Second, \citet{misra2015parameterized} show an analogous W[2]-hardness result for MAV.
Third, \citet{azi-gas-gud-mac-mat-wal:c:multiwinner-approval} show for all Thiele methods with $2w(1)>w(2)$ that testing whether a committee is winning is $\mathrm{coW[1]}$-hard.\footnote{The condition $2w(1)>w(2)$ excludes AV but is satisfied for PAV and CC.}
All these results imply that one cannot hope for an FPT algorithm computing these ABC rules, i.e, it is unlikely that an algorithm exists with a runtime of, e.g., $O(2^k\cdot \mathrm{poly}(m,n))$.

The parameter $n$, the number of voters, is a natural choice if multi-winner elections are conducted in small groups and leads to interesting algorithms.
\citet{fullyProportionalRepr} show that CC and Monroe can be solved in time $n^n\cdot \mathrm{poly}(m,n)$.
In a similar vein, \citet{fal-sko-sli-tal:c:top-k-counting} show an FPT result with respect to~$n$ for a large class of multi-winner voting rules (including Thiele methods).
Their algorithm is based on mixed integer linear programming and Lenstra's result~\cite{Len83} that (mixed) integer linear programs can be solved in FPT time with the number of variables as parameter.\footnote{We refer the interested reader to \citet{gavenvciak2020integer} for a general overview of how integer linear programming can be used to find FPT algorithms.}
The results from \citet{fal-sko-sli-tal:c:top-k-counting} have been substantially generalised by
\citet{bredereck2020parameterized}, including an FPT result for Thiele methods with weighted voters.

Moreover, \citet{fullyProportionalRepr} provide a thorough and detailed parameterized complexity analysis for CC and Monroe for further parameters (e.g., the number of unrepresented voters) but find mostly hardness results.
\citet{yang2018parameterized} give an overview of further parameterized results; however, the concrete results announced in this short paper are not published yet.

To conclude, let us report on a positive result for MAV:
MAV can be computed in time $O(d^{2d})$, where $d$ is the optimal MAV-score, as shown by \citet{misra2015parameterized}.\footnote{\citet{misra2015parameterized} claimed that the runtime of their algorithm is $d^d$; this was corrected later \cite{liu2016parameterized,cygan2018approximation}.}
This runtime is essentially optimal subject to a standard complexity theoretic assumption, as shown by \citet{cygan2018approximation}.

\subsubsection{Approximation Algorithms}\label{sec:compute-approx}

The most natural approximation algorithm for Thiele methods are their sequential variants, as described in \Cref{sec:seq-rules}. Sequential $w$-Thiele provides a very good approximation of $w$-Thiele \cite{owaWinner,budgetSocialChoice}; this follows directly from a more general approximation result for submodular set functions by \citet{submodular}.

\begin{theorem}[\citet{owaWinner,budgetSocialChoice}]\label{thm:general_approximation}
Sequential $w$-Thiele is a $0.63$-approximation algorithm for $w$-Thiele. More specifically, 
Sequential $w$-Thiele achieves a $w$-score of at least $1- (1 - \nicefrac 1 k)^k \geq 1 - \nicefrac{1}{e}\geq 0.63$ times the optimal $w$-score.
\end{theorem}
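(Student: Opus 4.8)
The plan is to recognize the $w$-score $\score{w}(A,\cdot)$ as a monotone submodular set function on the candidate set~$C$ and then invoke the classical greedy analysis for cardinality-constrained submodular maximization (the Nemhauser--Wolsey--Fisher bound~\cite{submodular}); since seq-$w$-Thiele is literally the greedy algorithm for this function, the approximation ratio transfers verbatim. The one point requiring care is that submodularity holds only when $w$ has diminishing returns, i.e.\ $w(x+1)-w(x)$ is non-increasing in~$x$ (equivalently $w$ is concave on $\naturals$); this is satisfied by all the Thiele methods of interest here — AV ($w(x)=x$), CC, PAV, SLAV, and the $p$-geometric rules — so I would state and prove the theorem under this hypothesis (and note that it genuinely fails without it, e.g.\ for $w(x)=1$ iff $x=k$).

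First I would establish the structural properties of $f(W):=\score{w}(A,W)=\sum_{i\in N} w(|W\cap A(i)|)$. Normalization $f(\emptyset)=0$ is immediate since $w(0)=0$, and monotonicity follows because enlarging $W$ cannot decrease any $|W\cap A(i)|$ and $w$ is non-decreasing. For submodularity it suffices to check the marginal-gain inequality voter by voter: fix $W\subseteq W'\subseteq C$ and $c\notin W'$; if $c\notin A(i)$ both marginal gains for voter~$i$ are~$0$, while if $c\in A(i)$ the gains are $w(a+1)-w(a)$ for~$W$ and $w(b+1)-w(b)$ for~$W'$, where $a=|W\cap A(i)|\le b=|W'\cap A(i)|$, and concavity of $w$ gives $w(a+1)-w(a)\ge w(b+1)-w(b)$. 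Summing over $i\in N$ yields $f(W\cup\{c\})-f(W)\ge f(W'\cup\{c\})-f(W')$.

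Next I would run the greedy telescoping argument. Let $W^\star$ be an optimal size-$k$ committee, write $\mathrm{OPT}=f(W^\star)$, and let $W_0\subseteq W_1\subseteq\cdots\subseteq W_k$ be the committees produced by seq-$w$-Thiele, so $W_{r+1}=W_r\cup\{c_{r+1}\}$ with $c_{r+1}$ maximizing the marginal gain over $W_r$. By monotonicity and submodularity, for each $r\le k-1$,
\begin{align*}
\mathrm{OPT}\le f(W^\star\cup W_r)\le f(W_r)+\sum_{c\in W^\star\setminus W_r}\bigl(f(W_r\cup\{c\})-f(W_r)\bigr)\le f(W_r)+k\bigl(f(W_{r+1})-f(W_r)\bigr),
\end{align*}
where the last step uses $|W^\star\setminus W_r|\le k$ together with the fact that every individual marginal gain $f(W_r\cup\{c\})-f(W_r)$ is at most that of the greedy choice, $f(W_{r+1})-f(W_r)$. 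Writing $\delta_r=\mathrm{OPT}-f(W_r)$, this rearranges to $\delta_{r+1}\le(1-\tfrac1k)\delta_r$, hence $\delta_k\le(1-\tfrac1k)^k\delta_0=(1-\tfrac1k)^k\,\mathrm{OPT}$, that is $f(W_k)\ge\bigl(1-(1-\tfrac1k)^k\bigr)\mathrm{OPT}$. Finally $(1-\tfrac1k)^k\le e^{-1}$ for every $k\ge1$ (from $1-x\le e^{-x}$ with $x=\tfrac1k$), so $f(W_k)\ge(1-\tfrac1e)\,\mathrm{OPT}\ge 0.63\,\mathrm{OPT}$, as claimed.

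There is no genuine obstacle here — this is the textbook greedy bound — so the real work is the reduction: making the concavity assumption on $w$ explicit, verifying submodularity term by term, and keeping the recursion for $\delta_r$ clean. I would close with two remarks: that the factor $1-\tfrac1e$ is the correct limiting value (both because $1-(1-\tfrac1k)^k\to 1-\tfrac1e$ and because the greedy $1-\tfrac1e$ guarantee is known to be essentially best possible for monotone submodular maximization), and that in the special case of AV the function $f$ is modular, so seq-AV is exact — consistent with the earlier observation that seq-$w_\av$-Thiele coincides with AV.
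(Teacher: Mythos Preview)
Your proposal is correct and takes exactly the approach the paper indicates: the paper does not give its own proof but simply states that the bound ``follows directly from a more general approximation result for submodular set functions by \citet{submodular}'', and your argument is precisely that reduction carried out in full. Your explicit identification of the concavity hypothesis on $w$ is a genuine addition --- the paper's definition of Thiele methods only requires $w$ to be non-decreasing, so the theorem as stated is slightly too general, and your caveat (with the counterexample $w(x)=[x=k]$) is well taken.
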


\citet{DudyczMMS20} designed an algorithm that gives stronger approximation guarantees than $(1 - \nicefrac{1}{e})$ for $w$-Thiele methods for which the derivatives of the defining $w$-function decrease slower than a geometric sequence. The algorithm is based on pipage rounding of the fractional solution returned by a linear program. \citet{bar-faw-gho-gur:ell-best-approx} provided a $\left(1 - \frac{\ell^\ell}{e^\ell \cdot \ell !}\right)$-approximation algorithm for the $w$-Thiele function with $w(x) = \min(x, \ell)$. \Cref{tab:approx_thiele_summary} summaries the guarantees of the best approximation algorithms for most prominent Thiele methods.
Notably, under standard assumptions, all these guarantees cannot be improved within the class of algorithms running in polynomial time.  

{\renewcommand{\arraystretch}{1.5}%
\begin{table}[!t]
        \footnotesize
	\centering
	\makebox[\textwidth][c]{
	\begin{tabular*}{\textwidth}{l @{\extracolsep{\fill}} lll}
		\toprule
                          & $w$-function & approximation ratio & reference \\
		\midrule
		CC & $w(x) = \min(x, 1)$  & $1-\nicefrac{1}{e}$ & \citet{budgetSocialChoice} \\
		$\ell$-best & $w(x) = \min(x, \ell)$ & $1 - \frac{\ell^\ell}{e^\ell \cdot \ell !}$  & \citet{bar-faw-gho-gur:ell-best-approx}\\
		PAV & $w(x) = \sum_{i=1}^x \frac{1}{i}$  & $0.7965$ & \citet{DudyczMMS20} \\
		SLAV & $w(x) = \sum_{i=1}^x \frac{1}{2i-1}$  & $0.7394$ & \citet{DudyczMMS20} \\
		Penrose & $w(x) = \sum_{i=1}^x \frac{1}{i^2}$  & $0.7084$ & \citet{DudyczMMS20} \\
		\bottomrule
	\end{tabular*}
	}
	\caption{Guarantees of the approximation algorithms for the most prominent Thiele methods. The approximation ratios of the algorithms of \citet{budgetSocialChoice} and \citet{DudyczMMS20} are tight unless $\ptime = \np$. They are also tight for the algorithms that run in $f(k) \cdot n^{o(k)}$ time assuming the Gap Exponential Time Hypothesis (Gap-ETH). The approximation ratio of the algorithm of \citet{bar-faw-gho-gur:ell-best-approx} is tight assuming Unique Games Conjecture.}\label{tab:approx_thiele_summary}
\end{table}}

One can also find approximation algorithms for the corresponding \emph{minimisation} problem:
for $w$-Thiele, instead of maximising the $w$-score, one can equivalently minimise the difference from the theoretical optimum of $n\cdot w(k)$, i.e., to minimise 
the $w$-loss defined as 
$\loss{w}(A, W)= n\cdot w(k) - \score{w}(A, W)$. The minimisation and the maximisation variants of the problem have the same optimal solutions, but they differ in terms of approximability. If the optimal committee $W$ has a high score, i.e., if $\score{w}(A, W)$ is close to $n\cdot w(k)$, then an approximation algorithm for the minimisation variant would be superior. For instance, if for the optimal committee $W$ we have $\score{w}(A, W) = 0.95 \cdot n\cdot w(k)$, then a $2$-approximation algorithm for the minimisation variant of the problem is guaranteed to return a solution with the score at least as high as $0.9 \cdot n\cdot w(k)$. On the other hand, a $\nicefrac{1}{2}$-approximation algorithm for the maximisation variant  may return a committee with score equal to $0.475 \cdot n\cdot w(k)$. Conversely, if the the optimal committee has a significantly lower score than $n\cdot w(k)$, then a good approximation algorithm for the maximisation variant of the problem will produce better committees. 

\citet{ByrkaSS18} present a $2.36$-approximation algorithm for PAV according to this $\loss{w}$ measure. This algorithm is based on dependent rounding of a linear program solution.
It is notable that this result does not hold for arbitrary weights; in particular, such an approximation algorithm does not exist for CC under the assumption that $\ptime\neq\np$~\cite{ByrkaSS18}. 
While seq-PAV can be viewed as a voting rule in its own right, this is more debatable for such a rounding-based algorithm. In particular, it cannot be expected to satisfy nice axiomatic properties such as committee monotonicity, and thus constitutes first and foremost an approximation of PAV.

\citet{skowron2017fpt} describes two alternative algorithms that for certain Thiele methods (including PAV and CC) can provide arbitrarily good approximation guarantees and that work in FPT time for the parameter $(k, t)$, where $t$ is the upper-bound on the number of candidates each voter approves. Thus, these algorithms are practical only when the desired size of the committee $k$ and the approval sets of the voters are all small. Moreover, \citet{skowron2017fpt} shows that if each voter approves sufficiently many candidates, then Sequential $w$-Thiele provides an even better approximation guarantee than 0.63. Analogous results, but with the focus on CC, are due to \citet{sko-fal:cc-fpt-approx}.

For MAV, stronger approximation results hold: \citet{BS14:minimax_approximation} and \citet{cygan2018approximation} present polynomial-time approximation schemes (PTAS) for MAV, i.e., polynomial-time approximation algorithms that achieves arbitrary (but fixed) precision; previous work established first a 3-approximation algorithm (\citet{leg-mar-meh:approx_multiwinner}) and then a 2-approximation algorithm (\citet{caragiannis2010approximation}).

\subsubsection{Algorithms for Structured Domains}\label{subsec:Algorithms_for_Structured_Domains}

The fourth and final algorithmic technique is to consider structured preference domains.
Here, the assumption is that preference profiles possess some combinatorial structure that
gives algorithmic advantages.
We refer the interested reader to a survey by \citet{ElkindEtAlTRENDS2017} that discusses this topic more broadly.
For our purpose here, we would like to discuss only two restrictions: candidate and voter interval (defined by \citet{ijcai/ElkindL15-dichpref}, based on previous work by \citet{fal-hem-hem-rot:j:sp,dietrich2010majority,list2003possibility}), but we note that many other restrictions exist and have been studied extensively \cite{ijcai/ElkindL15-dichpref,jair/incompletesp,terzopoulou2020restricted,yang2019tree,peters2017recognising,EFLO15}.

A profile $A$ belongs to the candidate interval (CI) domain if there exists a linear order of candidates such that for each voter $i\in N$, the set $A(i)$ appears contiguously on the linear order.
Similarly, a profile $A$ belongs to the voter interval (VI) domain if there exists a linear order of voters such that for each voter $c\in C$, the set $N(c)$ appears contiguously on the linear order.
The CI domain is closely related to the single-peaked domain for arbitrary ordinal preferences and the VI domain is similar to the single-crossing domain; this is analysed in more detail by \citet{ijcai/ElkindL15-dichpref}.

Under the assumption that preferences belong either to the CI or VI domain, the computational complexity can change dramatically:
MAV is solvable in polynomial time if the given approval profile belongs either to the CI or VI domain~\cite{liu2016parameterized}.
Further, Thiele methods (\citet{jair/spoc}) and Monroe's rule (\citet{fullyProportionalRepr}) can be solved in polynomial time if the given approval profile belongs to the CI domain.
It remains an open problem whether the same holds for the VI domain.

\section{The Algorithmic Perspective on Proportionality}\label{sec:computational-prop}

In this section, we briefly review the literature that deals with the computational problem of finding a proportional committee.

\subsubsection{Finding Proportional Committees for Cohesive Groups}\label{subsec:findingpropcomms}

We first look at the proportionality concepts that formalise the behaviour of rules with respect to cohesive groups of voters; see \Cref{sec:cohesive_groups}.

Note that even the problem of deciding whether in a given instance of election there exists an $\ell$-cohesive group of voters is $\np$-complete~\cite{proprank}. Similarly, given a committee~$W$, deciding whether~$W$ satisfies the EJR condition is $\conp$-complete~\cite{justifiedRepresentation}; the same holds for the problem of deciding whether $W$ satisfies the PJR condition~\cite{AEHLSS18}. Checking if a given committee $W$ satisfies JR is computationally easy---for each candidate one needs to check whether the group of voters approving this candidate is $1$-cohesive, and if so, to check if less than $\nicefrac{n}{k}$ voters from such a group are left without a representative in~$W$. Checking whether a given committee satisfies perfect representation (\Cref{def:perfect_representation}) is also computationally easy---the problem reduces to finding a perfect constrained matching in a bipartite graph~\cite{pjr17}. 

While the problem of checking if a given committee satisfies the EJR/PJR condition is computationally hard, for a given election instance one can find in polynomial time \emph{some} committee that satisfies the two conditions (e.g., through the Method of Equal Shares~\cite{pet-sko:laminar}, or through a local-search algorithm for PAV~\cite{AEHLSS18}). The situation is quite different for perfect representation (PR): it is $\np$-complete to check whether there exists a PR committee for a given election instance~\cite{pjr17}. Consequently, unless $\mathrm{P} = \np$, there exists no polynomial-time ABC rule that satisfies perfect representation. 

\subsubsection{Finding Committees with Attribute-Level Constraints}

Next, we move to the model with external attribute-level constraints from \Cref{sec:prop_external_attributes}.

We start by considering the model from \Cref{ex:multi-attribute-pr}, where we have a set of voters with approval-based preferences over the candidates, the candidates have attribute values (the attributes can be, e.g., gender, age group, education level, etc.) and for each attribute value we are given quotas specifying upper and lower limits on the number of committee members with this particular attribute value. Two recent works by \citet{conf/aaai/BredereckFILS18} and \citet{conf/ijcai/CelisHV18} considered algorithmic aspects of the problem of finding committees maximising a certain score, subject to given attribute-level constraints. The authors considered the problem from the perspective of approximation algorithms and parameterized complexity theory, as well as variants of the problem, where the attribute-level constraints have certain special structures. We do not describe their results in detail as the specific results are obtained for the ranking-based multi-winner model (see \Cref{sec:multiwinner-rank}).
However, it is worth mentioning that even the problem of finding a committee that satisfies the attribute-level constraints is computationally hard. 
Approximation and fixed-parameter tractable algorithms for this simpler problem were studied by \citet{LS16:multi-attribute}.  

A very similar model to the one from \Cref{ex:multi-attribute-pr} is \emph{constrained approval voting (CAP)} (\citet{brams1990constrained,potthoff90}). The main difference to the previously discussed model is that CAP uses constraints formulated for combinations of attributes. For example, a constraint can have the following form: ``the proportion of young (Y) males (M) with higher education (H) in the committee should not exceed 14\%''. Specifically, \citet{brams1990constrained} and \citet{potthoff90} suggest to pick the committee that maximises the AV score subject to the aforementioned combinatorial constraints. A direct translation of CAP into an ILP problem was given by \citet{straszak1993computer}. 
In general, the setting of constrained approval voting has not been thoroughly studied in its full generality, and the model is fairly unexplored from a computational perspective.

Finally, the computational problem of finding a committee subject to attribute-level constraints is related to the multidimensional knapsack problem (the main difference is that in the multidimensional knapsack the candidates can contribute more than a unit weight to each attribute-level constraint) and to the generic problem of optimising a submodular function subject to constraints (see, e.g., a survey by \citet{submodularOverview}). However, this literature usually deals with more general types of constraints, whereas the voting literature we discussed often concerns more specific approaches.

\section{The Algorithmic Perspective on Strategic Voting}

Other types of computational problems arise when one analyses how the results of ABC elections are affected by changes in voters' preferences. There are several reasons to study this type of computational problems, and we briefly summarise them below. Historically, the first motivation was to use the computational complexity as a shield protecting elections from strategic manipulations. The reasoning was the following: if we cannot construct a good rule that is strategyproof (e.g., due to known impossibility theorems; cf.~\Cref{sec:proportionality_and_strategyproofness}), then we could at least aim at proposing a rule for which it is computationally hard for a voter to come up with a successful strategic manipulation. This motivation originated in the context of single-winner elections, and was first proposed by \citet{bar-tov-tri:j:manipulating}.
This reasoning was later contested since the analysis of computational complexity is worst-case in spirit. Even for rules for which the problem of finding a successful strategic manipulation is $\np$-hard, such manipulations can be found easily in the average case, in particular for many real-life preference profiles. For a more detailed discussion of these arguments (but with a focus on single-winner elections), we refer the reader to a survey by \citet{fal-pro:j:manipulation}, a book by \citet{meir2018strategic}, and handbook chapters by \citet{Handbook-manipulation} and \citet{Handbook-bribery}.

In addition to the original motivation to study strategic voting, 
there are other, more ``positive'' applications that do not concern insincere behaviour.
For example, the question of whether one can stop eliciting preferences and safely determine the winners of an election is equivalent to asking whether a group of (undecided) voters can still change the outcome of an election.
These questions are captured by the manipulation problems discussed in \Cref{subsec:manipulation}.
Furthermore, the problem of deciding whether the result of an election is robust to small changes in the given preference profile can also be phrased as ``bribing'' voters to change their ballots so to change the election result.
We discuss the robustness problem in \Cref{subsec:robustness}.

Before we move further, we note that for the case of selecting a single winner ($k=1$) under approval-based preferences, an excellent overview of computational issues related to strategic voting is given by \citet{bau-erd-hem-hem-rot:b:computational-apects-of-approval-voting}. 

\subsubsection{Computational Complexity of Manipulation}\label{subsec:manipulation}

We first consider the computational problem of finding a successful manipulation. 
Recall that we write $A_{+X}$ to denote the profile $A$ with one additional voter approving $X$, i.e., $A_{+X}=(A(1),\dots,A(n),X)$.

\begin{definition}\label{def:manipulation_problem}
Consider an ABC rule $\calR$. 
In the \textsc{Utility-Manipulation} problem, we are given an election instance $(A, k)$, a utility function $u \colon C \to \reals$, and a threshold value $t \in \reals$. We ask whether 
whether there exists a profile $A'$ that extends $A$ by $r$ additional voters such that $\sum_{c \in W} u(c) \geq t$ for some $W\in \calR(A_{+X}, k)$. 

In the \textsc{Subset-Manipulation} problem, we are given an election $(A, k)$, a subset of candidates $L \subseteq C$, and a positive integer~$r$. We ask whether there exists a profile $A'$ that extends $A$ by $r$ additional voters such that $L \subseteq W$ for some $W\in \calR(A', k)$.
\end{definition}

Intuitively, in \textsc{Utility-Manipulation} we have manipulators with a utility function describing their level of appreciation for different candidates; the utility function is additive. The question is whether the manipulators can submit approval ballots such that they derive a utility of at least $t$ from the elected committee. In \textsc{Subset-Manipulation}, the goal is slightly different---the manipulators want to ensure that the candidates from a given set $L$ are all selected. 
For $r = 1$, \textsc{Subset-Manipulation} can be represented as \textsc{Utility-Manipulation}: we assign the utility of one to the candidates from $L$ and the utility of zero to the other candidates, and set $t = |L|$. Observe that it makes sense to consider \textsc{Utility-Manipulation} also in the context of AV---this is because AV is strategyproof only for approval preferences, while the definition of \textsc{Utility-Manipulation} assumes the manipulators have more fine-grained preferences. 

\citet{mei-pro-ros-zoh:multiwinner_strategic} studied \textsc{Utility-Manipulation} for $r=1$ and showed that it is solvable in polynomial time for Multi-Winner Approval Voting with adversarial tie-breaking,\footnote{Adversarial tie-breaking means that ties between candidates are broken in disfavour of the manipulators.} \citet{bau-den-rey:manipulation_multiwinner} proved that also  \textsc{Subset-Manipulation} is solvable in polynomial time for~AV.
(The main focus of both papers is on ranking-based multi-winner rules, cf.~\Cref{sec:multiwinner-rank}.)
\citet{azi-gas-gud-mac-mat-wal:c:multiwinner-approval} show that \textsc{Utility-Manipulation} is computationally hard for SAV and PAV with a given tie-breaking order on candidates. They further prove that \textsc{Subset-Manipulation} is $\np$-hard for SAV and $\conp$-hard for PAV. For PAV the problem stays hard even if there is only a single manipulator ($r=1$), while for SAV with a single manipulator the problem becomes computable in polynomial time.

\citet{bre-kac-nie:strategic_multiwinner} studied a more general version of \textsc{Utility-Manipulation}, where the goal is to check whether there exists a coalition of voters that could jointly perform a successful manipulation. The authors focused on the $\ell$-Bloc rule, which is a variant of Multi-Winner Approval Voting, where each voter approves exactly $\ell$ candidates. Then, the coalition-manipulation problem is computationally hard in its all variants studied by the authors. On the other hand, if we look at an egalitarian version of $\ell$-Bloc (maximising the number of candidates in the committee that are approved by the worst-off voter), then the problem becomes computationally tractable. 
Another problem related to \textsc{Utility-Manipulation} has been considered by \citet{bar-gou-lan-mon-ries}: given utility functions of all voters, is there an approval profile consistent with the utility functions in which a given committee wins.

\subsubsection{Computational Complexity of Robustness}\label{subsec:robustness}

The next computational problem that we look at is \textsc{Robustness}, introduced by \citet{bre-fal-kac-nie-sko-tal:robustness} and adapted to the ABC setting by \citet{gaw-fal:c:robustness_of_abc_voting}. In the definition below, we consider the following three operations: the operation $\mathrm{Add}$ adds a candidate to the approval set of some voter, $\mathrm{Remove}$ deletes a candidate from the approval set of a voter, and $\mathrm{Swap}$ is a combination of  $\mathrm{Add}$ and $\mathrm{Remove}$ applied simultaneously to the approval set of a single voter.

\begin{definition}[\citet{bre-fal-kac-nie-sko-tal:robustness,gaw-fal:c:robustness_of_abc_voting}]
Consider an ABC rule $\calR$ and an operation $\mathrm{Op} \in \{\mathrm{Add}, \mathrm{Remove}, \mathrm{Swap}\}$. In the $\mathrm{Op}$-\textsc{Robustness} problem we are given an election instance $(A,k)$ and an integer $b$. We ask whether there exist a sequence $S$ of $b$ operations of type $\mathrm{Op}$ such that $\calR(A, k) \neq \calR(A', k)$, where $A'$ is the preference profile obtained from $A$ by applying the operations from sequence~$S$.
\end{definition}

\citet{gaw-fal:c:robustness_of_abc_voting} have shown that the $\mathrm{Op}$-\textsc{Robustness} problem is computationally hard for PAV and CC, for each type of the three operations. On the other hand, the problem can be solved in polynomial time for AV and SAV. The authors also computed the robustness radius---a measure that says how much the result of an election can change in response to a single change in the preference profile---for several ABC rules. Notably, they show that for $w$-Thiele  methods with $2w(1)>w(2)$ (this class includes PAV and CC), a single $\mathrm{Add}$, $\mathrm{Remove}$, or $\mathrm{Swap}$ operation can lead to a completely different winning committee. 

\citet{gaw-fal:c:robustness_of_abc_voting} and \citet{conf/sofsem/MisraS19} also considered the parameterized complexity of the \textsc{Robustness} problem, and have designed several parameterized algorithms for natural parameters, such as the number of voters $n$ and the number of candidates $m$.
\citet{fal-sko-tal:multiwinner_bribery} considered a similar problem, but they asked whether, through a sequence of operations of a given type, one can make a particular candidate a member of a winning committee.
This question is particularly relevant if one wants to report to non-winners how close they were to being selected.
Finally, robustness of ABC rules has also been studied by \citet{caragiannis2022evaluating}; their analysis is based on a noise model assuming a ``ground truth'' (i.e., optimal) committee.

%
%

\bibliographystyle{abbrvnat}
\bibliography{main}
\chapter{Related Formalisms and Applications}\label{sec:related_formalisms}

\intro{In this chapter, we discuss connections of approval-based committee voting with a number of other applications and formalisms. These include other multi-winner voting formalisms, participatory budgeting, voting in combinatorial domains, and judgement aggregation.}

In this chapter, we discuss connections of approval-based committee voting with a number of other applications and formalisms.

\section{Ranking-Based Multi-Winner Elections}\label{sec:multiwinner-rank}

Besides ABC voting, the other classic multi-winner election model is when voters provide a ranking of candidates from the most to the least preferred one.
That is, in the ranking-based model a voter's preference is expressed as a linear order of all candidates instead of a subset of candidates, as it is the case in the ABC model.
As it is the case with approval-based multi-winner elections, also the ranking-based model has attracted 
much attention in recent years. Alas, at the point of writing this book, there does not exist a 
comprehensive overview of this field of research. 
However, a very helpful introduction to multi-winner voting in general (with a focus on the ranking-based model) can found in a book chapter by \citet{FSST-trends}.

When comparing approval-based and ranking-based multi-winner rules, 
it is worth mentioning that the class of ABC scoring rules (\Cref{def:abc-scoring-rules}) has a very close analogue in the ranking-based model, namely the class of committee scoring rules~\cite{elk-fal-sko-sli:c:multiwinner-rules}. 
Indeed, committee scoring rules admit a very similar axiomatic characterisation to the one given in \Cref{thm:characterizationWelfareFunctions} for ABC scoring rules~\cite{skowron2019axiomatic}.
The class of committee scoring rules has been explored in depth by \citet{fal-sko-sli-tal-tal:j:hierarchy-committee}. In particular, the subclass of OWA-based committee scoring rules corresponds to the class of Thiele methods in the approval-based model. Other subclasses of committee scoring rules can be analogously defined for approval ballots, but to the best of our knowledge they have not been considered in the context of approval-based elections.

The approval-based and ranking-based model can be generalised to the model where voters provide weak orders over candidates, i.e., ranking with ties.
In this model, approval ballots correspond to a ranking with two levels (approved and disapproved candidates).
This variant has been considered, e.g., by \citet{aziz2020expanding}, but generally attracted much less attention so far.
This is due to the fact that the concepts discussed in this book (e.g., notions of proportionality) do not easily generalise to this more expressive setting and require substantial conceptual developments. Further work is required to consolidate the literature from the approval-based and ranking-based model in a systematic and notationally concise form.

\section{Trichotomous Preferences and Incomplete Information}\label{sec:multiwinner-trich}

In this book we consider the variant of the multi-winner election model where agents vote by specifying sets of approved candidates. Several recent (mostly algorithmic) works study an extended variant of this model, where the ballots are trichotomous, i.e., where each voter can approve, disapprove or remain neutral with regard to a candidate.
This model is discussed in detail by \citet{brams1978approval} and \citet{lines1986approval}.
\citet{bau-den:mav_trichotomous} and \citet{bau-den-rey:manipulation_multiwinner} generalise AV and MAV to trichotomous votes and explore related algorithmic questions.
This line of work has been continued by \citet{liu2016parameterized}.
Further, \citet{bau-boh-rey-sch-sel:minimax_ell_blocks} extend MAV to the case where each voter assigns each candidate to one of $\ell$ predefined buckets, where $\ell$ is a parameter. 
\citet{zhou2019parameterized} introduce variants of CC, PAV, and SAV for trichotomous ballots and study questions regarding parameterized complexity.
Finally, \citet{talmon2021proportionality} define and study notions of proportionality in the trichotomous setting.
In general, many questions regarding the trichotomous model remain unanswered. In particular, an axiomatic analysis is mostly missing (with work of \citet{alcantud2014dis} and \citet{gonzalez2019dilemma} as notable exceptions).

A model closely related to trichotomous preferences arises if approval ballots are incomplete due to missing information. In this model, the middle, ``neutral'' option corresponds to ``unknown''.
In practice, voting rules often have to be computed given incomplete information (such as missing ballots or incomplete ballots; see the handbook chapter of \citet{brandt2015handbook-chapter10} for a broader discussion). 
For ABC rules, a first analysis with focus on AV is due to \citet{bar-gou-lan-mon-ries}.
A more comprehensive treatment by \citet{imber2021committee} considers the class of Thiele methods and focuses on computational problems related to incomplete information.
Apart from the three-valued model of incomplete information, as discussed here, they also propose models where ``unknown'' candidates are ordered by preference but it is unclear where to separate them in approved and disapproved candidates.
Finally, \citet{terzopoulou2020restricted} study structured preference domains (cf.\ \Cref{subsec:Algorithms_for_Structured_Domains}) in connection with incomplete information.

\section{A Variable Number of Winners}

Throughout this paper, we assume that the committee size is fixed. In the literature on 
multi-winner voting with a variable number of winners~\cite{kil-handbook,Kilgour16} (also known as \emph{social dichotomy functions} \cite{duddy2014social}),
this assumption is dropped and a voting rule can return an arbitrary number of candidates---depending on the given election instance.
An example for such a rule, based on approval ballots, is the \emph{mean rule}, which returns 
all candidates with an above-average number of approvals (introduced by \citet{duddy2016}, further analysed by \citet{brandl2019axiomatic}).
Another example is Minimax Approval Voting (MAV), as discussed in \Cref{sec:furtherabc}.
In this setting, MAV returns all candidate subsets that minimise the largest Hamming distance among all voters. Other ABC rules do not easily translate to this setting. For example, Thiele methods always achieve a maximum score for the complete set of all alternatives.
Consequently, the formulation of such voting rules often contains a penalty mechanism for larger sets.

More details, in particular a computational view point and an experimental evaluation, can be found in the work of \citet{Faliszewski17OA}.
Further, the special case of shortlisting rules has been analysed by \citet{approvalbased-shortlisting}; this work includes recommendations which voting rules are 
particularly suitable for shortlisting scenarios.
Shortlisting in a proportional fashion was studied by \citet{prop-mwwavnow}; their focus lies on proportionality guarantees (related to the ones introduced in \Cref{sec:cohesive_groups}) for variable-sized sets of candidates.
Finally, \citet{DBLP:journals/corr/abs-2201-06655} consider an epistemic scenario where a ``correct'' selection of candidates has to be identified; approval ballots are viewed as noisy estimates of a ground truth.

\section{Participatory Budgeting}\label{sec:part-budg}

In participatory budgeting (PB), we assume that candidates come with different costs, and that the sum of the costs of the selected candidates cannot exceed a given budget. Thus, multi-winner elections can be viewed as a special case of PB, where the costs of the candidates are all equal.
Typically, candidates correspond to projects in this setting, each of which has an associated cost to be implemented. 
For an overview of different models and approaches to PB, we refer the reader to a recent survey by~\citet{aziz2020participatory}.  

Participatory budgeting based on approval ballots is one of the standard models and is often used in real-world PB referenda. 
Knapsack voting suggested by~\citet{goel2015knapsack} closely resembles AV. 
\citet{pet-pie-sko:c:participatory-budgeting-cardinal} showed that the Method of Equal Shares preserves its proportionality properties in the setting of PB---it satisfies an adapted version of EJR, and a logarithmic approximation of the core.
\citet{aziz2018proportionally} provide a taxonomy of axioms aimed at formalising proportionality in PB; those axioms are adaptations of JR and PJR (see \Cref{sec:cohesive_groups}). \citet{talmon2019framework} study other axioms, mostly pertaining to different forms of monotonicity (see \Cref{sec:comm-mon,sec:cand_supp_monotonicity}) and through experiments provide visualisations of the kind of committees returned by different participatory budgeting rules.
\citet{BaumeisterBH21} consider the computational complexity of strategic voting.
Generally, the assumption is that projects are independent of each other; \citet{projectinteractions} study participatory budgeting without this assumption.
Finally, \citet{ReyEH20} connect participatory budgeting based on approval ballots with judgement aggregation (see \Cref{sec:judgment}), which offers another possibility to include constraints.

\section{Budget Division and Probabilistic Social Choice}

The goal of a probabilistic social choice function is to divide a single unit of a global resource between the candidates. Thus, multi-winner elections can be viewed as instances of probabilistic social choice with the additional requirement that each candidate gets either $\nicefrac{1}{k}$-th fraction of the global resource, or nothing. For an overview of results on probabilistic social choice functions, we refer to a book chapter by \citet{Bran17a}.

Several works~\cite{BMS05a,Dudd15a,FGM16a,ABM19:fair-mixing,MPS20:fair-mixing,BBP+19a} study probabilistic social choice functions for approval votes. The particular focus of some of these works is put on formalising the concepts of fairness and proportionality. Some of these concepts closely resemble the ones that we discussed in the context of approval-based multi-winner elections (\Cref{sec:proportionality}). For example, \citet{ABM19:fair-mixing} and \citet{FGM16a} study the concept of the core (\Cref{sec:the_core}), \citet{ABM19:fair-mixing} additionally consider the axioms of average fair share, group fair share, and individual fair share---the properties that closely resemble---respectively---proportionality degree, PJR, and JR (\Cref{sec:cohesive_groups}), \citet{MPS20:fair-mixing} show the relation between these fairness properties and the utilitarian welfare of outcomes (cf.~\Cref{sec:degressive_and_regressive_proportionality_quantifying}). \citet{BMS05a} focuses on mechanisms which are strategyproof, and \citet{Dudd15a} proves that strategyproofness is incompatible with certain forms of proportionality---an impossibility result similar to the ones that we discuss in \Cref{sec:proportionality_and_strategyproofness}.  

\section{Voting in Combinatorial Domains}\label{sec:combinatorial_voting}

Multi-winner rules output fixed-size subsets of available candidates. An alternative way of thinking of such rules is that
\begin{inparaenum}[(1)]
\item for each candidate $c$ they make a decision whether $c$ should be selected to the winning committee or not, and 
\item there is a constraint which specifies that exactly $k$ decisions must be positive. 
\end{inparaenum}
Thus, with $m$ candidates there are $m$ dependent binary decisions (each decision is of the form ``include a candidate in the winning committee or not'') that are made by a multi-winner rule. These decisions are dependent (related) because of the constraint on the number of positive decisions. 

The literature on voting in combinatorial domains studies a more general setup, where a number of decisions (not necessarily binary) need to be made, and where there exist (possibly complex) relations between the decisions. Similarly, the preferences of the voters might have complex forms. For example, consider two issues---$I_1$ with two possible decisions $Y_1$ and $N_1$, and $I_2$ with two possible decisions $Y_2$ and $N_2$. A voter might prefer decision $Y_2$ only if the decision with respect to issue $I_1$ is $Y_1$; otherwise this voter might prefer $N_2$ over $Y_2$ (see the work of \citet{BKZ:voting-on-referenda} for a detailed discussion of this example). Various languages have been studied that allow voters to express such complex combinatorial preferences. For example, in the context of approval-based multi-winner elections, some of these languages would allow voters to express the view that a certain group of candidates works particularly well together, so they should either be  all selected as members of the winning committee or none of them should be chosen, or the view that some candidates should never be chosen together. In the literature on multi-winner elections, on the other hand, it is assumed that the preferences of the voters are separable, thus the voters can only make statements about their levels of appreciation for different candidates.    
An interesting middle ground between very general forms of combinatorial preferences and simple (i.e., separable) preferences was proposed by \citet{BarrotL16}: conditional approval ballots allow voters to specify their approval ballots conditional on whether certain candidates are to be included in the committee.

A comprehensive overview of the literature on voting in combinatorial domains can be found in a book chapter by \citet{Lang2016VotinginCombinatorial}. We highlight three works from this literature that deal with models particularly related to the model of approval-based multi-winner elections. In public decision making, as studied by \citet{conitzer2017fair}, the decisions are not related, the preferences of the voters with respect to decisions on various issues are separable, thus the model closely resembles the one studied in this book. The main difference is that in the model for public decisions there is no constraint specifying the number of decisions that can be positive. There, the authors focus on designing fair (i.e., proportional) rules. The model of sub-committee elections, due to \citet{aziz2018sub}, generalises the ones of multi-winner elections and public decisions. There, it is assumed that the set of candidates is partitioned and for each group of candidates there is a threshold bounding the number of candidates selected from this group.
 
Another formalism closely related to ABC voting is \emph{perpetual voting}, introduced by \citet{aaai/perpetual}. Here, instead of a committee we have time steps and in each step one candidate is selected. Hence, after $k$ rounds $k$ candidates are picked, which can be viewed as a committee. The main difference is that the set of available candidates and voters' preferences can change each round. The goal is to provide proportionality over time, which requires that the decision in round $k$ is made under consideration of the voters' satisfaction in previous rounds. This formalism can be viewed as a special case of voting in combinatorial domains (with a very  specific sequentiality constraint). Further, due to the sequential structure imposed by time, perpetual voting rules have close connections with committee monotonic ABC rules (such as seq-\phragmen{} and seq-PAV). Similar questions in a utility-based model have been studied by \citet{Freeman2017Fair}.
A voting rule related to the setting of perpetual voting is due to Gottlob Frege\footnote{Gottlob Frege (1848--1925) was a German philosopher and logician.} \citep{frege:1918a,frege:2000a}. The main difference is that the set of candidates remains the same in each round and the goal is to achieve a proportionally fair outcome for candidates (instead of voters). An analysis of this voting system is due to \citet{hll-frege}.
 
\section{Judgment Aggregation and Propositional Belief Merging}\label{sec:judgment}

In judgment aggregation, we are given a set of logical propositions and a set of voters providing true/false valuations for these propositions; the goal is to find a collective, aggregated valuation. Sometimes it is also required that the collective valuation must be consistent with exogenous logical constraints. Multi-winner elections can be represented as instances of judgment aggregation, where for each candidate we have a single Boolean variable representing whether the candidate is elected or not; the exogenous constraints can be used to enforce that exactly $k$ from these variables are set true. A chapter by \citet{Handbook-JA} in the \emph{Handbook of Computational Social Choice}  discusses this framework in detail and reviews judgment aggregation rules; see also the survey by \citet{list2009judgment}.

Propositional belief merging \cite{KoniecznyM04,KoniecznyP02,KoniecznyP11} is a very general framework, which allows agents to aggregate their individual positions (beliefs, preferences, judgements, goals) on a set of issues.
Also here this combined, collective outcome has to satisfy given exogenous logical constraints.
Approval-based committee voting can be seen as a special case of propositional belief merging, although the focus of these two directions of research has little overlap: belief merging operators are analysed with respect to a set of postulates that are only partially relevant in a voting context.
A few works have made an explicit effort to connect voting and belief merging.
A particular focus in this regard has been the study of belief merging and strategyproofness \cite{chopra2006social,everaere2007strategy,haret2019manipulating}.
Further, 
\citet{haret2016beyond} consider classic axioms from social choice theory in the context of belief merging.
Finally, \citet{aaai/propbm} introduce and analysed \emph{proportional} belief merging operators.

\section{Proportional Rankings}

The theory of multi-winner elections can be applied in a seemingly unrelated setting, where the goal is to find a ranking of candidates based on voters' preferences. One can observe that every committee monotonic (\Cref{def:comm-mon}), resolute ABC rule $\calR$ can be used to obtain a ranking of candidates: we put in the first position in the ranking the candidate that $\calR$ returns for $k=1$; call this candidate $c$. Committee monotonicity guarantees that the set of two candidates returned by $\calR$ for $k=2$ contains $c$; the other candidate is put in the second position in the ranking, etc.

In particular, if we use a proportional committee-monotonic rule (for example, seq-\phragmen{} or seq-PAV) then the obtained ranking will proportionally reflect the views of the voters in the sense that each prefix of such a ranking, viewed as a committee, will be proportional; this idea has been studied in detail by~\citet{proprank}. Proportional rankings are desirable, e.g., when one wants to provide a list of recommendations or search results that accommodate different types of users (cf.\ \emph{diversifying search results} \cite{drosou2010search,journals/ftir/SantosMO15}), or in the context of liquid democracy \cite{BKNS14a}, where an ordered list of proposals is presented to voters for their consideration.

Proportional rankings in a dynamic setting, where the rankings also take previously selected (and now unavailable) alternatives into account, have been studied by \citet{israel2021dynamic}.
This setting arises, e.g., in dynamic Q\&A platforms, where questions are proposed and upvoted.
The authors argue that questions that already have been asked should be taken into account when choosing the next question(s).

\bibliographystyle{abbrvnat}
\bibliography{main}
\chapter{Outlook and Research Directions}\label{sec:outlook}

\intro{In this chapter, we provide a list of what we view as particularly important open problems and research directions.
For instance, axiomatic characterisations of many ABC rules are missing, the compatibility of committee monotonicity and proportionality is not known, and many questions regarding the core property remain unanswered. 
This is followed by a list of more specific or more technical open questions, e.g., regarding particular axiomatic properties of an ABC rule, its computational complexity, and algorithmic challenges.
}

We conclude this book with a list of what we view as particularly important open problems and research directions. This is followed by a list of more specific or more technical open questions.
These two lists are naturally far from being exhaustive; many more research directions remain to be explored.

\section{Main Open Problems and Research Questions}

\begin{enumerate}[label=Q\arabic*]

\item \textbf{Axiomatic characterisations:} So far, only few axiomatic characterisations of ABC rules are known. Specifically, such characterisations are known only for ABC scoring rules and Thiele methods. Yet, axiomatic characterisations are essential if one wants to choose an ABC rule in a principled way. It is thus one of the major open problems to characterise other ABC rules, in particular, sequential Thiele methods, seq-\phragmen{}, the Method of Equal Shares, Monroe's rule, Minimax Approval Voting, and Satisfaction Approval Voting.
Further, almost no satisfiable proportionality-related axioms are known for the multi-attribute model (\Cref{sec:prop_external_attributes}), let alone axiomatic characterisations. 

\item \textbf{Committee monotonicity and proportionality:} The current state of research suggests that committee monotonic ABC rules are limited in how proportional they are, but there is no precise impossibility result known as of now. The main open question is whether there exist ABC rules that satisfy EJR and committee monotonicity. Only partial answers are known to this question. For example, it is known that such a rule can be defined for approval-based party-list elections (see the work of \citet{BGPSW19,brill2022approval}; mentioned in \Cref{sec:the_core}), but there is no clear generalisation of this rule to the setting of ABC rules. In case such a rule does not exist, it might be easier to first show that committee monotonicity and the core property are incompatible.

\item \textbf{The core property:} Does there exist an ABC rule that satisfies the core property (\Cref{def:core})? Equivalently, is the core always non-empty? In case the core can be empty, what is a sensible ABC rule that outputs a committee in the core whenever it exists?
Can such a rule be computed in polynomial time?

\item \textbf{Analysis beyond the worst-case:} With a few notable exceptions, in \Cref{sec:basic} and \Cref{sec:proportionality} we discussed axiomatic properties which are worst-case in spirit. A voting rule fails such an axiom even if there exist only few very unnatural election instances for which the property is not satisfied. An alternative approach would be to test if the properties hold for randomly generated instance of elections, or for elections from datasets containing real-life instance~\cite{conf/aldt/MatteiW13}. However, many common distributions of voters' preferences are too simplistic and do not capture the complexity of the voters' reasoning processes; the real election instances are rather scarce, and are collected in specific contexts, e.g., assuming that the voters' know the election rule that will be used to select winners. It is an important task to develop intermediate approaches that allow for a more fine-grained analysis and allow to understand which of the rules exhibit most desired properties on election instances that are likely to occur in practice.   

\item \textbf{Relation between axiomatic properties and computability:} It is still unclear which combinations of axiomatic properties of ABC rules can be achieved in polynomial time. It is known that some rules are $\np$-hard to compute, but it is unclear which axiomatic properties of these rules cause computational hardness. For example, it is not known whether the axiom of FJR (see \Cref{def:fjr}) is satisfiable by a rule computable in polynomial time. Further, is there a polynomial-time computable ABC rule that is proportional (e.g., that satisfies PJR) and satisfies Pareto optimality? Or does there exist a polynomial-time rule that satisfies consistency and extends D'Hondt? (By \Cref{thm:pav_characterisation}, such a rule must violate either neutrality, anonymity, or continuity.)

\item \textbf{Preference data from distributions:} An important challenge is to prepare a representative database containing sample approval-based elections. Realistic probability distributions would allow for the automatic generation of synthetic (but meaningful) election instances, which are important for numerical simulations and performance tests of algorithms. In comparison to the ranking-based model, much fewer statistical models for generating approval-based elections are know. Further, it would be highly desirable to identify a set of distributions that are representative and that cover numerous potential types of voters and voting scenarios. A noteworthy attempt at creating such a representative collection of distributions has been made for the ranking-based model by~\citet{szufa2020drawing}. For ABC elections, this issue remains to be explored.

\end{enumerate}

\section{Further Open Problems}

We continue with more specific or more technical open problems.

\begin{enumerate}[label=Q\arabic*,resume]

\item The key feature of Monroe's rule is its underlying assumption that a committee member
can represent only $\nicefrac 1 k$-th of the voter population. 
Monroe's rule could thus be generalised to many optimisation-based multi-winner rules by imposing 
the additional restriction that committee members
can represent (i.e., derive score from) an $\alpha$-fraction of voters.
This idea resembles the group activity selection problem, where a set of
activities is chosen, each of which has a maximum number of participants, and agents are assigned to activities subject to their preferences; see the survey of \citet{DarmannLangTRENDS2017}.
More generally, adding this ``Monroe-style'' constraint
can  be seen as requiring a homogeneous representation load among chosen committee members.
This is a sensible assumption whenever candidates can satisfy only a limited number of voters
(e.g., if candidates represent consumable goods).
This idea of committees with homogeneous representation loads is largely unexplored.

\item Most axiomatic notions for proportionality are only applicable to ABC rules that extend apportionment methods satisfying lower quota (see \Cref{fig:relation_between_proportionality_concepts}). This excludes, e.g., ABC rules that extend the Sainte-Lagu\"e method. As the Sainte-Lagu\"e method is in certain aspects superior to the D'Hondt method (\citet{BaYo82a} discuss this in detail), it would be desirable to have notions of proportionality that are agnostic to the underlying apportionment method. 

\item What is the proportionality degree of rev-seq-PAV?

\item Does there exist an ABC rule that satisfies priceability and Pareto efficiency?

\item What is the computational complexity of verifying whether a given committee belongs to the core? Is it possible to find a committee in the core in polynomial time (if it exists)? In case of computational hardness, can the methods presented in \Cref{sec:algorithms} be used to obtain 
algorithms that are fast in practice?\footnote{In a very recent preprint, \citet{brill2022approval} show that it 
is coNP-complete to verify whether a committee is in the core. Note that this does not rule
out the the existence of a polynomial-time algorithm \emph{finding} a committee in the core, as it is the case for EJR and PJR (cf.\ \Cref{subsec:findingpropcomms}).}

\item We have seen in \Cref{sec:proportionality_and_strategyproofness} that proportionality and strategyproofness are typically incompatible. The corresponding impossibility result for arbitrary, i.e., irresolute, ABC rules \cite{KVVBE20:strategyproofness} relies on Pareto efficiency. Since this is a property that many sensible ABC rules do not satisfy (see \Cref{sec:pareto}) it would be desirable to strengthen this result by relaxing this condition, e.g., by replacing Pareto efficiency with weak efficiency. Is this possible or are there ABC rules that are irresolute, strategyproof, proportional, but not Pareto efficient?
Furthermore, both the result for irresolute~\cite{KVVBE20:strategyproofness} rules and for resolute rules~\citep{pet:prop-sp,dominikthesis} rest on the assumption that the committee size~$k$ divides the number of voters. This assumption is unlikely to hold for large~$k$ and thus removing this assumption would be desirable.

\item A question related to monotonicity was asked by \citet{sanchez2019monotonicity}: Is there an ABC rules that is proportional (even in a very weak sense, e.g., satisfying JR) and satisfies support monotonicity without additional voters (\Cref{def:monotonicity})? As of now, AV and SAV are the only rules known to satisfy this property and both are not proportional.

\item Another question related to monotonicity concerns the Method of Equal Shares: while this method exhibits very strong proportionality guarantees (in particular EJR and priceability), it fails candidate monotonicity with additional voters (as discussed in \Cref{sec:cand_supp_monotonicity}).
Is there an equally proportional ABC rule that also satisfies candidate monotonicity?

\item We mentioned in \Cref{sec:anon-neutr-resol} that ABC rules that require tiebreaking do not satisfy neutrality (e.g., sequential and reverse sequential Thiele methods, Greedy Monroe, seq-\phragmen{}, and the Method of Equal Shares are not neutral). These rules can be made neutral with \emph{parallel universes tiebreaking}: a committee is winning under the neutral variant if and only if it is winning for \emph{some} tiebreaking order under the original rule.
Parallel universes tiebreaking has been analysed for single-winner rules \cite{con-rog-xia:c:mle,freeman2015general,bri-fis:c:ranked-pairs} but not for multi-winner rules.
Such a modification will have an algorithmic impact (trying all permutations of candidates would require exponential time), but the exact computational complexity of these neutral rules is not settled. Further, under which conditions can these rules be computed in polynomial time?

\item In \Cref{sec:computational-complexity}, we presented a coarse analysis of the computational complexity of ABC rules. This analysis could be refined by considering the \textsc{Candidate Winner} problem: given an election instance $(A,k)$ and a candidate $c$, does there exist a winning committee $W$ that contains $c$? This problem has recently be shown to be $\Theta_2^p$-complete for Monroe and CC by \citet{SonarDM20}. A similar analysis for other computationally hard voting rules (such as PAV) is missing.

\item Sequential PAV approximates the optimal PAV-score by a factor of at least $1-\frac 1 e$. What is the factor for Reverse Sequential PAV? Is it better? The same question can be asked for other Thiele methods.

\item Several approximation algorithms and heuristics have been proposed for PAV, including seq-PAV, rev-seq-PAV, the approximation algorithm based on dependent rounding (\cite{ByrkaSS18}, discussed in \Cref{sec:compute-approx}), and a local-search algorithm used for finding EJR committees in polynomial time \cite{AEHLSS18}. The difference between these algorithms has not been investigated from a practical point of view. The main question is which of these algorithms should be chosen to approximate PAV given a very large election?

\item Is it possible to compute Thiele methods and Monroe's rule in polynomial time if the given preference profile belongs to the voter interval (VI) domain (see \Cref{subsec:Algorithms_for_Structured_Domains})?

\item The computation of some polynomial-time ABC rules can clearly be parallelised. For example, for AV each candidate can be processed independently of others. The framework of $\ptime$-completeness~\cite{greenlaw1995limits} can be used to determine which ABC rules are inherently sequential (by showing P-completeness) and which can be parallelised (by showing, e.g., $\mathrm{NL}$-containment).
Such work has been done for single-winner rules~\cite{ijcai/CsarLP-schulze,aaai/CsarLPS17-mapreduce,brandt2009computational} but not for multi-winner rules. 

\item In real-life elections, it is sometimes required that each voter can approve at most $k$ candidates. It is interesting to see what are the consequences of such a requirement in terms of qualities of the committees produced by various rules. Sometimes, it is even possible to distribute up to $k$ points to candidates, i.e., to approve candidates more than once. This is clearly beyond the ABC model, but some concepts and results may transfer to such voting systems.

\end{enumerate}

\bibliographystyle{abbrvnat}
\bibliography{main}
\appendix

\chapter{Additional Proofs}\label{app:proofs}

\intro{In this appendix chapter, we provide some proofs and counterexamples that we were not able to find in the published literature.}

In this appendix chapter, we provide some proofs and counterexamples that we were not able to find in the published literature.
By default, we use alphabetic tiebreaking for ABC rules that require a tiebreaking order among candidates.

\section{Additional Proofs from Chapter~3}

\begin{proposition}
All Thiele methods with strictly increasing $w$-function as well as SAV satisfy strong Pareto efficiency; CC and MAV fail strong Pareto efficiency.\label{prop:av-pav-sav-pareto}
\end{proposition}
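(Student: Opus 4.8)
The plan is to handle the three positive cases (AV, PAV, SAV) by one common observation and the two negative cases (CC, MAV) by explicit small counterexamples.

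\emph{Positive direction.} The common principle is that each of these rules returns the committees maximizing a score of the form $\sum_{v\in N} g\bigl(|A(v)\cap W|,|A(v)|\bigr)$ in which $g$ is \emph{strictly} increasing in its first argument. First I would fix a committee $W_2$ returned by the rule and a committee $W_1$ with $|W_1|=k$ that dominates $W_2$, so that $|A(v)\cap W_1|\geq|A(v)\cap W_2|$ for every $v\in N$ and $|A(j)\cap W_1|>|A(j)\cap W_2|$ for some $j$. For AV and PAV, which are $w$-Thiele methods with $w(x)=x$ and $w(x)=\H(x)$, I would note that both $w$ are strictly increasing on $\naturals$ (for $\H$ because $\H(x+1)-\H(x)=\frac{1}{x+1}>0$); hence $w(|A(v)\cap W_1|)\geq w(|A(v)\cap W_2|)$ for all $v$ with strict inequality at $v=j$, and summing over $N$ gives $\score{w}(A,W_1)>\score{w}(A,W_2)$, contradicting that $W_2$ has maximum $w$-score. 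For SAV I would divide each inequality $|A(v)\cap W_1|\geq|A(v)\cap W_2|$ by $|A(v)|>0$; this is legitimate once one notes that the strictly-improved voter $j$ has $A(j)\neq\emptyset$ (since $|A(j)\cap W_1|\geq 1$) and that voters with empty ballots contribute $0$ to the SAV-score of every committee, so that summing yields $\score{\sav}(A,W_1)>\score{\sav}(A,W_2)$, again a contradiction. Thus none of AV, PAV, SAV ever outputs a dominated committee.

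\emph{Negative direction.} Here I would exhibit instances in which a dominated committee is tied for the win. For CC I would take $C=\{a,b,c\}$, $k=2$, and two voters with $A(1)=\{a,b\}$, $A(2)=\{a,c\}$: every $2$-subset of $C$ covers both voters, so all three are CC-winners, yet $\{a,b\}$ dominates $\{b,c\}$ (voter $1$ has $2>1$ representatives, voter $2$ has $1=1$), so CC outputs the dominated committee $\{b,c\}$. For MAV I would reuse the profile from \Cref{rem:lexmav}, namely $99\times\{a\}$ together with $1\times\{a,b,c\}$ and $k=1$: the Hamming distance from $\{a,b,c\}$ to each singleton is $2$, while $d_{\hamming}(\{a\},\{a\})=0$ and $d_{\hamming}(\{a\},\{b\})=d_{\hamming}(\{a\},\{c\})=2$, so $\{a\}$, $\{b\}$, $\{c\}$ all attain maximum Hamming distance $2$ and are all MAV-winners; but $\{a\}$ dominates $\{b\}$, since the $99$ identical voters gain a representative while the remaining voter keeps one, so MAV returns a dominated committee.

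\emph{Main obstacle.} There is no genuine difficulty; the only points needing care are making the strict-monotonicity step airtight (for SAV this is precisely the remark that $A(j)\neq\emptyset$ and the handling of empty ballots) and verifying in each counterexample that the dominated committee really lies in the winning set, i.e., that the claimed ties are exact. Accordingly I would write out the CC-scores and the Hamming-distance computations explicitly rather than leaving them implicit.
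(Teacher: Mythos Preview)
Your proposal is correct and follows essentially the same approach as the paper: for AV, PAV, and SAV you argue that domination strictly increases the score (the paper states this in one line, you spell out the strict monotonicity more carefully), and for CC and MAV you exhibit instances where a dominated committee is among the winners. Your specific counterexamples differ from the paper's (the paper uses $1\times\{a,c,d\}$, $1\times\{b,c,d\}$ with $k=2$ for CC, and $1\times\{a,c\}$, $1\times\{b,c\}$, $1\times\{d,e\}$ with $k=1$ for MAV), but yours are equally valid and the verification you sketch is accurate.
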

\begin{proof}
Observe that if $W_1$ dominates $W_2$ then the $w$-score of $W_1$ is strictly larger than that of $W_2$, due to our assumption that $w$ is strictly increasing. Thus, $W_2$ is not a winning committee for these ABC rules. The same argument holds for SAV.

To see that CC fails strong Pareto efficiency, consider
consider the approval profile
	\begin{align*}
		&1 \times \{ a, c, d \} & & 1 \times \{ b,c, d\}\text{.}
	\end{align*}
For $k=2$, $\{a,b\}$ is a winning committee even though it is dominated by $\{c,d\}$.

To see that MAV fails strong Pareto efficiency, consider
consider the approval profile
	\begin{align*}
		&1 \times \{ a, c\} & & 1 \times \{ b,c\} & & 1 \times \{ d, e\}\text{.}
	\end{align*}
For $k=1$, there is always one voter with Hamming distance $3$ to any size-$1$ committee. Consequently, all size-$1$ committees are winning even though $\{c\}$ dominates $\{a\}$ and $\{b\}$.
\end{proof}

\begin{proposition}
CC, PAV, Monroe, Greedy Monroe, \lexphrag{}, the Method of Equal Shares, and MAV do not satisfy committee monotonicity.\label{prop:comm-mon}
\end{proposition}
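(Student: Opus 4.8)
The plan is to exhibit, for each of the seven rules, a small election instance together with a committee size $k$ such that the unique winning committee for size $k$ is not a subset of the unique winning committee for size $k+1$ (recall that committee monotonicity is defined for resolute rules, so we must also specify or argue about tie-breaking). The structure of the proof will therefore be seven independent blocks, one per rule, each consisting of: (i) a profile, (ii) a value of $k$, (iii) the computation of the winning committee for $k$ and for $k+1$, and (iv) the observation that the former is not contained in the latter. Where convenient I would try to reuse a single instance for several of the rules at once (for instance, the Thiele-method counterexamples for CC and PAV may be sharable, and the Monroe/Greedy Monroe examples may be related), which keeps the proof compact.

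First I would handle CC and PAV, the two Thiele methods. For CC a classic small example suffices: with a profile like $2\times\{a\}$, $1\times\{b,c\}$, $1\times\{b,d\}$ and $k=1$, CC picks $\{a\}$ (covering $2$ voters, which is optimal), while for $k=2$ the unique optimal committee covers all $4$ voters and is $\{b,c\}$ or $\{b,d\}$ — in any case not containing $a$; a slightly adjusted profile makes the $k=2$ winner unique. For PAV I would use a profile where the score-optimal singleton is a broadly approved candidate but for $k=2$ the harmonic weighting makes it preferable to ``spread out'' and drop that candidate — the instance in \Cref{ex:seqpav} already illustrates this flavor, and I would verify (or lightly modify) that PAV's winner for some small $k$ is not a subset of its winner for $k+1$. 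In each case the work is just arithmetic with the $w$-function, so I would present the instance and the key score comparisons rather than all of them.

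Next I would treat Monroe and Greedy Monroe. For Monroe one can lean on the Pareto-failure example already in the text (the $24$-voter instance): there $\{c,d\}$ is the unique $k=2$ Monroe winner, and I would compute the unique $k=1$ Monroe winner (which maximizes single-candidate approval, namely $\{b\}$) and observe $\{b\}\not\subseteq\{c,d\}$; if $k=1$ ties, a small perturbation fixes it. For Greedy Monroe, non-monotonicity is actually easy because the group sizes $n_r$ depend on $k$: changing $k$ changes the quota $\lceil n/k\rceil$ vs.\ $\lfloor n/k\rfloor$, so the very first candidate chosen can differ; I would build a tiny instance where the first pick for $k$ differs from the first pick for $k+1$. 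For \lexphrag{} I would use the running example of the survey itself: \Cref{ex:seqphragmen} and the following example show that for $k=4$ the \lexphrag{} winner is $\{a,b,c,f\}$ while seq-\phragmen{} gives $\{a,b,c,d\}$; I would instead compute \lexphrag{} for $k=3$ (or another small $k$) on a suitable profile and show the size-$k$ winner is not contained in the size-$(k+1)$ winner — lexmin comparisons of sorted load tuples, done by hand on a $3$-or-$4$-candidate instance. For Rule X I would similarly pick a small profile; \Cref{ex:rule_x_2} already shows Rule X behaving non-nestedly relative to seq-\phragmen, and a comparable small instance should show Rule X's own winner for $k$ failing to sit inside its winner for $k+1$ (being careful that Rule X may not even fill the committee in phase one, in which case I fix the phase-two completion to seq-\phragmen{} as recommended). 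For MAV the natural example is in the spirit of \Cref{ex:mav}: with $99\times\{a\}$, $1\times\{b,c\}$ and $k=1$ MAV picks $\{b\}$ or $\{c\}$ (distance $1$ to $\{b,c\}$, distance $2$ to $\{a\}$), whereas for $k=2$ MAV picks $\{b,c\}$ exactly (distance $0$), and the $k=1$ winner — once disambiguated, say by always choosing $b$ — is $\{b\}\not\subseteq\{b,c\}$? that is contained, so I would instead use something like $k=1$ giving $\{b\}$ and $k=2$ giving $\{a,?\}$ by enlarging the small group's set so that the optimal $k=2$ committee drops $b$; concretely a profile such as $99\times\{a\}$, $1\times\{b,c,d\}$ forces the $k=1$ winner into $\{b,c,d\}$ while the $k=2$ winner becomes $\{a,x\}$ for some $x$, giving non-containment.

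The main obstacle I anticipate is not the existence of counterexamples — they are small and well known in folklore — but the bookkeeping around resoluteness: committee monotonicity is stated only for resolute rules, so for every example I must either (a) argue the optimal committee is genuinely unique at both sizes $k$ and $k+1$, or (b) argue that no tie-breaking order can rescue monotonicity (i.e., for every completion of the rule to a resolute rule, the nesting fails). Option (a) is cleaner, so I would spend most of the effort tweaking each instance — adding a dummy broadly-approved candidate, or a voter or two — until uniqueness holds at both sizes; for rules that intrinsically need a tie-breaking order on candidates (Greedy Monroe, Rule X), I would instead make the argument robust to the choice of order. A secondary, minor obstacle is Rule X's two-phase structure: I need to ensure the first phase terminates with a well-defined committee (or handle the phase-two completion explicitly) so that ``the winning committee'' is unambiguous. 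I would write each block as: ``Consider the profile \dots. For $k=\dots$, rule $\calR$ selects \dots\ (uniquely, because \dots). For $k=\dots+1$, rule $\calR$ selects \dots\ (uniquely, because \dots). Since $\dots\not\subseteq\dots$, $\calR$ is not committee monotone.''
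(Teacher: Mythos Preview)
Your overall strategy---one counterexample per rule, with attention to resoluteness---is exactly right and matches the paper. The paper, however, is much more economical: a single profile
\[
2\times\{a\},\quad 3\times\{a,c\},\quad 3\times\{b,c\},\quad 2\times\{c\}
\]
simultaneously witnesses failure for CC, PAV, Monroe, \lexphrag{}, and MAV (each selects $\{c\}$ for $k=1$ and $\{a,b\}$ for $k=2$), leaving only Greedy Monroe and Rule~X to be handled by separate, somewhat larger instances. This ``one instance, five rules'' trick is worth internalizing; it saves a lot of verification and sidesteps most of the tie-breaking bookkeeping you worry about.

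More importantly, two of your concrete examples are wrong. For CC, your profile $2\times\{a\},\ 1\times\{b,c\},\ 1\times\{b,d\}$ does not work: at $k=1$ both $\{a\}$ and $\{b\}$ have CC-score~$2$, so $\{a\}$ is not the unique winner; and at $k=2$ the committee $\{a,b\}$ covers all four voters and is the \emph{unique} winner, so whatever singleton you pick at $k=1$ is contained in it. Your MAV fix with $99\times\{a\},\ 1\times\{b,c,d\}$ also fails: at $k=2$ every committee of the form $\{a,x\}$ with $x\in\{b,c,d\}$ and every committee $\{x,y\}\subseteq\{b,c,d\}$ has maximum Hamming distance~$3$, so nothing forces the $k=2$ winner to drop the $k=1$ winner. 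In both cases the instinct (``a broadly approved candidate gets displaced when we enlarge the committee'') is fine, but the specific instances do not realize it; the paper's shared counterexample above does.
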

\begin{proof}
All counterexamples are implemented (and verified) in the \textsf{abcvoting} library~\cite{abcvoting}.

First, let us consider the approval profile
	\begin{align*}
		&2 \times \{ a \} & & 3 \times \{ a,c\} && 3 \times \{ b,c \} & & 2 \times \{b\}\text{,}
	\end{align*}
CC, PAV, Monroe, \lexphrag{}, and MAV choose $\{c\}$ for $k=1$ and $\{a,b\}$ for $k=2$.

For Greedy Monroe, consider the approval profile $A$ defined as
\begin{align*}
& A(1)=\dots=A(6)=\{a\}, & & A(7)=\dots=A(10)=\{a,c\}, & & A(11)=A(12)=\{a,b,c\}\\  
& A(13)=A(14)=\{a\}, & & A(15)=\{a, d\}, && A(16)=\dots=A(18)=\{b, d\}\text{.}
\end{align*}
We assume that Greedy Monroe breaks ties between candidates in alphabetic order and between voters in increasing order.
For $k=2$ groups have a size of 9,  for $k=3$ groups have a size of 6.
Now, for $k=2$, Greedy Monroe first chooses $a$ and assigns voters 1--9 and then candidate $b$ assigning voters $\{11,12,16,17,18\}$.
For $k=3$, Greedy Monroe first chooses $a$ and assigns voters 1--6, then candidate $c$ assigning voters 7--12, and finally candidate $d$ assigning voters 15--18.
We see that $\{a,b\}$ is not a subset of $\{a,c,d\}$.

For the Method of Equal Shares, consider
	\begin{align*}
		&A(1) : \{a, d, e\} && A(2): \{a, c\} && A(3): \{b, e\} && A(4): \{c, d, f\}\text{.}
	\end{align*}
For $k=3$, the budget of voters is $0.75$. Candidate $a$ is selected in the first round (due to alphabetic tiebreaking), reducing the budget of voters 1 and 2 to $0.25$. Then candidate $c$ is added (again by tie-breaking); the budget of voter 2 and 4 is decreased to $0$. Only voters 1 and 3 have budget left. Candidate $e$ is chosen last as the only remaining candidate with sufficient support. We see that the Method of Equal Shares selects the committee $\{a, c, e\}$.

For $k=4$, the budget of voters is $1$. In the first three rounds, candidates $a$, $c$, $d$, and $e$ can all be chosen by two voters paying $0.5$. By alphabetic tie-breaking, the Method of Equal Shares chooses $a, c, d$. In the fourth round, the remaining budgets are $0, 0, 1, 0$ for voters 1--4, respectively.
Thus, in the last round, candidate~$b$ is chosen.

We see that the Method of Equal Shares selects $\{a,c, e\}$ and $\{a,b, c,d\}$ and is thus not committee monotone. Note that this example does not use the second phase of the Method of Equal Shares (based on seq-\phragmen) and thus works independently of the chosen method how to fill remaining committee seats (i.e., the second phase).
\end{proof}

\begin{proposition}\label{prop:candmon}
Thiele methods, rev-seq-PAV, MAV, and SAV satisfy support monotonicity with additional voters; seq-PAV, seq-CC, seq-\phragmen{}, and \lexphrag{} satisfy candidate monotonicity with additional voters but fail support monotonicity with additional voters. Further, Monroe, Greedy Monroe, and the Method of Equal Shares fail candidate monotonicity with additional voters.

AV and SAV satisfy support monotonicity without additional voters; PAV, CC, seq-PAV, seq-CC, rev-seq-PAV, Monroe, greedy-Monroe, seq-\phragmen{}, \lexphrag{}, the Method of Equal Shares, and MAV satisfy candidate monotonicity without additional voters; none of these satisfy support monotonicity without additional voters.
\end{proposition}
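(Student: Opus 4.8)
The statement bundles together about a dozen rule/axiom pairs, so the plan is to isolate a few reusable arguments and then dispatch the remaining negative claims with explicit profiles. For the positive claims about score-maximizing rules I would argue directly about how scores move. For ``support monotonicity with additional voters'' of Thiele methods and SAV, note that these are ABC scoring rules (\Cref{sec:consistency}): adding a voter whose ballot is exactly $X$ raises the score of every committee $W$ by $f(|W\cap X|,|X|)$, and since $f(\cdot,|X|)$ is non-decreasing this increment is largest, namely $f(|X|,|X|)$, precisely when $X\subseteq W$. Hence a committee $W^\ast\supseteq X$ that was optimal for $A$ gets the maximal increment while any committee missing part of $X$ gets a weakly smaller one, so $W^\ast$ stays optimal, and if $W^\ast$ was the unique optimum no $X$-missing committee can overtake it; this gives both clauses. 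The ``without additional voters'' claims for AV and SAV are handled the same way, tracking the per-candidate scores $\score{\av}(A,c)$ and $\score{\sav}(A,c)=\sum_{v:\,c\in A(v)}1/|A(v)|$: enlarging voter $i$'s ballot by candidates it does not yet approve weakly raises exactly those candidates' scores, leaves every other candidate unchanged (AV) or weakly decreases the ones $i$ already approved (SAV, since their weight $1/|A(i)|$ shrinks) and leaves the rest unchanged; as top-$k$ selection is monotone under such a move, a candidate that was in every (some) winning committee stays so.

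For the sequential rules I would use a ``first point of divergence'' argument. Run the rule (with its fixed tie-breaking order) on $A$ and on $A_{+\{c\}}$ in parallel. Because the extra voter approves only $c$, in every round before $c$ is chosen this voter affects only the quantity governing $c$'s selection — it raises $c$'s marginal seq-PAV/seq-CC gain by $w(1)$, raises the marginal value of keeping $c$ in rev-seq-PAV, or lowers $c$'s \phragmen\ load $\ell_r(c)$ — and leaves the analogous quantity of every other candidate untouched; once $c$ is in the committee the extra voter is inert. Therefore the two runs coincide until, if ever, a round in which the run on $A_{+\{c\}}$ is forced to select $c$ (the only candidate whose standing improved), after which $c$ is permanently in the committee; if they never diverge the committees are equal. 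Either way $c$ lands in the committee of $A_{+\{c\}}$, giving candidate monotonicity with additional voters for seq-PAV, rev-seq-PAV, and seq-\phragmen. For \lexphrag\ the analogue is static: for a committee $W\ni c$ the lexicographically least per-voter load vector can only improve once the extra $c$-voter is available to absorb part of $c$'s unit load, whereas for $W\not\ni c$ it is unchanged (the extra voter can pay for nothing); comparing these vectors — padding the shorter with a trailing zero since $n$ grows to $n+1$ — yields both clauses.

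The ``without additional voters'' candidate-monotonicity claims for PAV, CC, seq-PAV, rev-seq-PAV, Monroe, seq-\phragmen, and \lexphrag\ follow by applying the same two mechanisms to the move ``voter $i$ adds $c$ to its ballot'', which again can only help $c$ and never a competitor; the score-maximizing cases additionally use monotonicity of $\score{w}$ and $\score{\monroe}$ in the obvious direction, and seq-\phragmen\ is the case that needs real care — one must invoke that Phragm\'en's maximum load $\ell_r(c_r)$ is non-decreasing in $r$ to argue that voter $i$'s accumulated load never exceeds $\ell_r(c)$, so that adding $i$ to $N(c)$ still weakly decreases $\ell_r(c)$. What then remains is the list of negative results, for each of which I would give a small explicit profile (several adapted from \citet{sanchez2019monotonicity}, \citet{Janson16arxiv}, \citet{MoOl15a}): seq-\phragmen\ and \lexphrag\ fail support monotonicity with additional voters; Monroe, Greedy Monroe, and Rule~X fail candidate monotonicity with additional voters; and PAV, CC, seq-PAV, rev-seq-PAV, Monroe, seq-\phragmen, and \lexphrag\ all fail support monotonicity without additional voters. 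The mechanism to exploit in the \phragmen-type counterexamples (where $|X|\ge 2$) is that a voter newly approving a whole set $X$ is forced to sink its budget into whichever member of $X$ is bought first, thereby rushing one candidate of $X$ into the committee while crowding another one out; for Monroe and Greedy Monroe it is that an extra voter changes $n$ and hence the target group sizes $\lfloor n/k\rfloor,\lceil n/k\rceil$; and for Rule~X it is that an extra voter shrinks every voter's initial budget from $\nicefrac kn$ to $\nicefrac{k}{n+1}$, which can make a previously affordable candidate unaffordable even though the supporter set of $c$ itself only grew.

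The main obstacle is not a single deep idea but the breadth of the statement together with the need for a handful of carefully tuned profiles whose correctness must be verified by hand. The one genuinely delicate analytic point is SAV: because an approval is weighted by $1/|A(i)|$, enlarging a ballot simultaneously boosts the newly approved candidates and dampens the previously approved ones, so the ``who can overtake whom'' bookkeeping has to be done carefully. I expect the \phragmen\ counterexamples and the seq-\phragmen\ monotonicity argument (via the non-decreasing maximum load) to be the most technical pieces.
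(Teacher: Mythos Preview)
The paper's proof takes a different route from yours: it does not argue the positive claims directly but cites them wholesale from \citet{sanchez2019monotonicity}, \citet{Janson16arxiv}, and \citet{MoOl15a}, supplying only four explicit counterexamples of its own (Greedy Monroe and Rule~X for candidate monotonicity with additional voters; seq-PAV and rev-seq-PAV for support monotonicity without additional voters) and deferring all remaining counterexamples to those same sources. Your plan to prove the positive statements from scratch via score comparisons and first-divergence arguments is more ambitious and more self-contained; the first-divergence sketch for seq-PAV and seq-\phragmen, and the score arguments for Thiele methods and AV, are sound and buy independence from the cited proofs at the price of doing the work the paper outsources.

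One place where your sketch does not go through is SAV without additional voters. When voter $i$ enlarges her ballot from $A(i)$ to $A(i)\cup X$, the SAV-score of every candidate in $X\cap A(i)$ strictly \emph{drops} (its weight from $i$ falls from $1/|A(i)|$ to $1/|A(i)\cup X|$) while candidates outside $A(i)\cup X$ are unchanged; so a candidate $c\in X\cap A(i)$ can fall from strict top-$k$ into a tie with some $d\notin A(i)\cup X$, and then $X$ is no longer contained in every winning committee. Your assertion that ``top-$k$ selection is monotone under such a move'' is therefore false for this case; you rightly flag SAV as delicate, but the sketch as written does not resolve it and you will need a different argument (or the cited result). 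A smaller gap is the \lexphrag\ argument: observing that committees $W\ni c$ weakly improve while committees $W\not\ni c$ stay put handles competitors $W'\not\ni c$, but not a competitor $W'\ni c$ that might improve by more---that comparison needs an additional step.
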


\begin{proof}
All counterexamples are additionally implemented (and verified) in the \textsf{abcvoting} library~\cite{abcvoting}.

\textbf{Support monotonicity with additional voters: }
\citet{sanchez2019monotonicity} show that Thiele methods, MAV, and SAV satisfy support monotonicity with additional voters (referred to as ``support monotonicity with population increase'' in their paper)

We prove that rev-seq-PAV satisfies support monotonicity with additional voters as well: 
Recall that rev-seq-PAV is resolute by definition. Let $X$ be a subset of the winning committee and assume we add a voter approving $X$.
We claim that exactly the same candidates are removed and in exactly the same order.
Let us prove this by induction and assume it holds for rounds $m, \dots, \ell$, where $\ell\leq m$ (recall that in rev-seq-PAV we count the rounds in the reverse order).
As in rounds $m, \dots, \ell$ the same candidates were removed, the marginal contribution of candidates outside of~$X$ is the same. The marginal contribution of candidates contained in $X$ is larger. Consequently, the candidate with the least marginal contribution is the same as it was in the original election and thus not a candidate in~$X$. We conclude that an additional voter approving $X$ does not change the winning committee.

\citet{Janson16arxiv} (based on \phragmen{} \cite{Phra96a}) proves that seq-PAV, rev-seq-PAV, and seq-\phragmen{} satisfy candidate monotonicity with additional voters.
Further, \lexphrag{} satisfies candidate monotonicity with additional voters; this is a consequence of the fact that it satisfies weak support monotonicity with population increase \cite{sanchez2019monotonicity}, and the proof for seq-PAV in this paper also holds for seq-CC.
A counterexample showing that seq-\phragmen{} fails support monotonicity with additional voters can be found in~\cite{MoOl15a,Janson16arxiv}.
Further, counterexamples for \lexphrag{} and seq-PAV can be found in~\cite{sanchez2019monotonicity}.

To see that seq-CC fails support monotonicity with additional voters, consider the following election instance:
\begin{align*}
& 3 \times \{a\} &&
 1 \times \{a, c, d\} &&
 1 \times \{b\} &&
 2 \times \{b, c\} \\
& 1 \times \{b, d\} &&
 2 \times \{c\} &&
 2 \times \{d\}\text{.}
\end{align*}
For $k=3$, the winning committee according to seq-CC is $\{a, c, d\}$ (in order $c,a,d$, assuming alphabetic tiebreaking). 
If an additional voter approves $\{a, d\}$, 
seq-CC returns $\{a, b, c\}$ (in order $a,b,c$, assuming alphabetic tiebreaking) and hence seq-CC fails support monotonicity with additional voters.

To see that Greedy Monroe fails candidate monotonicity with additional voters, consider the following election instance:
\begin{align*}
& 1 \times \{b, c, d\} &&
 1 \times \{a, c, f\} &&
 1 \times \{a, d, e\} &&
 1 \times \{c, e\} \\
& 1 \times \{a, b\} &&
 2 \times \{d, f\} &&
 1 \times \{b, e\} &&
 1 \times \{b, f\}\text{.}
\end{align*}
For $k=3$, the winning committee according to Greedy Monroe is $\{b, e, f\}$. 
If an additional voter approves $\{e\}$, the 
winning committees changes to $\{b, c, d\}$.
This committee does not contain $e$ and hence Greedy Monroe fails candidate monotonicity with additional voters.

For the Method of Equal Shares, consider the following instance:
\begin{align*}
& 1 \times \{b, d\} &&
 1 \times \{a, b\}&&
 1 \times \{b, d, e\}&&
 1 \times \{a, e\}\\
& 2 \times \{c, d, e\}
 1 \times \{c, e\}&&
 1 \times \{a, c, e\}&&
 1 \times \{b, c, d\}\text{.}
\end{align*}
For $k=3$, the winning committee according to the Method of Equal Shares is
$\{a, d, e\}$. If an additional voter approves $\{a\}$,
the 
winning committee changes to 
$\{b, c, e\}$. As this committee does not contain $a$, 
the Method of Equal Shares fails candidate monotonicity with additional voters.

The Method of Equal Shares also fails candidate monotonicity with additional voters if 
only the first phase of the method is considered (i.e., the method may return fewer than~$k$ candidates).
 For the profile
\begin{align*}
& 2 \times \{a, b, c\} &&
 1 \times \{a, g\} &&
 1 \times \{d, e\} &&
 1 \times \{b, d, f\} &&
 1 \times \{a, f\} &&
 1 \times \{h\} & \\ &
 1 \times \{a, h\} &&
 1 \times \{b, h\} &&
 1 \times \{b, d\} &&
 1 \times \{d, e, f\} &&
 1 \times \{c, e, h\}\text{.}
\end{align*}
The 
original winning committee is
 $\{a, b, e\}$. If an additional voter approves $\{e\}$,
the 
winning committee changes to 
$\{a, d, h\}$ (assuming alphabetic tiebreaking). Thus, 
Equal Shares without the 2nd phase also fails candidate monotonicity with additional voters.
 
Finally, an example showing that Monroe violates  candidate monotonicity with additional voters 
can be found in~\cite{sanchez2019monotonicity}.

\textbf{Support monotonicity without additional voters: }
AV and SAV satisfy support monotonicity without additional voters~\cite{sanchez2019monotonicity}.
PAV, CC, seq-PAV, seq-CC\footnote{The proof is only stated for seq-PAV but holds for seq-CC as well.}, rev-seq-PAV, Monroe, seq-\phragmen{}, \lexphrag{}, and MAV satisfy candidate monotonicity without additional voters~\cite{Janson16arxiv,sanchez2019monotonicity}.

To see that Greedy Monroe satisfies candidate monotonicity without additional voters, 
let $c$ be a candidate in the winning committee.
Now note that a voter additionally approving~$c$ can only lead to $c$ being added in an earlier round.
Hence, it is still contained in the winning committee (which may change, however). An analogous argument holds for the Method of Equal Shares as well.

PAV, CC, Monroe, \lexphrag{}, and MAV do not satisfy the stronger axiom, i.e., support monotonicity without additional voters, as shown by \citet{sanchez2019monotonicity}.
Also seq-\phragmen{} fails this axiom \cite{Janson16arxiv, MoOl15a}.

For seq-PAV, 
consider
\begin{align*}
& 1 \times \{c, d\}&&
 1 \times \{a, c\}&&
 1 \times \{a, d\}&&
 1 \times \{a, f\}\\
& 1 \times \{b, c\}&&
 2 \times \{ b, f\}&&
 1 \times \{c, e\}\text{.}
\end{align*}
For $k=3$, the winning committee according to seq-PAV is $\{a,c,f\}$. 
If the first voter changes her ballot from  $\{c, d\}$ to $\{a, c, d, f\}$, the
winning committee changes to
$\{a, b, c\}$ (using alphabetic tiebreaking). Thus
seq-PAV fails support monotonicity without additional voters.

For rev-seq-PAV, 
consider
\begin{align*}
& 2 \times \{a, e\}&&
 2 \times \{b, c, d\}&&
 1 \times \{d, e\}&&
 3 \times \{c, e\}&&
 1 \times \{b, d, e\}\\
& 1 \times \{a, b, c\}&&
 1 \times \{c, d, e\}&&
 2 \times \{a, d, e\}&&
 1 \times \{b, d\}&&
 1 \times \{a, b\}\\
& 1 \times \{a, d\}&&
 1 \times \{a, b, d\}&&
 1 \times \{b, c\}\text{.}
\end{align*}
For $k=3$, the winning committee according to rev-seq-PAV is
$\{c, d, e\}$.
If the first voter changes her ballot from $\{a, e\}$ to $\{a, c, d, e\}$,
the winning committee changes to $\{b, d, e\}$. As this 
committee does not contain $c$,
rev-seq-PAV fails support monotonicity without additional voters.

To see that seq-CC fails support monotonicity without additional voters,
consider the profile
\begin{align*}
 1 \times \{e\} &&
 1 \times \{a\} &&
 1 \times \{a, d\} &&
 3 \times \{b\} &&
 2 \times \{a, c\} &&
 1 \times \{b, c, d\} &&
 2 \times \{c\} &&
 2 \times \{d\}.
\end{align*}
The winning committee according to seq-CC is $\{b, c, d\}$. If the first voter
changes her ballot from $\{e\}$ to $\{b,d,e\}$, the winning
committee changes to $\{a, b, c\}$. Candidate~$d$ is no longer contained in the 
winning committee, hence seq-CC fails support monotonicity without additional voters.
 
For the Method of Equal Shares,
consider
\begin{align*}
 1 \times \{b\} &&
 1 \times \{a, b, e\} &&
 2 \times \{b, e\} &&
 1 \times \{c\} &&
 1 \times \{a, c\} &&
 1 \times \{a\}\text{.}
\end{align*}
The original winning committee is $\{a, b, e\}$.
If the first voter changes her ballot from $\{b\}$ to $\{a, b, e\}$,
the winning committee changes to 
$\{a, b, c\}$ (using alphabetic tiebreaking). This contradicts support monotonicity without additional voters.
 
 For Greedy~Monroe, consider $k=2$ and
\begin{align*}
& A(1): \{d\}&&
 A(2): \{c\}&&
 A(3): \{b\}&&
 A(4): \{a, c\}\text{.}
\end{align*}
 We assume alphabetic tiebreaking for candidates; for voters we assume that smaller numbers are selected first.
 The winning committee is $\{b, c\}$.
 If the first voter additionally approves $\{b, c\}$ (the new ballot is $\{b, c, d\}$,
 then $b$ is selected in the first round (tiebreaking between $b$ and $c$) and is assigned to voters $1$ and $2$. 
 In the second round there is a tie between $a$, $b$, and $c$, and thus $a$ is added to the committee.
 The winning committee is now $\{a, c\}$, which contradicts support monotonicity without additional voters.
\end{proof}

\begin{proposition}\label{prop:strategyproofness}
AV with a fixed tiebreaking order on candidates satisfies cardinality-strategyproofness and thus inclusion-strategyproofness.
CC, PAV, seq-PAV, seq-CC, rev-seq-PAV, Monroe, Greedy Monroe, seq-\phragmen{}, \lexphrag{}, the Method of Equal Shares, MAV, and SAV do not satisfy inclusion-strategyproofness.
\end{proposition}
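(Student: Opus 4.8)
The statement has a positive half (AV) and a negative half (the ten other rules). I would prove the positive half first, then dispatch the negative half either by invoking the impossibility theorem of \Cref{sec:proportionality_and_strategyproofness} or by explicit counterexamples.

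\emph{AV.} The whole point of AV is that $\score{\av}(A,c)=|N(c)|$ depends on $c$ alone and that AV with a fixed tie-breaking order outputs the $k$ candidates that are greatest in the total order ``first by $\score{\av}$, ties broken by the fixed order''. Fix a voter $i$ and compare her sincere ballot $A(i)$ with an arbitrary report $A'(i)$. I would move from $A(i)$ to $A'(i)$ through elementary steps --- first delete the candidates of $A(i)\setminus A'(i)$ one at a time, then add the candidates of $A'(i)\setminus A(i)$ one at a time --- keeping the true ballot equal to $A(i)$ throughout, and show that $|W\cap A(i)|$ never increases along the way. Each elementary step changes the AV-score of a single candidate by $\pm1$; this shifts that one candidate in the ranking but does not disturb the relative order of any other pair, so the winning committee either is unchanged or changes by exactly one swap (the shifted candidate enters/leaves and the candidate previously ranked $(k{+}1)$-st / $k$-th takes its place). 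It then suffices to observe that deleting a candidate of $A(i)$ can at worst evict a member of $A(i)$ from $W$, and adding a candidate outside $A(i)$ can at worst push such a non-approved candidate into $W$ at the expense of a member of $W$ --- neither operation raises $|W\cap A(i)|$. Chaining the inequalities yields $|\calR(A,k)\cap A(i)|\ge|\calR(A',k)\cap A(i)|$, i.e.\ cardinality-strategyproofness; inclusion-strategyproofness follows since it is the weaker of the two notions (\Cref{sec:strategic_voting}).

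\emph{The other rules.} For PAV, seq-PAV, seq-CC, rev-seq-PAV, Monroe, Greedy Monroe, seq-\phragmen{}, Rule~X, and SAV I would invoke the impossibility theorem of \citet{pet:prop-sp,dominikthesis} from \Cref{sec:proportionality_and_strategyproofness}: each of these rules satisfies weak efficiency (a candidate approved by nobody is never selected while some approved candidate is still available --- immediate from the definitions, taking a sensible tie-break where one is needed) and weak proportionality (on a party-list profile a singleton party of size $n/k$ always gets a seat --- because the rule either satisfies JR or extends the D'Hondt method, and for SAV because such a candidate has SAV-score $\ge n/k$ while at most $k-1$ candidates can strictly exceed that, the total SAV-score being $n$); hence, by that theorem, none of them is inclusion-strategyproof. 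Equivalently one can give explicit ``free-riding'' counterexamples: the manipulator sincerely approves a hugely popular candidate $a$ (certain to be elected) together with a second candidate $b$; truthfully the committee contains $a$ but not $b$, while if she reports only $\{b\}$ the proportionality/representation mechanism, now treating her as an under-served $\{b\}$-voter, also elects $b$, so she gains $b$ and keeps $a$. (For SAV the explicit version is clean: $3\times\{a\}$, manipulator $\{a,b\}$, one $\{b\}$, and enough $\{c,d,e\}$-voters so that $c$'s SAV-score lies strictly between $\tfrac32$ and $2$; SAV elects $\{a,c\}$ sincerely and $\{a,b\}$ after the manipulation.) MAV needs a tailored example exploiting its egalitarian objective, in the spirit of \Cref{rem:lexmav}: from $99\times\{a\}$, $1\times\{a,b,c\}$ with $k=1$, the committees $\{a\},\{b\},\{c\}$ all tie at Hamming distance $2$ and MAV's tie-break may pick $\{b\}$; a voter with true ballot $\{a\}$ can report $\{a,d\}$ for a fresh candidate $d$, making $\{a\}$ the unique distance-minimiser, and thereby gains $a$.

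\textbf{Main obstacle.} Two points need care. First, the AV argument must handle ties precisely: one has to argue rigorously that a single $\pm1$ perturbation of one AV-score moves the top-$k$ set by at most one swap and never in a direction that helps $A(i)$ --- elementary, but the fixed tie-break is genuinely load-bearing here. Second, and harder, is MAV: a voter who truly approves $a$ can never \emph{decrease} the maximum Hamming distance of the committee $\{a\}$, so a successful manipulation of this flavour must convert a tie into a strict win and is therefore irreducibly tie-break-dependent; one must argue this still counts as a bona fide violation for MAV read as a resolute rule, and, if one prefers the impossibility-theorem route for the other nine rules, double-check the (mild) weak-efficiency and weak-proportionality caveats for each.
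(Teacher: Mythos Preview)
Your treatment of AV is essentially the paper's argument: both decompose the change of ballot into single add/delete steps and observe that each such step changes at most one seat in a direction that cannot help the manipulator.

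For the nine rules other than MAV you take a genuinely different route. The paper simply exhibits one explicit profile per rule (found by computer search and reused in the \textsf{abcvoting} library), whereas you appeal to the impossibility theorem of \citet{pet:prop-sp}. Your route is legitimate: each of these rules is easily seen to satisfy weak efficiency, and your arguments for weak proportionality go through --- the JR-based one covers PAV, seq-CC, Monroe, Greedy Monroe, seq-\phragmen{}, Rule~X; the D'Hondt-extension argument covers seq-PAV and rev-seq-PAV (which fail JR); and your SAV counting argument is actually stronger than you state, since the total SAV-score being $n$ forces at most $k$ candidates to have score $\ge n/k$, so $c$ is never excluded even by tie-breaking. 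The trade-off is that your approach is non-constructive and leans on verifying two side axioms per rule, while the paper's approach yields concrete, reusable instances.

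Your MAV argument, however, has a real gap. The claim that a successful MAV manipulation ``must convert a tie into a strict win and is therefore irreducibly tie-break-dependent'' is false: a voter cannot make her preferred committee \emph{better}, but she can make rival committees \emph{arbitrarily worse} by inflating her reported ballot, so she can turn a strict loss into a strict win. The paper exploits exactly this. Its MAV instance ($k=3$, six voters, six candidates) has \emph{unique} winning committees both before and after the manipulation --- $\{a,b,d\}$ with max-Hamming~$2$ becomes $\{a,b,c\}$ with max-Hamming~$3$ once voter~1 reports $\{c\}$ instead of $\{a,b,c\}$ --- so no tie-breaking is invoked at any point. Your $k=1$ example, by contrast, works only for resolute versions of MAV whose tie-break happens to disfavour $\{a\}$; for the tie-break that selects $\{a\}$ you have shown nothing, so as stated your MAV case does not establish the proposition.
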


\begin{proof}
To see that AV satisfies cardinality-strategyproofness, consider a fixed voter $i$. Observe that if $i$ disapproves one of the (truly) approved candidates, say $c$, then it may cause at most one additional candidate getting into the winning committee. However, this will happen only if $c$ is removed from the winning committee. In such a case, the satisfaction of $i$ cannot increase. If $i$ approves a not-yet approved candidate, then this might only cause that this candidate replaces some other candidate in the committee. Again, such a change cannot increase the satisfaction of the voter. Finally, a voter changing her ballot can be decomposed into a sequence of changes which consists of either approving a disliked candidate or disapproving a candidate that is actually liked. Each such a change cannot increase the satisfaction of the voter, as we have seen.  

All counterexamples are also implemented (and verified) in the \textsf{abcvoting} library~\cite{abcvoting}.
Note that inclusion-strategyproofness is defined for resolute rules; hence we assume lexicographic tie-breaking between committees for otherwise irresolute rules.
For tiebreaking between candidates we assume alphabetic tiebreaking, as usual.

For CC consider the following profile with 5 voters:
\begin{align*}
& 1 \times \{a, b\} && 3 \times \{a\} &&
  1 \times \{c\}\text{.}
\end{align*}
We assume an arbitrary tiebreaking between committees and without loss of generality we assume that a tie between committee $\{a, b\}$ and $\{a, c\}$ is resolved in favour of $\{a, b\}$.
For $k=2$, the winning committee according to CC is $\{a, c\}$ with a CC-score of $5$.
If the first voter changes her ballot from $\{a, b\}$ to $\{b\}$, committees $\{a, b\}$ and $\{a, c\}$ are tied with a CC-score of $4$. By lexicographic tiebreaking, committee $\{a, b\}$ wins and the voter benefited from the manipulation.

For PAV consider the following profile with 6 voters:
\begin{align*}
& 1 \times \{c, d, e\} &&
  1 \times \{a, b\} &&
  1 \times \{b, f\} &&
  1 \times \{a, c, d\} &&
  1 \times \{b, c, f\} &&
  1 \times \{c, e, f\}\text{.}
\end{align*}
For $k = 3$ the only winning committee is $\{b, c, f\}$. If the first voter submits $\{e\}$ instead of $\{c, d, e\}$, then $\{b, c, e\}$ will become the only winning committee.

For seq-PAV consider the following profile with 6 voters:
\begin{align*}
& 1 \times \{a, b\} &&
  1 \times \{b, d\} &&
  1 \times \{c, f\} &&
  1 \times \{a, b, f\} &&
  1 \times \{b, f\} &&
  1 \times \{b, c\}\text{.}
\end{align*}
For $k = 3$ the winning committee is $\{b, c, f\}$. The first voter can successfully manipulate by changing her ballot to $\{a\}$---then the winning committee changes to $\{a, b, f\}$. 

For seq-CC consider the following profile with 12 voters:
\begin{align*}
& 1 \times \{b, e, f\} &&
  1 \times \{a, b\} &&
  1 \times \{d, e, f\} &&
  1 \times \{d, e\} &&
  1 \times \{b, f\} &&
  2 \times \{c, d\} \\
& 1 \times \{a, b, c\} &&
  1 \times \{a, c\} &&
  1 \times \{a, b, e\} &&
  1 \times \{a, e, f\} &&
  1 \times \{b, c, d\}\text{.}
\end{align*}
For $k = 3$ the winning committee is $\{a, b, d\}$. The first voter can successfully manipulate by changing her ballot to $\{c\}$---then the winning committee changes to $\{b, c, e\}$.

For rev-seq-PAV consider the following profile with 5 voters:
\begin{align*}
& 1 \times \{a, b, c\} &&
  1 \times \{b, d\} &&
  1 \times \{b, c\} &&
  1 \times \{a, d, e\} &&
  1 \times \{b, e\}\text{.}
\end{align*}
For $k = 2$ the winning committee is $\{b, d\}$ (using alphabetic tiebreaking). If the first voter changes her ballot from $\{a, b, c\}$ to $\{a\}$, then $\{a, b\}$ will become the winning committee. The first voter prefers this committee to $\{b, d\}$, thus she has an incentive to misreport her preferences.

For Monroe consider the following profile with 12 voters:
\begin{align*}
& 1 \times \{b, d\} &&
  1 \times \{a, b, c\} &&
  1 \times \{b, e\} &&
  1 \times \{d, e\} &&
  1 \times \{e, f\} &&
  1 \times \{b, c, e\} \\
& 1 \times \{c, d, e\} &&
  1 \times \{b, c\} &&
  2 \times \{a, f\} &&
  1 \times \{b, c, d\} &&
  1 \times \{a, d\}\text{.}
\end{align*}
For $k = 3$ the only winning committee is $\{a, b, e\}$. If the first voter changes her ballot to $\{f\}$, the winning committee changes to $\{b, d, f\}$.

For Greedy Monroe consider the following profile with 4 voters:
\begin{align*}
& 1 \times \{a, b\} &&
  1 \times \{a, c, f\} &&
  1 \times \{a, c, d\} &&
  1 \times \{e, f\}\text{.}
\end{align*}
For $k = 2$ the winning committee is $\{a, c\}$. If the first voter changes her ballot to $\{b\}$, then $\{a, b\}$ becomes the winning committee.

For seq-\phragmen{} consider the following profile with 6 voters:
\begin{align*}
& 1 \times \{a, b, c\} &&
  1 \times \{a, b\} &&
  1 \times \{b, f\} &&
  1 \times \{c, e\} &&
  1 \times \{b, e, f\} &&
  1 \times \{b, d, f\}\text{.}
\end{align*}
For $k = 2$ the winning committee is $\{b, f\}$. If the first voter changes her ballot from $\{a, b, c\}$ to $\{c\}$, then the winning committee changes to $\{b, c\}$, an outcome that the voter strictly prefers to the original winning committee.

For \lexphrag{} consider the following profile:
\begin{align*}
& 1 \times \{a, b\} && 3 \times \{b, c, d\} \text{.}
\end{align*}
For $k=3$, committee $\{b, c, d\}$ is winning with a load of $0.75$ distributed to each voter.
If the first voter changes her ballot from $\{a, b\}$ to $\{a\}$, then all committees are tied with a maximum load of $1$.
Due to lexicographic tiebreaking $\{a, b, c\}$ wins, which this voter strictly prefers to the original winning committee.

For the Method of Equal Shares consider the following profile with 6 voters:
\begin{align*}
& 1 \times \{b, c, d\} &&
  1 \times \{a, b\} &&
  1 \times \{b, d\} &&
  1 \times \{c, d\} &&
  2 \times \{d, e\}\text{.}
\end{align*}
For $k = 3$ the winning committee is $\{b, d, e\}$. The first voter can successfully manipulate by changing her ballot to $\{c\}$---then the winning committee changes to $\{b, c, d\}$.

For MAV consider the following profile with 6 voters:
\begin{align*}
& 1 \times \{a, b, c\} &&
  1 \times \{b, d\} &&
  2 \times \{a, b, e\} &&
  1 \times \{a, b, d\} &&
  1 \times \{a, b\}\text{.}
\end{align*}
For $k = 3$ the unique winning committee is $\{a, b, d\}$. If the first voter changes her ballot to $\{c\}$, then $\{a, b, c\}$ becomes the only winning committee. 

For SAV consider the following profile with 2 voters:
\begin{align*}
& 1 \times \{a, b, c\} &&
  1 \times \{d, e\}\text{.}
\end{align*}
For $k=1$ the winning committees according to SAV are $\{d\}$ and $\{e\}$; committee $\{d\}$ is chosen due to lexicographic tiebreaking.
If the first voter changes her ballot to $\{a\}$, the winning committee
will change to $\{a\}$, an outcome which is preferred by the first voter.
\end{proof}

\section{Additional Proofs from Chapter~4}

\begin{proposition}\label{prop:greedy_monroe_apportionment}
If $k$ divides $n$, then Greedy Monroe extends the largest remainders method.
\end{proposition}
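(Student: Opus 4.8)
The plan is to show that, on a party-list instance with $k \mid n$, the seat-per-party allocation produced by Greedy Monroe coincides with an allocation produced by LRM. Write $q = \nicefrac{n}{k}$, so every group $G_r$ has size exactly $q$. For each party $C_i$ with $n_i$ votes, write $n_i = a_i q + b_i$ with $0 \le b_i < q$; note that $a_i = \lfloor k \cdot \nicefrac{n_i}{n}\rfloor$ is precisely the first-stage allocation of LRM and $\nicefrac{b_i}{q} = k\cdot\nicefrac{n_i}{n} - \lfloor k\cdot\nicefrac{n_i}{n}\rfloor$ is party $C_i$'s remainder. Put $r = k - \sum_i a_i$, the number of second-stage seats of LRM; since $\sum_i n_i = n = kq$ we obtain $\sum_i b_i = qr$.

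First I would reduce Greedy Monroe on a party-list profile to a simple counting process. On such a profile, in every round the candidate approved by the most not-yet-assigned voters lies in the party with the most currently unassigned voters, and assigning a group to that candidate removes $\min(q, u_i)$ of that party's unassigned voters, where $u_i$ denotes that count. Since $|C_i| \ge k$ for each party (each voter approves at least $k$ candidates), a fresh candidate is always available; and before each of the first $k$ rounds at least one party still has an unassigned voter, because after $t < k$ rounds at most $tq < kq = n$ voters have been assigned. Hence Greedy Monroe amounts to: starting from $u_i = n_i$, repeat $k$ times — give a seat to a party maximizing $u_i$ (ties broken by the candidate order) and set $u_i \leftarrow \max(0, u_i - q)$.

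Next I would track the marginal values: when party $C_i$ has already received $j$ seats, its current count is $u_i = \max(0, n_i - jq)$, so the positive values it can still contribute are $v_{i,1} > v_{i,2} > \dots > v_{i,a_i}$, where $v_{i,j} = n_i - (j-1)q \ge q$, together with $v_{i,a_i+1} = b_i$ when $b_i > 0$ (and then $b_i < q$). Since each party's values form a strictly decreasing sequence, a standard merging argument (induction on $k$: the overall maximum is always a head of one of the decreasing sequences; remove it and recurse) shows that the multiset of values consumed in the first $k$ rounds is exactly the $k$ largest values in $\bigcup_i \{v_{i,j}\}$. Because every ``large'' value $v_{i,j}$ with $j \le a_i$ is $\ge q$ while every ``small'' value $b_i$ is $< q$, these $k$ largest values are precisely all $\sum_i a_i = k-r$ large values together with the $r$ largest among $\{b_i : b_i > 0\}$; and $\sum_{b_i>0} b_i = qr$ forces at least $r$ of the $b_i$ to be positive when $r \ge 1$ (the case $r = 0$ being immediate), so this selection is well defined. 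Consequently party $C_i$ receives $a_i$ seats, plus one more exactly when $b_i$ is among the $r$ largest remainders — which is exactly the LRM allocation.

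I would close by addressing ties: ties among large values never affect the allocation, since all large values are consumed regardless, so the only genuine tiebreaking is among equal remainders $b_i$, which Greedy Monroe resolves through its candidate order; this amounts to a choice of tiebreaking for LRM, and hence the committee returned by Greedy Monroe is a committee that LRM can return. The main obstacle I anticipate is not the combinatorics above but making the reduction of the second paragraph fully rigorous — in particular, ruling out that Greedy Monroe ever assigns a group to a candidate with no unassigned supporters during the first $k$ rounds, and carefully handling the bookkeeping when a party is exhausted while still carrying a nonzero remainder.
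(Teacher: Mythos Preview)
Your proposal is correct and follows essentially the same approach as the paper's proof. Both arguments rest on the observation that, with $q=\nicefrac{n}{k}$, a party that has received fewer than $a_i=\lfloor k\cdot\nicefrac{n_i}{n}\rfloor$ seats still has at least $q$ unassigned voters while a party with $a_i$ seats has only its remainder $b_i<q$ left; hence Greedy Monroe first hands out all $\sum_i a_i$ ``large'' seats (the LRM first phase) and then allocates the remaining seats by decreasing remainder (the LRM second phase). Your merging-of-decreasing-sequences framing is a slightly more abstract packaging of exactly this two-phase observation, and the obstacles you flag at the end are in fact already handled by your own earlier counting argument.
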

\begin{proof}
Consider an apportionment instance with $p$ political parties, $C_1, \ldots, C_p$, and let $n_i$ denote the number of votes cast on party $C_i$. Since $n$ is divisible by $k$, Greedy Monroe always tries to assign a candidate to $\frac{n}{k}$ voters. 
Observe that:
\begin{align*}
n_i - \frac{n}{k} < \left\lfloor k \cdot \frac{n_i}{n} \right\rfloor \cdot \frac{n}{k} \leq n_i \text{.}
\end{align*}

Let $k_1 = \sum_{i = 1}^p \lfloor k \cdot \nicefrac{n_i}{n} \rfloor$. In the first $k_1$ rounds Greedy Monroe assigns to each party $C_i$ exactly $\lfloor k \cdot \nicefrac{n_i}{n} \rfloor$ seats. This is consistent with the first phase of the largest remainders method. During these rounds, whenever Greedy Monroe  assigns a seat to a party, it removes $\nicefrac{n}{k}$ of its supporters. Then, each party $C_i$ is left with less than $\frac{n}{k}$ supporters. Specifically, party $C_i$ is left with the following number of supporters:
\begin{align*}
n_i - \left\lfloor k \cdot \frac{n_i}{n} \right\rfloor \cdot \frac{n}{k} = \frac{n}{k} \left(  k \cdot \frac{n_i}{n} - \left\lfloor k \cdot \frac{n_i}{n} \right\rfloor \right) \text{.}
\end{align*}  
Next, Greedy Monroe will assign the remaining seats to the parties in the order of decreasing values $k \cdot \nicefrac{n_i}{n} - \lfloor k \cdot \nicefrac{n_i}{n}\rfloor$, that is, it will proceed exactly as the largest remainders method.
\end{proof}

\begin{proposition}\label{prop:greedy_monroe_apportionment_k_does_not_divide_n}
In the general case (when $k$ does not have to divide $n$), Greedy Monroe and Monroe do not extend the largest remainders method.
\end{proposition}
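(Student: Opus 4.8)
The plan is to produce a single party-list instance with $k\nmid n$ on which the largest remainder method (LRM) has a unique seat allocation, while both Monroe and Greedy Monroe return committees corresponding to a \emph{different} allocation. Concretely, I would take $k=4$ and three parties $C_A,C_B,C_C$, each containing exactly four candidates (so that the instance is a legal party-list instance in the sense of \Cref{def:party_profiles}), with $C_A$ approved by $5$ voters, $C_B$ by $3$ voters, and $C_C$ by $1$ voter; thus $n=9$ and $k\nmid n$. The first, routine step is to compute the LRM allocation: the quotas are $20/9$, $12/9$, $4/9$, with integer parts $2$, $1$, $0$ summing to $3$, and the single remaining seat goes to $C_C$ because its remainder $4/9$ is the largest. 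Hence the unique LRM allocation is $A\mapsto 2$, $B\mapsto 1$, $C\mapsto 1$, and the LRM-consistent committees are exactly those with two candidates from $C_A$, one from $C_B$, and one from $C_C$.

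For Monroe, the key observation is that the optimal Monroe score is at most $8$: since $\lfloor n/k\rfloor=2$, if a committee contains a $C_C$-candidate then the (at least two) voters assigned to it include a non-$C_C$-voter who is unrepresented, whereas if it contains no $C_C$-candidate the unique $C_C$-voter is unrepresented. I would then check that both the LRM-consistent committee $\{a_1,a_2,b_1,c_1\}$ and the committee $\{a_1,a_2,b_1,b_2\}$ (which realizes the allocation $A\mapsto 2$, $B\mapsto 2$, $C\mapsto 0$) attain Monroe score $8$: in each case assign a size-$3$ and a size-$2$ group of $C_A$-voters to $a_1$ and $a_2$ (representing all of $C_A$), two $C_B$-voters to $b_1$, and put the remaining $C_B$-voter together with the $C_C$-voter into the last size-$2$ group. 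So both committees are winning under Monroe (recall Monroe may return several tied committees), but the second is not LRM-consistent; hence the set of Monroe winners is not the set of LRM-consistent committees.

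For Greedy Monroe, I would fix the alphabetical tiebreaking order $a_1\succ\dots\succ a_4\succ b_1\succ\dots\succ b_4\succ c_1\succ\dots\succ c_4$ and run the rule; here the group-size bounds are $n_1=3$ and $n_2=n_3=n_4=2$. Round $1$ selects $a_1$ and assigns three of the five $C_A$-voters; round $2$ selects $b_1$, since the $C_B$-candidates then have three unassigned approvers against only two for the remaining $C_A$-candidates; round $3$ selects $a_2$, covering the last two $C_A$-voters; and in round $4$ the candidates $b_2,b_3,b_4,c_1,\dots,c_4$ are tied with one unassigned approver each, and the tiebreaking order picks $b_2$. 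The resulting committee is $\{a_1,a_2,b_1,b_2\}$, again realizing $A\mapsto 2$, $B\mapsto 2$, $C\mapsto 0$, which is not LRM-consistent.

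The only delicate point is that the Greedy Monroe counterexample is tiebreaking-sensitive: for some tiebreaking orders the rule coincides with LRM on this instance. This is harmless, because \Cref{prop:greedy_monroe_apportionment} establishes that when $k\mid n$ Greedy Monroe extends LRM for \emph{every} tiebreaking, so exhibiting one tiebreaking order (moreover the natural, alphabetical one) under which the outputs disagree is exactly what is needed to conclude that Greedy Monroe does not extend LRM in general. Apart from this observation, the argument reduces to the arithmetic above together with careful bookkeeping of the set of unassigned voters through the Greedy Monroe run and of the feasible Monroe assignments.
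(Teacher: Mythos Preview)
Your proposal is correct and follows the same overall strategy as the paper: exhibit a small party-list instance with $k\nmid n$ on which the Monroe and Greedy Monroe outputs disagree with the unique LRM allocation. You use three parties with $5$, $3$, and $1$ voters and $k=4$, whereas the paper uses two parties with $50$ and $31$ voters and $k=4$.

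Your instance is not only smaller but actually more solid than the paper's. In the paper's example, the claimed Greedy Monroe run (``starts by giving the second party a representative'') contradicts the rule's definition, since in round~1 the party with $50$ approvers must be chosen over the one with $31$; a correct run yields the $2{+}2$ allocation, matching LRM. Likewise, the $3{+}1$ allocation in the paper has Monroe score $71$ while the LRM allocation $2{+}2$ has score $72$, so Monroe does \emph{not} select $3{+}1$ there. Your three-party instance avoids these issues: the tie in round~4 genuinely allows Greedy Monroe (under the natural tiebreaking you fix) to depart from LRM, and your Monroe argument correctly shows a non-LRM committee among the optimal ones. Your remark that the Greedy Monroe counterexample is tiebreaking-sensitive, and why this suffices given \Cref{prop:greedy_monroe_apportionment}, is exactly the right justification.
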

\begin{proof}
Consider an apportionment instance with 2 parties with, respectively, 50 votes and 31 votes. Assume the committee size is $k = 4$. For this instance LRM gives 2 seats to each party. Greedy Monroe can proceed as follows. It starts by giving the second party a representative and removing the group of 21 voters. Next it can give 3 representatives to the first party (depending on tiebreaking). The Monroe rule can also select 3 candidates from the first party and one candidate from the second party.
\end{proof}

\begin{proposition}
Greedy Monroe satisfies justified representation (JR).\label{prop:greedy-monroe-jr}
\end{proposition}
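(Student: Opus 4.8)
The plan is to argue by contradiction, tracking the groups $G_1,\dots,G_k$ of voters that Greedy Monroe builds during its execution. Suppose that for some instance $(A,k)$ the rule returns a committee $W=\{c_1,\dots,c_k\}$ but that some $1$-cohesive group $V$ is entirely unrepresented, i.e.\ $A(i)\cap W=\emptyset$ for every $i\in V$. By $1$-cohesiveness there is a candidate $c^\star$ approved by all voters in $V$; since no voter of $V$ is represented in $W$, we have $c^\star\notin W$, so $c^\star$ remains an admissible choice in every round $r$ (it is never the already-selected candidate, under either reading of the selection step).

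First I would record the relevant structural facts about the run. The sets $G_1,\dots,G_k$ are pairwise disjoint, because $G_r\subseteq N_r=N\setminus(G_1\cup\dots\cup G_{r-1})$, and each $G_r$ consists only of voters approving the selected candidate $c_r$. Consequently no voter of $V$ is ever placed in a group, and a one-line induction (base case $N_1=N$) shows $V\subseteq N_r$ for every $r\in\{1,\dots,k\}$.

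The key step is to show that each group $G_r$ ends up with its full target size $n_r$. Since $V\subseteq N_r$ and every voter of $V$ approves $c^\star$, the candidate $c^\star$ has at least $|V|$ supporters within $N_r$. As $|V|$ is an integer with $|V|\ge \nicefrac{n}{k}$, we get $|V|\ge \lceil \nicefrac{n}{k}\rceil\ge n_r$ (recall $n_r\in\{\lfloor \nicefrac{n}{k}\rfloor,\lceil \nicefrac{n}{k}\rceil\}$). Greedy Monroe selects in round $r$ a candidate with the maximum number of supporters in $N_r$; since $c^\star$ is admissible, the chosen $c_r$ has at least $n_r$ supporters in $N_r$ as well, and hence, whether that number equals $n_r$ or exceeds it, exactly $n_r$ of those voters are assigned to $G_r$ by the rule's group-formation step. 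Thus $|G_r|=n_r$ for all $r$.

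Finally I would add up: $\sum_{r=1}^{k}|G_r|=\sum_{r=1}^{k}n_r=n$ (the target sizes are designed precisely to partition the $n$ voters), and since the $G_r$ are disjoint subsets of $N$, this forces $G_1\cup\dots\cup G_k=N$. But $V$ is nonempty ($|V|\ge\lceil \nicefrac{n}{k}\rceil\ge 1$) and disjoint from $G_1\cup\dots\cup G_k$, a contradiction, which completes the proof. I do not expect a genuine difficulty here; the only point needing care is the rounding arithmetic — verifying that $|V|\ge n_r$ in every round (using integrality of $|V|$ together with $n_r\le\lceil \nicefrac{n}{k}\rceil$) and that $\sum_r n_r=n$.
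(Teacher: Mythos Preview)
Your proof is correct and follows essentially the same contradiction argument as the paper: assume a $1$-cohesive group $V$ is unrepresented, note that its common candidate $c^\star$ remains available and has enough supporters in every $N_r$ to force $|G_r|=n_r$, and sum to get $\bigcup_r G_r=N$, contradicting $V\cap\bigcup_r G_r=\emptyset$. Your treatment of the rounding (using $|V|\ge\lceil n/k\rceil\ge n_r$ and $\sum_r n_r=n$) is in fact more careful than the paper's, which simply asserts that each selected candidate ``increased the score by at least $n/k$'' and sums to $n$.
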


\begin{proof}
\citet{pjr17} show that Greedy Monroe satisfies PJR if $k$ divides $n$, hence it also satisfies JR under this condition. However, Greedy Monroe satisfies JR also without this additional constraint. Assume towards a contradiction that Greedy Monroe fails JR for the election instance $(A, k)$ and let $W$ be the winning committee according to Greedy Monroe. As $W$ does not satisfy JR, there exists a group of voters $V$ of size at least $\nicefrac n k$ and a candidate $c\notin W$ approved by all of them.
Adding candidate~$c$ would have increased the Monroe score of the committee by at least $\nicefrac n k$ in all rounds. Hence, the candidates contained in $W$ also increased the score by at least $\nicefrac n k$ each. Thus, $W$ has a Monroe score of $n$, i.e., all voters have an approved candidate in $W$, which implies that JR is satisfied.
\end{proof}

\begin{proposition}
An ABC rule with a proportionality degree of $f_{\calR}(\ell) = \ell-1$ may fail EJR.\label{prop:ejr+proprank}
\end{proposition}

\begin{proof}
Consider the profile
	\begin{align*}
		&2 \times \{ a,b\} & & 1 \times \{ c, d\}
	\end{align*}
for $k=3$.
An ABC rule that selects the committee $\{a,c,d\}$ fails EJR, but may have a proportionality degree of $f_{\calR}(\ell) = \ell-1$. (To fully define such an ABC rule, it could behave as PAV on all other profiles.)
\end{proof}

\begin{proposition}
An ABC rule cannot satisfy both perfect representation and weak Pareto efficiency.\label{prop:pr-pareto}
\end{proposition}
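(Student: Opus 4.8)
I would argue by contradiction, reducing the statement to the existence of one well-chosen instance. Suppose an ABC rule $\calR$ satisfies both perfect representation and weak Pareto efficiency. Fix any instance $(A,k)$ that admits at least one committee satisfying perfect representation. Since $\calR$ always outputs at least one committee, $\calR(A,k)\neq\emptyset$, and by perfect representation every committee in $\calR(A,k)$ satisfies perfect representation. Now observe that the domination relation on the finite set of size-$k$ committees is a strict partial order (it is clearly irreflexive, and transitive because a strict gain somewhere is preserved when composing two dominations), hence acyclic; and weak Pareto efficiency says $\calR(A,k)$ is upward-closed under this relation. Starting from any $W\in\calR(A,k)$ and repeatedly replacing $W$ by a committee that dominates it, we must therefore reach, after finitely many steps, a \emph{Pareto-optimal} committee that lies in $\calR(A,k)$ and hence satisfies perfect representation. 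Consequently, it suffices to exhibit a single election instance in which perfectly representative committees exist but \emph{none} of them is Pareto optimal; equivalently, every perfectly representative committee is Pareto-dominated by some other committee. Such an instance immediately contradicts the conjunction of the two axioms.

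\textbf{The construction (the crux).} The instance should exploit the tension between balanced representation and welfare that was illustrated in \Cref{ex:pav_phragmen_difference}: a perfectly representative committee must partition the electorate into $k$ equal blocks and assign each block its own distinct approved representative, so it tends to spread representation thinly, whereas a committee that lets overlapping voter-groups be doubly covered by a jointly approved candidate can make everyone at least as well off and someone strictly better off. Concretely, I would build a profile in which (i) every perfectly representative committee gives (essentially) each voter exactly one approved member---so its welfare vector is $(1,\dots,1)$---while (ii) there exists a committee $W^\star$ that still represents every voter but gives at least one voter two approved members. By (i) and (ii), $W^\star$ dominates every perfectly representative committee, and by (i) it is itself not perfectly representative. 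The skeleton is a party-list-like structure forcing the balanced partition, onto which one superimposes a candidate (or candidates) that several of the forced blocks jointly approve, so that this candidate can sit in the non-partitionable but welfare-improving committee $W^\star$ while being unusable---because of a Hall-type obstruction in the voter--representative bipartite graph---inside any perfectly representative committee.

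\textbf{The main obstacle.} The delicate point is making property (ii) hold against \emph{every} perfectly representative committee at once: in small instances, perfectly representative committees are stubbornly Pareto optimal, because they represent the electorate so tightly that typically some voter already has all of their approved candidates inside the committee (so no committee can improve that voter), or the little slack left in the committee can be reshuffled without changing anyone's satisfaction. To avoid this, the construction must be arranged so that no perfectly representative committee captures any voter's whole approval set---which points to using approval sets strictly larger than $k$, so that no committee maxes out a voter---and so that every balanced committee genuinely wastes representation relative to the concentrated committee $W^\star$; this in turn forces $n$ and the number of candidates to be large enough to give $W^\star$ room to improve on each balanced alternative. Once a suitable profile is pinned down, the verification is a finite case check: enumerate the size-$k$ committees, use the Hall condition on the voter--representative matching to identify the perfectly representative ones, and for each of them point to a dominating committee, together with a single witness committee that is feasible (everyone represented) but gives some voter two representatives.
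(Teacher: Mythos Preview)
Your high-level plan is exactly the paper's: exhibit one instance where perfectly representative committees exist but every such committee is Pareto-dominated. Your reduction via upward-closure of $\calR(A,k)$ under domination and acyclicity of the strict partial order is correct (and a touch more careful than the paper bothers to be).

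The gap is that you never actually produce the instance, and your ``main obstacle'' paragraph substantially overestimates what it takes. You do not need approval sets strictly larger than $k$, nor large $n$ or $m$. The paper settles the matter with $8$ voters, $4$ candidates, and $k=2$:
\[
2\times\{a,c\},\quad 1\times\{a,c,d\},\quad 1\times\{a,d\},\quad 1\times\{b,d\},\quad 3\times\{b,c\}.
\]
Here $\{a,b\}$ is the \emph{unique} PR committee: the four voters approving $a$ and the four approving $b$ are disjoint, while for every other pair either some voter approves neither candidate (e.g.\ the $\{b,d\}$-voter for $\{a,c\}$) or one candidate is approved by fewer than four voters (only three voters approve $d$, killing $\{c,d\}$). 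And $\{c,d\}$ dominates $\{a,b\}$: every voter keeps exactly one representative except the $\{a,c,d\}$-voter, who jumps from one to two. So a rule satisfying PR must output exactly $\{\{a,b\}\}$, yet weak Pareto efficiency forces $\{c,d\}$ into the output as well---contradiction.

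In short: right strategy, but the promised ``finite case check'' is the whole proof, and it fits in five lines rather than requiring the elaborate scaffolding you describe.
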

\begin{proof}
Consider the profile
	\begin{align*}
		&2 \times \{ a,c\} & & 1 \times \{ a,c, d\} &&  1 \times \{ a, d\} &&  1 \times \{ b, d\}
		 && 3 \times \{ b, c\}\text{.}
	\end{align*}
	For $k=2$, there is exactly one committee that satisfies perfect representation: $W_1=\{a,b\}$. This committee, however, is dominated by $W_2=\{c,d\}$. An ABC rule $\calR$ satisfies PR if it exclusively returns committees satisfying PR; hence $W_1$ is the only winning committee and thus $\calR$ fails weak Pareto efficiency.
\end{proof}

\begin{proposition}\label{prop:prop_degree_of_sav_and_mav}
The proportionality degree of the Method of Equal Shares is between $\frac{\ell-1}{2}$ and $\frac{\ell+1}{2}$. The proportionality degree of SAV and MAV is $0$.
\end{proposition}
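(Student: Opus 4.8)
The plan is to prove the three claims by quite different means: the lower bound for Rule~X is a one-line consequence of a result already stated; the upper bound for Rule~X needs an explicit election to be constructed and Rule~X's execution on it to be traced (this is the hard part); and the SAV/MAV claims follow from simple ``sacrifice'' instances.

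\textbf{Rule~X, the bound $\tfrac{\ell-1}{2}$.} This is immediate from the fact, established earlier in the survey \citep{pet-sko:laminar}, that Rule~X satisfies EJR, together with the implication ``EJR $\Rightarrow$ proportionality degree $\geq\tfrac{\ell-1}{2}$'' recalled in \Cref{sec:cohesive_groups}. Concretely: given an $\ell$-cohesive group $V$ and a winning committee $W$, repeatedly pick and delete a voter with at least $\ell$ approved members (guaranteed by EJR), then one with at least $\ell-1$ (the shrunk group is still $(\ell-1)$-cohesive as long as it has more than $(\ell-1)\tfrac nk$ members), and so on down to $1$; the deleted voters contribute at least $\ell+(\ell-1)+\dots+1=\tfrac{\ell(\ell+1)}{2}$ representatives in total, the rest at least $0$, and dividing by $|V|\geq \ell\tfrac nk$ and simplifying yields the bound.

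\textbf{Rule~X, the bound $\tfrac{\ell+1}{2}$.} Here I would exhibit an election together with an $\ell$-cohesive group $V$ on which Rule~X returns a committee whose restriction to $V$ gives average satisfaction at most $\tfrac{\ell+1}{2}$; the target is the ``staircase'' that the EJR argument above shows to be extremal, namely a partition of $V$ into (roughly) equal blocks receiving about $\ell,\ell-1,\dots,1$ representatives, so the average is $\tfrac1\ell(\ell+\dots+1)=\tfrac{\ell+1}{2}$. The instance would be built in the spirit of the one \citet{skowron:prop-degree} uses to pin the proportionality degree of seq-\phragmen{} at $\tfrac{\ell-1}{2}$: besides the $\ell$ candidates shared by all of $V$, each block is equipped with a graded number of ``cheap'' private candidates propped up by auxiliary outside voters, tuned so that Rule~X's greedy purchase buys exactly the intended private candidates (and only one of the shared candidates) before the blocks' budgets are exhausted, with the surplus committee seats absorbed by disjoint filler voters. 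The main obstacle---and the bulk of the work---is precisely this verification: since Rule~X always buys the affordable candidate of minimum price $\rho(\cdot)$, one has to track the voters' budgets round by round and check that the prices $\rho(\cdot)$ fall in the prescribed order (so that neither too many nor too few of the shared candidates enter $W$), which forces a delicate choice of the sizes of the outside-voter groups relative to the block size; one also has to check that Rule~X's phase-two completion, if it is invoked, hands $V$ no further representatives.

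\textbf{SAV and MAV.} For both rules the proportionality degree is identically $0$, because there are instances in which an $\ell$-cohesive group is completely ignored. For SAV: take $\ell\tfrac nk$ voters who all approve one common set $S\supseteq\{a_1,\dots,a_\ell\}$ with $|S|$ enormous, and split the remaining voters into $k$ equal blocks, the $t$-th approving a single dedicated candidate $d_t$; an elected $d_t$ adds a full point per supporter to the SAV-score, whereas an elected member of $S$ adds only $\tfrac{\ell n/k}{|S|}$ per supporter, so for $|S|$ large enough (e.g.\ $|S|>\ell^2+\ell$) SAV uniquely returns $\{d_1,\dots,d_k\}$ and the ($\ell$-cohesive) group approving $S$ gets zero representatives. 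For MAV: fix $k=\ell+1$, let $\ell\tfrac nk$ voters approve exactly $\{a_1,\dots,a_\ell\}$ and let the remaining $\tfrac nk$ voters all approve one common set $B$, disjoint from the $a_i$'s, with $|B|=3\ell+3$; a short Hamming-distance computation shows that any committee containing some $a_i$ has, for the $B$-voters, distance strictly larger than the worst-case distance of any committee contained in $B$, so MAV elects a subset of $B$ and the $\ell$-cohesive group approving $\{a_1,\dots,a_\ell\}$ again receives nothing. Both verifications are routine bookkeeping; the only care needed is to take $|S|$ and $|B|$ large enough and to keep the voter counts integral so that $|V|=\ell\tfrac nk$ is attained exactly.
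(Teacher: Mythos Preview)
Your treatment of the Rule~X lower bound and of SAV/MAV is correct. The SAV and MAV constructions differ from the paper's---the paper keeps both voter groups' approval sets the same size and instead makes the \emph{non-cohesive} group larger (for SAV: $k=2\ell+1$, $n=k$, first $\ell$ voters approve $\{a_1,\dots,a_k\}$, remaining $\ell+1$ approve $\{b_1,\dots,b_k\}$; for MAV: $k=\ell+1$, $n=k$, first $\ell$ voters approve $\{a_1,\dots,a_k\}$, last voter approves $\{b_1,\dots,b_{3k+1}\}$)---but your ``huge $|S|$'' / ``huge $|B|$'' variants work just as well once the parameters are fixed.

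The Rule~X upper bound is where your proposal has a genuine gap: you describe a plan but give no instance and no verification, and you anticipate delicate price-tracking and a phase-two completion. The paper's construction shows that none of this machinery is needed. Take $n=k=\tfrac{\ell(\ell+1)}{2}$, partition the voters into groups $N_1,\dots,N_\ell$ with $|N_i|=i$, partition $k$ of the candidates into matching groups $C_1,\dots,C_\ell$ with $|C_i|=i$, let the voters in $N_i$ approve exactly $C_i$, and introduce $\ell$ additional candidates $A$ approved by the \emph{first} voter of each $N_i$. Every voter starts with budget $1$; at each stage the candidates in the largest remaining $C_i$ and the candidates in $A$ tie for cheapest (price $\nicefrac{1}{i}$), so with suitable tie-breaking Rule~X buys $C_\ell,C_{\ell-1},\dots,C_1$ and nothing from $A$, filling the committee exactly in phase one. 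The $\ell$ first-voters form an $\ell$-cohesive group (they all approve $A$, and $\ell=\ell\cdot\tfrac{n}{k}$), and the $i$-th of them has exactly $i$ representatives, giving average $\tfrac{\ell+1}{2}$. This is precisely your ``staircase'' target, but realised with blocks of size one, no external propping-up, no filler voters, and no second phase---so the elaborate tuning you were bracing for is unnecessary. Note also that in this construction \emph{none} of the shared candidates are bought, not ``only one'' as you wrote.
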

\begin{proof}
For SAV fix $\ell \in \naturals$, set the committee size to $k = 2\ell+1$, and consider the following profile with $m = 2k$ candidates and $n = k$ voters: the first $\ell$ voters approve candidates $a_1, \ldots, a_{k}$ and the next $k - \ell$ voters approve $b_1, \ldots, b_{k}$. SAV will select the committee $\{b_1, \ldots, b_k\}$. The group of the first $\ell$ voters is $\ell$-cohesive, but no voter gets any representative in the elected committee. 

For MAV fix $\ell \in \naturals$, set the committee size to $k = \ell+1$, and consider the following profile with $m = 4k+1$ candidates and $n = k$ voters: the first $\ell$ voters approve candidates $a_1, \ldots a_{k}$ and the next voter approves $b_1, \ldots, b_{3k+1}$. MAV will select a $k$-element subset of $\{b_1, \ldots, b_{3k+1}\}$. The group of the first $\ell$ voters is $\ell$-cohesive, but no voter gets any representative in the elected committee. 

Finally, we consider the Method of Equal Shares. Since the method satisfies EJR~\cite{pet-sko:laminar} and EJR implies a proportionality degree of at least $f(\ell) = \frac{\ell-1}{2}$~\cite{pjr17}, we get the lower-bound. For the upper bound consider the following instance. Fix $\ell \in \naturals$. We set $n = k = \frac{\ell(\ell+1)}{2}$ and $m = k+\ell$. The voters are divided into $\ell$ groups $N = N_1 \cup N_2 \cup \ldots \cup N_{\ell}$ such that $|N_i| = i$ for each $i \in [\ell]$. The set of the first $k$ candidates is also divided into $\ell$ groups $C = C_1 \cup C_2 \cup \ldots \cup C_{\ell}$ such that $|C_i| = i$ for each $i \in [\ell]$. The set of remaining $\ell$ candidates is denoted by $A$. The voters from $N_i$ approve $C_i$. Additionally the first voter from each group $N_i$ approves $A$. The Method of Equal Shares can select the candidates from $C_{\ell}$ first. Then the voters from $N_{\ell}$ have no money left. Next the candidates from $C_{\ell-1}$ are selected, etc. Consequently, the method can return committee $C_1 \cup C_2 \cup \ldots \cup C_{\ell}$. Consider the voters who approve $A$. They form an $\ell$-cohesive group, but the average number of representatives that they get equals $1 + 2 + \ldots + \ell = \frac{\ell(\ell+1)}{2}$. This completes the proof.  
\end{proof}

\bibliographystyle{abbrvnat}
\bibliography{main}

\end{document}